\definecolor{darkblue}{rgb}{0,0,.8}
\numberwithin{equation}{section} %This next line must appear before cleveref for the referencing to be correct
\newtheorem{theorem}{Theorem}[section]
\newtheorem{lemma}[theorem]{Lemma}
\newtheorem{proposition}[theorem]{Proposition}
\newtheorem{corollary}[theorem]{Corollary}
\newtheorem{conjecture}[theorem]{Conjecture}
\newcommand{\rrangle}{\rangle\hspace{-.05cm}\rangle}
\newcommand{\llangle}{\langle\hspace{-.05cm}\langle}
\newcommand{\chit}{\protect\raisebox{0.25ex}{$\chi$}}
\renewcommand{\ge}{\geqslant}
\renewcommand{\le}{\leqslant}
\newcommand{\drawvertex}[3]
{
 
  \ifnum#1=1
   \draw[xshift=#2 cm,yshift=#3 cm,postaction={on each segment={mid arrow}}] (0,0)--
(.5,0)--(1,0);
  \draw[xshift=#2 cm,yshift=#3 cm,postaction={on each segment={mid arrow}}] (.5,-.
5)--(.5,0)--(.5,.5);
  \fi
   \ifnum#1=2
      \draw[xshift=#2 cm,yshift=#3 cm,postaction={on each segment={mid arrow}}] 
(1,0)--(.5,0)--(0,0);
  \draw[xshift=#2 cm,yshift=#3 cm,postaction={on each segment={mid arrow}}] (.
5,.5)--(.5,0)--(.5,-.5); 
  \fi
\ifnum#1=3
      \draw[xshift=#2 cm,yshift=#3 cm,postaction={on each segment={mid arrow}}] 
(1,0)--(.5,0)--(0,0);
  \draw[xshift=#2 cm,yshift=#3 cm,postaction={on each segment={mid arrow}}] (.5,-.
5)--(.5,0)--(.5,.5);   \fi
\ifnum#1=4
      \draw[xshift=#2 cm,yshift=#3 cm,postaction={on each segment={mid arrow}}] 
(0,0)--(.5,0)--(1,0);
  \draw[xshift=#2 cm,yshift=#3 cm,postaction={on each segment={mid arrow}}] (.
5,.5)--(.5,0)--(.5,-.5);  
  \fi
\ifnum#1=5
      \draw[xshift=#2 cm,yshift=#3 cm,postaction={on each segment={mid arrow}}] 
(0,0)--(.5,0)--(.5,.5);
  \draw[xshift=#2 cm,yshift=#3 cm,postaction={on each segment={mid arrow}}] (1,0)--
(.5,0)--(.5,-.5);  
  \fi
\ifnum#1=6
      \draw[xshift=#2 cm,yshift=#3 cm,postaction={on each segment={mid arrow}}] (.
5,.5)--(.5,0)--(0,0);
  \draw[xshift=#2 cm,yshift=#3 cm,postaction={on each segment={mid arrow}}] (.5,-.
5)--(.5,0)--(1,0);  
  \fi
}
\renewcommand{\i}{\text{i}}
\newcommand{\diff}{\text{d}}
\newcommand{\Det}[2]{\det_{#1}^{#2}}
\newcommand{\Pf}[2]{\mathop{\mathrm{pf}}_{#1}^{#2}}
\tikzset{
  % style to apply some styles to each segment of a path
  on each segment/.style={
    decorate,
    decoration={
      show path construction,
      moveto code={},
      lineto code={
        \path [#1]
        (\tikzinputsegmentfirst) -- (\tikzinputsegmentlast);
      },
      curveto code={
        \path [#1] (\tikzinputsegmentfirst)
        .. controls
        (\tikzinputsegmentsupporta) and (\tikzinputsegmentsupportb)
        ..
        (\tikzinputsegmentlast);
      },
      closepath code={
        \path [#1]
        (\tikzinputsegmentfirst) -- (\tikzinputsegmentlast);
      },
    },
  },
  % style to add an arrow in the middle of a path
  mid arrow/.style={postaction={decorate,decoration={
        markings,
        mark=at position .625 with {\arrow[#1]{stealth}}
      }}},
}
\title{ \bf Symmetry classes of alternating sign matrices\\ in the nineteen-vertex model} 
\author{\normalsize \textsc{Christian Hagendorf}, 
 \textsc{Alexi Morin-Duchesne}
\medskip\\
{\normalsize
  \begin{minipage}{\textwidth}
  \begin{center}
  \textit{
   Universit\'e Catholique de Louvain\\
  Institut de Recherche en Math\'ematique et Physique\\
  Chemin du Cyclotron 2, 1348 Louvain-la-Neuve, Belgium}\\
  \medskip
  \href{mailto:christian.hagendorf@uclouvain.be}{\normalsize 
\texttt{christian.hagendorf@uclouvain.be}},
\href{mailto:alexi.morin-duchesne@uclouvain.be}{\normalsize 
\texttt{alexi.morin-duchesne@uclouvain.be}}
  \end{center}
  \end{minipage}
}
}
\date{}
\begin{document}
\maketitle
\vspace{0.5cm}

\begin{abstract}

The nineteen-vertex model on a periodic lattice with an anti-diagonal twist is investigated. 
Its inhomogeneous transfer matrix is shown to have a simple eigenvalue,
with the corresponding eigenstate displaying intriguing combinatorial features.
Similar results were
previously found for the same model with a diagonal twist. 
The eigenstate for the anti-diagonal twist is explicitly constructed using the quantum separation of variables technique. 
A number of sum rules and special components are computed and expressed in terms of Kuperberg's determinants for partition functions of the inhomogeneous six-vertex model.
The computations of some components of the special eigenstate for the diagonal twist are also presented. In the homogeneous limit, the special eigenstates become eigenvectors of the Hamiltonians of the
integrable spin-one XXZ chain with twisted boundary conditions. Their sum rules and special components for both twists are expressed in terms of generating functions arising in the weighted enumeration of various symmetry classes of alternating sign matrices (ASMs). These include half-turn symmetric ASMs, quarter-turn symmetric ASMs, vertically symmetric ASMs, vertically and horizontally perverse ASMs and double U-turn ASMs. As side results, new determinant and pfaffian formulas for the weighted enumeration of various symmetry classes of alternating sign matrices are obtained.

\medskip

\noindent Keywords: Integrable spin-one XXZ chain, alternating sign matrices, nineteen-vertex model, quantum inverse scattering method.
\end{abstract}
\newpage

\tableofcontents
\newpage

%%%%%%%%%%%%%%%%%%%%%
%
\section{Introduction}
%
%%%%%%%%%%%%%%%%%%%%%

The seminal work of Razumov and Stroganov \cite{razumov:00} revealed a remarkable combinatorial structure of the periodic Heisenberg
XXZ spin chain with anisotropy $\Delta=-1/2$. They investigated the ground state of the Hamiltonian
for chains of odd length $N=2n+1$ and observed that in a suitable normalisation, many components and scalar products 
are given by integer sequences in $n$ which appear in the enumeration of alternating sign matrices (ASMs) and plane partitions \cite{bressoudbook}. 
Similar connections were found for chains of even length with
 twisted boundary conditions, as well as for open chains with boundary magnetic fields \cite{razumov:01,batchelor:01,degier:02}. In all
these cases, some of the sequences that arise enumerate ASMs invariant under certain symmetries, such as reflections and rotations. Subsequently, the ground state of the Hamiltonian of the dense $O(1)$ loop model was also found to involve sequences enumerating ASMs or symmetry classes thereof \cite{mitra:04,mitra:04_2}.

 A major leap forward in understanding and proving these observations
 was made by Di Francesco and Zinn-Justin \cite{difrancesco:05_3}. Their idea was to exploit the connection between one-dimensional quantum systems
 and two-dimensional models of classical statistical mechanics, specifically the six-vertex model and the dense $O(1)$ loop model. By considering inhomogeneous versions of these models, they reformulated the problem of obtaining the 
Hamiltonian's ground state eigenvector into that of finding a special eigenvector of the inhomogeneous transfer matrix. 
 The powerful tools of quantum integrability make this computation feasible,
 with the results for the Hamiltonian recovered in the homogeneous limit.
 This approach led to the rigorous proofs of
 a large number of properties of the XXZ ground states at $\Delta=-1/2$, 
 namely exact finite-size sum rules \cite{difrancesco:06}, 
integral formulas for all components \cite{razumov:07}, 
 as well as exact results for correlation functions \cite{kitanine:02,cantini:12_1}.
 Higher-spin \cite{zinn:09,fonseca:12} and higher-rank systems \cite{difrancesco:05_4} were addressed along similar lines, confirming the existence of so-called combinatorial points, 
 namely special
values of the anisotropy parameter where the ground states exhibit a relation to combinatorial problems. 

In addition to a rich combinatorial structure, the XXZ chain at $\Delta=-1/2$ possesses another interesting feature: an exact lattice supersymmetry \cite{yang:04,veneziano:06}. 
Its Hamiltonian can be written as the anti-commutator of a nilpotent operator and its adjoint. The distinctive feature of the above-mentioned XXZ ground states is that they are so-called supersymmetry singlets, which 
play a special role in supersymmetric theories \cite{witten:82}. The coincidence between the connection to combinatorics
and the supersymmetric structure is still not well understood, but nonetheless
suggests lattice supersymmetry as a heuristic tool to detect
combinatorial features of other spin-chain models, for instance at higher spin. 

In \cite{hagendorf:13}, it was shown that,
for a one-parameter family of twisted boundary conditions,
 such a lattice supersymmetry is present for the integrable spin-one XXZ chain \cite{zamolodchikov:81,fateev:81}, irrespectively of the value taken by the anisotropy parameter. 
This suggested some eigenvectors of the spin-chain Hamiltonian could again display interesting combinatorial structures. 
Indeed, for one specific twist, it was observed in \cite{hagendorf:13} that the Hamiltonian possesses a special eigenvector which is also a supersymmetry singlet. In contrast with the spin-$\frac12$ chain, this occurs \textit{for chains of any length and for all values of the anisotropy parameter}. 
In a suitable normalisation, some of its components and certain scalar products  
were conjectured in \cite{hagendorf:13} to coincide with polynomials in the anisotropy parameter which are known generating functions appearing in a particular type of weighted enumeration of ASMs \cite{robbins:00,kuperberg:02}. 
The two-dimensional lattice model underlying the spin chain is a nineteen-vertex model built from the fusion of the six-vertex model \cite{kulish:81,kulish:82,kirillov:87}.
Inspired by the ideas of Di Francesco and Zinn-Justin 
\cite{difrancesco:05_3} outlined above, the previous observations were examined in \cite{hagendorf:15} for the inhomogeneous model and its transfer matrix. Applying the formalism of the quantum inverse scattering method (QISM) \cite{korepin:93}, a simple eigenvalue was found and the corresponding special eigenvector explicitly constructed using the algebraic Bethe ansatz. 
Scalar products involving this vector were explicitly computed and found to reproduce partition functions of the six-vertex model, whose homogeneous limits are related to ASM enumeration \cite{kuperberg:02}, thus confirming some observations made in \cite{hagendorf:13}. That partition functions of the six-vertex model appear in the nineteen-vertex model comes as a surprise and lacks a more profound understanding.

The present paper continues the study of the combinatorial structures of the integrable spin-one XXZ chain and its corresponding nineteen-vertex model. Our objectives are twofold.
On the one hand, we find new sum rules and components of the special eigenvector for the diagonal twisted boundary condition studied in \cite{hagendorf:15}, and thereby prove some conjectures of \cite{hagendorf:13} that were still open.
On the other hand, we identify a new anti-diagonal twisted boundary condition for which the transfer matrix also possesses a simple eigenvalue, and find that the corresponding eigenvector has a particularly rich combinatorial 
structure. We treat both cases using QISM techniques, specifically the algebraic Bethe ansatz in the first case
and the quantum separation of variables technique \cite{niccoli:13,niccoli:15} in the second.
 For both twists, we explicitly compute certain
 components and scalar products of the special eigenvectors in the inhomogeneous case in terms of a variety of six-vertex model partition functions. Their homogeneous limit yields generating functions for the enumeration of symmetry classes of ASMs \cite{kuperberg:02}. 
The combinatorial quantities hidden in the special spin-chain eigenvector are thus readily revealed.
As side results, we find new determinant and pfaffian expressions for the generating functions of these symmetry classes of ASMs, extending a calculation by Behrend, Di Francesco and Zinn-Justin \cite{behrend:12}.
 
The layout of this paper is as follows. In \cref{sec:spinoneXXZ}, we discuss the integrable spin-one XXZ chain with particular diagonal and anti-diagonal twisted boundary conditions. We present our main results for 
the special eigenvectors of the spin-chain Hamiltonian in both cases and their relation to the weighted enumeration of ASMs. In \cref{sec:transfermatrices}, we review the construction of the inhomogeneous nineteen-vertex model via the fusion procedure and its analysis using the QISM formalism. We prove the existence of a simple eigenvalue of the transfer matrices of the vertex models and construct the corresponding eigenvectors. In \cref{sec:sumrules,sec:specialcomps}, we compute scalar products and special components in terms of partition functions of a six-vertex model. We also obtain explicit formulas for the homogeneous limit of the various partition functions. We present concluding remarks in \cref{sec:conclusion} along with an overview of open problems. In \cref{app:BYBE,app:ZA}, we respectively discuss some technicalities regarding solutions to the boundary Yang-Baxter equation and present the derivation of an auxiliary partition function.

%%%%%%%%%%%%%%%%%%%%%
%
\section{The integrable spin-one XXZ chain}
\label{sec:spinoneXXZ}
%
%%%%%%%%%%%%%%%%%%%%%

In this section, we discuss our motivation, namely the study of the integrable spin-one XXZ chain with certain twisted boundary conditions, and present our main results.
The spin-chain Hamiltonian has a special eigenvector which we conjecture to be the ground-state for a wide range of the anisotropy parameter.
As we will see, a number of its sum rules and components display intriguing relations with the combinatorics of weighted enumeration of ASMs.

In \cref{sec:hamXXZ}, we introduce the spin-chain Hamiltonian and discuss some characteristics of its spectrum, in particular the existence of a special eigenvalue. \cref{sec:qsumrules} addresses square norms and scalar products involving the corresponding special eigenvector, as well as their connection to the weighted
enumeration of ASMs with symmetries. We state results for certain components of the eigenvectors in \cref{sec:specialcomponents}. Throughout the whole section, we deliberately omit the proofs as the results follow from more general findings in \cref{sec:transfermatrices,sec:sumrules,sec:specialcomps}.

%%%%%%%%%
\subsection{Hamiltonians and simple eigenvalues} 
\label{sec:hamXXZ}
%%%%%%%%%

\paragraph{Definition.} We consider a periodic chain with $N$ sites, each one carrying a quantum spin. The Hilbert space of the system is $V = V_1 \otimes V_2\otimes \cdots \otimes V_N$ where $V_j \simeq \mathbb C^3$. We denote by
\begin{equation}
  |{\Uparrow}\rangle= 
  \begin{pmatrix}
    1\\ 0 \\ 0
  \end{pmatrix}
  ,\quad  
  |{0}\rangle= 
  \begin{pmatrix}
    0\\ 1 \\ 0
  \end{pmatrix}
 ,  \quad 
  |{\Downarrow}\rangle = 
  \begin{pmatrix}
    0\\ 0 \\ 1
  \end{pmatrix}
\end{equation}
the canonical basis vectors of the Hilbert space for a single spin. The canonical basis of $V$ is given by the states $|\sigma_1\sigma_2\cdots\sigma_N\rangle = |\sigma_1\rangle \otimes |\sigma_2\rangle \otimes \cdots\otimes
|\sigma_N\rangle$ where 
$\sigma_j = \,\Uparrow,0$ or $\Downarrow$ for $j=1,\dots,N$. Furthermore, the spin operators for a single site are those of the spin-one representation of $\mathfrak{su}(2)$:
\begin{equation}
   s^{1} = \frac{1}{\sqrt{2}}
   \left(
   \begin{array}{ccc}
   0 & 1 & 0\\
   1 & 0 & 1\\
   0 & 1 & 0
   \end{array}
   \right), \quad
   s^{2} = \frac{1}{\sqrt{2}}
   \left(
   \begin{array}{ccc}
   0 & -\i & 0\\
   \i & 0 & -\i\\
   0 & \i & 0
   \end{array}
   \right),\quad
   s^{3} = 
   \left(
   \begin{array}{ccc}
   1 & 0 & 0\\
   0 & 0 & 0\\
   0 & 0 & -1
   \end{array}
   \right).
\end{equation}
We write $s_j^a$ for the operator $s^a$ acting on the site $j$.

The Hamiltonian which we will study is an integrable spin-one generalisation of the familiar XXZ chain. It is given by \cite{zamolodchikov:81,fateev:81}
\begin{equation}
  H = \sum_{j=1}^N \left(\sum_{a=1}^3 J_a (s_j^a s_{j+1}^a+ 2(s_j^a)^2) - 
\sum_{a,b=1}^3 A_{ab}s_j^as_j^b s_{j+1}^a s_{j+1}^b\right).
\label{eqn:ham}
\end{equation}
The coupling constants $J_a$ and $A_{ab}$ are subject to the relations $A_{ab}=A_{ba}$ and $A_{aa}=J_a$. The remaining constants are expressed in terms of a single real parameter $x$, which measures the anisotropy of the spin chain:
\begin{equation}
  J_1 = J_2 =1,\quad  J_3 = \frac{1}{2}(x^2-2),\quad  A_{12}=1,\quad  A_{13}=A_{23}=x-1.
\end{equation}
For example, the choice $x=2$ corresponds to the well-known Babujan-Takhtajan spin chain \cite{babujian:82,babujian:83,takhtajan:82}.

We also need to specify the boundary conditions, which amounts to relating $s_{N+1}^a$, for $a = 1,2,3$, to spin operators acting on the first site.
The simple identification $s^a_{N+1}=s_{1}^a$ corresponds
to the common choice of periodic boundary conditions. In this paper however, we consider twisted boundary conditions which differ from the periodic case. We focus on two particular cases, which we call the \textit{diagonal twist} and the
\textit{anti-diagonal twist}. They are defined by 
\begin{subequations}\label{eqn:twists}
\begin{alignat}{4}
   & s_{N+1}^1 = -s_1^1, & & s_{N+1}^2 = -s_1^2,& \quad &s_{N+1}^3 = s_1^3, & \quad & \text{(diagonal twist)},\\
   & s_{N+1}^1 = s_1^1, &\quad & s_{N+1}^2 = -s_1^2,& \quad & s_{N+1}^3 = -s_1^3, & \quad & \text{(anti-diagonal twist)}.
\end{alignat}
\end{subequations}
The terminology \textit{(anti-)diagonal twist} will become clear in \cref{sec:transfermatrices}. We note that for both twists and for real $x$, the Hamiltonian is hermitian and therefore diagonalisable with real eigenvalues.

\paragraph{Spectrum.} 
The model is integrable for both twists as we shall see in \cref{sec:transfermatrices}.
In order to understand their special nature beyond integrability, we discuss the spectra of the Hamiltonians. To this end, let us examine some of their symmetries.
First, for both twists, $H$ is spin-reversal invariant:
\begin{equation}
  [H,F] = 0, \quad F = 
  \begin{pmatrix}
    0 & 0 & 1\\
    0 & 1 & 0\\
    1 & 0 & 0
  \end{pmatrix} 
  ^{\otimes N}.
\end{equation}
Second, for the diagonal twist, the Hamiltonian commutes with the total magnetisation 
\begin{equation}\label{eq:mag}
M = \sum_{j=1}^N s_j^3.
\end{equation}
In the case of the anti-diagonal twist, the magnetisation is however only conserved mod $2$:
\begin{equation}
  [H,(-1)^M] = 0.
\end{equation}
These two $\mathbb Z_2$ symmetry operators allow us to divide the Hilbert space $V$ into subsectors that are invariant under the action of $H$. In particular, the special eigenvectors studied in later sections belong to specific eigenspaces of these $\mathbb Z_2$ operators.

We start our discussion of the spectrum with the diagonal twist. The spin-reversal invariance implies that the Hamiltonian does not couple the eigenspaces of the operator $F$. Hence, we may diagonalise $H$ separately within them. We shall prove the following
proposition.
\begin{proposition}\label{eq:spectra.diag}
   For any $N\ge 2$ and $x \in \mathbb R$,
    the non-zero part of the spectrum of $H$ with diagonal twist restricted to the subsectors where $F\equiv 1$ and $F\equiv -1$ coincide (including degeneracies).
\end{proposition}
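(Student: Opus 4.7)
I would approach the proposition by constructing a Hermitian operator $\tilde Q$ acting on $V$ with three properties: (i)~$\tilde Q^2 = H$, (ii)~$\{\tilde Q, F\} = 0$, and (iii)~$\tilde Q$ is invertible on $(\ker H)^\perp$. Note that (iii) is automatic from (i). Writing $V_\pm$ for the $\pm 1$-eigenspaces of $F$, property (ii) then implies that $\tilde Q$ exchanges $V_+$ and $V_-$, while (i) gives $[\tilde Q, H] = 0$. Hence $\tilde Q$ restricts to an $H$-equivariant bijection between the orthogonal complements of $\ker H$ in $V_+$ and in $V_-$, which is exactly the content of the proposition.

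The natural source of such an operator is the lattice supersymmetry of \cite{hagendorf:13}: for the twist at hand, the Hamiltonian admits a decomposition $H = \{Q, Q^\dagger\}$ with $Q^2 = (Q^\dagger)^2 = 0$. The Hermitian combination $\tilde Q = Q + Q^\dagger$ then satisfies
\begin{equation*}
\tilde Q^2 = Q^2 + (Q^\dagger)^2 + \{Q, Q^\dagger\} = H,
\end{equation*}
so properties (i) and (iii) come for free. The main obstacle is to establish the conjugation identity $F Q F = -Q^\dagger$, from which $\{\tilde Q, F\} = 0$ follows, using the explicit form of $Q$ from \cite{hagendorf:13} together with the single-site relations $F s^1 F = s^1$, $F s^2 F = -s^2$, $F s^3 F = -s^3$. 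The diagonal twist should be precisely the boundary condition for which this compatibility holds; should the verification yield $F Q F = +Q^\dagger$ instead, one works with the Hermitian operator $\i (Q - Q^\dagger)$, which also squares to $H$ and anti-commutes with $F$.

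A complementary reduction, based on the conservation laws rather than on supersymmetry, exploits the fact that $[H, M] = 0$ together with $\{F, M\} = 0$ (an immediate consequence of $F s^3 F = -s^3$). Decomposing $V = \bigoplus_{m} V_m$ into magnetisation eigenspaces, $F$ intertwines $V_m$ with $V_{-m}$ as $H$-modules. The assignments $v \mapsto v \pm F v$ then identify $V_m$ with the two $F$-eigenspaces inside $V_m \oplus V_{-m}$ in an $H$-equivariant way, so the contributions of all sectors with $m \ne 0$ to the spectra of $H|_{V_+}$ and $H|_{V_-}$ automatically coincide with multiplicities. This reduces the proposition to the zero-magnetisation sector $V_0$, where the dimensions of $V_0 \cap V_\pm$ generically differ (accounting for the singlet zero modes, which is why only the non-zero part of the spectrum is claimed to match); it is exactly here that the supersymmetric intertwiner $\tilde Q$ must take over to pair the surviving non-zero eigenvalues.
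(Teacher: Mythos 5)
Your abstract skeleton is the right one, and it is essentially the skeleton of the paper's own argument: exhibit an operator that anticommutes with $F$, commutes with $H$, and is invertible away from the zero-energy subspace. The gap is in your choice of that operator. The lattice supersymmetry of \cite{hagendorf:13} is \emph{dynamical}: the supercharge changes the length of the chain, mapping the $N$-site Hilbert space into the $(N+1)$-site one, and the decomposition $H=\{Q,Q^\dagger\}$ is shorthand for $H_N = Q_N^\dagger Q_N + Q_{N-1}Q_{N-1}^\dagger$ on the graded sum $\bigoplus_N V_N$. Consequently $\tilde Q = Q+Q^\dagger$ does not act on a fixed $V=(\mathbb C^3)^{\otimes N}$ at all; it maps $V_N$ into $V_{N-1}\oplus V_{N+1}$, so even granting $\{\tilde Q,F\}=0$ it would pair eigenvalues of $H_N$ with eigenvalues of $H_{N\pm1}$, not the $F\equiv 1$ and $F\equiv -1$ sectors of the same Hamiltonian. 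In addition, the supersymmetric form of $H$ holds only on certain subsectors of the Hilbert space (see the remarks in \cref{sec:conclusion}), so your property (i) fails globally even before the length-changing issue is addressed. Your complementary magnetisation argument is correct as far as it goes --- $F$ does intertwine $V_m$ and $V_{-m}$ as $H$-modules, so all sectors with $m\neq 0$ contribute identically to the two $F$-eigenspaces --- but it only reduces the claim to $V_0$, which is exactly where your intertwiner is missing.

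The operator that actually does the job in the paper is the ten-vertex transfer matrix $T^{(1)}(z)$: for the diagonal twist it anticommutes with $F$ (the analogue of \eqref{eq:comms}), while the fused transfer matrix $T^{(2)}(z)$, from which $H$ is obtained via \eqref{eqn:hamfromT}, commutes with $F$. The fusion relation \eqref{eqn:fusion} plays the role of your identity $\tilde Q^2=H$: if $|\psi\rangle$ is an eigenvector of $T^{(1)}(z)$ with $\theta^{(1)}(z)\neq 0$, then $|\psi\rangle$ and $F|\psi\rangle$ are linearly independent, the projections $(1\pm F)|\psi\rangle/2$ are nonzero vectors in the two $F$-sectors exchanged by $T^{(1)}(z)$, and they carry the same $T^{(2)}$-eigenvalue and hence the same energy; if instead $\theta^{(1)}(z)=0$, the fusion relation forces the special eigenvalue \eqref{eqn:specialev} and therefore $E=0$. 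This is precisely your dichotomy, realised with an intertwiner that genuinely acts within $(\mathbb C^3)^{\otimes N}$.
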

\begin{figure}[h] 
\centering
\includegraphics[width=.9\textwidth]{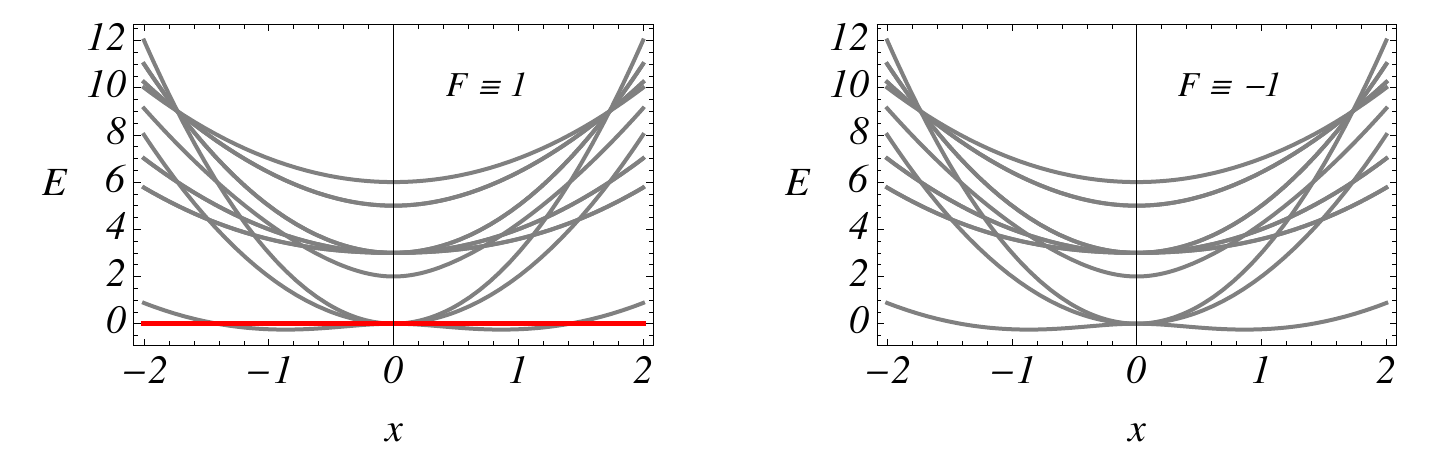} 
\caption{The spectrum of the Hamiltonian with the diagonal twist restricted to the two eigenspaces of the spin-flip
operator $F$ as a function of $x$, for $N=3$ sites. With the exception of an extra eigenvalue $E=0$ 
in the subsector where $F\equiv 1$, the spectra coincide exactly.}
\label{fig:specdiagN3}
\end{figure}
The spectrum for $N=3$ as a function of $x$ is illustrated in \cref{fig:specdiagN3}. 
The simple eigenvalue $E=0$ is the only one not present in both subsectors.
We call the corresponding eigenvectors \textit{zero-energy states}.
\begin{theorem}  \label{thm:DTwistSpecialEV}
  For any $N\ge 2$ and $x \in \mathbb R$,
  the Hamiltonian with diagonal twist possesses the simple eigenvalue $E=0$ in the subsector where $M\equiv 0$ and $F\equiv 1$. 
\end{theorem}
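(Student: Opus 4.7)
The plan is to derive the theorem as a homogeneous-limit specialisation of a statement about the inhomogeneous nineteen-vertex transfer matrix with diagonal twist, which is proved in \cref{sec:transfermatrices}. The integrable Hamiltonian $H$ is recovered, at a reference value of the spectral parameter $z$ and in the homogeneous limit $z_1,\ldots,z_N \to z_0$, as a logarithmic derivative of a family of commuting transfer matrices $t(z;z_1,\ldots,z_N)$ built by the fusion procedure from the six-vertex $R$-matrix. Hence, if one can exhibit a simple eigenvalue $\tau(z;z_1,\ldots,z_N)$ of $t(z)$ whose logarithmic derivative at $z_0$ vanishes, this simple eigenvalue descends directly to a simple eigenvalue $E=0$ of $H$.

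The first step is to identify and exhibit this distinguished eigenvalue. The natural candidate is $\tau(z) \equiv \prod_j f(z,z_j)$ for an appropriate elementary factor $f$, chosen so that $\tau$ solves the functional/crossing relations that $t(z)$ obeys via the Yang-Baxter algebra and fusion. One verifies (following Stroganov's strategy for the six-vertex model) that the crossing, unitarity and fusion identities of the $R$-matrix force the transfer matrix, when acting on a suitable eigenvector, to satisfy recursions in the inhomogeneities $z_j$ that admit a unique polynomial solution of the required degree, namely $\tau$. Existence of the associated eigenvector $|\Psi\rangle$ is then established by explicit construction via the algebraic Bethe ansatz, as carried out in \cref{sec:transfermatrices}. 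The normalisation of $\tau$ is arranged so that its logarithmic derivative at the homogeneous point vanishes, which yields the eigenvalue $E=0$.

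The location of $|\Psi\rangle$ in the sector $M\equiv 0$ and $F\equiv 1$ follows from symmetry: for the diagonal twist, $t(z)$ commutes with both $M$ and $F$, so the eigenvector associated with a simple eigenvalue of $t(z)$ is automatically a joint eigenvector of these two $\mathbb{Z}_2$ operators. A direct inspection of the Bethe-ansatz construction (or of the action of $t(z)$ at a convenient specialisation of the spectral parameter and inhomogeneities, where $|\Psi\rangle$ reduces to a tractable reference vector) identifies the eigenvalues as $M\equiv 0$ and $F\equiv 1$.

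The main obstacle is to prove simplicity. One has to rule out any accidental degeneracy of $\tau(z)$ in the spectrum of $t(z)$. The cleanest route is to show that the polynomial functional equation satisfied by eigenvalues of $t(z)$, combined with the recursions in the $z_j$, admits $\tau$ as its unique solution of the prescribed degree in the considered sector; this typically requires a degeneration argument in which some $z_j$ are sent to a limit where the full spectrum can be enumerated and distinguished. Transferring simplicity from the inhomogeneous transfer matrix to the homogeneous Hamiltonian is then a continuity argument, relying on polynomial dependence of $t(z;z_1,\ldots,z_N)$ on the inhomogeneities; one must check that no other eigenvalue of $H$ collides with $0$ along the homogeneous limit, which is ensured by the rigidity of the unique polynomial solution $\tau$.
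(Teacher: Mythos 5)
Your overall strategy — exhibit the special eigenvalue of the inhomogeneous transfer matrix, construct the eigenvector by Bethe ansatz, and descend to $H$ through the logarithmic derivative in the homogeneous limit — is indeed the paper's route. But there is a genuine gap at the one point where the paper has to do new, hard work: the determination of the spin-reversal eigenvalue $F\equiv 1$. You assert that $t(z)$ commutes with both $M$ and $F$ and that the $F$-eigenvalue then follows by ``direct inspection of the Bethe-ansatz construction''. For the diagonal twist $F$ in fact \emph{anticommutes} with $T^{(1)}(z)$ and only commutes with the fused transfer matrix $T^{(2)}(z)$ of \eqref{eqn:fusion}; even granting simplicity of the eigenvalue $-a(qz)d(z)$ of $T^{(2)}(z)$, the commutation argument yields only $F|\psi_{\text{\rm \tiny D}}\rangle=\pm|\psi_{\text{\rm \tiny D}}\rangle$, and fixing the sign is not an inspection of \eqref{eqn:defpsiD}: it is \cref{prop:SRdiagonal}, whose proof computes $\langle \psi_{\text{\rm \tiny D}}|F\prod_j\mathcal B(z_j)|\wedge\rangle$ as a $2N\times N$ ten-vertex domain-wall partition function, defuses it into a $2N\times 2N$ six-vertex Izergin--Korepin determinant with specialised inhomogeneities, and matches the result to $\langle \psi_{\text{\rm \tiny D}}|\prod_j\mathcal B(z_j)|\wedge\rangle$ before invoking completeness of the separation-of-variables basis. (A shortcut via one non-vanishing component at an $F$-invariant spin configuration would also work in principle, but proving such a component is non-zero is itself one of the partition-function computations of the paper, not a free observation.) Since the invariance under $F$ is exactly the part of \cref{thm:DTwistSpecialEV} that was left open by the earlier reference, your proposal omits the substance of the proof.

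Two smaller points. First, the zero energy does not come from ``arranging the normalisation of $\tau$ so that its logarithmic derivative vanishes'': a $z$-independent normalisation cannot change a logarithmic derivative, and a $z$-dependent one would change the transfer matrix. The cancellation is between the additive constant $N$ in $H = N + \tfrac{[q^2]}{2}\tfrac{\diff}{\diff z}\log T^{(2)}(z)|_{z=1}$ and the explicit logarithmic derivative of $\theta^{(2)}(z)=-a(qz)d(z)$. Second, your functional-equation (Stroganov-type) characterisation of the eigenvalue is a legitimate alternative in spirit, but the paper proceeds more directly: the Bethe equations are solved by $n=N$, $z_j=w_j$, giving $\theta^{(1)}(z)=0$ identically, and non-degeneracy of this eigenvalue is the content of \cref{prop:psidunique}, which itself rests on an assumed completeness of the Bethe ansatz rather than on a rigidity argument for polynomial solutions; your claim that no eigenvalue collision can occur in the homogeneous limit is likewise not established by the uniqueness of the inhomogeneous eigenvalue function alone.
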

The existence of the simple eigenvalue was proved in \cite{hagendorf:15} through the explicit construction of a zero-energy state with zero magnetisation, which we denote by $|\phi_{\text{\rm \tiny D}}\rangle$. Its invariance under 
spin reversal will be proved in the present article.

For the anti-diagonal twist, the situation is similar. \cref{fig:specantidiagN3} illustrates the spectrum of the Hamiltonian restricted to the two eigenspaces of the operator $(-1)^M$ for $N=3$ sites as a function of $x$. We see once again that the spectra coincide exactly with the exception of the eigenvalue $E=0$. Indeed, we shall prove the following statement.

\begin{proposition}\label{eq:spectra.antidiag} 
   For any $N\ge 2$ and $x \in \mathbb R$,
   the non-zero part of the spectrum of $H$ with anti-diagonal twist restricted to the subsectors where $(-1)^M \equiv 1$ and $(-1)^M \equiv -1$ coincide (including degeneracies).
\end{proposition}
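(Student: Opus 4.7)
The plan is to mirror the pairing argument from supersymmetric quantum mechanics, motivated by the lattice supersymmetry mentioned in the introduction. The spin-one chain with the anti-diagonal twist carries a nilpotent supercharge $Q$ with $Q^2 = 0$ that commutes with $H$, and the proof reduces to exhibiting, from $Q$ and its adjoint, an explicit isomorphism between the $(-1)^M = +1$ and $(-1)^M = -1$ subspaces of every non-zero-energy eigenspace. The input specific to the anti-diagonal twist is that this supercharge carries odd magnetisation, i.e.\ $\{Q, (-1)^M\} = 0$, in contrast with the diagonal twist for which $Q$ commutes with $(-1)^M$ because it changes $M$ by a fixed even integer.

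With this input in hand, I would proceed as follows. Fix $E \neq 0$ in the spectrum of $H$ and decompose the $E$-eigenspace as $V_E = V_E^+ \oplus V_E^-$ according to the eigenvalue of $(-1)^M$. From $[Q,H] = 0$ the supercharge preserves $V_E$, and from $\{Q,(-1)^M\} = 0$ it maps $V_E^\pm$ into $V_E^\mp$; the same statements hold for $Q^\dagger$. Setting $\Omega = Q + Q^\dagger$, one uses $Q^2 = (Q^\dagger)^2 = 0$ to compute $\Omega^2 = \{Q, Q^\dagger\} = H$ (under the standard supersymmetric algebra), so that on $V_E$ the self-adjoint operator $\Omega/\sqrt{E}$ is an involution which still anticommutes with $(-1)^M$. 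Its restriction to $V_E^+$ is therefore an invertible linear map $V_E^+ \xrightarrow{\sim} V_E^-$, giving $\dim V_E^+ = \dim V_E^-$ for every non-zero $E$, which is the statement of the proposition. The matching degeneracies in the two sectors are then immediate.

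The main obstacle is the verification of $\{Q, (-1)^M\} = 0$ for the anti-diagonal twist. Because the twist breaks the conservation of $M$, the supercharge must be built using twisted boundary operators at site $N$, and its parity with respect to $(-1)^M$ is not obvious a priori, especially since the analogous construction for the diagonal twist yields a supercharge with the opposite behaviour. A more uniform route, consistent with the paper's declared strategy of deducing the propositions of \cref{sec:spinoneXXZ} from results in \cref{sec:transfermatrices,sec:sumrules,sec:specialcomps}, is to construct the desired intertwiner $\Omega$ directly from the inhomogeneous transfer matrix $T(u)$ and the $K$-matrix encoding the anti-diagonal twist. In that approach, commutation with $H$ follows automatically from $[\Omega, T(u)] = 0$, and one need only check the anticommutation with $(-1)^M$ at the level of auxiliary-space traces; the bulk of the work then shifts to the explicit $R$- and $K$-matrix manipulations set up in \cref{sec:transfermatrices}.
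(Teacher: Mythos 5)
Your primary route has a genuine gap: it presupposes a nilpotent supercharge $Q$ acting on the fixed Hilbert space $(\mathbb C^3)^{\otimes N}$ with $\{Q,Q^\dagger\}=H$ and $[Q,H]=0$. No such object is available here. The lattice supersymmetry of the spin-one chain invoked in the introduction (from \cite{hagendorf:13}) is \emph{dynamical}: the supercharge changes the length of the chain, so it relates chains of different sizes rather than the two $(-1)^M$ sectors of a single chain, and it is tied to the diagonal twist. Moreover, if an operator with $\Omega^2=\{Q,Q^\dagger\}=H$ existed on $V$, the spectrum of $H$ would be automatically non-negative — but non-negativity is only stated as a \emph{conjecture} in this paper (and for the diagonal twist at odd $N$ it is even conjectured to fail for $0<|x|<\sqrt2$). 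So the algebra you assume cannot exist in the generality you need, and the obstacle is not merely the verification of $\{Q,(-1)^M\}=0$ that you flag.

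Your fallback — building the intertwiner from the transfer matrix — is the right idea and is essentially what the paper does in \cref{sec:tm}, but the decisive ingredients are missing from your sketch. First, there is no $K$-matrix here: the twist is a matrix $\Omega^{(1)}$ inserted in the auxiliary trace of a periodic chain, and for the anti-diagonal twist the spin-$\frac12$-auxiliary transfer matrix is $T^{(1)}(z)=\i(\mathcal B(z)+\mathcal C(z))$, which manifestly anticommutes with $(-1)^M$ since $\mathcal B$ and $\mathcal C$ shift $M$ by $\mp1$; this is the intertwiner, and it commutes with $H$ by \eqref{eq:CommutationOfTs}. Second, one must rule out that $T^{(1)}$ annihilates a non-zero-energy eigenvector: this is where the fusion relation \eqref{eqn:fusion}, $T^{(2)}(z)=T^{(1)}(z)T^{(1)}(qz)-a(qz)d(z)\,\boldsymbol 1$, enters. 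If $\theta^{(1)}(z)=0$ then $\theta^{(2)}(z)$ is forced to the special value \eqref{eqn:specialev}, which via \eqref{eqn:hamfromT} gives $E=0$; hence every eigenvector with $E\neq 0$ has $\theta^{(1)}(z)\neq 0$, and $T^{(1)}(z)$ maps its projections $|\psi_\pm\rangle=\tfrac12(1\pm(-1)^M)|\psi\rangle$ onto nonzero multiples of each other, pairing the two sectors. Without identifying this interplay between the two fused transfer matrices, the "auxiliary-space trace" check you defer to does not by itself yield the isomorphism on each non-zero eigenspace.
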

\begin{figure}[h]
\centering
\includegraphics[width=.9\textwidth]{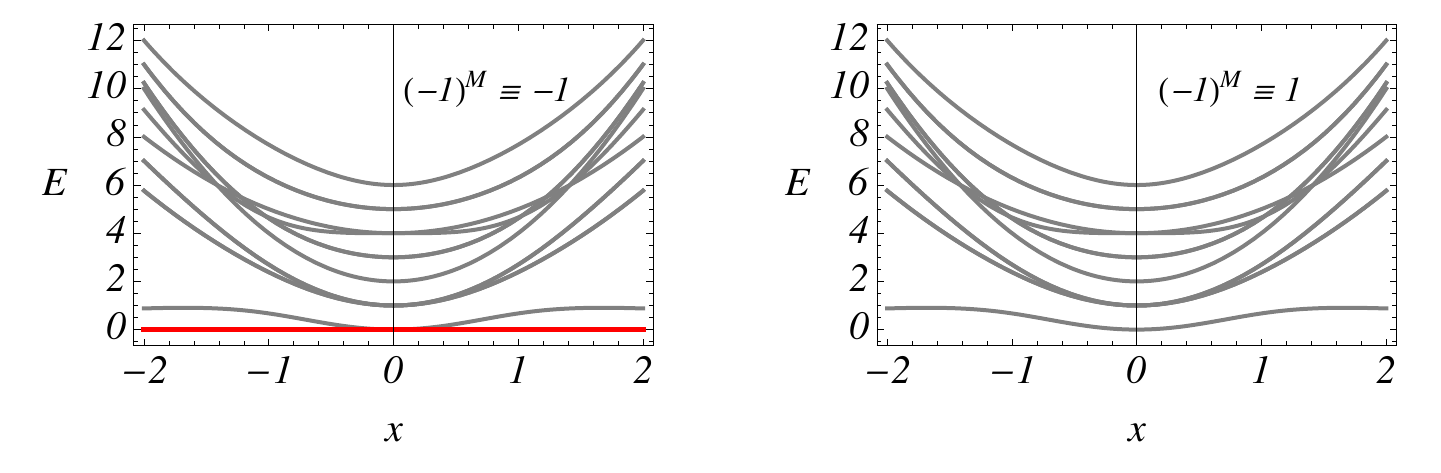}
\caption{The spectrum of the Hamiltonian with the anti-diagonal twist restricted to the two eigenspaces of the operator $(-1)^M$ as a function of $x$, for $N=3$ sites. The two spectra coincide exactly with the exception of the simple eigenvalue $E=0$.}
\label{fig:specantidiagN3}
\end{figure} 
As for the diagonal twist, we focus on the special eigenvalue $E=0$. \begin{theorem}
   \label{thm:ADTwistSpecialEV}
  For any $N\ge 2$ and $x \in \mathbb R$, the Hamiltonian with anti-diagonal twist possesses the simple eigenvalue $E=0$ in the subsector where $(-1)^M \equiv (-1)^N$ and $F\equiv (-1)^{N}$.
\end{theorem}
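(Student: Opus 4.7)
The plan is to embed the Hamiltonian $H$ with anti-diagonal twist in a commuting family of transfer matrices $t(u;z_1,\dots,z_N)$ of the inhomogeneous nineteen-vertex model, constructed from the six-vertex $R$-matrix by fusion with an appropriately chosen anti-diagonal twist matrix. One checks (by a standard calculation for the fused model) that $H$ is recovered, up to an additive constant, as a logarithmic derivative of $t(u;\mathbf{z})$ at a distinguished value of $u$ in the homogeneous limit $z_j \to 0$. Consequently, the existence and simplicity of the eigenvalue $E=0$ for $H$ in the claimed subsector will follow from the existence and simplicity of a distinguished eigenvalue $\tau_0(u)$ of $t(u;\mathbf{z})$, together with the identification of the symmetry quantum numbers of the associated eigenvector.

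Because the anti-diagonal twist breaks conservation of the magnetisation $M$, there is no pseudo-vacuum annihilated by the off-diagonal entries of the monodromy matrix, and the usual algebraic Bethe ansatz is unavailable. I would therefore employ the quantum separation of variables technique of Niccoli. Concretely, one identifies an entry of the (twisted, fused) monodromy matrix which, for generic inhomogeneities, is diagonalisable with simple spectrum, and whose eigenvalues lie on a finite lattice of shifted inhomogeneities. In this SoV basis, the eigenvalue equation $t(u)\,|\Psi\rangle = \tau(u)\,|\Psi\rangle$ becomes equivalent to a finite set of discrete conditions on the rational function $\tau$, obtained by evaluation at the $z_j$'s and their fusion-shifted images; conversely, each such $\tau$ of the correct functional form determines a unique eigenvector through an explicit SoV wave function.

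Given this framework, I would write an explicit candidate $\tau_0(u)$ of the correct functional class and verify by direct substitution that it satisfies all the required discrete conditions at the inhomogeneities. This produces one explicit simple eigenvector $|\phi_{\text{\tiny AD}}\rangle$ of $t(u;\mathbf{z})$. Taking $z_j\to 0$ and passing to the logarithmic derivative then shows $H|\phi_{\text{\tiny AD}}\rangle = 0$, while simplicity of $\tau_0$ in the SoV characterisation descends to the $E=0$ eigenspace of $H$ (non-degeneracy of $\tau_0$ persists in the homogeneous limit for a generic approach path, and then by continuity and the dimension-counting argument alluded to in \cref{eq:spectra.antidiag}). To pin down the subsector, I would analyse how $F$ and $(-1)^M$ conjugate the (twisted) monodromy matrix: $F$ intertwines the twist with itself up to a controlled sign, and $(-1)^M$ flips the signs of the off-diagonal entries in a computable way. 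These actions descend to explicit transformations of the SoV basis, and applying them to $|\phi_{\text{\tiny AD}}\rangle$ yields the eigenvalues $F\equiv (-1)^N$ and $(-1)^M\equiv (-1)^N$.

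The main obstacle is the QSoV construction itself for the spin-one fused model with anti-diagonal twist: one must prove completeness of the SoV basis, establish the precise discrete characterisation of transfer-matrix eigenvalues, and verify that the corresponding monodromy entry is indeed diagonalisable with simple spectrum. Once this machinery is in place, exhibiting $\tau_0$, checking the Baxter-type conditions at the $z_j$, extracting the homogeneous limit and reading off the symmetry charges are all comparatively mechanical steps.
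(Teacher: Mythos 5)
Your overall architecture is the same as the paper's: embed $H$ in the commuting family $T^{(2)}(z)$ built by fusion with the anti-diagonal twist matrix, recover $H$ as a logarithmic derivative in the homogeneous limit, note that the broken $M$-conservation rules out the algebraic Bethe ansatz, construct the distinguished eigenvector in Niccoli's separation-of-variables basis where $\mathcal D(z)$ acts diagonally, prove uniqueness for generic inhomogeneities (this is \cref{prop:psiAD.def}; the distinguished eigenvalue is $\theta^{(1)}(z)=0$, equivalently $\theta^{(2)}(z)=-a(qz)d(z)$ via the fusion relation), and finally read off the symmetry charges.

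The step you dismiss as ``comparatively mechanical'' that is in fact genuinely delicate is the homogeneous limit. The SoV reconstruction \eqref{eq:explicitEV} expresses $|\psi_{\text{\rm \tiny AD}}\rangle$ with denominators $\llangle\bm h||\bm h\rrangle$ given by \eqref{eqn:scalarprod}, which vanish as the $w_j$ coalesce, while the height states simultaneously cease to form a basis; nothing in the generic-parameter construction guarantees that the canonical-basis components remain finite, so ``taking $z_j\to 0$ and passing to the logarithmic derivative'' is not yet licensed. The paper's resolution (\cref{prop:existencePhiAD,prop:exist4allq}) is an a priori bound: each squared component is dominated by the sum rule $Z_{\text{\rm \tiny AD}}(y{=}1;w_1,\dots,w_N)$, which is evaluated independently as a determinant with a manifestly finite homogeneous limit, first for $|q|=1$ along real paths and then extended to all $q\in\mathbb C^\times$ by a polynomiality argument; some argument of this kind must be supplied. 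Relatedly, ``non-degeneracy of $\tau_0$ persists in the homogeneous limit \dots by continuity'' is not a valid inference on its own, since degeneracies can only increase, not decrease, when parameters are specialised; the paper instead pins down the quantum numbers through the translation covariance $(S')^N=(-1)^N F$ (\cref{prop:trslcov,prop:SRantidiagonal}) together with the doublet structure of \cref{eq:spectra.antidiag}, rather than by conjugating the monodromy matrix in the SoV basis as you propose. Your route to the charges is plausible but would need to be carried out; the paper's is arguably cleaner because it bypasses the SoV basis entirely at that stage.
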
 

We shall prove this theorem through the explicit construction of a zero-energy state $|\phi_{\text{\rm \tiny AD}}\rangle$ in the subsector stated in this theorem.

\paragraph{Zero-energy states and ground states.} For $N=3$ sites, the spectra displayed in \cref{fig:specdiagN3,fig:specantidiagN3} suggest that $E=0$ is the ground-state eigenvalue for certain ranges of the anisotropy parameter $x$. 
Using \textsc{Mathematica}, we have computed the exact spectra of the two Hamiltonians for system sizes up to $N=7$ and many values for $x$ in the interval $[-10,10]$.

The results from this investigation are compatible with the following three conjectures.
\begin{conjecture} For chains of even length 
$N$ and all values of the anisotropy parameter $x$, the spectrum of the Hamiltonian with the diagonal twist is non-negative. The ground-state eigenvalue is $E=0$. It is non-degenerate for $x\neq 0$ and has degeneracy $N+1$ for $x = 0$.
\end{conjecture}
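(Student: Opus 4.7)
The plan is to combine the supersymmetric structure of~\cite{hagendorf:13} with the existence result of \cref{thm:DTwistSpecialEV}. For the diagonal twist, the Hamiltonian can be written in the form $H = \{Q,Q^\dagger\}$ with $Q$ a nilpotent supercharge, so for every $|\psi\rangle \in V$ one has
\begin{equation*}
\langle\psi|H|\psi\rangle = \|Q|\psi\rangle\|^2 + \|Q^\dagger|\psi\rangle\|^2 \ge 0.
\end{equation*}
This immediately yields the non-negativity claim uniformly in $x$, and combined with \cref{thm:DTwistSpecialEV} identifies $E=0$ as a ground-state eigenvalue. Moreover, since $\{Q,Q^\dagger\}|\psi\rangle = 0$ is equivalent to $Q|\psi\rangle = Q^\dagger|\psi\rangle = 0$, the zero-energy subspace coincides with $\ker Q \cap \ker Q^\dagger$ and is naturally identified with the cohomology of $Q$. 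The degeneracy statements in the conjecture thus reduce to computing the dimension of this cohomology as a function of $x$.

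For the non-degeneracy at $x \neq 0$, \cref{thm:DTwistSpecialEV} provides one zero-energy vector $|\phi_{\text{\rm \tiny D}}\rangle$ in the sector $(M,F) \equiv (0,1)$. Since $[H,M] = [H,F] = 0$, the zero-energy space decomposes according to these quantum numbers, and one must show that no other sector contributes. \cref{eq:spectra.diag} aligns the nonzero spectra across the two $F$-eigenspaces, so any extra zero mode must live in the $F \equiv 1$ part, and the goal becomes to rule out zero modes in sectors with $M \neq 0$. Here one would exploit the Bethe-ansatz basis of transfer-matrix eigenvectors produced by the QISM analysis of \cref{sec:transfermatrices}: by expanding a hypothetical additional zero mode in this basis and tracking the action of $Q$ and $Q^\dagger$ sector by sector, one aims to show that $|\phi_{\text{\rm \tiny D}}\rangle$ exhausts the cohomology for generic $x$.

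For the $(N+1)$-fold degeneracy at $x = 0$, the specialisation forces $J_3 = -1$ and $A_{13} = A_{23} = -1$, and is expected to trigger an enhanced symmetry beyond the generic $\mathbb Z_2 \times \mathbb Z_2$. A natural candidate is a global $\mathfrak{sl}(2)$ (or quantum-group) action commuting with $H|_{x=0}$ but not with $H$ at generic $x$; the multiplicity $N+1$ is precisely the dimension of the spin-$N/2$ representation, which is strongly suggestive. One would identify the lowering and raising operators of this symmetry, apply them to $|\phi_{\text{\rm \tiny D}}\rangle|_{x=0}$ to generate an $(N+1)$-dimensional multiplet of zero-energy vectors, and verify its irreducibility to preclude further degeneracy.

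The principal obstacle, and the reason the statement remains a conjecture, is the uniqueness claim in magnetisation sectors with $M \neq 0$. Non-negativity is essentially free from supersymmetry, and \cref{thm:DTwistSpecialEV} supplies the existence of one zero mode, but controlling $\ker Q \cap \ker Q^\dagger$ outside the sector where $|\phi_{\text{\rm \tiny D}}\rangle$ is constructed is a genuine cohomological problem that QISM does not directly resolve. The same difficulty underlies the $x=0$ case, where the hypothetical enhanced symmetry must not only exist but must account for \emph{all} zero modes. This is why the present paper validates the statement only numerically up to $N = 7$ and records it as a conjecture.
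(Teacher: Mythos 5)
The statement you are addressing is recorded in the paper as a \emph{conjecture}: the authors support it only by exact diagonalisation up to $N=7$ and offer no proof, so there is no argument of theirs to compare yours against. Your proposal is likewise not a proof, and you concede as much; but beyond the gaps you acknowledge, your very first step is already untenable. You assert that for the diagonal twist $H=\{Q,Q^{\dagger}\}$ on all of $V$, whence $\langle\psi|H|\psi\rangle\ge 0$ for every $|\psi\rangle$. The lattice supersymmetry of \cite{hagendorf:13} is a length-changing one and is present only in certain subsectors of the Hilbert space; the paper is explicit about this in \cref{sec:conclusion}, where it notes that the spectrum is automatically non-negative only \emph{in the subsectors where the supersymmetry is present}, and that the characterisation of the ground-state eigenvalue in the sectors without supersymmetry remains a challenging open problem. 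A decisive sanity check: if your global decomposition held, it would apply equally to odd $N$ and force a non-negative spectrum there as well --- but the companion conjecture for odd chains (and the numerics behind it) asserts a strictly negative, doubly degenerate ground-state eigenvalue for $0<|x|<\sqrt{2}$. So non-negativity is not ``essentially free''; it is precisely one of the open parts of the problem.

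The remaining steps are strategy rather than argument. Identifying the zero-energy space with $\ker Q\cap\ker Q^{\dagger}$ and hence with a cohomology is standard where the supersymmetry acts, but you give no mechanism for computing that cohomology in the sectors with $M\neq 0$, and the Bethe-ansatz completeness you would need to extend the uniqueness argument of \cref{prop:psidunique} beyond the special eigenvalue is itself only an assumption in the paper, valid for generic parameters and silent at special points such as $x=0$. The $(N+1)$-fold degeneracy at $x=0$ rests on an enhanced symmetry you have not exhibited, let alone shown to exhaust the kernel. None of this assembles into a proof with the tools available in the paper; the honest conclusion --- which the authors themselves draw by labelling the statement a conjecture --- is that it remains open.
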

\begin{conjecture}
For chains of odd length $N$ and $0< |x|< \sqrt{2}$, the Hamiltonian with the diagonal twist has a single negative doubly-degenerate ground-state eigenvalue.\footnote{This was pointed out to us by R.~Weston and J.~Yang.} All other eigenvalues are non-negative, and the first excited state has the non-degenerate eigenvalue $E=0$. 
For $x = 0$ and $|x|\ge\sqrt{2}$, the spectrum is non-negative and $E=0$ is the ground-state eigenvalue. It has degeneracy $2N+1$ for $x = 0$, three for $|x|=\sqrt{2}$ and one for $|x|>\sqrt{2}$.
\end{conjecture}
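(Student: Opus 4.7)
The plan is to combine the symmetry analysis introduced above with a case decomposition in $x$, treating the easy regimes first and isolating the genuinely hard step. Two structural facts are immediate. From \cref{thm:DTwistSpecialEV}, $E=0$ is simple in the subsector $(M\equiv 0,\,F\equiv 1)$; from \cref{eq:spectra.diag}, any non-zero eigenvalue appears with the same multiplicity in the two $F$-sectors; and the symmetry $M\to -M$ implemented by $F$ forces every level with $M\neq 0$ to appear at least twice, once in each $F$-sector. Together these imply that once existence of the relevant levels is established, the advertised degeneracies follow automatically: $E=0$ remains simple whenever $|\phi_{\text{\rm \tiny D}}\rangle$ is the only zero-energy state, and any negative level is automatically doubly degenerate.

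Next, I would analyse the enhanced symmetries at the two special values $x=0$ and $|x|=\sqrt{2}$. At $x=0$ the couplings reduce to $J_3=-1$ and $A_{13}=A_{23}=-1$, which strongly suggests an enlarged $\mathfrak{su}(2)$-type commutant; I would search for a conserved operator built from bilinears in the ladder operators $s_j^{\pm}$ that commutes with $H$ precisely at $x=0$, and then match the predicted degeneracy $2N+1$ against the dimension of a natural multiplet of this enlarged symmetry algebra. At $|x|=\sqrt{2}$ the vanishing of $J_3$ together with the simplification of $A_{13}$ similarly points to an extra intertwiner, and the three zero-energy states should be constructible explicitly from $|\phi_{\text{\rm \tiny D}}\rangle$ and its images under this symmetry.

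The main obstacle is the interior regime $0<|x|<\sqrt{2}$, where one must show that a single negative level exists strictly below $E=0$. The natural strategy is Bethe-ansatz based: use the algebraic Bethe ansatz for the diagonal twist from \cref{sec:transfermatrices} to identify a distinguished Bethe solution whose energy vanishes at $|x|=\sqrt{2}$ and becomes negative for smaller $|x|$. This should be paired with a lower-bound argument at the threshold, ideally a decomposition $H = Q^\dagger Q + Q Q^\dagger$ of Witten-type with three-dimensional kernel at $|x|=\sqrt{2}$, to certify that $E=0$ is the bottom of the spectrum exactly at the transition. Continuity of the spectrum in $x$ combined with analyticity of the Bethe roots away from $|x|=\sqrt{2}$ would then propagate these endpoint controls to the whole interval. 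The crucial difficulty is producing (or ruling out) such a positivity-forcing decomposition at $|x|=\sqrt{2}$: absent this structural input, quantitatively bounding the lowest eigenvalue requires a fairly delicate analysis of the Bethe equations across the transition.
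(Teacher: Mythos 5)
The statement you are trying to prove is left as a \emph{conjecture} in the paper: the authors support it only by exact diagonalisation up to $N=7$ sites and explicitly defer its proof, remarking in the conclusion that positivity of the spectrum is automatic only in the subsectors where the lattice supersymmetry of \cite{hagendorf:13} acts (where $H=\{Q,Q^\dagger\}$ forces $E\ge 0$ and zero-energy states are ground states), and that ``the characterisation of the ground-state eigenvalue in sectors without supersymmetry remains a challenging open problem.'' So there is no proof in the paper to compare against, and your text should be read as a research plan rather than a proof.

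As a plan it contains one sound observation and several genuine gaps. The sound part: by \cref{eq:spectra.diag} every non-zero eigenvalue occurs with equal multiplicity in the two $F$-sectors, so any negative level is automatically \emph{at least} doubly degenerate; combined with \cref{thm:DTwistSpecialEV} this correctly reduces the degeneracy claims to existence and counting statements. The gaps: (i) nothing in your argument bounds the number of negative levels --- you must exclude that two or more distinct negative eigenvalues (or a negative level of multiplicity $>2$) occur for $0<|x|<\sqrt 2$, and the proposed ``continuity plus analyticity of Bethe roots'' step does not do this, since additional levels could cross zero in the interior of the interval without any signal at the endpoints; (ii) non-negativity of the rest of the spectrum on $0<|x|<\sqrt 2$ and for $|x|\ge\sqrt2$ is asserted, not derived --- the supersymmetric decomposition $H=Q^\dagger Q+QQ^\dagger$ you invoke is only known to hold on certain subsectors, so it cannot by itself certify global positivity, and you give no candidate for the positivity-forcing structure at $|x|=\sqrt2$; (iii) the degeneracy counts $2N+1$ at $x=0$ and $3$ at $|x|=\sqrt 2$ rest on an enlarged symmetry algebra whose existence you only hypothesise (``I would search for\dots''). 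Each of (i)--(iii) is a missing idea rather than a routine verification, and you acknowledge as much for the central step; the proposal therefore does not constitute a proof, which is consistent with the statement's status in the paper as an open conjecture.
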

\begin{conjecture}
  For chains of arbitrary length and all values of the anisotropy parameter $x$, the spectrum of the Hamiltonian with anti-diagonal twist is non-negative. $E=0$ is the ground-state eigenvalue. It is non-degenerate for $x\neq 0$ and has degeneracy three for $x = 0$.
\end{conjecture}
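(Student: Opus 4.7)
The plan is to decompose the conjecture into three claims beyond what \cref{thm:ADTwistSpecialEV} already provides: non-negativity of the spectrum, uniqueness of the zero eigenvalue for $x \neq 0$, and its threefold degeneracy at $x = 0$. The unifying ingredient is the lattice supersymmetry of the spin-one chain emphasised in the introduction, which naturally yields both positivity and singlet-counting arguments.

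For non-negativity, the first step is to seek a nilpotent supercharge $Q : V^{(N-1)} \to V^{(N)}$ with $Q^2 = 0$ such that $H = \{Q, Q^\dagger\}$, possibly up to an additive constant that vanishes on the zero-energy sector. Such supercharges were constructed in \cite{hagendorf:13} for a one-parameter family of twisted boundary conditions; one would first check whether the anti-diagonal twist defined in \eqref{eqn:twists} belongs to this family and, if not, construct an analogue by adapting the boundary term. Once a supersymmetric factorisation is in hand, $H \ge 0$ is automatic and the zero-energy states coincide with the supersymmetry singlets $\ker Q \cap \ker Q^\dagger$, which makes the remaining counting amenable to cohomological methods.

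For the uniqueness of $E = 0$ when $x \neq 0$, I would combine \cref{thm:ADTwistSpecialEV} with \cref{eq:spectra.antidiag}: the latter rules out a second zero eigenvalue in the opposite $(-1)^M$ sector, while the former gives simplicity within the sub-sector $(-1)^M \equiv (-1)^N$, $F \equiv (-1)^N$. It remains to exclude a zero-energy vector in the sub-sector $(-1)^M \equiv (-1)^N$, $F \equiv -(-1)^N$, which should follow by a parity argument on the Euler characteristic of the supersymmetric complex, or alternatively from a direct check that $|\phi_{\text{\rm \tiny AD}}\rangle$ spans the full singlet space. For the threefold degeneracy at $x = 0$, I would look for an enhanced symmetry of the Hamiltonian at this specific anisotropy; the simplification of the couplings (in particular $J_3 = -1$ and $A_{13} = A_{23} = -1$) suggests an additional conserved charge whose action on $|\phi_{\text{\rm \tiny AD}}\rangle$ produces two further linearly independent zero-energy vectors, which by non-negativity would exhaust the degeneracy.

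The principal obstacle is the explicit construction of the supercharge $Q$ compatible with the anti-diagonal twist: positivity of the spectrum is the linchpin of the whole argument, and since this twist is introduced for the first time in the present paper, the supersymmetric structure is not \emph{a priori} guaranteed and may require genuinely new boundary contributions beyond those of \cite{hagendorf:13}. Should that route fail, a fallback would be a Perron--Frobenius argument on $-H$ restricted to invariant sub-sectors, but this requires a separate verification of the sign structure of the off-diagonal matrix elements in the canonical basis for each sector and each sign of $x$, which at the spin-one level is considerably more delicate than in the spin-$\tfrac12$ case.
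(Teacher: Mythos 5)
The statement you are addressing is a \emph{conjecture} in the paper, not a theorem: the authors support it only by exact numerical diagonalisation up to $N=7$ sites, and they explicitly list its proof among the open problems in \cref{sec:conclusion}. So there is no proof in the paper to compare against, and your proposal, as written, does not close the gap either. Its linchpin --- the existence of a nilpotent supercharge $Q$ with $H=\{Q,Q^\dagger\}$ compatible with the anti-diagonal twist --- is only hypothesised, and you acknowledge yourself that it may fail. The authors' own remarks point to a further difficulty you underestimate: even for the boundary conditions where the lattice supersymmetry of \cite{hagendorf:13} is present, it holds only in certain subsectors of the Hilbert space, so a supersymmetric factorisation would give non-negativity only there; the paper's conclusion explicitly singles out ``the characterisation of the ground-state eigenvalue in sectors without supersymmetry'' as the remaining challenge. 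Your programme is essentially the one the authors themselves suggest, and it inherits exactly the same obstruction.

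Two concrete deductions in your plan would also fail as stated. First, you claim that \cref{eq:spectra.antidiag} rules out a second zero eigenvalue in the opposite $(-1)^M$ sector; it does not. That proposition, and the doublet construction behind it, concerns only the \emph{non-zero} part of the spectrum: non-zero eigenvalues pair up across the two $(-1)^M$ eigenspaces, but nothing there prevents unpaired zero-energy states from existing in either sector, and \cref{thm:ADTwistSpecialEV} asserts simplicity of $E=0$ only within the subsector $(-1)^M\equiv(-1)^N$, $F\equiv(-1)^N$. Excluding additional zero modes elsewhere is precisely the open part of the conjecture, so invoking these two results in tandem is circular. Second, the threefold degeneracy at $x=0$ is deferred to an unspecified ``enhanced symmetry''; note that $x=0$ corresponds to $q=\pm\i$, i.e.\ $(q_c)^4=1$, where even the construction of the special eigenstate already requires the limiting procedure \eqref{eq:specialhomog}, so this point needs a genuinely separate argument rather than a hoped-for conserved charge. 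The Perron--Frobenius fallback has the additional defect that, even if a sign structure could be established sector by sector, it would yield uniqueness of the lowest eigenvalue in each sector but not the non-negativity of the spectrum, which is the claim the rest of your argument leans on.
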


%%%%%%%%%
\subsection{Scalar products}
\label{sec:qsumrules}
%%%%%%%%%

In this section and the following, we present our results on various scalar products and components of the zero-energy states $|\phi_{\text{\rm \tiny D}}\rangle$ and $|\phi_{\text{\rm \tiny AD}}\rangle$. These quantities involve various polynomials related to the weighted enumeration of alternating sign matrices belonging to specific symmetry classes.

For any vector $|\phi\rangle$ in $V$, we define its components to be the coefficients of its expansion along the canonical basis vectors,
\begin{equation}
  |\phi\rangle = \sum_{\sigma_1,\dots,\sigma_N}\phi_{\sigma_1\cdots\sigma_N}|\sigma_1\cdots\sigma_N\rangle.
\end{equation}
Likewise for any covector $\langle\phi'|$ in the dual space $V^\ast$, the expansion along the dual canonical basis vectors,
\begin{equation}
  \langle \phi'| = \sum_{\sigma_1,\dots,\sigma_N}\phi'_{\sigma_1\cdots\sigma_N}\langle \sigma_1\cdots\sigma_N|,
\end{equation}
defines its components. The dual canonical basis vectors are obtained from the canonical basis vectors by transposition: $\langle \sigma_1 \cdots \sigma_N| = |\sigma_1\cdots\sigma_N\rangle^t$. The dual pairing is thus
\begin{equation}
  \langle \sigma'_1 \cdots \sigma'_N|\sigma_1\cdots\sigma_N\rangle = \prod_{j=1}^N \delta_{\sigma_j\sigma'_j}.
\end{equation}
The pairing extends by linearity to arbitrary vectors and covectors, and yields the (real) scalar product $\langle \phi'|\phi\rangle = \sum_{\sigma_1,\dots,\sigma_N}\phi'_{\sigma_1\cdots\sigma_N}\phi_{\sigma_1\cdots\sigma_N}$. To any vector $|\phi\rangle$ corresponds naturally a co-vector $\langle \phi|=|\phi\rangle^t$. In this case, the dual pairing yields the square norm
\begin{equation}
  ||\phi||^2 =\langle \phi |\phi\rangle= \sum_{\sigma_1,\dots,\sigma_N}\phi_{\sigma_1\cdots\sigma_N}^2
\end{equation}
of the vector $|\phi\rangle$.

\paragraph{Diagonal twist.} 
For the diagonal twist, we will use a normalisation for $|\phi_{\text{\rm \tiny D}}\rangle$ that appears naturally from the construction of an eigenstate $|\psi_{\text{\rm \tiny D}}\rangle$ of the transfer matrix of the inhomogeneous nineteen-vertex model. This is discussed in \cref{sec:homog.definitions}, with the normalisation of $|\phi_{\text{\rm \tiny D}}\rangle$ defined by \eqref{eqn:defpsiD} and \eqref{eqn:defPhiD}. With this convention, it was shown in \cite{hagendorf:15} that all components of $|\phi_{\text{\rm \tiny D}}\rangle$ are polynomials in $x$ with integer coefficients. The following result was also proven in \cite{hagendorf:15}.
\begin{theorem}[\hspace{-0.01cm}\cite{hagendorf:15}]
\label{thm:phiDsquarenorm}
  The square norm is given by
\begin{equation}
  ||\phi_{\text{\rm \tiny D}}||^2 = A(N;x^2),
\end{equation}
where the polynomial $A(N;t)$ is the generating function for the $t$-enumeration of ASMs described below.
\end{theorem}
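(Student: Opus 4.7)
The plan is to prove the homogeneous identity by lifting it to the inhomogeneous nineteen-vertex model and identifying both sides with a determinant of Izergin-Korepin type. First I would work with the inhomogeneous eigenvector $|\psi_{\text{\rm \tiny D}}(z_1,\dots,z_N)\rangle$ of the transfer matrix constructed via the algebraic Bethe ansatz in \cite{hagendorf:15}, whose homogeneous specialisation is $|\phi_{\text{\rm \tiny D}}\rangle$ up to the normalisation fixed in \cref{sec:homog.definitions}. The strategy is to produce a closed-form expression for $\langle\psi_{\text{\rm \tiny D}}|\psi_{\text{\rm \tiny D}}\rangle$ as a polynomial in the inhomogeneities $z_1,\dots,z_N$, then to evaluate it in the homogeneous limit $z_j\to 1$ to recover $A(N;x^2)$.

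The core technical step is to characterise $\langle\psi_{\text{\rm \tiny D}}|\psi_{\text{\rm \tiny D}}\rangle$ uniquely by a small list of properties, along the lines of Korepin's characterisation of domain-wall partition functions. Concretely, I would establish:
\begin{itemize}
\item symmetry in $(z_1,\dots,z_N)$, coming from the Yang-Baxter equation and the intertwining of the eigenvectors under the $R$-matrix;
\item a controlled polynomial degree in each $z_j$, coming from the polynomial structure of the $L$-operators of the fused $R$-matrix;
\item a recursion expressing the scalar product at a specialisation $z_i = q\, z_j$ (for suitable $q$) in terms of the same quantity at smaller system size, obtained by using the fact that the nineteen-vertex $R$-matrix degenerates at such arguments and decouples one fused quantum space into a pair of spin-$\tfrac12$ spaces;
\item an explicit initial value at $N=1$ (or $N=2$).
\end{itemize}
These four properties should uniquely determine the scalar product and, crucially, they are shared by a Kuperberg-type determinant for a partition function $Z(z_1,\dots,z_N;x)$ of an inhomogeneous six-vertex model with domain-wall boundary conditions (the same object already identified in \cite{hagendorf:15} to govern certain scalar products of $|\psi_{\text{\rm \tiny D}}\rangle$). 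Matching the two then yields $\langle\psi_{\text{\rm \tiny D}}|\psi_{\text{\rm \tiny D}}\rangle = Z(z_1,\dots,z_N;x)$.

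Finally, I would take the homogeneous limit $z_j\to 1$ inside this identity. On the left-hand side, one recovers $||\phi_{\text{\rm \tiny D}}||^2$ by the definition of the normalisation adopted in \eqref{eqn:defpsiD}--\eqref{eqn:defPhiD}. On the right-hand side, the homogeneous limit of $Z$ is the object that Kuperberg proved to coincide with the generating polynomial for the $t$-enumeration of ASMs, i.e.\ $A(N;x^2)$ with $t = x^2$; no new combinatorial input is needed at this stage beyond invoking Kuperberg's formula. The main obstacle I expect is the derivation of the recursion in the $z_j$: because the quantum spaces of the nineteen-vertex model are three-dimensional and obtained by fusion of two spin-$\tfrac12$ spaces, specialising an inhomogeneity must be combined with a careful projection onto the symmetric component, and one has to verify that the companion antisymmetric component does not contribute to the square norm. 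Once that recursion is in hand, the uniqueness argument and the identification with $A(N;x^2)$ follow by standard polynomial interpolation.
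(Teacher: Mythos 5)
Your proposal follows essentially the same route as the paper, which attributes the proof to \cite{hagendorf:15} but reuses its key ingredient in \eqref{eqn:spikdet}: the scalar product $\langle\psi_{\text{\rm\tiny D}}|\prod_{j}\mathcal B(z_j)|\wedge\rangle$ is identified with the Izergin--Korepin domain-wall partition function (via a Korepin-type characterisation by symmetry, degree bounds and recursion), then specialised at $z_j=w_j$ and evaluated in the homogeneous limit using Kuperberg's formula $Z_{\text{\rm\tiny IK}}(1,\dots,1;1,\dots,1)=[q]^{N(N-1)}[q^2]^N A(N;x^2)$, with the normalisation \eqref{eqn:defPhiD} cancelling the remaining prefactors. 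No substantive difference in approach.
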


ASMs are square matrices whose entries are
$-1,0$ or $1$, and are
such that \textit{(i)} 
each row and each column sums to one and
\textit{(ii)} the non-zero entries along each row and column alternate in sign.
For instance, the seven
$3\times 3$ alternating sign matrices are given by 
\begin{equation}
\begin{array}{c}
\begin{pmatrix}+&0&0\\0&+&0\\0&0&+\end{pmatrix},
\quad
\begin{pmatrix}+&0&0\\0&0&+\\0&+&0\end{pmatrix},
\quad
\begin{pmatrix}0&+&0\\+&0&0\\0&0&+\end{pmatrix},
\quad
\begin{pmatrix}0&+&0\\0&0&+\\+&0&0\end{pmatrix},\\[0.7cm]
\begin{pmatrix}0&0&+\\+&0&0\\0&+&0\end{pmatrix},
\quad
\begin{pmatrix}0&0&+\\0&+&0\\+&0&0\end{pmatrix},
\quad
\begin{pmatrix}0&+&0\\+&\!-\!&+\\0&+&0\end{pmatrix},
\end{array}
\end{equation}
where we write $\pm$ for $\pm 1$. In order to construct the polynomial $A(N;t)$, we assign the weight $t^k$ to each ASM with $k$ negative entries. The function $A(N;t)$ is defined as the sum of the weights of the ASMs of size $N\times N$. We refer to this as {\it $t$-enumeration}. From the list of ASMs of size $3\times 3$, we see that $A(3;t)=6+t$.

\paragraph{Anti-diagonal twist.} Polynomials related to weighted ASM enumeration also occur in the case of the anti-diagonal twist. As before, we need to fix the normalisation of the zero-energy state $|\phi_{\text{\tiny AD}}\rangle$: We choose 
\begin{equation}\label{eq:normAD}
  (\phi_{\text{\tiny AD}})_{\Uparrow\cdots \Uparrow}=1
\end{equation}
for all $N$. We will argue in \cref{eq:exiandpoly} that with this normalisation, all other components are polynomials in $x$.
Instead of considering the square norm, we consider the more general scalar product
\begin{equation}
  Z_{\text{\tiny AD}}(y)=\langle\phi_{\text{\rm \tiny AD}}|y^M|\phi_{\text{\rm \tiny AD}}\rangle.
\end{equation} 
It is a Laurent polynomial in $y$ for which the coefficient of $y^m$ is the square norm of the vector's projection on the subsector where $M \equiv m$. 
 We have the following closed form:
\begin{theorem}
The zero-energy state for the anti-diagonal twist satisfies the sum rule 
\begin{equation}
Z_{\text{\rm \tiny AD}}(y)=\Det{i,j=0}{N-1}\left(y^{-1}\delta_{ij}+y \sum_{k=0}^{N-1}\binom{i}{k} \binom{j}{k} x^{2(j-k)}\right).
\label{eqn:sumrulead}
\end{equation}
\end{theorem}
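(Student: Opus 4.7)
The approach is to lift the statement to the inhomogeneous nineteen-vertex model and take the homogeneous limit only at the end. Let $|\psi_{\text{\rm \tiny AD}}(z_1,\dots,z_N)\rangle$ denote the inhomogeneous transfer-matrix eigenstate at eigenvalue zero constructed in \cref{sec:transfermatrices} via quantum separation of variables (SOV), of which $|\phi_{\text{\rm \tiny AD}}\rangle$ is the homogeneous specialisation. Since $M$ is diagonal in the canonical spin basis, the operator $y^M$ is compatible with the SOV factorisation, and the generating function $\langle\psi_{\text{\rm \tiny AD}}|y^M|\psi_{\text{\rm \tiny AD}}\rangle$ fits naturally into the SOV formalism of \cite{niccoli:13,niccoli:15}.

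The first step is to expand $|\psi_{\text{\rm \tiny AD}}\rangle$ in the SOV basis diagonalising an appropriate off-diagonal entry of the nineteen-vertex monodromy matrix. In that basis, the components of $|\psi_{\text{\rm \tiny AD}}\rangle$ factorise into products over the separate-variable eigenvalues, each factor being a value of the $Q$-polynomial fixed by the Baxter $TQ$-equation associated with the zero eigenvalue. The dual state admits an analogous factorisation, and Sklyanin's SOV measure combined with the insertion of $y^M$ reduces $\langle\psi_{\text{\rm \tiny AD}}|y^M|\psi_{\text{\rm \tiny AD}}\rangle$ to a discrete sum over separate-variable configurations. Standard SOV manipulations (in the spirit of those producing the Kuperberg-type six-vertex partition functions in \cref{sec:sumrules}) should collapse this sum into an $N \times N$ determinant whose entries are inhomogeneous partition functions with U-turn-like boundaries; the additive structure $y^{-1}\delta_{ij} + y\,(\text{rest})$ of the $(i,j)$ entry in \eqref{eqn:sumrulead} reflects the two sectors into which the SOV sum naturally splits.

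The final step is to perform the homogeneous limit on the resulting inhomogeneous determinant. Applying the standard Vandermonde procedure, namely dividing by the Vandermonde in the $z_j$'s and replacing rows by successive derivatives at the spin-chain point, produces the binomial coefficients $\binom{i}{k}\binom{j}{k}$ through Leibniz expansions of the factorised entries, while the powers $x^{2(j-k)}$ arise from the Taylor expansion of the $R$-matrix weights in a neighbourhood of that point. The main obstacle lies in the bookkeeping of this limit: the raw SOV determinant is written in coordinates natural to the inhomogeneous model, and bringing it into the compact form \eqref{eqn:sumrulead} requires conjugation by a pair of unimodular triangular binomial matrices, together with careful tracking of the overall prefactor against the normalisation \eqref{eq:normAD}. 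A useful sanity check while implementing the plan is to verify \eqref{eqn:sumrulead} for small $N$ directly from the explicit expression of $|\phi_{\text{\rm \tiny AD}}\rangle$, where both sides can be computed by hand.
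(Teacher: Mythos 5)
Your overall two-stage route --- first compute the inhomogeneous generating function $\langle\psi_{\text{\rm\tiny AD}}|y^M|\psi_{\text{\rm\tiny AD}}\rangle$ by expanding both states in the separation-of-variables basis, then perform the confluent homogeneous limit by dividing by Vandermonde determinants and conjugating by triangular binomial matrices --- is exactly the route the paper takes (the binomial matrices you anticipate are precisely the matrices $L(\alpha,\beta)_{ij}=\binom{i}{j}\alpha^i\beta^j$ of Behrend, Di Francesco and Zinn-Justin used in \cref{prop:ZADhom}). The gap is in the middle: you assert that ``standard SOV manipulations should collapse this sum into an $N\times N$ determinant whose entries are inhomogeneous partition functions with U-turn-like boundaries,'' but this collapse is the crux of the proof and does not follow from generic SOV bookkeeping. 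What actually happens is the following. First, the projections \eqref{eqn:projections} vanish whenever some height equals $1$, so the SOV sum runs only over the $2^N$ profiles with $h_i\in\{0,2\}$; this binary choice per site is the true origin of the additive structure $y^{-1}(\cdot)+y(\cdot)$ in each matrix entry. Second, after the simplification \eqref{eq:special.identity}, each individual summand is recognised as a Cauchy determinant, $\Det{i,j=1}{N}\bigl(1/[x_i/y_j]\bigr)=\prod_{i<j}[x_i/x_j][y_j/y_i]\big/\prod_{i,j}[x_i/y_j]$ with $x_i=q^{h_i-1}w_i$ and $y_j=w_j$; only then does multilinearity in the rows resum the $2^N$ determinants into the single determinant \eqref{eqn:ZAD}, whose entries are the elementary rational functions $y^{-1}/[qw_i/w_j]+y/[qw_j/w_i]$ --- not U-turn partition functions (the half-turn partition functions appear only a posteriori at the special values $y=1,\i$). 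Without the Cauchy identity, or some substitute for it, the sum over height profiles has no evident determinantal form, and your plan stalls exactly there.

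A secondary point: the theorem concerns $|\phi_{\text{\rm\tiny AD}}\rangle$, and the SOV expansion is singular in the homogeneous limit (the norms $\llangle\bm h||\bm h\rrangle$ in \eqref{eqn:scalarprod} degenerate), so one must compute $Z_{\text{\rm\tiny AD}}(y;w_1,\dots,w_N)$ as a rational function first and separately establish that the state itself has a well-defined homogeneous limit; the paper does this in \cref{eq:exiandpoly}, and in fact uses the boundedness of $Z_{\text{\rm\tiny AD}}(1;w_1,\dots,w_N)$ as an ingredient of that existence proof. A small-$N$ sanity check does not substitute for this. Finally, a minor inaccuracy: the SOV basis here diagonalises the diagonal entry $\mathcal D(z)$ of the ten-vertex monodromy matrix, not an off-diagonal entry of the nineteen-vertex one.
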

Although it is not obvious, this expression is in fact invariant under $y \to y^{-1}$ due to the spin-reversal invariance of the zero-energy state,  $F|\phi_{\text{\rm \tiny AD}}\rangle = (-1)^{N}|\phi_{\text{\rm \tiny AD}}\rangle$.
There are special values for $y$ where \eqref{eqn:sumrulead} becomes a generating function for the enumeration of ASMs. The simplest one is $y = q$, where $x$ and $q$ are related by $x=q+q^{-1}$. For this value, we obtain 
\begin{equation}
   \label{eqn:ZADASMGF}
   Z_{\text{\tiny AD}}(q) = x^N A(N;x^2), \quad x=q+q^{-1},
\end{equation}
with $A(N;t)$ the same polynomial as above. The more intuitive choice $y=1$ yields the square norm of $|\phi_{\text{\tiny AD}}\rangle$.

To describe this quantity as well as others appearing later, we need to consider certain families of ASMs which are invariant under the action of specific symmetries, such as rotation or reflection symmetries, and their weighted enumerations. For a given invariant matrix, the images of a negative entry under the repeated action of the symmetry group form a so-called orbit of the group action. We assign the
weight $t^k$ to a matrix with $k$ such orbits of negative entries. In practice, $k$ can be determined by counting the negative entries in a fundamental domain of the group action.
As before, the generating function for the $t$-enumeration is then just given as the sum of the weights of
all matrices of interest. 

In the present case, we consider all $2N\times 2N$ ASMs with half-turn symmetry. A fundamental domain is given by all entries to the left of the median (see the left panel of \cref{fig:asmhtqt} for an example).
\begin{figure}[h]  
  \centering
  \begin{tikzpicture}
    \draw (0,0) node{$\left(
\begin{array}{cccccc}
0 & 0 & + & 0 & 0 & 0 \\
0 & + & - & 0 & + & 0 \\
0 & 0 & + & 0 & - & + \\
+ & - & 0 & + & 0 & 0 \\
0 & + & 0 & - & + & 0 \\
0 & 0 & 0 & + & 0 & 0
\end{array}
\right)$};
  \draw[dotted] (0,-1.25) -- (0,1.25);

  \draw (6,0) node{$
   \left(
\begin{array}{cccccccc}
 0 & 0 & 0 & 0 & + & 0 & 0 & 0 \\
 0 & + & 0 & 0 & - & 0 & + & 0 \\
 0 & 0 & 0 & 0 & + & 0 & 0 & 0 \\
 + & - & + & 0 & 0 & 0 & 0 & 0 \\
 0 & 0 & 0 & 0 & 0 & + & - & + \\
 0 & 0 & 0 & + & 0 & 0 & 0 & 0 \\
 0 & + & 0 & - & 0 & 0 & + & 0 \\
 0 & 0 & 0 & + & 0 & 0 & 0 & 0
\end{array}
\right)$};
 \draw[dotted] (6,-1.7) -- (6,1.7);
\draw[dotted] (3.6,0) -- (8.4,0); 
  \end{tikzpicture}
  \caption{\textit{Left:} an ASM of size $6 \times 6$ with half-turn symmetry. It possesses $k=2$ orbits
  of negative entries. \textit{Right:} a QTASM of size $8 \times 8$ with $k=1$ orbit of negative entries and $m=4$ non-zero entries in the upper-left quadrant.}
  \label{fig:asmhtqt}
  \end{figure}
We denote by $A_{\text{\rm \tiny HT}}^+(2N;t)$ the corresponding generating function. We show in \cref{sec:sumrules} that the square norm is
\begin{equation}
   \label{eqn:ZADHTASMGF}
  Z_{\text{\tiny AD}}(1)=||\phi_{\text{\rm \tiny AD}}||^2 = \frac{A_{\text{\rm \tiny HT}}^+
  (2N;x^2)}{A(N;x^2)}.
\end{equation}

Finally, we consider $y=\i$. At this value, $Z_{\textrm{\tiny AD}}$
 trivially vanishes for systems of odd length. Indeed, it follows from $Z_{\text{\tiny AD}}(y=\i) = Z_{\text{\tiny AD}}(y^{-1}=-\i)$ and the equation
\begin{equation}
  Z_{\text{\rm \tiny AD}}(\i) = \langle \phi_{\text{\rm \tiny AD}}| e^{\i \pi M/2}|\phi_{\text{\rm \tiny AD}}\rangle = (-1)^N \langle \phi_{\text{\rm \tiny AD}}| e^{\i \pi M/2}(-1)^M|\phi_{\text{\rm \tiny AD}}\rangle = (-1)^N  Z_{\text{\tiny AD}}(-\i)
\end{equation}
that $Z_{\text{\tiny AD}}(\i)=0$ for $N=2n+1$. For $N=2n$ however, we obtain a generating function for a further refinement of the weighted enumeration of half-turn symmetric ASMs. It is obtained by assigning the weight $(-1)^m t^k$ to any half-turn symmetric ASM with $k$ orbits of negative entries and $m$ non-zero entries in the upper-left quadrant. We denote by $A^-_{\text{\rm \tiny HT}}(2N;t)$ the corresponding generating function. We show that for $y=\i$ the expression \eqref{eqn:sumrulead} becomes
\begin{equation}
 Z_{\text{\tiny AD}}(\i)=\frac{(-1)^nA^-_{\text{\rm \tiny HT}}(2N,x^2)}{A(N,x^2)}, \quad N=2n.
 \label{eqn:ZADHTASMGF2}
\end{equation}

Kuperberg \cite{kuperberg:02} established certain factorisation properties of the polynomials $A^\pm_{\text{\rm \tiny HT}}(2N,x^2)$ which imply that the expressions in \eqref{eqn:ZADHTASMGF} and \eqref{eqn:ZADHTASMGF2} are indeed polynomials in $x^2$.

\paragraph{Scalar product.}

Because the spin-chain Hamiltonians with diagonal and anti-diagonal twist have no obvious relation, the scalar product of the zero-energy states $|\phi_{\text{\tiny D}}\rangle$ and $|\phi_{\text{\tiny AD}}\rangle$ appears to be an unnatural quantity to consider. It trivially vanishes for odd $N$ because the two states belong to different eigenspaces of the $\mathbb Z_2$-symmetry operators. For even $N$ however, the scalar product is non-vanishing and given in terms of an interesting combinatorial quantity, which $t$-enumerates $2N\times 2N$ quarter-turn symmetric ASMs. We give an example of a quarter-turn symmetric ASM in the right panel of \cref{fig:asmhtqt}.
Each such matrix is assigned the weight $t^k$ where $k$ is the number of orbits of negative entries, or equivalently the number of negative entries in one quadrant.
Their generating function is denoted by $A_{\text{\rm \tiny QT}}(2N; t)$ and was found by Kuperberg \cite{kuperberg:02} to factorise as the product of two polynomials in $t$, $A_{\text{\rm \tiny QT}}^{(1)}(2N;t)$ and $A_{\text{\rm \tiny QT}}^{(2)}(2N;t)$.
\begin{theorem}
  \label{thm:scalarprod}
  For $N=2n$ sites, the scalar product of the zero-energy states for the diagonal and anti-diagonal twist is given by
  \begin{equation}
    \langle\phi_{\text{\rm \tiny D}}|\phi_{\text{\rm \tiny AD}}\rangle = A_{\text{\rm \tiny QT}}(2N; x^2)  = A_{\text{\rm \tiny QT}}^{(1)}(2N;x^2)A_{\text{\rm \tiny QT}}^{(2)}(2N;x^2).
    \label{eqn:scalarprodphi}
  \end{equation}
  Moreover,
  $A_{\text{\rm \tiny QT}}^{(1)}(2N;x^2)$ can be written as the following pfaffian:
  \begin{equation}
     A_{\text{\rm \tiny QT}}^{(1)}(2N;x^2) = \Pf{i,j=0}{N-1}
     \left((-1)^j \binom{j}{i} x^{j-i-1}-(-1)^i \binom{i}{j} x^{i-j-1}\right).
    \label{eqn:newPfaffianAQT}
  \end{equation}
\end{theorem}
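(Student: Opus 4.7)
The plan is to reduce the scalar product $\langle \phi_{\text{\rm \tiny D}}|\phi_{\text{\rm \tiny AD}}\rangle$ to a partition function of the inhomogeneous six-vertex model on a $2N\times 2N$ domain with quarter-turn symmetric boundary conditions, and then invoke Kuperberg's identification of that partition function with $A_{\text{\rm \tiny QT}}(2N;x^2)$. First I would lift the computation to the inhomogeneous level, replacing $|\phi_{\text{\rm \tiny D}}\rangle$ and $|\phi_{\text{\rm \tiny AD}}\rangle$ by the transfer matrix eigenvectors $|\psi_{\text{\rm \tiny D}}\rangle$ and $|\psi_{\text{\rm \tiny AD}}\rangle$ constructed in \cref{sec:transfermatrices}, and treat the mixed scalar product $\langle \psi_{\text{\rm \tiny D}}|\psi_{\text{\rm \tiny AD}}\rangle$ as a function of the inhomogeneities $z_1,\dots,z_N$. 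Using the algebraic Bethe ansatz form of $|\psi_{\text{\rm \tiny D}}\rangle$ and the quantum separation of variables form of $|\psi_{\text{\rm \tiny AD}}\rangle$, one expands the pairing as a sum of matrix elements of nineteen-vertex monodromy operators between suitable reference states, in the same spirit as the calculations carried out for $\|\phi_{\text{\rm \tiny D}}\|^2$ in \cite{hagendorf:15} and for $Z_{\text{\rm \tiny AD}}(y)$ above.

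Next I would fix the resulting polynomial in $z_1,\dots,z_N$ by a small list of Korepin--Kuperberg-type properties: degree in each variable, symmetry under appropriate permutations of the $z_j$, a recursion obtained by colliding two inhomogeneities at a distance equal to the crossing parameter, and the value at one convenient specialization. These same properties characterize, up to a harmless prefactor, the partition function of the inhomogeneous six-vertex model on the $2N\times 2N$ lattice with quarter-turn symmetric boundary conditions \cite{kuperberg:02}. Setting $z_j\to 1$ and making the substitution $x=q+q^{-1}$ in the vertex weights produces the generating function $A_{\text{\rm \tiny QT}}(2N;x^2)$, and Kuperberg's factorization theorem then supplies the product form $A_{\text{\rm \tiny QT}}^{(1)}(2N;x^2)A_{\text{\rm \tiny QT}}^{(2)}(2N;x^2)$.

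For the pfaffian formula \eqref{eqn:newPfaffianAQT}, I would exploit the fact that the QSOV description of $|\psi_{\text{\rm \tiny AD}}\rangle$ intrinsically produces antisymmetric structures on the spectrum of the separated variables (this is already visible in the $y\leftrightarrow y^{-1}$ invariance of $Z_{\text{\rm \tiny AD}}(y)$), so that one of the two factors arising in the previous step comes out naturally as a pfaffian rather than a determinant. Comparing this output to Kuperberg's determinant formula for $A_{\text{\rm \tiny QT}}^{(1)}(2N;x^2)$ then yields the claimed identity \eqref{eqn:newPfaffianAQT}. An alternative, purely algebraic route is to verify that both sides of \eqref{eqn:newPfaffianAQT} are polynomials of the same degree in $x$ satisfying a common recursion in $N$, and to check equality for small $N$.

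The main obstacle is the first step. The algebraic Bethe ansatz and the quantum separation of variables employ different reference states, and combining them inside a single scalar product requires careful reconciliation of the two constructions and of the twist conventions on each side. Once the pairing has been massaged into a sum over Bethe roots against an SoV spectrum, recognizing it as a quarter-turn six-vertex partition function is the decisive move that makes the Korepin--Kuperberg characterization of Step 2 applicable.
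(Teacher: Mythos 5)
Your plan for the first identity in \eqref{eqn:scalarprodphi} is essentially the paper's proof: the paper also lifts to the inhomogeneous mixed scalar product $Z_{\text{\rm \tiny M}}(w_1,\dots,w_{2n})=\langle\psi_{\text{\rm\tiny D}}|\psi_{\text{\rm\tiny AD}}\rangle$, expands it over the separation-of-variables basis so that each term becomes an Izergin--Korepin determinant (giving the explicit summation formula \eqref{eqn:ZO}), and then pins the result down by exactly the Korepin--Kuperberg list you name --- symmetry and oddness in the $w_j$ (\cref{lem:ZOsym}, \cref{cor:odd}), a degree-width bound (\cref{lem:ZOlaurent}), the recursion at $w_{2n}=qw_{2n-1}$ (\cref{lem:ZOrecurrence}), and the $n=1$ check --- to identify it with $Z_{\text{\rm\tiny QT}}$ and hence, in the homogeneous limit, with $A_{\text{\rm\tiny QT}}(2N;x^2)$ and its Kuperberg factorisation. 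One caveat: the explicit summation formula is not optional decoration here; it is the vehicle through which the degree bound and the recursion are actually verified, so "expand the pairing as a sum of matrix elements" has to be carried out concretely before the characterisation step can be run.

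The genuine gap is in your treatment of the pfaffian formula \eqref{eqn:newPfaffianAQT}. The pfaffian does not emerge from any antisymmetry of the SoV spectrum, and the factor $A^{(1)}_{\text{\rm\tiny QT}}$ is not produced by the scalar-product computation itself: the factorisation into $Z^{(1)}_{\text{\rm\tiny QT}}Z^{(2)}_{\text{\rm\tiny QT}}$ is already Kuperberg's, and his formula for $Z^{(1)}_{\text{\rm\tiny QT}}$ in \eqref{eqn:defZQTk} is an \emph{inhomogeneous} pfaffian with a Cauchy-type prefactor. The entire content of \eqref{eqn:newPfaffianAQT} is the evaluation of its confluent limit $w_1,\dots,w_{2n}\to 1$ as a pfaffian with polynomial entries, and the mechanism for that is the divided-difference technique of Behrend, Di Francesco and Zinn-Justin: one absorbs the prefactor into the pfaffian via a triangular matrix $M$ and the identity $\mathrm{pf}(MAM^t)=(\det M)\,\mathrm{pf}\,A$, takes the confluent limit of the resulting divided differences, and organises the series coefficients using the matrices $L(\alpha,\beta)$ of \cref{prop:ZADhom}. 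Your proposal contains no substitute for this computation, and the fallback you offer --- a common recursion in $N$ plus small cases --- is not workable as stated, since no recursion in $N$ characterising the homogeneous polynomial $A^{(1)}_{\text{\rm\tiny QT}}(2N;t)$ is identified (the Kuperberg recursion lives at the inhomogeneous level and degenerates when all parameters are equal). As written, the second half of the theorem remains unproved.
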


To the best of our knowledge, the pfaffian formula \eqref{eqn:newPfaffianAQT} has not
appeared in the literature before. It allows us to derive an alternative formula for the scalar product $Z_{\textrm{\tiny AD}}(y=\i)$ considered above. We use the known factorisation \cite{kuperberg:02}
\begin{equation}
  A^-_{\text{\rm \tiny HT}}(2N;t) = (-t)^{N/2} A(N;t) 
  \big(A_{\text{\rm \tiny QT}}^{(1)}(2N;x^2)\big)^2, 
  \qquad N = 2n,
\end{equation}
in \eqref{eqn:ZADHTASMGF2} and obtain for $N=2n$ the expression
\begin{align}
  Z_{\textrm{\tiny AD}}(y=\i) &= (-x^2)^{N/2}
  \big(A_{\text{\rm \tiny QT}}^{(1)}(2N;x^2)\big)^2 \nonumber   \\
  &=(-x^2)^{N/2}
  \Det{i,j= 0}{N-1}
  \left((-1)^j \binom{j}{i} x^{j-i-1}-(-1)^i \binom{i}{j} x^{i-j-1}\right).
\end{align}

%%%%%%%%%
\subsection{Special components}
\label{sec:specialcomponents}
%%%%%%%%%

In this subsection, we present results for certain special components of the vectors $|\phi_{\text{\rm \tiny D}}\rangle$ and $|\phi_{\text{\rm \tiny AD}}\rangle$. 

\paragraph{Diagonal twist.} Some components of the zero-energy state for the diagonal twist are given by generating functions appearing in problems of weighted ASM enumeration. These results are stated in the next two theorems.
\begin{theorem}  \label{thm:phiDsimplecom}
  If the normalisation of the zero-energy state $|\phi_{\text{\rm \tiny D}}\rangle$ is adjusted as in \eqref{eqn:defpsiD} and \eqref{eqn:defPhiD}, then
  \begin{subequations}
  \begin{alignat}{3}
     & (\phi_{\text{\rm \tiny D}})_{\Uparrow\cdots\Uparrow\Downarrow\cdots\Downarrow} = A(n;x^2) &\qquad & \text{for}\quad N=2n,\\
     & (\phi_{\text{\rm \tiny D}})_{\Uparrow\cdots\Uparrow0\Downarrow\cdots\Downarrow}= A(n;x^2)&\qquad & \text{for}\quad N=2n+1.
  \end{alignat}
  \end{subequations}
\end{theorem}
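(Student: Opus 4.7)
The plan is to lift the statement to the inhomogeneous nineteen-vertex model and exploit its connection to the inhomogeneous six-vertex model via fusion. In \cref{sec:transfermatrices}, the state $|\phi_{\text{\rm \tiny D}}\rangle$ arises as the homogeneous limit of an eigenvector $|\psi_{\text{\rm \tiny D}}(z_1,\dots,z_N)\rangle$ of the inhomogeneous transfer matrix, whose components are polynomials in the spectral parameters $z_1,\dots,z_N$ by the algebraic Bethe ansatz construction of \cite{hagendorf:15}. I would prove the stronger polynomial identity at the inhomogeneous level, then specialize $z_1=\cdots=z_N=z$ and use the relation between $z$ and $x$ imposed by the normalisation \eqref{eqn:defpsiD}--\eqref{eqn:defPhiD}.

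First, I would compute the inhomogeneous analogue of the special component by acting with the relevant off-diagonal entries of the nineteen-vertex monodromy matrix on a reference state. Because the target basis vector has only $\Uparrow$ and $\Downarrow$ spins (with at most one $0$), it lies in the image of a six-vertex subspace under fusion, and the corresponding matrix elements collapse to sums of six-vertex matrix elements. This should produce a partition function of the inhomogeneous six-vertex model on an $n \times n$ square lattice with domain-wall boundary conditions, where the six-vertex spectral parameters are built from pairs of the $z_j$'s (the ``fusion pairing'' that splits the $N=2n$ columns into $n$ six-vertex columns, or in the odd case absorbs the middle $0$ into a single doubled pair). The closed form of this partition function is given by the Izergin--Korepin determinant, and Kuperberg's specialisation at the crossing parameter corresponding to $\Delta = -\tfrac12 x^2$ identifies its homogeneous limit with $A(n;x^2)$.

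Concretely, the key steps I would carry out in order are: (i) derive an off-shell formula for $(\psi_{\text{\rm \tiny D}})_{\Uparrow\cdots\Uparrow\Downarrow\cdots\Downarrow}$ and $(\psi_{\text{\rm \tiny D}})_{\Uparrow\cdots\Uparrow 0 \Downarrow\cdots\Downarrow}$ using the monodromy-matrix construction of \cref{sec:transfermatrices}; (ii) show that the resulting sum is a six-vertex domain-wall partition function $Z_{\text{\rm \tiny 6V}}^{\text{\rm \tiny DW}}(\underline{w};\underline{w})$ with suitably doubled inhomogeneities $w_i$; (iii) apply the Izergin--Korepin determinant formula to obtain a closed inhomogeneous expression; (iv) specialize to the homogeneous point and use Kuperberg's identity $Z_{\text{\rm \tiny 6V}}^{\text{\rm \tiny DW}}\bigr|_{\text{hom}}=A(n;x^2)$ after matching normalisations with \eqref{eqn:defpsiD}--\eqref{eqn:defPhiD}. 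The square-norm identity of \cref{thm:phiDsquarenorm} provides an independent sanity check on the normalisation constants appearing in step (iv).

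The main obstacle will be step (ii): establishing, at the level of the fused model, the exact reduction of the chosen nineteen-vertex component to a six-vertex partition function with the correct assignment of spectral parameters. For $N=2n$, this requires showing that only the ``frozen'' fusion blocks contribute and that antiparallel internal lines cancel out of the sum; for $N=2n+1$, the central $0$-spin must be shown to absorb into a single additional six-vertex column whose spectral data matches those of its neighbours, so that the resulting partition function is $n\times n$ rather than $(n+1)\times n$. Carrying through the correct bookkeeping of the twist, the normalisation factor from \eqref{eqn:defpsiD}--\eqref{eqn:defPhiD}, and the substitution $x=q+q^{-1}$ in Kuperberg's formula is delicate but should proceed by the standard QISM techniques developed in \cref{sec:specialcomps}.
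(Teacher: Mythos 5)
Your plan reproduces, in outline, the computation that was already carried out in \cite{hagendorf:15}, but it omits the one ingredient that the present paper actually supplies for this theorem, and that ingredient is unavoidable. The reduction you describe in step (ii) --- freezing part of the lattice so that the component collapses onto an $n\times n$ six-vertex domain-wall partition function --- only works for one orientation of the domain wall. In the representation $|\psi_{\text{\rm \tiny D}}\rangle=\prod_j\mathcal B(w_j)|\wedge\rangle$, the down-spin world lines enter the $N\times N$ diagram through the horizontal boundary edges of the $\mathcal B$-rows and must terminate at the $\Downarrow$ positions of the top boundary; by monotonicity of these lines, the complementary block of columns freezes only when the $\Downarrow$-block sits on the side adjacent to the injection boundary. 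For the opposite orientation the lines spread over the whole lattice, nothing freezes, and the sum does not collapse to an Izergin--Korepin determinant. This is exactly why \cite{hagendorf:15} could prove the statement only \emph{conditionally}: one computes the accessible component and then needs the spin-reversal invariance $F|\phi_{\text{\rm \tiny D}}\rangle=|\phi_{\text{\rm \tiny D}}\rangle$ to transfer the result to the component actually appearing in the theorem. Your proposal contains no substitute for this step.

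The paper's proof of \cref{thm:phiDsimplecom} is therefore not a new component computation at all: it cites \cite{hagendorf:15} for the conditional result and supplies the missing hypothesis in \cref{prop:SRdiagonal}. That proposition is itself the real work --- it evaluates $\langle\psi_{\text{\rm \tiny D}}|F\prod_j\mathcal B(z_j)|\wedge\rangle$ as a fused $2N\times 2N$ six-vertex domain-wall partition function with doubled column parameters $(w_j,qw_j)$, takes a confluent limit of the Izergin--Korepin determinant to show it equals $\langle\psi_{\text{\rm \tiny D}}|\prod_j\mathcal B(z_j)|\wedge\rangle$, and then invokes completeness of the separation-of-variables basis to conclude $\langle\psi_{\text{\rm \tiny D}}|F=\langle\psi_{\text{\rm \tiny D}}|$. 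Unless your step (ii) is replaced by this (or an equivalent) argument, the proof does not go through. A minor further point: the identification of the homogeneous Izergin--Korepin partition function with $[q]^{N(N-1)}[q^2]^N A(N;x^2)$ holds for generic $q$ with $t=x^2$; no specialisation of the crossing parameter is involved.
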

This statement was proved in \cite{hagendorf:15} under certain assumptions about the spin-reversal properties of the vector. 
In \cref{prop:SRdiagonal},
we provide a proof of the spin-reversal invariance and thus of the theorem.

In \cite{hagendorf:13}, the form of two further components was conjectured to be given by the generating function $A_{\text{\rm \tiny V}}(2n+1;t)$ which $t$-enumerates the vertically symmetric alternating sign matrices (VSASMs) of size $(2n+1)\times (2n+1)$. There are three $5\times 5$ VSASMs:
\begin{equation}\label{eq:VSASMs5}
\left( \begin{array}{ccccc}
0 & 0 & + & 0 & 0 \\
+ & 0 & - & 0 & + \\
0 & 0 & + & 0 & 0 \\
0 & + & - & + & 0 \\
0 & 0 & + & 0 & 0
\end{array} \right),
\quad
\left( \begin{array}{ccccc}
0 & 0 & + & 0 & 0 \\
0 & + & - & + & 0 \\
+ & - & + & - & + \\
0 & + & - & + & 0 \\
0& 0 &+& 0& 0
\end{array} \right),
\quad
\left( \begin{array}{ccccc}
0 & 0 & + & 0 & 0 \\
0 & + & - & + & 0 \\
0 & 0 & + & 0 & 0 \\
+ & 0 & - & 0 & + \\
0 & 0 & + & 0 & 0
\end{array} \right).
\end{equation}
In any VSASM, the entries along the symmetry axis are fixed to the alternating sequence $+1,-1,\dots,+1,-1,+1$. They may therefore be disregarded for the $t$-enumeration. We assign the weight $t^k$ to a VSASM with $k$ negative entries in the $(2n+1) \times n$ 
submatrix to the left of the symmetry axis. $A_{\text{\rm \tiny V}}(2n+1;t)$ is defined as the sum of the weights of the  
VSASMs of size $(2n+1)\times (2n+1)$. For example, from \eqref{eq:VSASMs5} we find that $A_{\text{\rm \tiny V}}(5;t)=2+t$.
\begin{theorem}
  \label{thm:PhiDSpecCom}
  For the diagonal twist, the zero-energy state possesses the special components
  \begin{subequations}
  \begin{alignat}{3} 
    & (\phi_{\text{\rm \tiny D}})_{\Uparrow\Downarrow\cdots \Uparrow\Downarrow} = A_{\text{\rm \tiny V}}(2n+1;x^2) \qquad &&\text{for} \quad N=2n, \label{eqn:PhiDSpecCompEven}\\
    & (\phi_{\text{\rm \tiny D}})_{0\cdots 0} = x^n A_{\text{\rm \tiny V}}(2n+1;x^2) \qquad &&\text{for} \quad N=2n+1\label{eqn:PhiDSpecCompOdd},
  \end{alignat}
  \end{subequations}
  where
    \begin{equation}
      A_{\text{\rm \tiny V}}(2n+1;t) = 
      \Det{i,j=0}{n-1}\left(\sum_{k=0}^{n-1}\binom{i+j+1}{i+k+1}
      \binom{i+k+1}{2k+1}t^k\right).
      \label{eqn:AVexplicit}
    \end{equation}
\end{theorem}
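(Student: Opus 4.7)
The plan is to reduce the identity to a partition-function computation for the inhomogeneous six-vertex model with U-turn boundary conditions, and then invoke Kuperberg's determinantal formula for the generating function of VSASMs. The starting point, established in \cref{sec:transfermatrices}, is the explicit construction of an inhomogeneous eigenvector $|\psi_{\text{\rm \tiny D}}\rangle$ of the nineteen-vertex transfer matrix via the algebraic Bethe ansatz, whose homogeneous limit reproduces $|\phi_{\text{\rm \tiny D}}\rangle$; each of its components is a polynomial in the inhomogeneity parameters that may be read as a partition function on a decorated rectangular lattice.

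First, I would show that the inhomogeneous analogues of the two components of interest — those associated with $\Uparrow\Downarrow\cdots\Uparrow\Downarrow$ for $N=2n$ and with $0\cdots 0$ for $N=2n+1$ — impose enough constraints on the admissible local configurations that the fused nineteen-vertex weights collapse into ordinary six-vertex weights. The alternation $\Uparrow\Downarrow\cdots$ forces every fused pair of lines to be polarised in a specific way, while the state $|0\cdots 0\rangle$ splits each pair into two oppositely oriented arrows; in both cases the resulting six-vertex partition function should live on a rectangular lattice closed by a U-turn on one side, which is exactly Kuperberg's geometry for VSASMs.

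Next, Kuperberg's Izergin-Korepin-type determinant for the U-turn partition function is available in the inhomogeneous setting, and taking the homogeneous limit reduces it to a scalar multiple of $A_{\text{\rm \tiny V}}(2n+1;x^2)$. The factor $x^n$ in \eqref{eqn:PhiDSpecCompOdd} should arise from the normalisation of the central row of local weights present in the odd-length geometry but absent in the even case. For the explicit form \eqref{eqn:AVexplicit}, I would expand the entries of the Kuperberg-type determinant in a Taylor series at the homogeneous point, perform elementary row and column operations, and recognise the surviving coefficients as binomial sums; equivalently, both sides can be shown to obey the same three-term recursion in $n$ together with matching initial data.

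The main obstacle is the lattice-reduction step. Unlike the anti-diagonal case where the U-turn geometry is essentially imposed by the twist itself, for the diagonal twist the reflection structure of the relevant partition function is hidden and its emergence depends sensitively on the chosen components. Making the collapse from nineteen-vertex to six-vertex weights precise — and verifying that the surviving local weights multiply correctly to reproduce Kuperberg's U-turn partition function — is the delicate part of the argument; once this reduction is in hand, the remaining steps are a combination of Kuperberg's theorem and elementary determinant manipulations, and the overall sign and normalisation are fixed by the spin-reversal invariance established in \cref{prop:SRdiagonal} together with the normalisation conditions \eqref{eqn:defpsiD} and \eqref{eqn:defPhiD}.
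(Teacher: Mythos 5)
Your overall target is the right one---Tsuchiya's U-turn partition function and Kuperberg's identification of its homogeneous limit with $A_{\text{\rm \tiny V}}(2n+1;x^2)$---and your plan for extracting the explicit determinant \eqref{eqn:AVexplicit} by Taylor-expanding the entries at the homogeneous point is essentially the divided-difference computation the paper carries out. However, there is a genuine gap at precisely the step you flag as delicate: the claim that the component associated with $\Uparrow\Downarrow\cdots\Uparrow\Downarrow$ ``collapses'' by itself into a six-vertex partition function on a U-turn domain. As a diagram, that component is a ten-vertex partition function on an $N\times N$ square with a fixed alternating boundary at the top; no reflection structure is visible there, and no freezing of local configurations produces one. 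The paper's mechanism is different: one specialises the inhomogeneities to the alternating pattern \eqref{eqn:alternatingparams}, introduces the one-parameter family of two-site states $|\chit(z)\rangle$ of \eqref{eq:chistate} solving the boundary Yang--Baxter equation, and computes the scalar product $\langle\chit(x_1,\dots,x_n)|\psi_{\text{\rm \tiny D}}\rangle$ of \eqref{eqn:defXiD}. It is this insertion of a reflecting boundary state that creates the U-turn geometry (\cref{prop:ZAspin1} and \cref{app:ZA}), reduces the ten-vertex object to Tsuchiya's six-vertex determinant via \eqref{eqn:resXiD}, and only at the very end does the specialisation $b=q$ degenerate $|\chit\rangle$ to a multiple of $|{\Uparrow\Downarrow}\rangle$ and pick out the single component. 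Without this (or an equivalent) device, your reduction step has no content.

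The odd case is also handled differently from what you describe. The paper does not exhibit a U-turn geometry for $(\phi_{\text{\rm \tiny D}})_{0\cdots 0}$; instead it argues graphically that in the homogeneous limit the vanishing of two ten-vertex weights freezes the top row and the rightmost column of the $N\times N$ diagram, reducing the component to the $\Uparrow\Downarrow\cdots\Uparrow\Downarrow$ component at size $N-1=2n$ times the explicit factor $([q][q^2])^{n+1/2}[q]^{2n}$; the factor $x^n$ in \eqref{eqn:PhiDSpecCompOdd} then emerges from comparing this with the $N$-dependent normalisation \eqref{eqn:defPhiD}, not from a ``central row'' of an odd-length U-turn geometry. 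So the architecture of a correct proof needs the boundary-state construction for the even case and a separate freezing reduction (or an honest odd-size computation) for the odd case.
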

Notice that \eqref{eqn:AVexplicit} provides an alternative expression to Robbins' \cite{robbins:00} conjectured determinant formula for $A_{\text{\rm \tiny V}}(2n+1;t)$, and seems not to have appeared in the literature before.

\paragraph{Anti-diagonal twist.} 
Certain components of the zero-energy state for the anti-diagonal twist can be expressed in terms of generating functions that enumerate $2n\times 2n$ ASMs with double U-turn boundary conditions (UUASMs). 
In these matrices, each pair of successive rows, $2j-1$ and $2j$ with $j=1,\dots, n$, read from left to right for the odd row and after a U-turn from right to left for the even row, obeys the same rules as a row of an ordinary ASM. The same holds for pairs of successive columns, with left/right replaced by bottom/top. Here, the rows and columns are labeled from $1$ to $2n$ starting from the top-left corner.
An example is shown in the left panel of \cref{fig:uuasmvhpasm}.
The weight for a UUASM is $t^k y^m z^{m'}$ where $k$ is the number of negative entries, whereas $m$ and $m'$ respectively count the number odd rows and odd columns with an odd number of non-zero entries. 
Following Kuperberg's notation, we denote by $A_{\text{\rm \tiny UU}}(4n; t,y,z)$ the corresponding generating function for $2n \times 2n$ UUASMs. He proved that \cite{kuperberg:02}
\begin{equation}
   A_{\text{\rm \tiny UU}}^{(2)}(4n; t,y,z)=\frac{A_{\text{\rm \tiny UU}}(4n; t,y,z)}{A_{\text{\rm \tiny V}}(2n+1; t)}
\end{equation}
is a polynomial in $t,y$ and $z$. In \cref{prop:prop510}, we provide an expression for $A_{\text{\rm \tiny UU}}^{(2)}(4n; t,y,z)$ in terms of the determinant of a matrix with polynomial entries in $t, y$ and $z$.
As stated in the next two theorems, this polynomial appears
 in the zero-energy state for the anti-diagonal twist for particular specifications of $y$ and $z$.
\begin{theorem}\label{thm:componentsAD1}
  The zero-energy state for the anti-diagonal twist has the special components
  \begin{subequations}
  \begin{alignat}{3}
  & (\phi_{\text{\rm \tiny AD}})_{0\cdots 0} = (-x)^n \tilde A^{(2)}_{\text{\rm \tiny UU}}(4n;x^2) && \qquad \text{for}\quad N=2n,\\
  &(\phi_{\text{\rm \tiny AD}})_{\Uparrow\Downarrow\cdots \Uparrow\Downarrow\Uparrow} = 
  (-1)^n
  \tilde A^{(2)}_{\text{\rm \tiny UU}}(4(n+1);x^2) && \qquad \text{for}\quad N=2n+1,\label{eq:ududucomp}
  \end{alignat}
  \end{subequations}
  where the polynomial\footnote{Kuperberg\cite{kuperberg:02} defines
  \begin{equation*}
    \tilde A^{(2)}_{\text{\rm \tiny UU}}(4n;t)=\frac{A_{\textrm{\tiny HT}}^+(4n;t)}{A(2n;t)A_{\text{\rm \tiny UU}}^{(2)}(4n;t,1,1)}
  \end{equation*}
  which is equivalent to \eqref{eqn:defA2tilde}.
    }
  \begin{equation}
  \tilde A^{(2)}_{\text{\rm \tiny UU}}(4n;t) = t^{-n} A^{(2)}_{\text{\rm \tiny UU}}(4n;t,y=-1,z=-1)
  \label{eqn:defA2tilde}
  \end{equation}
  is given by the determinant formula
  \begin{equation}
    \tilde A^{(2)}_{\text{\rm \tiny UU}}(4n;t) = \Det{i,j=0}{n-1}\left(\sum_{k=0}^{n-1} \binom{i+j}{i+k}\binom{i+k}{2k} t^k\right).
    \label{eq:AUU}
  \end{equation}
\end{theorem}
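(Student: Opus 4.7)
The plan is to derive both special components from the inhomogeneous QSOV eigenstate $|\psi_{\text{\rm \tiny AD}}\rangle$ constructed in \cref{sec:transfermatrices}, and then pass to the homogeneous limit to obtain $|\phi_{\text{\rm \tiny AD}}\rangle$. By the QSOV representation, each component of $|\psi_{\text{\rm \tiny AD}}\rangle$ is a polynomial in the inhomogeneities $z_1,\dots,z_N$ that can be recast as a partition function of an auxiliary inhomogeneous six-vertex model, obtained by collapsing spin-one sites into pairs of spin-$\tfrac12$ sites via the fusion procedure. The first step is therefore to identify the boundary conditions that the two distinguished components $0\cdots 0$ and $\Uparrow\Downarrow\cdots\Uparrow\Downarrow\Uparrow$ impose on the fused six-vertex system. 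The key observation here is that the local state $|0\rangle$ at a spin-one site translates, after fusion, into a singlet of two adjacent spin-$\tfrac12$ sites, which is precisely the local arrow configuration producing a U-turn in Kuperberg's six-vertex representation of double U-turn ASMs.

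With the boundary conditions in hand, the second step is to compare the resulting six-vertex partition function to Kuperberg's \cite{kuperberg:02} partition function $Z_{\text{\rm \tiny UU}}$ that $t$-enumerates $2n\times 2n$ UUASMs weighted by $(y,z)$-statistics on odd rows and columns. The anti-diagonal twist together with the QSOV reference state should force the specialisation $y=z=-1$: indeed, these are precisely the weights that emerge when the boundary arrows glued by the twist carry an extra sign. After collecting the global normalisation set by $(\phi_{\text{\rm \tiny AD}})_{\Uparrow\cdots\Uparrow}=1$ and tracking the overall prefactor through the fusion procedure, one obtains the factors $(-x)^n$ and $(-1)^n$ in the two stated components. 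The parity mismatch between the two cases, namely why the odd-$N$ component produces an enumeration of size $4(n+1)$ rather than $4n$, should follow from the fact that \eqref{eq:ududucomp} carries one extra $\Uparrow$ at the end, which after fusion is naturally grouped with an additional boundary spin produced by the twist. Factoring out $A_{\text{\rm \tiny V}}(2n+1;t)$ using the vertically symmetric structure exposed by the decomposition of Kuperberg then produces $\tilde{A}^{(2)}_{\text{\rm \tiny UU}}(4n;t)$ exactly as defined in \eqref{eqn:defA2tilde}.

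For the determinant formula \eqref{eq:AUU}, my approach is to start from an Izergin-Korepin-style determinant representation for the inhomogeneous UUASM partition function (obtained, as in \cite{behrend:12}, by specialising a Cauchy-type determinant of Boltzmann weights) and then carry out the homogeneous limit in which all inhomogeneities are sent to a common value. Since the matrix entries collapse, one reads the leading behaviour via repeated differentiation, yielding a determinant whose entries are Taylor coefficients. Elementary row and column operations, consisting of multiplying each row by an appropriate power of $x$ and taking finite differences, should convert these coefficients to the explicit binomial sums $\sum_{k=0}^{n-1}\binom{i+j}{i+k}\binom{i+k}{2k}t^k$. The structural parallel with \eqref{eqn:AVexplicit} from \cref{thm:PhiDSpecCom} offers a useful cross-check: both determinants arise from the same regularisation scheme applied to closely related U-turn partition functions.

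The main obstacle will be the precise identification of the six-vertex boundary data and Boltzmann weights that emerge from the QSOV construction after fusion, together with the accounting of signs. Getting the specialisation $y=z=-1$ rather than some other root of unity, and confirming that the prefactors $(-x)^n$ and $(-1)^n$ are correct (including the $x^n$ arising from the renormalisation \eqref{eq:normAD} versus the natural QSOV normalisation), requires a careful audit of the fusion map and of the U-turn reflection matrices determined in \cref{app:BYBE}. Once these are settled, the determinant formula \eqref{eq:AUU} should follow by routine, if delicate, manipulations.
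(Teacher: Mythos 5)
Your overall strategy points in the right direction---relate the components to Kuperberg's UU-turn six-vertex partition function and compute the homogeneous limit with the divided-difference technique of \cite{behrend:12}, which is indeed how the paper obtains the determinant \eqref{eq:AUU} in \cref{prop:prop510}. However, there are genuine gaps in how you propose to make the connection to $Z_{\text{\rm \tiny UU}}$. First, your key identification is factually wrong: $|0\rangle=\tfrac{1}{\sqrt2}(|{\uparrow\downarrow}\rangle+|{\downarrow\uparrow}\rangle)$ is the \emph{symmetric} (triplet) combination, not a singlet, and by itself it is not a U-turn boundary state; a U-turn state is a specific spectral-parameter-dependent combination solving the boundary Yang--Baxter equation, as in \cref{app:BYBE}. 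The paper does not read off the $0\cdots 0$ component from a local reinterpretation of $|0\rangle$. Instead it computes the scalar product $\langle \chit(x_1,\dots,x_n)|\psi_{\text{\rm \tiny AD}}\rangle$ with the genuine spin-one boundary state \eqref{eq:chistate} at \emph{generic} $b$, evaluates it via the separation-of-variables expansion together with the wheel condition (\cref{lemma:wheel}) and a Cauchy identity, and identifies the result with $Z^{(2)}_{\text{\rm \tiny UU}}$ at $c=b^{-1}$ (\cref{prop:XiADZ2}). The specialisation $y=z=-1$ then comes from choosing $b=1$, which kills the $|{\Uparrow\Downarrow}\rangle$ and $|{\Downarrow\Uparrow}\rangle$ terms of \eqref{eqn:defChiHom} and isolates the $|00\rangle$ component; it has nothing to do with ``boundary arrows glued by the twist carrying an extra sign,'' and your proposed mechanism would not produce the structural constraint $z=y^{-1}$ that the computation actually yields.

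Second, your treatment of the odd-size component \eqref{eq:ududucomp} is too vague to work. The component $\Uparrow\Downarrow\cdots\Uparrow\Downarrow\Uparrow$ cannot be reached directly by tensoring copies of $\langle\chit|$: the paper instead computes the spin-reversed component $\Downarrow\Uparrow\cdots\Downarrow\Uparrow\Downarrow$ via $\bigl(\langle\chit(x_1,\dots,x_n)|\otimes\langle{\Downarrow}|\bigr)|\psi_{\text{\rm \tiny AD}}\rangle$ at $b=q^{-1}$, which requires embedding the $(2n+1)$-site system into one of size $2n+2$ through an auxiliary regularisation $b'\to q^{-1}$ and a delicate limit of determinants with vanishing rows; this embedding is the actual origin of the size shift $4n\to 4(n+1)$, not a ``grouping with a boundary spin produced by the twist.'' The final conversion to the stated component then uses the spin-reversal property $F|\phi_{\text{\rm \tiny AD}}\rangle=(-1)^N|\phi_{\text{\rm \tiny AD}}\rangle$ of \cref{prop:SRantidiagonal}, a step entirely absent from your outline. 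Without these ingredients the odd-$N$ half of the theorem, and the sign $(-1)^n$ in \eqref{eq:ududucomp}, cannot be established.
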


\begin{figure}[h]
\centering
\begin{tikzpicture} 
  \draw (0,0) node {$
  \left(
  \begin{array}{cccc}
    + & - & 0 & +\\
    0 & + & 0 & -\\
    0 & 0 & 0 & 0\\
    0 & 0 & 0 & +
  \end{array}
  \right.
  $};
  \draw (1,.9) arc [start angle = 0, end angle = 180, radius = .275cm];
  \draw (-.15,.9) arc [start angle = 0, end angle = 180, radius = .275cm];
  
  \draw (1.3,.2) arc [start angle = -90, end angle = 90, radius = .22cm];
  \draw (1.3,-.6) arc [start angle = -90, end angle = 90, radius = .22cm];
 
  \draw (6,0) node {$
     \left(
  \begin{array}{ccccccc}
    0 & 0 & 0 & + & 0 & 0 & 0\\
    0 & + & 0 & - & 0 & + & 0\\
    + & - & + & \ast & + & - & +\\
    0 & + & 0 & - & 0 & + & 0\\
     0 & 0 & 0 & + & 0 & 0 & 0
  \end{array}
  \right)
  $};
\end{tikzpicture}
\caption{\textit{Left:} a $4\times 4$ UUASM with $k=2$ negative entries, and $m=1$ odd rows and $m'=1$ odd columns with an odd number of non-zero entries. 
The half circles indicate the U-turns. \textit{Right:} the simplest VHPASM,
which has $k=0$ negative entries in its $2\times 3$ fondamental domain.}
\label{fig:uuasmvhpasm} 
\end{figure}

The polynomial 
\begin{equation}
  A_{\text{\rm \tiny VHP}}(4n+2;t) = \lim_{y\to \infty} y^{-n} A_{\text{\rm \tiny UU}}(4n; t,y,y^{-1})
\end{equation}
$t$-enumerates the class of $(4n+1) \times (4n+3)$ vertically and horizontally perverse alternating sign matrices
(VHPASMs) which were introduced by Kuperberg \cite{kuperberg:02}. 
These rectangular matrices are invariant under reflections about both the horizontal and vertical medians. Moreover, their rows and columns satisfy the same rules as those of ordinary ASMs.
They also have the special property that the central entry $\ast$ is $-1$ when read horizontally but $+1$ when read vertically. As a result of the reflection symmetries, the entries of the horizontal and vertical medians are respectively fixed to the sequences $+1, -1, \dots, +1, -1, +1$ of lengths $4n+3$ and $4n+1$.
These are ignored for the $t$-enumeration. The weight of a given matrix is $t^k$ if it has $k$ negative entries in the $2n\times (2n+1)$ upper-left quadrant.  The simplest VHPASM is shown in the right panel of \cref{fig:uuasmvhpasm}. 
\begin{theorem}  \label{thm:componentsAD2}
  For $N=2n$ sites, the zero-energy state of the Hamiltonian with anti-diagonal twist possesses the special component
  \begin{equation}
    (\phi_{\text{\rm \tiny AD}})_{\Uparrow\Downarrow\cdots \Uparrow\Downarrow} = \frac{A_{\text{\rm \tiny VHP}}(4n+2;x^2)}{A_{\text{\rm \tiny V}}(2n+1;x^2)}= A_{\text{\rm \tiny VHP}}^{(2)}(4n+2;x^2)
  \end{equation}
  where $A_{\text{\rm \tiny VHP}}^{(2)}(4n+2;t)$ is given by
\begin{equation}
  A_{\text{\rm \tiny VHP}}^{(2)}(4n+2;t)=\lim_{y\to \infty} y^{-n} A_{\text{\rm \tiny UU}}^{(2)}(4n; t,y,y^{-1}) = \Det{i,j=0}{n-1}\left(\,\sum_{k=0}^{n-1} \binom{i + j}{i+k} \binom{i +k+1}{2k}t^k \right).
\label{eq:AVHP}
  \end{equation}
\end{theorem}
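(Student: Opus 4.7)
The plan is to reduce the computation of $(\phi_{\text{\rm \tiny AD}})_{\Uparrow\Downarrow\cdots \Uparrow\Downarrow}$ to that of a six-vertex partition function associated with the inhomogeneous eigenvector $|\psi_{\text{\rm \tiny AD}}\rangle$ that is constructed via the quantum separation of variables technique in \cref{sec:transfermatrices}, and then to pass to the homogeneous limit. First, I would write down the analogous inhomogeneous component $(\psi_{\text{\rm \tiny AD}})_{\Uparrow\Downarrow\cdots \Uparrow\Downarrow}$ as a function of the inhomogeneities $z_1,\dots,z_N$, by applying the appropriate product of raising/lowering operators to the explicit SoV expression of $|\psi_{\text{\rm \tiny AD}}\rangle$. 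Translating this into the vertex picture, the alternating boundary pattern on one side of the lattice, combined with the U-turn-type boundary encoding the anti-diagonal twist on the opposite side, should identify the component with a partition function of the inhomogeneous six-vertex model on a square domain with two U-turns and half-frozen boundary conditions, in the same spirit as the partition functions computed in \cref{sec:specialcomps} for the other components of $|\phi_{\text{\rm \tiny AD}}\rangle$.

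Once the partition function is identified, the second step is to recognise it, after the homogeneous limit $z_1,\dots,z_N \to 1$, as a specialisation of Kuperberg's U-turn partition function. The alternating frozen spins are expected to force the two extra U-turn spectral parameters into the configuration $(y,z)=(y,y^{-1})$ with a multiplicative factor $y^n$ in the natural normalisation, so that the large-$y$ regime extracts the polynomial $\lim_{y\to\infty} y^{-n} A_{\text{\rm \tiny UU}}(4n;x^2,y,y^{-1}) = A_{\text{\rm \tiny VHP}}(4n+2;x^2)$. Dividing by $A_{\text{\rm \tiny V}}(2n+1;x^2)$ and invoking Kuperberg's factorisation $A_{\text{\rm \tiny UU}} = A_{\text{\rm \tiny V}}\cdot A_{\text{\rm \tiny UU}}^{(2)}$ gives precisely $A_{\text{\rm \tiny VHP}}^{(2)}(4n+2;x^2)$, proving the first equality of the theorem.

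For the determinant formula \eqref{eq:AVHP}, I would rely on an Izergin-Korepin-type determinant representation of the inhomogeneous partition function. The $y\to\infty$ limit, performed before the homogeneous limit, amounts to discarding the lower-order entries in one block of rows and columns of this determinant. Standard linear-algebra manipulations (row and column operations, together with the binomial identities used to prove \cref{thm:PhiDSpecCom} and \cref{thm:componentsAD1}) should then reduce it to the explicit $n\times n$ binomial determinant $\Det{i,j=0}{n-1}\bigl(\sum_{k=0}^{n-1}\binom{i+j}{i+k}\binom{i+k+1}{2k}t^k\bigr)$, paralleling the derivation of \eqref{eqn:AVexplicit} and \eqref{eq:AUU}.

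The main obstacle will be the first step: precisely identifying $(\psi_{\text{\rm \tiny AD}})_{\Uparrow\Downarrow\cdots \Uparrow\Downarrow}$ with the correct U-turn partition function. The SoV representation of $|\psi_{\text{\rm \tiny AD}}\rangle$ encodes the component as a sum of products indexed by subsets of SoV variables, and matching this to a lattice partition function with the right orientation of U-turns and the right frozen boundary spins requires carefully tracking signs, normalisations, and reflection equations from \cref{app:BYBE}. Once this dictionary is set up, extracting both the factorised formula and the binomial determinant should be routine given the machinery already developed in the paper.
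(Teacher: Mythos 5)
There is a genuine gap, and it sits exactly where you locate "the main obstacle": the identification of the component with a six-vertex UU-turn partition function. Two aspects of your proposed mechanism would not go through. First, the bare inhomogeneous component $(\psi_{\text{\rm \tiny AD}})_{\Uparrow\Downarrow\cdots\Uparrow\Downarrow}$ at generic inhomogeneities is not the right object to compute: the tractable quantity is the one-parameter family of scalar products $\Xi_{\text{\rm \tiny AD}}(x_1,\dots,x_n;b)=\langle\chit(x_1,\dots,x_n)|\psi_{\text{\rm \tiny AD}}\rangle$ of \eqref{eqn:defXiAD}, built from the boundary Yang--Baxter state \eqref{eq:chistate} and evaluated at the \emph{alternating} specialisation \eqref{eqn:alternatingparams} of the inhomogeneities. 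The desired component is only recovered after the homogeneous limit by sending $b\to q$ (equivalently $y\to\infty$); for generic spectral parameters $\langle\chit(z)|$ is never proportional to $\langle{\Uparrow\Downarrow}|$, so a direct inhomogeneous computation of the component itself loses the boundary-YBE structure on which the whole calculation rests. Second, the anti-diagonal twist is not "encoded as a U-turn boundary on the opposite side" of a lattice: the UU-turn partition function $Z^{(2)}_{\text{\rm \tiny UU}}(\dots;b,b^{-1})$ does not arise from a graphical bijection at all. In \cref{prop:XiADZ2} it emerges algebraically from the separation-of-variables expansion: the wheel condition (\cref{lemma:wheel}) kills all height profiles except those with $h_{2j-1}+h_{2j}=2$, the surviving terms reduce to products of monomials via reduction relations of the auxiliary ten-vertex partition function of \cref{prop:ZAspin1} (itself established by fusion in \cref{app:ZA}), Cauchy's identity turns each term into a determinant, and the summation identity \eqref{eq:magicalidentity} over the signs $\epsilon_i$ is what forces the second U-turn parameter to $c=b^{-1}$, i.e.\ $z=y^{-1}$. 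Your guess of the specialisation $(y,y^{-1})$ is correct, but no mechanism producing it is supplied, and "carefully tracking signs and normalisations" undersells what is actually a multi-step algebraic derivation with no known direct lattice interpretation (the authors themselves stress that the appearance of six-vertex partition functions here is surprising and not understood graphically).

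The remaining steps of your outline are broadly consistent with the paper, with two smaller corrections. The scalar product yields $Z^{(2)}_{\text{\rm \tiny UU}}$ directly (see \eqref{eq:Xihomogeneous}), so no division by $A_{\text{\rm \tiny V}}$ and no invocation of the factorisation $A_{\text{\rm \tiny UU}}=A_{\text{\rm \tiny V}}\cdot(\cdots)$ is needed; that factorisation only enters in restating the answer as $A_{\text{\rm \tiny VHP}}/A_{\text{\rm \tiny V}}$. And the determinant formula \eqref{eq:AVHP} is not obtained by discarding blocks of an Izergin--Korepin determinant in the $y\to\infty$ limit; rather, the homogeneous limit of $Z^{(2)}_{\text{\rm \tiny UU}}$ is taken first for general $(y,z)$ via the divided-difference technique of Behrend \textit{et al.}, producing the polynomial determinant of \cref{prop:prop510}, and only then is $y\to\infty$, $z=y^{-1}$ specialised entrywise. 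That reordering is harmless, but the confluent-limit machinery is essential and should be named as the tool for this step.
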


To the best of our knowledge, the formulas \eqref{eq:AUU} and \eqref{eq:AVHP} have not appeared in the literature before.

%%%%%%%%%%%%%%%%%%%%%
%
\section{The nineteen-vertex model} 
\label{sec:transfermatrices}
%
%%%%%%%%%%%%%%%%%%%%%

In this section, we initiate the derivation of the results presented in \cref{sec:spinoneXXZ}. To this end, we employ the same strategy that was used in \cite{hagendorf:15} for the diagonal twist: We generalise the problem and investigate the inhomogeneous transfer matrix of the corresponding nineteen-vertex model. 

We briefly review the construction of the nineteen-vertex model through the fusion procedure and its transfer matrix in \cref{sec:Rmat+fusion,sec:tm}. In \cref{sec:SoV}, we discuss certain elements of the quantum separation of variables technique which are relevant to our problem. Using this technique, we prove the existence of a simple non-degenerate eigenvalue of the transfer matrix of the nineteen-vertex model in \cref{sec:simpleeig}. (A result regarding the non-degeneracy of the special eigenstate for the diagonal twist is also given.) We also construct the corresponding eigenvectors. Finally in \cref{sec:propev}, we discuss certain properties of these eigenvectors, in particular their homogeneous limit.
 
%%%%%%%%%
\subsection[$R$-matrices and fusion]{\texorpdfstring{$\bm{R}$}{R}-matrices and fusion} \label{sec:Rmat+fusion}
%%%%%%%%%

The $R$-matrix of the nineteen-vertex model is constructed from the $2 \times 2$ fusion of the elementary $R$-matrix of the six-vertex model. We use the usual notation
\begin{equation}
|{\uparrow}\rangle = \begin{pmatrix}1 \\ 0\end{pmatrix}, \qquad |{\downarrow}\rangle = \begin{pmatrix}0 \\ 1\end{pmatrix}
\end{equation}
for the canonical basis of $\mathbb C^2$. In the basis
 $\{|{\uparrow\uparrow}\rangle,|{\uparrow\downarrow}\rangle,|{\downarrow\uparrow}\rangle,|{\downarrow\downarrow}\rangle\}$ of $\mathbb C^2 \otimes \mathbb C^2$, the elementary $R$-matrix reads
\begin{equation}
R^{(1,1)}(z) = \begin{pmatrix} [qz] & 0 & 0 & 0 \\ 0 & [z] & [q] & 0 \\ 0 & [q] & [z] & 0 \\ 0 & 0 & 0 & [qz] \end{pmatrix}
\label{eqn:6vRmatrix}
\end{equation} 
where we use the short-hand notation
\begin{equation}
  [z]=z-z^{-1}.
\end{equation}
Using the fusion procedure for $R$-matrices \cite{kulish:81,kulish:82,kirillov:87}, one can construct a two-parameter family of $R$-matrices, $R^{(m,n)}(z) \in \textrm{End}(\mathbb C^{m+1} \otimes \mathbb C^{n+1})$ with $m,n \in \mathbb Z_+$, that are solutions to the Yang-Baxter equation on the tensor product
$V_1 \otimes V_2 \otimes V_3= \mathbb C^{m+1} \otimes \mathbb C^{n+1} \otimes \mathbb C^{p+1}$:
\begin{equation}\label{eq:YB}
R_{12}^{(m,n)}(z/w)R_{13}^{(m,p)}(z)R_{23}^{(n,p)}(w) = R_{23}^{(n,p)}(w)R_{13}^{(m,p)}(z) R_{12}^{(m,n)}(z/w).
\end{equation}
Here, the labels $i,j$ of $R^{(m,n)}_{ij}(z)$ indicate that it acts non-trivially on $V_i$ and $V_j$ and as the identity on the other factor of the tensor product.

The elementary $R$-matrix satisfies the inversion relation
\begin{equation}\label{eq:invrel}
R^{(1,1)}(z)R^{(1,1)}(1/z) = [q z] [q/z] \, \boldsymbol 1.
\end{equation}
For $z = q^{\pm 1}$, $R^{(1,1)}(z)$ is not invertible and is given by
\begin{equation}
R^{(1,1)}(q) = B P^+, \qquad R^{(1,1)}(1/q) = -2 [q] P^-,
\end{equation}
where $B = \textrm{diag}([q^2], 2 [q], 2[q], [q^2])$ and $P^+$ and $P^-$ are the projectors on the symmetric and anti-symmetric subspaces of $\mathbb C^2 \otimes \mathbb C^2$:
\begin{equation}
P^+ = \begin{pmatrix} 1&0&0&0 \\ 0&1/2&1/2&0\\ 0&1/2&1/2&0\\ 0&0&0&1\end{pmatrix}, \qquad P^- = \begin{pmatrix} 0&0&0&0 \\ 0&1/2&-1/2&0\\ 0&-1/2&1/2&0\\ 0&0&0&0\end{pmatrix}.
\end{equation}

To construct the $R$-matrix of the mixed $(m,n)=(1,2)$ model, we note that $R^{(1,1)}_{12}(z)R^{(1,1)}_{13}(qz)P^+_{23}$ is annihilated by $P^-_{23}$ from both the right and the left. We define
\begin{equation}\label{eq:R12123}
[q z] R^{(1,2)}_{1(23)}(z) = U_{23}R^{(1,1)}_{12}(z)R^{(1,1)}_{13}(qz) P^+_{23} U_{23}^{-1}
\end{equation}
where
\begin{equation}
U = \begin{pmatrix} 1 & 0 & 0 & 0 \\ 0 & \alpha & \alpha & 0 \\ 0 & 0 & 0 & 1 \\ 0 & \alpha & -\alpha & 0 \end{pmatrix}, \qquad \alpha = \frac12 \sqrt{\frac{[q^2]}{[q]}}.
\end{equation}
The matrix $U$ operates a change of basis of $\mathbb C^2 \otimes \mathbb C^2$ into its symmetric and antisymmetric subspaces.  From the above remark, $R^{(1,2)}_{1(23)}(z)$ is zero on the antisymmetric subspace. The $R$-matrix of the mixed model is obtained by projecting onto the symmetric subspace:
\begin{equation}
R^{(1,2)}(z) = Q_{23} R^{(1,2)}_{1(23)}(z) Q_{23}^t, \qquad 
Q = \begin{pmatrix} 1 & 0 & 0 & 0\\ 0 & 1 & 0 & 0\\ 0 & 0 & 1& 0\end{pmatrix}.
\end{equation}
We denote by
\begin{equation}
|{\Uparrow}\rangle = |{\uparrow\uparrow}\rangle, \qquad |{0}\rangle =\frac1{\sqrt2}(
|{\uparrow\downarrow}\rangle+|{\downarrow\uparrow}\rangle), \qquad |{\Downarrow}\rangle = |{\downarrow\downarrow}\rangle
\end{equation}
the canonical basis states of the symmetric subspace. In the basis $\{|{\uparrow \Uparrow}\rangle, |{\uparrow\!0}\rangle, |{\uparrow \Downarrow}\rangle, |{\downarrow \Uparrow}\rangle, |{\downarrow\!0}\rangle, |{\downarrow \Downarrow}\rangle\}$
of $\mathbb C^2 \otimes \mathbb C^3$, we have
\begin{equation}
R^{(1,2)}(z) = \begin{pmatrix}
[q^2 z] & 0 & 0 & 0 & 0 & 0 \\
0 & [q z] & 0 & \sqrt{[q][q^2]} & 0 & 0 \\
0 & 0 & [z] & 0 & \sqrt{[q][q^2]} & 0 \\
0 & \sqrt{[q][q^2]} & 0 & [z] & 0 & 0 \\
0 & 0 & \sqrt{[q][q^2]} & 0 & [q z] & 0 \\
0 & 0 & 0 & 0 & 0 & [q^2 z] \\
\end{pmatrix}.
\end{equation}
The weights of the corresponding ten-vertex model are read from $R^{(1,2)}(z/q)$ and are given in \cref{fig:10v}.

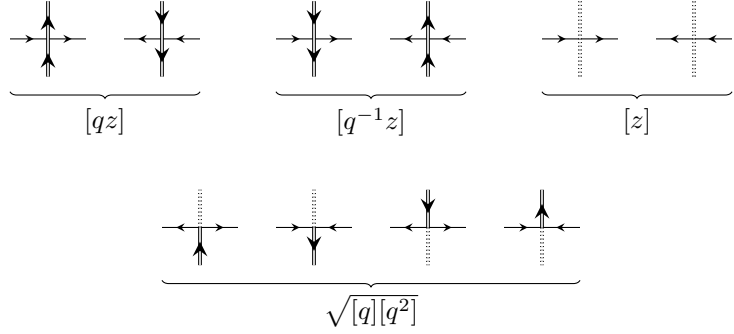
\begin{figure}[h] 
  \centering
 \begin{tikzpicture}
     \draw[postaction={on each segment={mid arrow}}] (0,0) -- (.5,0) -- (1,0);
     \draw[postaction={on each segment={mid arrow}},double] (.5,-.5) -- (.5,0) --(.
5,.5);
     \begin{scope}[xshift=1.5cm]
    \draw[postaction={on each segment={mid arrow}}] (1,0) -- (.5,0) -- (0,0);
     \draw[postaction={on each segment={mid arrow}},double] (.5,.5) -- (.5,0) -- (.
5,-.5) ;
    \end{scope}
    
    \begin{scope}[xshift=3.5cm]
       \draw[postaction={on each segment={mid arrow}}] (0,0) -- (.5,0) -- (1,0);
     \draw[postaction={on each segment={mid arrow}},double]  (.5,.5) -- (.5,0) --(.
5,-.5);
    \end{scope}
    \begin{scope}[xshift=5cm]
    \draw[postaction={on each segment={mid arrow}}] (1,0) -- (.5,0) -- (0,0);
     \draw[postaction={on each segment={mid arrow}},double] (.5,-.5)  -- (.5,0) --  
(.5,.5);

    \end{scope}
    
     \begin{scope}[xshift=7cm]
     \draw[postaction={on each segment={mid arrow}}] (0,0) -- (.5,0) -- (1,0);
     \draw[densely dotted ,double]  (.5,.5)  --(.5,-.5);
    \end{scope}
    \begin{scope}[xshift=8.5cm]
    \draw[postaction={on each segment={mid arrow}}] (1,0) -- (.5,0) -- (0,0);
     \draw[densely dotted,double] (.5,-.5)  -- (.5,0) --  (.5,.5);

    \end{scope}
    
    \begin{scope}[yshift=-2.5cm,xshift=2.cm]
     \draw[postaction={on each segment={mid arrow}}] (.5,0) -- (0,0);
     \draw[postaction={on each segment={mid arrow}}] (.5,0) -- (1,0);  
     \draw[densely dotted,double] (.5,0)  -- (.5,.5);
     \draw[postaction={on each segment={mid arrow}},double] (.5,-.5) -- (.5,0);  
     
     \begin{scope}[xshift=1.5cm]
     \draw[postaction={on each segment={mid arrow}}] (0,0) -- (.5,0);
     \draw[postaction={on each segment={mid arrow}}] (1,0) -- (.5,0);  
     \draw[densely dotted,double] (.5,0)  -- (.5,.5);
     \draw[postaction={on each segment={mid arrow}},double] (.5,0) -- (.5,-.5);  
     \end{scope}
     
     \begin{scope}[xshift=3cm]
     \draw[postaction={on each segment={mid arrow}}]   (.5,0)-- (0,0) ;
     \draw[postaction={on each segment={mid arrow}}]   (.5,0)-- (1,0);  
     \draw[densely dotted,double] (.5,0)  -- (.5,-.5);
     \draw[postaction={on each segment={mid arrow}},double] (.5,.5) -- (.5,0);  
     \end{scope}
     
     \begin{scope}[xshift=4.5cm]
    \draw[postaction={on each segment={mid arrow}}] (0,0) -- (.5,0);
     \draw[postaction={on each segment={mid arrow}}] (1,0) -- (.5,0); 
     \draw[densely dotted,double] (.5,0)  -- (.5,-.5);
     \draw[postaction={on each segment={mid arrow}},double] (.5,0) -- (.5,.5);  
     \end{scope}

    \end{scope}
    \draw [decoration={brace,mirror,raise=0.7cm},decorate] (0,0) -- (2.5,0); 
    \draw [decoration={brace,mirror,raise=0.7cm},decorate,xshift=3.5cm] (0,0) -- 
(2.5,0); 
    \draw [decoration={brace,mirror,raise=0.7cm},decorate,xshift=7cm] (0,0) -- 
(2.5,0); 
    
    \draw (1.25,-1.1) node {$[qz]$};
    \draw [xshift=3.5cm] (1.25,-1.1) node {$[q^{-1}z]$};
    \draw [xshift=7cm] (1.25,-1.1) node {$[z]$};
    
    \draw [decoration={brace,mirror,raise=0.7cm},decorate] (2,-2.5) -- (7.5,-2.5); 
    \draw (4.75,-3.6) node {$\sqrt{[q][q^2]}$};
  \end{tikzpicture} 
 \caption{The vertex configurations of the ten-vertex model and their statistical weights.}
  \label{fig:10v}
\end{figure}

To obtain the $R$-matrix of the $(m,n)=(2,2)$ model, we apply the fusion procedure a second time by defining
\begin{equation}
[z][qz]R^{(2,2)}_{(12)(34)}(z) = U_{12} R^{(1,2)}_{2(34)}(z/q) R^{(1,2)}_{1(34)}(z)P^+_{12}U^{-1}_{12}.
\end{equation}
We again project onto the symmetric subspaces:
\begin{equation}
R^{(2,2)}(z) = Q_{12} Q_{34} R^{(2,2)}_{(12)(34)}(z) Q_{12}^t Q_{34}^t.
\end{equation}
In the basis $\{|{\Uparrow \Uparrow}\rangle, |{\Uparrow\!0}\rangle, |{\Uparrow \Downarrow}\rangle, |{0\!\Uparrow}\rangle, |{0 0}\rangle, |{0\!\Downarrow}\rangle, |{\Downarrow \Uparrow}\rangle, |{\Downarrow\!0}\rangle, |{\Downarrow \Downarrow}\rangle\}$, it reads
\begin{equation}
R^{(2,2)}(z) = \begin{pmatrix}
p_1 & 0 & 0 & 0 & 0 & 0 & 0 & 0 & 0 \\
0 & p_2 & 0 & p_3 & 0 & 0 & 0 & 0 & 0 \\
0 & 0 & p_4 & 0 & p_5 & 0 & p_6 & 0 & 0 \\
0 & p_3 & 0 & p_2 & 0 & 0 & 0 & 0 & 0 \\
0 & 0 & p_5 & 0 & p_7 & 0 & p_5 & 0 & 0 \\
0 & 0 & 0 & 0 & 0 & p_2 & 0 & p_3 & 0 \\
0 & 0 & p_6 & 0 & p_5 & 0 & p_4 & 0 & 0 \\
0 & 0 & 0 & 0 & 0 & p_3 & 0 & p_2 & 0 \\
0 & 0 & 0 & 0 & 0 & 0 & 0 & 0 & p_1
\end{pmatrix}, \qquad
\begin{array}{l}
p_1 = [qz][q^2z] \\[0.1cm]
p_2 = [z][qz] \\[0.1cm]
p_3 = [q^2][qz] \\[0.1cm]
p_4 = [z/q][z] \\[0.1cm]
p_5 = [q^2][z] \\[0.1cm]
p_6 = [q][q^2] \\[0.1cm]
p_7 = [z][qz]+[q][q^2].
\end{array}
\end{equation}
The weights of the nineteen possible vertices can be
read from the non-zero matrix entries of $R^{(2,2)}(z)$. One can check that  \eqref{eq:YB} is satisfied for all admissible combinations of $R^{(1,1)}(z)$, $R^{(1,2)}(z)$ and $R^{(2,2)}(z)$.

We note that the projection onto the symmetric subspaces by the application of $Q$ is in many cases optional, 
 meaning that calculations made in $(\mathbb C^3)^{\otimes N}$ can often be performed in $(\mathbb C^2 \otimes \mathbb C^2)^{\otimes N}$ instead.
This will in particular enable us to resort to known results for the six-vertex model in \cref{sec:sumrules,sec:specialcomps}.

%%%%%%%%%
\subsection{Transfer matrices}
\label{sec:tm}
%%%%%%%%%

In inhomogeneous vertex models, inhomogeneity parameters $w_j$ are associated to each factor $V_j$ of the tensor product $V = \bigotimes_{j=1}^N V_j$. The monodromy matrix of the ten-vertex model is defined as
\begin{equation}
\mathcal T_a = R^{(1,2)}_{a,N}(q^{-1}z/{w_N})R^{(1,2)}_{a,N-1}(q^{-1}z/{w_{N-1}})\cdots R^{(1,2)}_{a,1}(q^{-1}z/{w_1}) = \begin{pmatrix}\mathcal A(z) & \mathcal B(z) \\ \mathcal C(z) & \mathcal D(z)\end{pmatrix}_{\!a} ,
\end{equation}
where $a$ labels the auxiliary space $\mathbb C^2$. The monodromy matrix is therefore an endomorphism of $\mathbb C^2 \otimes (\mathbb C^3)^{\otimes N}$.
The transfer matrix of the ten-vertex model is defined as
\begin{equation}
T^{(1)}(z) = \textrm{tr}_a\Big( \Omega^{(1)}_a \mathcal T_a\Big) =\left\{
\begin{array}{ll} 
\mathcal \i\big(\mathcal A(z) - \mathcal D(z)\big) & \textrm{(diagonal twist),} \\[0.2cm]
\mathcal \i \big(\mathcal B(z) + \mathcal C(z)\big) & \textrm{(anti-diagonal twist),} \\
\end{array}\right.
\end{equation}
where $\textrm{tr}_a$ denotes the trace over the auxiliary space. The operator $\Omega^{(1)}$ is a twist operator on $\mathbb C^2$:
\begin{equation}
\Omega^{(1)} = \begin{pmatrix} \i & 0 \\ 0 & -\i \end{pmatrix} \quad \text{(diagonal twist)},
\qquad 
\Omega^{(1)} = \begin{pmatrix} 0 & \i \\ \i & 0 \end{pmatrix} \quad \text{(anti-diagonal twist)}.
\end{equation}

The transfer matrix for the nineteen-vertex model is similarly constructed as
\begin{equation}
T^{(2)}(z) = \textrm{tr}_a\Big( \Omega^{(2)}_a R^{(2,2)}_{a,N}(z/{w_N})R^{(2,2)}_{a,N-1}(z/{w_{N-1}})\cdots R^{(2,2)}_{a,1}(z/{w_1})\Big) \in \textrm{End}\big((\mathbb C^3)^{\otimes N}\big),
\end{equation}
where the auxiliary space is now $\mathbb C^3$. The twist operator $\Omega^{(2)}$ is constructed as
\begin{equation}
\Omega^{(2)} = Q\, U(\Omega^{(1)} \otimes \Omega^{(1)}) U^{-1} Q^t
\end{equation}
and explicitly reads:
\begin{equation}
\Omega^{(2)} = \left(\begin{matrix} -1 & 0 & 0 \\ 0 & 1 & 0  \\ 0 & 0 & -1 \end{matrix}\right) \quad \text{(diagonal twist)},
\quad 
\Omega^{(2)} = -\left(\begin{matrix} 0 & 0 & 1 \\ 0 & 1 & 0 \\ 1 & 0 & 0 \end{matrix}\right) \quad \text{(anti-diagonal twist)}.
\label{eqn:twistmatrices}
\end{equation}
The terminology \textit{diagonal} and \textit{anti-diagonal} for the twist is inspired from the structure of these twist matrices.
For both twists, the transfer matrix of the nineteen-vertex model can be written in terms of $T^{(1)}$:
\begin{equation} 
  T^{(2)}(z)=T^{(1)}(z)T^{(1)}(qz)-a(qz)d(z)\, \boldsymbol {1}, 
  \qquad a(z) = \prod_{i=1}^N [q z/w_i], \quad d(z) = \prod_{i=1}^N [q^{-1} z/w_i].
  \label{eqn:fusion}
\end{equation}

Let us write the dependence of the inhomogeneity parameters in the transfer matrix as $T^{(i)}(z) = T^{(i)}(z|w_1,\dots w_N)$.
As a consequence of the Yang-Baxter equation \eqref{eq:YB}, the matrices $\mathcal A(z)$, $\mathcal B(z)$, $\mathcal C(z)$ and $\mathcal D(z)$ satisfy a set of quadratic relations. Crucially, these relations imply the commutation of transfer matrices with identical inhomogeneity parameters:
\begin{equation}
\label{eq:CommutationOfTs}
[T^{(i)}(z|w_1, \dots, w_N),T^{(j)}(z'|w_1, \dots, w_N)]=0, \qquad i,j \in \{1,2\},
\end{equation} 
for all $z,z'$.

If the $w_j$ are all set to the same value, the model is said to be {\it homogeneous}. Because of the obvious property $T^{(i)}(\lambda z|\lambda w_1, \dots, \lambda w_N)=T^{(i)}(z| w_1, \dots, w_N)$ for $\lambda \neq 0$, for the homogeneous limit, we will specialise the $w_j$ to $1$ without loss of generality. 
For both twists, the Hamiltonian \eqref{eqn:ham} can be expressed in terms of the homogeneous $T^{(2)}(z)$ matrix as
\begin{equation}
H = N + \frac{[q^2]}2 \frac{\diff}{\diff z} \log T^{(2)}(z|w_1=1,w_2=1,\dots,w_N=1) \Big|_{z=1} 
\label{eqn:hamfromT}
\end{equation}
where we recall that
\begin{equation}
x = q+q^{-1}.
\end{equation}
In terms of the twist matrices, the boundary conditions \eqref{eqn:twists} for the Hamiltonian can be written as
\begin{equation}
  s^a_{N+1} = \Omega^{(2)}_1 s^a_{1} \Omega^{(2)}_1, \quad a=1,2,3.
\end{equation}

The zero-energy state of the Hamiltonian turns out to be the homogeneous limit of a special eigenstate of the transfer matrices $T^{(1)}(z)$ and $T^{(2)}(z)$ whose properties will be investigated in the next sections.

So far we have considered $q$ to be a generic complex number. However, the specialisations $q=q_c$ where $(q_c)^4 = 1$ require closer scrutiny. At these values, the $R$-matrix of the ten-vertex model is diagonal, resulting in $T^{(1)}(z)$ being trivial for both twists. Nevertheless, the Hamiltonian \eqref{eqn:ham} is non-trivial at these values and is obtained from \eqref{eqn:hamfromT} by taking the limit $q\rightarrow q_c$. In \cref{sec:homog.definitions}, the special eigenstate of $H$ at $q = q_c$ will similarly be constructed by taking the proper limit of the generic construction.

\paragraph{Diagrammatic calculus.}
Some of the computations presented in later sections are performed using diagrams. In particular, the graphical representations of $\mathcal B(z)$ and $\mathcal C(z)$ are
\begin{equation}
\mathcal B(z) = \ \
\begin{tikzpicture}[baseline=-0.1cm] 
     \foreach \x in {0,.5,2}
     {  
       \draw[double] (\x,-0.35) -- (\x,0.35);
     }
       \draw[yshift=0 cm] (1.25,-.25) node {$\cdots$};
       \draw[yshift=0.5 cm] (1.25,-.25) node {$\cdots$};       
            
     \draw[postaction={on each segment={mid arrow}},yshift=0 cm] (0,0)--(-0.5,0);
     \draw[postaction={on each segment={mid arrow}},yshift=0 cm] (2.0,0)--(2.5,0);
     \draw[yshift=0 cm]  (0,0)--(2.0,0);
     
     \draw (-.5,0) node [left] {$z$};
     \draw (0,-.65) node {$w_1$};
     \draw (.6,-.65) node  {$w_2$};
      \draw (2,-.65) node {$w_{N}$};
    \end{tikzpicture}\ ,
    \qquad
    \mathcal C(z) = \ \
\begin{tikzpicture}[baseline=-0.1cm] 
     \foreach \x in {0,.5,2}
     {  
       \draw[double] (\x,-0.35) -- (\x,0.35);
     }
       \draw[yshift=0 cm] (1.25,-.25) node {$\cdots$};
       \draw[yshift=0.5 cm] (1.25,-.25) node {$\cdots$};       
            
     \draw[postaction={on each segment={mid arrow}},yshift=0 cm] (-0.5,0)--(0,0);
     \draw[postaction={on each segment={mid arrow}},yshift=0 cm] (2.5,0)--(2.0,0);
     \draw[yshift=0 cm]  (0,0)--(2.0,0);
     
     \draw (-.5,0) node [left] {$z$};
     \draw (0,-.65) node {$w_1$};
     \draw (.6,-.65) node  {$w_2$};
      \draw (2,-.65) node {$w_{N}$};
    \end{tikzpicture}\, .
\end{equation}
The partitions functions of the ten-vertex models on a square lattice can therefore be evaluated either by hand, that is by explicitly writing out 
the possible configurations and summing the resulting weights, or algebraically by computing a corresponding scalar product. By convention, we choose the left-to-right order of composition to correspond to a diagram drawn from top to bottom. Here is an example of a partition function on a $3\times 4$ inhomogeneous lattice with specific boundary conditions:
\begin{equation}
\langle{0 \Uparrow\, \Downarrow 0} |\mathcal B(z_3) \mathcal C(z_2) \mathcal B(z_1)|{\Uparrow 0 \Uparrow\, \Downarrow}\rangle= \ \
\begin{tikzpicture}[baseline=0.4cm] 

       \draw[double] (0,-0.5) -- (0,1);
       \draw[double] (0.5,-0.0) -- (0.5,1.5);
       \draw[double] (1.0,-0.5) -- (1.0,1.5);
       \draw[double] (1.5,-0.5) -- (1.5,1.0);
\draw[double,postaction={on each segment={mid arrow}}] (0.0,-0.45) -- (0.0,0);
        \draw[double,densely dotted] (0.5,0) -- (0.5,-0.5);
\draw[double,postaction={on each segment={mid arrow}}] (1.0,-0.45) -- (1.0,0);
\draw[double,postaction={on each segment={mid arrow}}] (1.5,-0.25) -- (1.5,-0.45);

     \draw[double,densely dotted] (0,1) -- (0,1.5);
     \draw[double,postaction={on each segment={mid arrow}}] (0.5,1.1) -- (0.5,1.5);
     \draw[double,postaction={on each segment={mid arrow}}] (1.0,1.4) -- (1.0,1.0);
     \draw[double,densely dotted] (1.5,1) -- (1.5,1.5);
            
     \foreach \y in {0,0.5,1.0}
     {
     \draw[yshift=\y cm]  (0,0)--(1.8,0);
     }

     \draw[postaction={on each segment={mid arrow}},yshift=0 cm] (0,0)--(-0.5,0);
     \draw[postaction={on each segment={mid arrow}},yshift=0 cm] (1.5,0)--(2.0,0);

     \draw[postaction={on each segment={mid arrow}},yshift=0 cm] (-0.5,0.5)--(0,0.5);
     \draw[postaction={on each segment={mid arrow}},yshift=0 cm] (2.0,0.5)--(1.5,0.5);
     
     \draw[postaction={on each segment={mid arrow}},yshift=0 cm] (0,1)--(-0.5,1);
     \draw[postaction={on each segment={mid arrow}},yshift=0 cm] (1.5,1)--(2.0,1);
     
     \draw (-.5,0) node [left] {$z_1$};
     \draw (-.5,0.5) node [left] {$z_2$};
     \draw (-.5,1) node [left] {$z_3$};
     \draw (0,-.75) node {$w_1$};
     \draw (.5,-.75) node  {$w_2$};
     \draw (1.0,-.75) node {$w_3$};
     \draw (1.5,-.75) node {$w_4$};
    \end{tikzpicture} \ \ .
\end{equation}

\paragraph{Spectral degeneracies.} We are now in a position to discuss the spectral degeneracies in the spectrum of the Hamiltonian stated in \cref{eq:spectra.diag,eq:spectra.antidiag}. We focus on the anti-diagonal twist and first examine the transfer matrices. Notice the simple properties 
\begin{equation}\label{eq:comms}
\{T^{(1)}(z),(-1)^M\} = 0, \qquad [T^{(2)}(z),(-1)^M] = 0.
\end{equation}
Suppose that $|\psi\rangle$ is an eigenvector of $T^{(1)}(z)$: $T^{(1)}(z)|\psi\rangle=\theta^{(1)}(z)|\psi\rangle$. The anticommutation relation implies that $T^{(1)}(z)\big((-1)^M|\psi\rangle\big)=-\theta^{(1)}(z)\big((-1)^M|\psi\rangle\big)$. 
 If $|\psi\rangle$ is also an eigenvector of $(-1)^M$, then $\theta^{(1)}(z)=0$. 
In this case, \eqref{eqn:fusion} implies that
$|\psi\rangle$ solves $T^{(2)}(z)|\psi\rangle=\theta^{(2)}(z)|\psi\rangle$ with the special eigenvalue 
\begin{equation} 
  \theta^{(2)}(z) =
   -a(q z)d(z) = 
   -\prod_{i=1}^N [q^{-1}z/w_i][q^2z/w_i].
  \label{eqn:specialev}
\end{equation}

Conversely, if $\theta^{(1)}(z)$ is non-zero, then the vectors $|\psi\rangle$ and $ (-1)^M|\psi\rangle$ are linearly independent and, from \eqref{eq:comms}, are both eigenvectors of the transfer matrix $T^{(2)}(z)$ with the same eigenvalue $\theta^{(2)}(z)$.
The same holds for the linear combinations $|\psi_\pm\rangle = (1\pm(-1)^M)|\psi\rangle/2$, which are the projections of $|\psi\rangle$ onto the eigenspaces of $(-1)^M$ with eigenvalue $\pm 1$. $T^{(1)}(z)$ maps between the two vectors: $T^{(1)}(z)|\psi_\pm\rangle = \theta^{(1)}(z)|\psi_\mp\rangle$. When taking the homogeneous limit, this doublet construction implies the degeneracies observed in the non-zero part of the Hamiltonian spectrum. Finally, the discussion for the model with diagonal twist is completely analogous: 
it suffices to consider the $\mathbb Z_2$ operator $F$ instead (as it anticommutes with $T^{(1)}(z)$ in this case). This ends the proofs of \cref{eq:spectra.diag,eq:spectra.antidiag}.

For the anti-diagonal twist, we will show in \cref{sec:simpleeig} that $T^{(2)}(z)$ indeed has 
the eigenvalue \eqref{eqn:specialev} in the generic inhomogeneous case
and will construct the corresponding eigenvector. In the homogeneous limit, it becomes a zero-energy state of the spin-chain Hamiltonian with anti-diagonal twist.

%%%%%%%%%
\subsection{Quantum separation of variables basis}
\label{sec:SoV}
%%%%%%%%%

The canonical basis of the Hilbert space $(\mathbb C^3)^{\otimes N}$ is not convenient in order to construct eigenstates of the transfer matrices $T^{(1)}(z)$ and $T^{(2)}(z)$ for the anti-diagonal twist. 
Below, we present another basis in which the operator $\mathcal D(z)$ acts diagonally. This basis was previously used to diagonalise transfer matrices of fused vertex models with anti-periodic boundary conditions by the quantum separation of variables technique. This was done for $q = 1$  in \cite{niccoli:13} and for arbitrary $q$ in \cite{niccoli:15}. We follow closely their definitions and arguments.

The new basis states are denoted by $||\bm h\rrangle \equiv ||h_1,\dots,h_N\rrangle$ where the \emph{heights} $h_j$, $j=1,\dots,N$, are in $\{0,1,2\}$.
These states are defined by\footnote{Here and in the following, it is understood that products of the form $\prod_{h=0}^{\ell-1}f(h)$ evaluate to~$1$ for $\ell=0$.}
\begin{equation}
  ||\bm h\rrangle = \prod_{j=1}^N \prod_{h=0}^{h_j-1}\left(\frac{\mathcal B(q^{1-h}w_j)}{a(q^{1-h}w_j)}\right)|\wedge\rangle
  , \qquad |\wedge\rangle = |{\Uparrow \cdots \Uparrow}\rangle . 
  \label{eqn:defhr}
\end{equation}
The operator $\mathcal B(z)$ lowers the magnetisation by one, and thus we find:
\begin{equation}
   M||\bm h\rrangle = \sum_{j=1}^N(1- h_j)||\bm h\rrangle.
   \label{eq:Mh}
\end{equation} 
Expressions for the action of the operators $\mathcal B(z)$, $\mathcal C(z)$ and $\mathcal D(z)$ in the height basis are known. These are found by using the quadratic relations satisfied by $\mathcal B(z), \mathcal C(z)$ and $\mathcal D(z)$ which are standard \cite{korepin:93}. One obtains \cite{niccoli:13,niccoli:15}
\begin{subequations}
\label{eqn:bcd}
\begin{align}
  \mathcal D(z)||\bm h\rrangle &= \left(\prod_{j=1}^N [q^{-1+h_j}z/w_j]\right)||\bm h\rrangle,\\
  \mathcal B(z)||\bm h\rrangle & =\sum_{j=1}^N a(q^{1-h_j}w_j)\left(\prod_{k\neq j}^N\frac{[q^{-1+h_k}z/w_k]}{[q^{h_k-h_j}w_j/w_k]}\right)\tau_j^+||\bm h\rrangle,\\
  \mathcal C(z)||\bm h\rrangle & = -\sum_{j=1}^N d(q^{1-h_j}w_j)\left(\prod_{k\neq j}^N\frac{[q^{-1+h_k}z/w_k]}{[q^{h_k-h_j}w_j/w_k]}\right)\tau_j^-||\bm h\rrangle,
\end{align}
\end{subequations}
where $\tau^\pm_j||\bm h\rrangle = ||h_1,\dots,h_{j-1},h_j\pm 1,h_{j+1},\dots, h_N\rrangle$ 
if the new height at $j$, $h_j \pm 1$, takes values in $\{0,1,2\}$, and $\tau^\pm_j||\bm h\rrangle =0$ otherwise. The action of the operator $\mathcal A(z)$ can be obtained from the relations \eqref{eqn:bcd} and the so-called quantum determinant relation, but it won't be needed in what follows.

The dual height states are defined by
\begin{equation}
  \llangle \bm h|| = \langle \wedge | \prod_{j=1}^N \prod_{h=0}^{h_j-1}\left(\frac{\mathcal C(q^{1-h}w_j)}{d(q^{-h}w_j)}\right).
  \label{eqn:defhl}
\end{equation}
The left action of the operators $\mathcal B(z),\mathcal C(z)$, and $\mathcal D(z)$ on the dual states is given by formulas similar to those of the right actions, and can be derived along the lines of \cite{niccoli:13}:
\begin{subequations}
\label{eqn:bcdtransposed}
\begin{align}
  \llangle \bm h||\mathcal D(z) &=  \llangle \bm h||\left(\prod_{j=1}^N [q^{-1+h_j}z/w_j]\right),\\
  \llangle \bm h||\mathcal B(z) & =-\sum_{j=1}^N a(q^{2-h_j}w_j)\left(\prod_{k\neq j}^N\frac{[q^{-1+h_k}z/w_k]}{[q^{h_k-h_j}w_j/w_k]}\right)\llangle \bm h||\tau_j^+,\\
  \llangle \bm h||\mathcal C(z) & = \sum_{j=1}^N d(q^{-h_j}w_j)\left(\prod_{k\neq j}^N\frac{[q^{-1+h_k}z/w_k]}{[q^{h_k-h_j}w_j/w_k]}\right)\llangle \bm h||\tau_j^-,
\end{align}
\end{subequations}
where $\llangle \bm h||\tau^\pm_j= \llangle h_1,\dots,h_{j-1},h_j\mp 1,h_{j+1},\dots, h_N||$ if 
the new height at $j$, $h_j \mp 1$, takes values in $\{0,1,2\}$, and $\llangle \bm h||\tau^\pm_j=0$ otherwise.

The height states and their duals respectively form left- and right-eigenbases of the operator $\mathcal D(z)$. Its eigenvalues are all distinct for generic values of the parameters. This implies the completeness relation
\begin{equation}
  \sum_{\bm h} \frac{||\bm h\rrangle \llangle \bm h||}{\llangle \bm h|| \bm h\rrangle}= \bm{1}.
  \label{eqn:completeness}
\end{equation}
The normalisation factor can be obtained from the scalar product between a left- and right-vector:
\begin{equation}
  \llangle \bm h||\bm h'\rrangle = (-1)^{h_1+\dots+h_N}\prod_{j=1}^N \delta_{h_j,h_j'} \prod_{1\le j<k \le N} \frac{[w_j/w_k]}{[q^{h_k-h_j}w_j/w_k]}.
  \label{eqn:scalarprod}
\end{equation}

%%%%%%%%%
\subsection{Simple eigenvalues}
\label{sec:simpleeig}
%%%%%%%%%

\paragraph{Diagonal twist.} For the diagonal twist, the diagonalisation of $T^{(1)}(z)$ and $H$ can be approached using the algebraic Bethe ansatz \cite{korepin:93}. In the magnetisation sector $M = N-n$, the eigenvalues of $T^{(1)}(z)$ and their corresponding Bethe eigenstates are given by
\begin{subequations}\label{eq:BAsol}
\begin{equation}\label{eq:BAsoleig}
\theta^{(1)}(z|w_1, \dots, w_N) = \i \bigg(a(z) \prod_{j=1}^n \frac{[q z_j/z]}{[z_j/z]} - d(z) \prod_{j=1}^n \frac{[q z/z_j]}{[z/z_j]}\bigg),
\end{equation}
\begin{equation}
\langle \psi(z_1, \dots, z_n) | = \langle \wedge|\prod_{j=1}^n \mathcal{C}(z_j), \qquad |\psi(z_1, \dots, z_n)\rangle = \prod_{j=1}^n \mathcal{B}(z_j)|\wedge\rangle.
\end{equation}
\end{subequations}
These are written in terms of the Bethe roots $z_j$ which satisfy the Bethe ansatz equations:
\begin{equation}\label{eq:BAeq}
\prod_{i=1}^N \frac{[q z_j/w_i]}{[q^{-1} z_j/w_i]} = - \prod_{i \neq j}^n \frac{[q z_j/z_i]}{[q^{-1}z_j/z_i]}, \qquad j = 1, \dots, n.
\end{equation}
We note that if $z_1,\dots,z_n$ is a solution to the Bethe equations, then $\epsilon_1z_1,\dots,\epsilon_n z_n$ with arbitrary signs $\epsilon_1=\pm1,\dots,\epsilon_n=\pm 1$ is another solution. However, since $\mathcal B(-z)=(-1)^{N-1}\mathcal B(z)$ and $\mathcal C(-z)=(-1)^{N-1}\mathcal C(z)$, any choice of the signs produces, up to a factor, the same Bethe state. Hence, any two solutions to the Bethe equations which differ only by possible signs (and a permutation of the indices) can be identified.

As shown in \cite{hagendorf:15},
the transfer matrix $T^{(1)}(z)$ has the simple eigenvalue $\theta^{(1)}(z)=0$, 
which is obtained by this simple solution to the Bethe ansatz equations:
\begin{equation}\label{eq:BAsolution}
n=N, \qquad z_j = w_j, \quad j = 1, \dots, n.
\end{equation}
The corresponding left- and right-eigenvector are thus given by\footnote{For generic inhomogeneity parameters, the left eigenvector is not simply the transpose of the right eigenvector, as we shall see in \cref{sec:propev}. We nonetheless use the notation $\langle \psi_{\textrm{\tiny D}}|$ for simplicity. In this case, $\langle \psi_{\textrm{\tiny D}}|\psi_{\textrm{\tiny D}}\rangle$ is not a true norm. The same applies to the anti-diagonal twist.}
\begin{equation}  \langle \psi_{\textrm{\tiny D}}| =\langle \wedge|\prod_{j=1}^N \mathcal{C}(w_j), \qquad |\psi_{\textrm{\tiny D}}\rangle = \prod_{j=1}^N \mathcal{B}(w_j)|\wedge\rangle.
  \label{eqn:defpsiD}
\end{equation}
These are also eigenstates of $T^{(2)}$ with the eigenvalue $\theta^{(2)}$ given in \eqref{eqn:specialev}.
Here we present a new result about the non-degeneracy of this eigenvalue. It presupposes the completeness of the Bethe ansatz, which we expect holds (at least) for generic values of the $w_j$ 
and of $q$.
\begin{proposition}
  \label{prop:psidunique}
  For generic values of the $w_j$ and of $q$, if the Bethe ansatz is complete, then $\langle \psi_{\text{\rm \tiny D}}|$ and $|\psi_{\text{\rm\tiny D}}\rangle$ are respectively the unique left- and right-eigenstates of $T^{(1)}(z)$ with eigenvalue $\theta^{(1)}(z)=0$.
  \begin{proof}
    The assumption about completeness implies that every eigenstate of $T^{(1)}(z)$ is of the form \eqref{eq:BAsol}, with each $z_j \in \mathbb C^\times$. We must therefore show using \eqref{eq:BAsoleig} that the only solution to $\theta^{(1)}(z)=0$ is \eqref{eq:BAsolution} up to permutations of the roots.
    Let us suppose that for $\theta^{(1)}(z)=0$ for $n$ Bethe roots $z_1,\dots, z_n$. It is convenient to introduce the centred Laurent polynomial
    \begin{equation}
      Q(z)=\prod_{j=1}^n [z/z_j].
    \end{equation}
    From \eqref{eq:BAsoleig} with $\theta^{(1)}(z)=0$, it follows that
    \begin{equation}
      a(qz)Q(z) = d(q z) Q(q^2z)
      \label{eqn:TQThetaZero}
    \end{equation}
    for all $z\in \mathbb C^\times$. 
   Because $d(\pm q w_j)=0$, the choice $z=\pm w_j$ for $j=1,\dots,N$ yields
	\begin{equation}
      		a(\pm qw_j)Q(\pm w_j) = d(\pm qw_j) Q(\pm q^2w_j) = 0.
    	\end{equation}
The factor $a(\pm qw_j)$ is non-zero for each $j=1,\dots,N$ and for generic values of the parameters. We conclude that $Q(z)$ has the $2N$ distinct zeroes $z=\pm w_1,\dots,\pm w_N$. 
The degree width of $Q(z)$ is therefore greater than or equal to $2N$, implying that $n \ge N$. (The degree width of a Laurent polynomial is the difference in degree of the leading and trailing terms.) We may thus write
    \begin{equation}
      Q(z) = \left(\prod_{j=1}^N[z/w_j]\right)g(z),
      \label{eqn:almostthere}
    \end{equation}
    where $g(z)$ is a centred Laurent polynomial of degree width $2(N-n)$. The substitution of \eqref{eqn:almostthere} into \eqref{eqn:TQThetaZero} leads to $g(z)= g(q^2 z)$, for all $z \in \mathbb C^\times$. The only solution to this equation is a constant and thus $n=N$.
Comparing the leading and trailing term of $Q(z)$, we conclude that this constant must be a sign, $g(z) = \epsilon$ with $\epsilon^2=1$. Hence
    \begin{equation}
      Q(z) = \prod_{j=1}^N[z/z_j]=\epsilon\prod_{j=1}^N[z/w_j].
    \end{equation}
    It follows that up to a permutation of the roots, $z_j = \epsilon_j w_j$ where $\epsilon_j^2=1$ and thus $\epsilon= \prod_{j=1}^N\epsilon_j$. Since any two solutions to the Bethe equations which differ only by signs may be identified, we may set $\epsilon_j=1$ for all $j=1,\dots,N$ and recover thus \eqref{eq:BAsolution}.
  \end{proof}
  \end{proposition}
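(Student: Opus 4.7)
My plan is to reduce uniqueness to a statement about solutions of the Bethe equations that give zero eigenvalue, and then exploit the form of \eqref{eq:BAsoleig} as a Baxter-type TQ equation. By the completeness assumption, every left- and right-eigenstate of $T^{(1)}(z)$ is of the Bethe form \eqref{eq:BAsol} for some set of roots $z_1,\dots,z_n$ satisfying \eqref{eq:BAeq}, with the eigenvalue given by \eqref{eq:BAsoleig}. So it suffices to classify all (unordered) tuples of roots for which the eigenvalue vanishes identically in $z$, and show the only one is $\{w_1,\dots,w_N\}$ modulo the identification of solutions differing by signs.

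The key technical step is to encode the condition $\theta^{(1)}(z) \equiv 0$ as a functional equation on the polynomial $Q(z) = \prod_{j=1}^n [z/z_j]$. Clearing denominators in \eqref{eq:BAsoleig} produces something of the form $a(z)Q(qz)$-$d(z)Q(z/q)$-style, which after a shift $z \mapsto qz$ becomes
\begin{equation}
a(qz)\,Q(z) \,=\, d(qz)\,Q(q^2 z), \qquad z \in \mathbb{C}^\times.
\end{equation}
I would then evaluate this relation at the zeroes $z=\pm w_j$ of $d(qz)$: for generic $q$ and $w_j$, the factors $a(qw_j)$ and $a(-qw_j)$ are non-zero, forcing $Q(\pm w_j)=0$ for $j=1,\dots,N$. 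These $2N$ zeroes are distinct for generic $w_j$, so the degree-width of $Q$ is at least $2N$, hence $n \ge N$, and $Q(z) = \prod_{j=1}^N [z/w_j] \cdot g(z)$ for some centred Laurent polynomial $g$ of degree-width $2(n-N)$.

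Inserting this factorisation back into the functional equation, the $a/d$ factors (which are $\prod_j[qz/w_j]$ and $\prod_j[z/(qw_j)]$ up to signs) combine with $\prod_j[z/w_j]$ and $\prod_j[q^2 z/w_j]$ to cancel, leaving $g(z) = g(q^2 z)$. For generic $q$, the orbit of any $z$ under $z\mapsto q^2 z$ is infinite, so a centred Laurent polynomial invariant under this shift must be a constant. Thus $n=N$, and comparing leading/trailing coefficients of the identity $Q(z)=\mathrm{const}\cdot\prod_j[z/w_j]$ fixes the constant to a sign $\epsilon=\pm 1$. Since the Bethe states $\prod_j \mathcal{B}(z_j)|\wedge\rangle$ and $\langle \wedge|\prod_j \mathcal{C}(z_j)$ are invariant (up to an overall scalar) under sign flips $z_j \to -z_j$ thanks to $\mathcal{B}(-z)=(-1)^{N-1}\mathcal{B}(z)$ and similarly for $\mathcal{C}$, this sign ambiguity is harmless and the Bethe state is $|\psi_{\text{\tiny D}}\rangle$ (resp.\ $\langle\psi_{\text{\tiny D}}|$) up to normalisation.

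The main obstacle I anticipate is genericity: I need $a(\pm qw_j)\neq 0$ and $\{\pm w_j\}$ to be $2N$ distinct points, and the eventual relation $g(z)=g(q^2z)$ forces constancy only when $q^2$ is not a root of unity. All three conditions hold for generic $q$ and $w_j$, which is exactly the hypothesis of the proposition; away from this locus the argument would need refinement. A secondary subtlety is that the Bethe ansatz is being used to produce both left- and right-eigenvectors, but the same functional-equation argument on $Q(z)$ applies to either side simultaneously, so no extra work is needed to treat the two cases.
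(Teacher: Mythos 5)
Your proposal is correct and follows essentially the same route as the paper's proof: reduce to classifying Bethe roots with vanishing eigenvalue, recast $\theta^{(1)}(z)\equiv 0$ as the functional equation $a(qz)Q(z)=d(qz)Q(q^2z)$ for $Q(z)=\prod_j[z/z_j]$, extract the $2N$ zeroes $\pm w_j$, factor out $\prod_j[z/w_j]$, and conclude $g(z)=g(q^2z)$ forces $g$ constant (a sign), with the sign ambiguity absorbed by the identification of Bethe solutions differing by $z_j\to-z_j$. Your explicit remark that constancy of $g$ requires $q^2$ not a root of unity is a genericity point the paper leaves implicit, and your degree-width $2(n-N)$ corrects a sign slip in the paper's statement.
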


\paragraph{Anti-diagonal twist.} In the rest of this section, we show that for the anti-diagonal twist,
 $T^{(1)}(z)$ also has the simple eigenvalue zero, for any values of the inhomogeneity parameters $w_j$.
We use the quantum separation of variables technique to construct explicitly the corresponding \emph{unique} 
 left- and right-eigenvector $\langle \psi_{\text{\rm \tiny{AD}}}|$ and $|\psi_{\text{\rm \tiny{AD}}}\rangle$.

\begin{proposition}\label{prop:psiAD.def}
  For generic values of the inhomogeneity parameters, the transfer matrix $T^{(1)}(z)$ with the anti-diagonal twist possesses the non-degenerate eigenvalue $\theta^{(1)}(z)=0$. The projections of the corresponding left- and right-eigenvector onto the separation of variables basis are given by
  \begin{equation}
    \langle \psi_{\text{\rm \tiny{AD}}}||\bm h\rrangle = \prod_{i,j=1}^N\frac{[q^{1-h_i}w_i/w_j]}{[q w_i/w_j]}, \quad \llangle \bm h||\psi_{\text{\rm \tiny AD}}\rangle =\prod_{i,j=1}^N\frac{[q^{2(h_i-1)}w_i/w_j]}{[q^{-2}w_i/w_j]}.
    \label{eqn:projections}
  \end{equation}
\end{proposition}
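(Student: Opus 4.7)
My plan is to work directly in the separation of variables basis. I take the formulas \eqref{eqn:projections} together with the completeness relation \eqref{eqn:completeness} as the definitions of $|\psi_{\text{\rm \tiny AD}}\rangle$ and $\langle\psi_{\text{\rm \tiny AD}}|$; both vectors are non-zero because the height configuration $\bm h = (0,\dots,0)$ contributes the factor $1$ in each product. What remains is to verify $T^{(1)}(z)|\psi_{\text{\rm \tiny AD}}\rangle = 0$ and $\langle\psi_{\text{\rm \tiny AD}}|T^{(1)}(z) = 0$, and to show that the corresponding eigenspace is one-dimensional.

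For the right eigenvector, I apply $\llangle \bm h||$ to the equation $T^{(1)}(z)|\psi_{\text{\rm \tiny AD}}\rangle = 0$ and use \eqref{eqn:bcdtransposed}. Writing $\Psi(\bm h) = \llangle \bm h||\psi_{\text{\rm \tiny AD}}\rangle$, the identity takes the form
\begin{equation*}
\sum_{j=1}^N P_j(z;\bm h)\, C_j(\bm h) = 0, \qquad P_j(z;\bm h) = \prod_{k\neq j}^N [q^{-1+h_k}z/w_k],
\end{equation*}
where $\bm e_j$ denotes the $j$th unit vector and $C_j(\bm h)$ is a $z$-independent linear combination of $\Psi(\bm h - \bm e_j)$ and $\Psi(\bm h + \bm e_j)$ with coefficients built from $a$ and $d$. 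The key structural point is that the $P_j(z;\bm h)$ form a Lagrange-type basis: at $z = q^{1-h_\ell}w_\ell$, the $k = \ell$ factor in $P_j$, for every $j \neq \ell$, equals $[1] = 0$, leaving only the $j = \ell$ contribution. For generic parameters, $P_\ell(q^{1-h_\ell}w_\ell) \neq 0$ and the $P_j$ are linearly independent, so the Laurent polynomial identity in $z$ is equivalent to the $N$ scalar conditions $C_\ell(\bm h) = 0$ for $\ell = 1, \dots, N$.

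Each condition $C_\ell(\bm h) = 0$ splits into three subcases according to the value of $h_\ell \in \{0,1,2\}$, yielding $d(w_\ell)\Psi(\bm h + \bm e_\ell) = 0$ for $h_\ell = 0$, the two-term recursion $a(qw_\ell)\Psi(\bm h - \bm e_\ell) = d(q^{-1}w_\ell)\Psi(\bm h + \bm e_\ell)$ for $h_\ell = 1$, and $a(w_\ell)\Psi(\bm h - \bm e_\ell) = 0$ for $h_\ell = 2$. Since $a(w_\ell)$ and $d(w_\ell)$ are non-zero for generic parameters, the first and third conditions force $\Psi$ to vanish on any configuration with a height equal to $1$; this is immediate from \eqref{eqn:projections}, since the $(i,j) = (\ell,\ell)$ numerator is then $[q^{2(h_\ell-1)}] = [1] = 0$. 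The middle recursion is verified site-wise: the proposed formula factorises over $i$, and changing $h_\ell$ from $0$ to $2$ while holding the other heights fixed multiplies $\Psi$ by $\prod_j [q^2 w_\ell/w_j]/\prod_j [q^{-2}w_\ell/w_j] = a(qw_\ell)/d(q^{-1}w_\ell)$, which is precisely the required ratio.

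Non-degeneracy follows from the same analysis: any $\Psi$ satisfying all the conditions $C_\ell(\bm h) = 0$ must vanish on configurations where some height is $1$, and is then determined on $\{0,2\}^N$ by iterated application of the two-term recursion starting from $\bm h = (0,\dots,0)$, so only an overall multiplicative constant remains free. The treatment of $\langle\psi_{\text{\rm \tiny AD}}|$ is entirely parallel: applying $\langle\psi_{\text{\rm \tiny AD}}|T^{(1)}(z) = 0$ to $||\bm h\rrangle$ and using \eqref{eqn:bcd} in place of \eqref{eqn:bcdtransposed} yields the analogous single-site recursion with ratio $d(w_\ell)/a(w_\ell)$, exactly what \eqref{eqn:projections} gives for $\langle\psi_{\text{\rm \tiny AD}}||\bm h\rrangle$. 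The main obstacle is bookkeeping: carefully tracking the three height subcases, the shifted-height conventions in \eqref{eqn:bcd} and \eqref{eqn:bcdtransposed}, and the validity ranges of $\tau^\pm_j$; once the Lagrange-interpolation reduction to $N$ scalar equations is in place, the rest amounts to factorised product identities.
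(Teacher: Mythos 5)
Your proof is correct and follows essentially the same route as the paper's: project the eigenvalue equation onto the separation-of-variables basis, specialise $z$ to the points $q^{1-h_\ell}w_\ell$ to isolate the single-site two-term recursion, observe that the recursion forces vanishing on configurations containing a height $1$ and determines everything else from $\bm h = \bm 0$, and conclude non-degeneracy from the uniqueness of the solution. The one genuine refinement is your justification of the reduction: the paper invokes the $z$-independence of the eigenbasis (integrability) to argue it may freely specialise $z$, whereas you note that the coefficients $P_j(z;\bm h)$ vanish at $z=q^{1-h_\ell}w_\ell$ for all $j\neq\ell$, so the $N$ specialised conditions are actually \emph{equivalent} to the identity holding for all $z$ --- this makes the verification direction (formula $\Rightarrow$ eigenvector for every $z$) explicit rather than implicit, at no extra cost.
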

\begin{proof} Let us consider the left-eigenvalue problem. We need to show that there is (up to normalisation) a single solution to the equation
\begin{equation}\label{eq:AboveEquation}
  \langle \psi_{\text{\rm \tiny{AD}}}|T^{(1)}(z) = \i \langle \psi_{\text{\rm \tiny{AD}}}|\left(\mathcal B(z) + \mathcal C(z)\right)=0.
\end{equation}
We take the scalar product of this equation with an arbitrary basis state $||\bm h\rrangle$, which gives $\langle \psi_{\text{\rm \tiny{AD}}}|\mathcal B(z)||\bm h\rrangle+\langle \psi_{\text{\rm \tiny{AD}}}|\mathcal C(z)||\bm h\rrangle=0$, and apply the equations \eqref{eqn:bcd}. Because of the integrability of the model, see \eqref{eq:CommutationOfTs}, there exists an eigenbasis of $T^{(1)}(z)$, of which $\langle \psi_{\text{\rm \tiny{AD}}}|$ is an element, which is independent of $z$. We are therefore free to specialise $z$ in \eqref{eq:AboveEquation} to any value.
We notice that there is a great simplification for $z=q^{1-h_i}w_i$: 
We have
\begin{subequations}
\begin{align} 
  \mathcal B(z=q^{1-h_i}w_i)||\dots,h_i,\dots\rrangle & = a(q^{1-h_i}w_i)||\dots,h_i+1,\dots\rrangle,\\
  \mathcal C(z=q^{1-h_i}w_i)||\dots,h_i,\dots\rrangle & = -d(q^{1-h_i}w_i)||\dots,h_i-1,\dots\rrangle.
\end{align}
\end{subequations}
Hence we find for each $i=1,\dots,N$ the equations
\begin{equation}
  a(q^{1-h_i}w_i) \langle \psi_{\text{\tiny{AD}}}||\dots,h_i+1,\dots\rrangle = d(q^{1-h_i}w_i) \langle \psi_{\text{\tiny{AD}}}||\dots,h_i-1,\dots\rrangle \label{eq:rechh2}
\end{equation}
where it is understood that the expressions on either side of this equality are zero unless the corresponding height at $i$, $h_i + 1$ or $h_i-1$, is in $\{0,1,2\}$.
 This equation is trivial to solve for each~$i$: We find that $\langle \psi_{\text{\rm \tiny{AD}}}||\dots,
h_i=1
,\dots\rrangle =0$
 and that $\langle \psi||\dots,h_i=2,\dots\rrangle$ is easily expressed in terms of $\langle \psi||\dots,h_i=0,\dots\rrangle$. 
Hence, given $||\bm h\rrangle$, \eqref{eq:rechh2} expresses $\langle \psi_{\text{\rm \tiny{AD}}}||\bm h\rrangle$
as $\langle \psi_{\text{\rm \tiny{AD}}}||\bm {h'}\rrangle$ times a prefactor, where $h'_j \le h_j$ for $j= 1, \dots, N$ and $h'_j < h_j$ for one $j$. 
 This process can be repeated until all heights are zero, in which case we set $\langle \psi_{\text{\rm \tiny{AD}}}||\bm 0\rrangle =1$ by convention. 
 
 Up to this conventional normalisation, the solution is unique for generic values of the $w_j$. There are indeed specifications for the $w_j$
 for which the previous construction fails because some height states are either linearly dependent or simply undefined.
 
 For the right eigenvector, the reasoning is completely analogous with the exception that one needs to use \eqref{eqn:bcdtransposed}
 instead of \eqref{eqn:bcd}.
\end{proof}

The eigenvectors $\langle\psi_{\text{\tiny{AD}}}|$ and $|\psi_{\text{\tiny{AD}}}\rangle$ can respectively be reconstructed from the projections on the basis states $\llangle \bm h||$ and $||\bm h\rrangle$ by using the completeness relation \eqref{eqn:completeness}:
\begin{equation}\label{eq:explicitEV}
\langle\psi_{\text{\tiny{AD}}}| = \sum_{\bm h}  \frac{\langle\psi_{\text{\tiny{AD}}}||\bm h\rrangle }{\llangle \bm h|| \bm h\rrangle} \llangle \bm h||,
\qquad |\psi_{\text{\tiny{AD}}}\rangle = \sum_{\bm h} ||\bm h\rrangle\frac{ \llangle \bm h||\psi_{\text{\tiny{AD}}}\rangle}{\llangle \bm h|| \bm h\rrangle}.
\end{equation}
 It follows from \eqref{eqn:fusion} that they are eigenvectors of the transfer matrix $T^{(2)}(z)$ of the nineteen-vertex model. Just as for the diagonal twist \cite{hagendorf:15}, the corresponding eigenvalue is given in \eqref{eqn:specialev}

%%%%%%%%%
\subsection{Properties of the eigenvectors}
\label{sec:propev}
%%%%%%%%%

In this subsection, we collect a number of properties of the vectors $|\psi_{\text{\rm \tiny D}}\rangle$ and $|\psi_{\text{\rm \tiny AD}}\rangle$ which will be useful in the forthcoming sections. If necessary, we write out the explicit dependence of the inhomogeneity parameters, for instance $|\psi_{\text{\rm \tiny D}}\rangle=|\psi_{\text{\rm \tiny D}}(w_1,\dots,w_N)\rangle$ and $\mathcal B(z)=\mathcal B(z|w_1,\dots,w_N)$. 

\subsubsection{Operations involving the inhomogeneity parameters}

We start with two propositions about the behaviour of the special eigenvectors under reversal of the inhomogeneity parameters $w_1,\dots, w_N$ and/or the parameter $q$.

\begin{proposition} \label{prop:transposition}
The special left- and right-eigenvectors are related by
\begin{align}
 \langle\psi_{\text{\rm \tiny D}}(w_1,\dots,w_N)|= |\psi_{\text{\rm \tiny D}}(w_1^{-1},\dots,w_N^{-1})\rangle^t,\quad 
  \langle\psi_{\text{\rm \tiny AD}}(w_1,\dots,w_N)|= |\psi_{\text{\rm \tiny AD}}(w_1^{-1},\dots,w_N^{-1})\rangle^t.
\end{align}
\end{proposition}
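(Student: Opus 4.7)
The plan is to derive, from a crossing-like identity for the elementary $R$-matrix, an operator-level relation of the form $\mathcal B(z|w)^{t_V} \propto \mathcal C(z^{-1}|w^{-1})$ (and likewise for $\mathcal C$), and then to combine this either with the explicit Bethe-state formula (diagonal twist) or with the uniqueness result of \cref{prop:psiAD.def} (anti-diagonal twist). The key identity, checked directly from the explicit form of $R^{(1,2)}(u)$, reads
\begin{equation*}
(R^{(1,2)}(u))^{t_a} = K_a\, R^{(1,2)}(1/(q^2 u))\, K_a, \qquad K = \begin{pmatrix} 0 & 1 \\ -1 & 0 \end{pmatrix},
\end{equation*}
with $K$ acting on the two-dimensional auxiliary space. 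It reflects the sign-flip relations $[1/(q^2u)]=-[q^2u]$, $[1/(qu)]=-[qu]$, $[1/u]=-[u]$, which swap $\mathcal A(u) \leftrightarrow -\mathcal D(u)$ under $u\mapsto 1/(q^2u)$, balanced by conjugation with $K$. Since $R^{(1,2)}(u)$ is symmetric as a matrix, $(R^{(1,2)}(u))^{t_a} = (R^{(1,2)}(u))^{t_V}$, and telescoping through the monodromy matrix (using $K_a^2 = -I_a$ together with the reparametrisation $q^{-1}z/w_j \mapsto q^{-1}z^{-1}/w_j^{-1}$) upgrades this to
\begin{equation*}
\mathcal T_a^{t_V}(z|w_1,\dots,w_N) = (-1)^{N-1}\, K_a\, \mathcal T_a(z^{-1}|w_1^{-1},\dots,w_N^{-1})\, K_a .
\end{equation*}
Extracting the off-diagonal blocks in the auxiliary space then produces the block identities $\mathcal B(z|w)^{t_V} = (-1)^{N-1}\mathcal C(z^{-1}|w^{-1})$ and $\mathcal C(z|w)^{t_V} = (-1)^{N-1}\mathcal B(z^{-1}|w^{-1})$.

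The diagonal twist case reduces to a short computation: transposing $|\psi_{\text{\rm \tiny D}}(w_1^{-1},\dots,w_N^{-1})\rangle = \prod_j \mathcal B(w_j^{-1}|w^{-1})|{\wedge}\rangle$ and substituting each $\mathcal B(w_j^{-1}|w^{-1})^{t_V}$ by $(-1)^{N-1}\mathcal C(w_j|w)$ produces a total sign $(-1)^{N(N-1)}=+1$. Commutativity of the $\mathcal C$'s makes the reversed ordering immaterial, and one recovers $\langle\wedge|\prod_j \mathcal C(w_j|w) = \langle\psi_{\text{\rm \tiny D}}(w_1,\dots,w_N)|$ on the nose.

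For the anti-diagonal twist, no analogous product formula is available, and the plan is to appeal to uniqueness instead. Tracing the operator identity against $\Omega^{(1)}_a$ and using $K_a \Omega^{(1)}_a K_a = \Omega^{(1)}_a$ (a direct check for both twist matrices in \eqref{eqn:twistmatrices}) yields $T^{(1)}(z|w)^{t_V} = (-1)^{N-1} T^{(1)}(z^{-1}|w^{-1})$. Transposing the defining equation $\langle\psi_{\text{\rm \tiny AD}}(w)|T^{(1)}(z|w) = 0$ then exhibits $\langle\psi_{\text{\rm \tiny AD}}(w)|^t$ as a right zero-eigenvector of $T^{(1)}(z^{-1}|w^{-1})$, which by \cref{prop:psiAD.def} must be proportional to $|\psi_{\text{\rm \tiny AD}}(w^{-1})\rangle$. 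The constant is pinned to $1$ by comparing the $|{\Uparrow\cdots\Uparrow}\rangle$-component on both sides: since $|{\Uparrow\cdots\Uparrow}\rangle = ||\bm 0\rrangle$ and $\langle{\Uparrow\cdots\Uparrow}| = \llangle\bm 0||$ in the separation-of-variables basis, both components collapse to the normalisations $\langle\psi_{\text{\rm \tiny AD}}||\bm 0\rrangle = \llangle\bm 0||\psi_{\text{\rm \tiny AD}}\rangle = 1$ manifest in \eqref{eqn:projections}. The main technical obstacle is the verification and telescoping of the crossing identity for $R^{(1,2)}$; everything downstream is routine bookkeeping.
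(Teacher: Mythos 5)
Your proposal is correct and follows essentially the same route as the paper: your identity $(R^{(1,2)}(u))^{t_a}=K_a R^{(1,2)}(1/(q^2u))K_a$ with $K=\i\sigma^2$ is, given the symmetry of $R^{(1,2)}$ as a matrix, literally the crossing relation $R^{(1,2)}(z)^{t_2}=-(\sigma^2\otimes\bm 1)R^{(1,2)}(q^{-2}z^{-1})(\sigma^2\otimes\bm 1)$ used in the paper, and the downstream steps (telescoping to $T^{(1)}(z|w)^{t}=(-1)^{N-1}T^{(1)}(z^{-1}|w^{-1})$, uniqueness of the null vector for the anti-diagonal twist, fixing $\lambda=1$ via the $|{\Uparrow\cdots\Uparrow}\rangle$ component) coincide with the paper's argument. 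The only difference is that you carry out the diagonal-twist computation explicitly from the Bethe product formula, where the paper defers to its earlier reference.
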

\begin{proof}
For the diagonal twist, the proof was given in \cite{hagendorf:15}. It relies on the so-called crossing symmetry of $R^{(1,2)}(z)$,
\begin{equation}
  R^{(1,2)}(z)^{t_2}= - 
  (\sigma^2\otimes \bm 1)R^{(1,2)}(q^{-2}z^{-1})
(\sigma^2\otimes \bm 1),\quad \sigma^2 =\left(
  \begin{array}{cc}
    0 & - \i\\
    \i & 0
  \end{array}
  \right),
\end{equation}
where $t_2$ refers to transposition with respect to the factor $\mathbb C^3$ in $\mathbb C^2 \otimes \mathbb C^3$. 
From this property, we deduce that for both twists the transfer matrix and its transpose are related according to
\begin{equation}
  T^{(1)}(z|w_1,\dots,w_N) = (-1)^{N-1} T^{(1)}(z^{-1}|w_1^{-1},\dots,w_N^{-1})^t. 
\end{equation}
From now on, focus is on the anti-diagonal twist. We write out $|\psi_{\text{\tiny{AD}}}\rangle=|\psi_{\text{\tiny{AD}}}(w_1,\dots,w_N)\rangle$ and introduce the co-vector $\langle \varphi|= |\psi_{\text{\tiny{AD}}}(w_1^{-1},\dots,w_N^{-1})\rangle^t$. Applying it from the left to the equation leads to $\langle \varphi|T^{(1)}(z|w_1,\dots,w_N)=0$. Hence, $\langle \varphi|$ is a left null vector. The non-degeneracy of the eigenvalue zero implies that
\begin{equation}
  \langle \varphi|  = \lambda \langle \psi_{\text{\tiny AD}}(w_1,\dots,w_N)|
\end{equation}
for some $\lambda$. Taking the scalar product with $|\wedge\rangle$ gives
\begin{equation}
  \langle \varphi|\wedge\rangle = \langle \wedge|\psi_{\text{\tiny AD}}(w_1^{-1},\dots,w_N^{-1})\rangle = \lambda \langle \psi_{\text{\rm \tiny AD}}(w_1,\dots, w_N)|\wedge\rangle.
  \end{equation}
From \eqref{eqn:projections}, we infer that $\lambda=1$.
\end{proof}

\begin{proposition}\label{prop:qtoqinv}
The special eigenvectors have the property
\begin{subequations}
\begin{align}
  & |\psi_{\text{\rm \tiny D}}(w_1,\dots,w_N)\rangle\big|_{q\to q^{-1}}=|\psi_{\text{\rm \tiny D}}(w_1^{-1},\dots,w_N^{-1})\rangle,\\ 
  & |\psi_{\text{\rm \tiny AD}}(w_1,\dots,w_N)\rangle\big|_{q\to q^{-1}}=|\psi_{\text{\rm \tiny AD}}(w_1^{-1},\dots,w_N^{-1})\rangle.
\end{align}
\end{subequations}
\end{proposition}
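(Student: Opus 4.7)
The plan is to establish a transformation rule for the auxiliary $R$-matrix $R^{(1,2)}(z)$ under $q\to q^{-1}$, propagate it to the monodromy matrix, and then conclude for each twist separately: directly from the construction $|\psi_{\text{\rm \tiny D}}\rangle = \prod_{j=1}^N \mathcal B(w_j)|\wedge\rangle$ for the diagonal case, and through the uniqueness of the zero eigenvector (\cref{prop:psiAD.def}) followed by a normalisation check for the anti-diagonal case.

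The starting point is the identity $R^{(1,1)}(z)|_{q\to q^{-1}} = -R^{(1,1)}(z^{-1})$, which is immediate from $[q^k z]|_{q\to q^{-1}}=-[q^k z^{-1}]$ and $[q]|_{q\to q^{-1}} = -[q]$ applied to \eqref{eqn:6vRmatrix}. Propagating this through the fusion relation \eqref{eq:R12123} and the projection $Q$ I expect to obtain
\begin{equation*}
R^{(1,2)}(z)|_{q\to q^{-1}} = -\sigma^3_a R^{(1,2)}(z^{-1}) \sigma^3_a,
\end{equation*}
where $\sigma^3$ acts on the auxiliary $\mathbb C^2$. Applying this factor by factor to the monodromy matrix $\mathcal T_a(z|w_1,\ldots,w_N)$ and noting that consecutive $\sigma^3_a$'s cancel yields
\begin{equation*}
\mathcal T_a(z|w_1,\ldots,w_N)|_{q\to q^{-1}} = (-1)^N\sigma^3_a\,\mathcal T_a(z^{-1}|w_1^{-1},\ldots,w_N^{-1})\,\sigma^3_a.
\end{equation*}

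For the diagonal twist, I would extract $\mathcal B(z) = \langle\uparrow|_a \mathcal T_a|\downarrow\rangle_a$ using $\langle\uparrow|\sigma^3 = \langle\uparrow|$ and $\sigma^3|\downarrow\rangle = -|\downarrow\rangle$, which gives
\begin{equation*}
\mathcal B(z|w_1,\ldots,w_N)|_{q\to q^{-1}} = (-1)^{N+1}\,\mathcal B(z^{-1}|w_1^{-1},\ldots,w_N^{-1}).
\end{equation*}
Since $|\wedge\rangle$ is $q$-independent, applying this at $z=w_j$ to each of the $N$ factors in $|\psi_{\text{\rm \tiny D}}\rangle = \prod_{j=1}^N \mathcal B(w_j)|\wedge\rangle$ collects a total sign $(-1)^{N(N+1)}=1$, yielding $|\psi_{\text{\rm \tiny D}}(w_1,\ldots,w_N)\rangle|_{q\to q^{-1}} = |\psi_{\text{\rm \tiny D}}(w_1^{-1},\ldots,w_N^{-1})\rangle$ without the need for any uniqueness argument.

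For the anti-diagonal twist, I would take the trace with $\Omega^{(1)}_a = \i\sigma^1$: cyclicity of the trace together with $\sigma^3\Omega^{(1)}\sigma^3 = -\Omega^{(1)}$ yields $T^{(1)}(z|w_1,\ldots,w_N)|_{q\to q^{-1}} = (-1)^{N+1}T^{(1)}(z^{-1}|w_1^{-1},\ldots,w_N^{-1})$. Applying $q\to q^{-1}$ to the eigenstate equation $T^{(1)}(z)|\psi_{\text{\rm \tiny AD}}(w_1,\ldots,w_N)\rangle = 0$ then shows that $|\psi_{\text{\rm \tiny AD}}(w_1,\ldots,w_N)\rangle|_{q\to q^{-1}}$ is annihilated by $T^{(1)}(z'|w_1^{-1},\ldots,w_N^{-1})$ for every $z'$. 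The uniqueness statement of \cref{prop:psiAD.def} then forces $|\psi_{\text{\rm \tiny AD}}(w_1,\ldots,w_N)\rangle|_{q\to q^{-1}} = \lambda|\psi_{\text{\rm \tiny AD}}(w_1^{-1},\ldots,w_N^{-1})\rangle$ for some scalar $\lambda$, and projecting against $\llangle\bm 0|| = \langle\wedge|$ (which is independent of both $q$ and the $w_j$) together with $\llangle\bm 0||\psi_{\text{\rm \tiny AD}}\rangle = 1$ read off from \eqref{eqn:projections} fixes $\lambda=1$. The main obstacle is the clean derivation of the $R^{(1,2)}$ identity in the second paragraph: the off-diagonal entries $\sqrt{[q][q^2]}$ carry a branch ambiguity, since $[q][q^2]$ is invariant under $q \to q^{-1}$ while the natural fusion representative $(q-q^{-1})\sqrt{q+q^{-1}}$ flips sign, so one must commit to the convention in which the symbol $\sqrt{[q][q^2]}$ is left invariant for the identity stated above (with its $\sigma^3_a$ conjugation) to hold as written and propagate correctly to $\mathcal B(z)$.
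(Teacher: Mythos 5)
Your proposal is correct and follows essentially the same route as the paper: the key identity $R^{(1,2)}(z)\big|_{q\to q^{-1}} = -(\sigma^3\otimes \bm 1)R^{(1,2)}(z^{-1})(\sigma^3\otimes \bm 1)$, its consequence for $\mathcal B$ (and $T^{(1)}$), direct application to the Bethe-state product for the diagonal twist, and the one-dimensionality of the null space plus the $\langle\wedge|$ normalisation for the anti-diagonal twist. Your derivation of the $R^{(1,2)}$ identity from the elementary $R^{(1,1)}$ one via fusion, and your remark on fixing the branch of $\sqrt{[q][q^2]}$, are useful refinements of a step the paper simply asserts.
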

\begin{proof}
  The proof relies on the simple property
  \begin{equation}
    \left.R^{(1,2)}(z)\right|_{q\to q^{-1}} = -(\sigma^3\otimes \bm 1)R^{(1,2)}(z^{-1})(\sigma^3\otimes \bm 1),
  \end{equation}
  which implies in particular that
  \begin{subequations}
  \begin{align}
    & \mathcal B(z|w_1,\dots,w_N)\big|_{q\to q^{-1}} = (-1)^{N-1}\mathcal B(z^{-1}|w_1^{-1},\dots,w_N^{-1}),\\
    & \mathcal C(z|w_1,\dots,w_N)\big|_{q\to q^{-1}} = (-1)^{N-1}\mathcal C(z^{-1}|w_1^{-1},\dots,w_N^{-1}).
  \end{align}
  \end{subequations}
  These two relations can directly be applied to \eqref{eqn:defpsiD} which proves the proposition for the diagonal twist. For the anti-diagonal twist, we take the sum of these relations and obtain thus
  \begin{equation}
    \left.T^{(1)}(z|w_1,\dots,w_N)\right|_{q\to q^{-1}} = (-1)^{N-1} T^{(1)}(z^{-1}|w_1^{-1},\dots,w_N^{-1}).
  \end{equation}
  We apply this equation to $|\psi_{\text{\rm \tiny AD}}(w_1,\dots,w_N)\rangle\big|_{q\to q^{-1}}$ and conclude that it is a null vector of $T^{(1)}(z^{-1}|w_1^{-1},\dots,w_N^{-1})$. As the null space is one-dimensional, we find $\left.|\psi_{\text{\rm \tiny AD}}(w_1,\dots,w_N)\rangle\right|_{q\to q^{-1}}=\lambda |\psi_{\text{\rm \tiny AD}}(w_1^{-1},\dots,w_N^{-1})\rangle$ for some $\lambda$. Upon taking the scalar product with $\langle\wedge|$,
  we conclude that $\lambda=1$.
\end{proof}

The next proposition establishes an exchange relation 
for special eigenvectors related by a permutation of their inhomogeneity parameters $w_1,\dots,w_N$.
In order to formulate it, we introduce the matrix
\begin{equation}
  \check R(z) = P R^{(2,2)}(z),
  \label{eqn:Rcheck}
\end{equation}
where $P$ is the permutation operator on $\mathbb C^3\otimes \mathbb C^3$: $P(| v\rangle \otimes | w \rangle) = | w \rangle \otimes | v \rangle$.

\begin{proposition}
\label{prop:exchange}
For $j=1,\dots,N-1$,
we have the relation
\begin{equation}
  \check R_{j,j+1}\left(\frac{w_j}{w_{j+1}}\right)|\psi(\dots,w_j,w_{j+1},\dots)\rangle= \left[\frac{q w_j}{w_{j+1}}\right]\left[\frac{q^2 w_j}{w_{j+1}}\right]|\psi(\dots,w_{j+1},w_j,\dots)\rangle,
\end{equation}
for both $|\psi\rangle = |\psi_{\text{\rm \tiny D}}\rangle$ and $|\psi\rangle = |\psi_{\text{\rm \tiny AD}}\rangle$.
\end{proposition}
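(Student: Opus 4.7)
My approach is to derive an intertwining relation for the transfer matrices from the Yang-Baxter equation \eqref{eq:YB} and then invoke the uniqueness of the null eigenvectors stated in \cref{prop:psidunique,prop:psiAD.def}. The Yang-Baxter equation applied to the triple $V_a \otimes V_j \otimes V_{j+1}$, inserted between adjacent $R^{(1,2)}$ factors of the monodromy matrix and followed by left-multiplication by the permutation $P_{j,j+1}$, yields (after tracing over the auxiliary space, which is disjoint from the quantum spaces carrying $\check R_{j,j+1}$, so that the twist matrices commute through)
\begin{equation*}
\check R_{j,j+1}(w_j/w_{j+1})\, T^{(i)}(z|\ldots,w_j,w_{j+1},\ldots) = T^{(i)}(z|\ldots,w_{j+1},w_j,\ldots)\, \check R_{j,j+1}(w_j/w_{j+1}),
\end{equation*}
for $i=1,2$ and both twists.

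For the anti-diagonal twist, the intertwiner implies that $\check R_{j,j+1}(w_j/w_{j+1}) |\psi_{\text{\rm \tiny AD}}(\ldots,w_j,w_{j+1},\ldots)\rangle$ lies in the kernel of $T^{(1)}(z|\ldots,w_{j+1},w_j,\ldots)$. By \cref{prop:psiAD.def} this kernel is one-dimensional and spanned by $|\psi_{\text{\rm \tiny AD}}(\ldots,w_{j+1},w_j,\ldots)\rangle$, so the two vectors are proportional. To determine the prefactor, I pair with $\langle\wedge| = \llangle \bm 0||$: since $|{\Uparrow\Uparrow}\rangle$ is an eigenvector of $R^{(2,2)}(z)$ in the top-magnetisation sector with eigenvalue $p_1(z) = [qz][q^2z]$, I obtain $\langle\wedge|\check R_{j,j+1}(w_j/w_{j+1}) = [q w_j/w_{j+1}][q^2 w_j/w_{j+1}]\,\langle\wedge|$. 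Evaluating \eqref{eqn:projections} at $\bm h = \bm 0$ shows that $\langle\wedge|\psi_{\text{\rm \tiny AD}}\rangle = 1$ independently of the ordering of the inhomogeneities, so the proportionality factor is $[qw_j/w_{j+1}][q^2 w_j/w_{j+1}]$, as required.

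For the diagonal twist, I would rather avoid relying on \cref{prop:psidunique} with its Bethe-ansatz completeness assumption, and instead exploit the explicit Bethe form $|\psi_{\text{\rm \tiny D}}(w_1,\ldots,w_N)\rangle = \prod_k \mathcal B(w_k)|\wedge\rangle$. The same YB identity restricted to the $\mathcal B$ matrix element of the monodromy yields the RLL-type relation
\begin{equation*}
\check R_{j,j+1}(w_j/w_{j+1})\,\mathcal B(z|\ldots,w_j,w_{j+1},\ldots) = \mathcal B(z|\ldots,w_{j+1},w_j,\ldots)\,\check R_{j,j+1}(w_j/w_{j+1}).
\end{equation*}
Sliding $\check R_{j,j+1}$ through every factor in the product and using $\check R_{j,j+1}(w_j/w_{j+1})|\wedge\rangle = [qw_j/w_{j+1}][q^2w_j/w_{j+1}]\,|\wedge\rangle$, I arrive at $p_1(w_j/w_{j+1}) \prod_k \mathcal B(w_k|\ldots,w_{j+1},w_j,\ldots)|\wedge\rangle$. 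The standard quadratic RTT relations imply $[\mathcal B(u),\mathcal B(v)]=0$ at fixed inhomogeneities, so the ordering of the factors may be permuted freely, and this expression equals $p_1(w_j/w_{j+1})|\psi_{\text{\rm \tiny D}}(\ldots,w_{j+1},w_j,\ldots)\rangle$. The main obstacle is really clerical: in the Bethe state the spectral argument of each $\mathcal B$ coincides with one of the inhomogeneity parameters, and care is needed to track these two roles separately when the RLL relation is applied and the $\mathcal B$'s are reordered via their commutativity.
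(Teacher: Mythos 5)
Your argument for the anti-diagonal twist is exactly the paper's proof: the Yang--Baxter intertwining relation for $T^{(1)}$, the one-dimensionality of the null space from \cref{prop:psiAD.def}, and the normalisation via $\langle\wedge|\check R(z)=[qz][q^2z]\langle\wedge|$. Where you genuinely diverge is the diagonal twist: the paper simply cites \cite{hagendorf:15} for that case, whereas you give a self-contained derivation from the explicit Bethe form $\prod_k\mathcal B(w_k)|\wedge\rangle$, using the $\check R$--monodromy intertwining relation restricted to the $\mathcal B$ entry, the eigenvalue $\check R(z)|\wedge\rangle=[qz][q^2z]|\wedge\rangle$, and the commutativity $[\mathcal B(u),\mathcal B(v)]=0$ to reorder the factors after the inhomogeneities are swapped. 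This route is correct and has the advantage of not invoking \cref{prop:psidunique} (and hence no completeness assumption for the Bethe ansatz); the price is the bookkeeping you rightly flag, namely that each $w_k$ plays the double role of spectral argument and inhomogeneity, so after applying the RLL relation the product $\prod_k\mathcal B(w_k|\dots,w_{j+1},w_j,\dots)$ carries its spectral arguments in the original order and must be identified with $|\psi_{\text{\rm\tiny D}}(\dots,w_{j+1},w_j,\dots)\rangle$ via the commutativity of the $\mathcal B$'s, which you do. Both halves of your proposal are sound.
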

\begin{proof}
  For the diagonal twist, this relation was established in \cite{hagendorf:15}. We focus on the anti-diagonal twist. Notice that the Yang-Baxter equation implies
  \begin{equation}
    \check R_{j,j+1}\left(\frac{w_j}{w_{j+1}}\right)T^{(1)}(z|\dots,w_j,w_{j+1},\dots) =  T^{(1)}(z|\dots,w_{j+1},w_{j},\dots)\check R_{j,j+1}\left(\frac{w_j}{w_{j+1}}\right).
  \end{equation}
  Applying this to $|\psi\rangle = |\psi_{\textrm{\tiny AD}}\rangle$, we find that 
  \begin{equation}
     T^{(1)}(z|\dots,w_{j+1},w_{j},\dots)\left(\check R_{j,j+1}\left(\frac{w_j}{w_{j+1}}\right)|\psi_{\textrm{\tiny AD}}(\dots,w_j,w_{j+1},\dots)\rangle\right) =0.
  \end{equation}
  As the null space of the transfer matrix with anti-diagonal twist is one-dimensional, we conclude that 
  \begin{equation}
    \check R_{j,j+1}\left(\frac{w_j}{w_{j+1}}\right)|\psi_{\textrm{\tiny AD}}(\dots,w_j,w_{j+1},\dots)\rangle = \lambda |\psi_{\textrm{\tiny AD}}(\dots,w_{j+1},w_j,\dots)\rangle.
  \end{equation}
  for some $\lambda$. Its value is found by projecting each side of the equation onto $\langle \wedge|$ and using $\langle \wedge|\check R(z) = [qz][q^2 z]\langle \wedge| $.
\end{proof}

Next, we consider the behaviour of the eigenvectors under cyclic shifts. The translation operator $S$ is the linear operator acting on the canonical basis states of 
$(\mathbb C^3)^{\otimes N}$ as
\begin{equation}
   S|\sigma_1\cdots \sigma_{N-1}\sigma_N\rangle = |\sigma_N\sigma_1 \cdots \sigma_{N-1}\rangle.
\end{equation}
The 
\textit{twisted translation operator} is defined by $S' = S \Omega_N^{(2)}$, where $\Omega^{(2)}$ is chosen from \eqref{eqn:twistmatrices} according to the choice of the twist. For both choices, $S'$ commutes with the corresponding Hamiltonian.
\begin{proposition} \label{prop:trslcov}
The vectors $|\psi\rangle = |\psi_{\text{\rm \tiny D}}\rangle$ and $|\psi\rangle = |\psi_{\text{\rm \tiny AD}}\rangle$ have the translation covariance property
\begin{equation}\label{eq:trslcov}
  S'|\psi(w_1,\dots,w_N)\rangle = \bigg(\prod_{j=1}^{N-1}\frac{[q^{-1}w_N/w_j]}{[qw_N/w_j]}\bigg)|\psi(w_N,w_1,\dots,w_{N-1})\rangle.
\end{equation}
\end{proposition}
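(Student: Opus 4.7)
The plan is to exploit the uniqueness of the special null eigenvector of $T^{(1)}(z)$, established in Propositions~\ref{prop:psidunique} and \ref{prop:psiAD.def}, and reduce the problem to the determination of a single scalar. The first, and most essential, step is to prove the intertwining relation
\begin{equation*}
S'\, T^{(1)}(z|w_1,\dots,w_N)\,(S')^{-1} = T^{(1)}(z|w_N,w_1,\dots,w_{N-1}).
\end{equation*}
I would derive this in two parts. Conjugation by $\Omega^{(2)}_N$ is handled using the twist invariance $(\Omega^{(1)}_a\otimes\Omega^{(2)}_j)R^{(1,2)}_{a,j}(x)=R^{(1,2)}_{a,j}(x)(\Omega^{(1)}_a\otimes\Omega^{(2)}_j)$, valid for both twists: the only factor in the monodromy product that is affected is $R^{(1,2)}_{a,N}$, whose conjugation by $\Omega^{(2)}_N$ can be traded for a conjugation by $(\Omega^{(1)}_a)^{-1}$ in the auxiliary space, which then combines with the existing auxiliary twist $\Omega^{(1)}_a$ and is cleared away by cyclicity of the trace; the net effect is to move $R^{(1,2)}_{a,N}$ past all the other $R$-factors. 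Conjugation by $S$ then simply relabels the quantum sites according to the cyclic shift, and a direct comparison identifies the resulting operator with $T^{(1)}(z|w_N,w_1,\dots,w_{N-1})$.

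With the intertwining relation in hand, the second step is immediate: setting $w'=(w_N,w_1,\dots,w_{N-1})$, one has
$T^{(1)}(z|w')\bigl(S'|\psi(w)\rangle\bigr)=S'\bigl(T^{(1)}(z|w)|\psi(w)\rangle\bigr)=0,$
so $S'|\psi(w)\rangle$ lies in the kernel of $T^{(1)}(z|w')$. This kernel being one-dimensional by Propositions~\ref{prop:psidunique} and \ref{prop:psiAD.def}, it follows that $S'|\psi(w)\rangle = C(w)\,|\psi(w')\rangle$ for a single scalar $C(w)$, for both twists.

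The third step is to identify $C(w)$ by evaluating the identity on an appropriate reference covector. For the anti-diagonal twist, the natural choice is $\langle\wedge|$: from the normalisation $\llangle\bm 0||\psi_{\text{AD}}\rangle=1$ fixed in Proposition~\ref{prop:psiAD.def} together with $\llangle\bm 0||=\langle\wedge|$, one has $\langle\wedge|\psi_{\text{AD}}(w')\rangle=1$, while $\langle\wedge|S = \langle\wedge|$ and $\langle\wedge|\Omega^{(2)}_N = -\langle\Uparrow\cdots\Uparrow\!\Downarrow|$ give $C(w) = -\langle\Uparrow\cdots\Uparrow\!\Downarrow|\psi_{\text{AD}}(w)\rangle$. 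This component can then be computed by expanding $\langle\Uparrow\cdots\Uparrow\!\Downarrow|$ in the dual height basis and using the explicit formulae \eqref{eq:explicitEV}, \eqref{eqn:projections} and \eqref{eqn:scalarprod}; the resulting SoV sum collapses, upon simplification, to the claimed product. For the diagonal twist, the projection onto $\langle\wedge|$ vanishes for magnetisation reasons, so I would instead project onto a covector in the sector $M\equiv 0$ whose pairing with $\prod_j\mathcal B(w_j)|\wedge\rangle$ is a computable ten-vertex partition function evaluating to a product.

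The bulk of Steps~1 and 2 is standard algebraic manipulation; the main obstacle is Step~3, namely showing that the specific components obtained after projection simplify to exactly $\prod_{j=1}^{N-1}[q^{-1}w_N/w_j]/[qw_N/w_j]$, equivalently to $-d(w_N)/a(w_N)$. For the anti-diagonal case, this amounts to identifying a telescoping cancellation in the sum over height configurations with $\sum_j h_j = 2$; for the diagonal case, the hard part is recognising the chosen partition function as an Izergin--Korepin-type determinant that degenerates at $z=w_N$ to the required ratio of $a(z)$ and $d(z)$.
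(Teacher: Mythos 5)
Your proposal takes a genuinely different route from the paper, and it is essentially viable for the anti-diagonal twist but weaker for the diagonal one. The paper's proof evaluates the fused transfer matrix at $z=w_1$, where it factorises as $T^{(2)}(w_1)=[q][q^2]\,S'\check R_{N-1,N}(w_1/w_N)\cdots\check R_{1,2}(w_1/w_2)$, pushes the $\check R$-factors through $|\psi\rangle$ with the exchange relations of \cref{prop:exchange}, and compares with the known eigenvalue $\theta^{(2)}(w_1)$ of \eqref{eqn:specialev}; the prefactor then comes out by elementary algebra, with no appeal to uniqueness of the null space and no separate normalisation computation. Your Step~1 is correct as sketched: the twist invariance $(\Omega^{(1)}_a\otimes\Omega^{(2)}_j)R^{(1,2)}_{a,j}(x)=R^{(1,2)}_{a,j}(x)(\Omega^{(1)}_a\otimes\Omega^{(2)}_j)$ does hold for both twists, and the trace-cyclicity argument yields $S'T^{(1)}(z|w_1,\dots,w_N)(S')^{-1}=T^{(1)}(z|w_N,w_1,\dots,w_{N-1})$. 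What your approach buys is independence from the fusion relation \eqref{eqn:fusion}; what it costs is that the overall scalar must be computed by hand rather than being supplied by the eigenvalue.

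The gaps are in Steps~2 and~3. For the diagonal twist, one-dimensionality of $\ker T^{(1)}(z)$ is only \cref{prop:psidunique}, which is \emph{conditional} on completeness of the Bethe ansatz; your proof of the diagonal half therefore inherits that hypothesis, whereas the paper's does not (for that twist the exchange relation is established unconditionally in the reference it cites). Moreover, your Step~3 for the diagonal twist names no covector: $\langle\wedge|$ is indeed useless, and the natural $M\equiv 0$ candidates such as $\langle 0\cdots 0|$ pair with $|\psi_{\text{\rm \tiny D}}\rangle$ to give determinant-type quantities (this is precisely the content of \cref{sec:DTwist}), not products, so extracting the required ratio from them would essentially amount to re-deriving the exchange relation. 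For the anti-diagonal twist the plan is concrete and the target is right — one needs $(\psi_{\text{\rm \tiny AD}})_{\Uparrow\cdots\Uparrow\Downarrow}=d(w_N)/a(w_N)$, which indeed checks at $N=1$ — but the asserted collapse of the separation-of-variables sum over the $N$ height profiles with a single $h_k=2$ is where all the remaining work sits, and it is not carried out. The cleanest repair, if you wish to avoid that computation and the completeness hypothesis, is to fix the scalar exactly as the paper does, by comparing with $\theta^{(2)}(w_1)$ through the factorised form of $T^{(2)}$ at $z=w_1$.
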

\begin{proof}
  For both the diagonal and anti-diagonal twist, the translation covariance is a consequence of the exchange relation and the form of the simple eigenvalue \eqref{eqn:specialev}. First, notice that the transfer matrix of the nineteen-vertex model has the property
  \begin{equation} 
    T^{(2)}(z=w_1) = [q][q^2]S'\check R_{N-1,N}\left(\frac{w_1}{w_N}\right)\check R_{N-2,N-1}\left(\frac{w_1}{w_{N-1}}\right)\cdots \check R_{1,2}\left(\frac{w_1}{w_2}\right).
  \end{equation}
  We apply this to the eigenvector and find
  \begin{equation}
    [q][q^2]\prod_{j=2}^N\left[\frac{q w_1}{w_j}\right]\left[\frac{q^2 w_1}{w_j}\right]S'|\psi(w_2,\dots,w_N,w_1)\rangle = \theta^{(2)}(w_1)|\psi(w_1,\dots,w_N)\rangle.
  \end{equation}
  The statement of the proposition follows after evaluation of the eigenvalue $\theta^{(2)}(w_1)$ and a relabelling of the indices.
\end{proof}

%%%%%%%%%%
\subsubsection{$\mathbb Z_2$-symmetries}
\label{sec:Z2symm}
%%%%%%%%%%

In this subsection, we consider properties of the special eigenvectors in relation with the magnetisation and spin-reversal operators. We start with the magnetisation. 
The next proposition is straightforward from the construction of the special eigenstates.
\begin{proposition}
  $
    M|\psi_{\text{\rm \tiny D}}\rangle = 0$ \text{and} $(-1)^M|\psi_{\text{\rm \tiny AD}}\rangle = (-1)^N |\psi_{\text{\rm \tiny AD}}\rangle$.
  \label{prop:magnetisation}
\end{proposition}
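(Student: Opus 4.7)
The two assertions correspond to the two twists and each follows quickly from material already in the paper, so my plan is short.

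For the diagonal twist, the key input is the $U(1)$-covariance of the monodromy matrix. Since the $R$-matrix $R^{(1,2)}(z)$ conserves the third component of spin, a standard argument in the QISM yields $[M,\mathcal B(z)] = -\mathcal B(z)$, i.e.\ $\mathcal B(z)$ lowers the magnetisation by one. Starting from the ferromagnetic reference vector $|\wedge\rangle$, which has $M|\wedge\rangle = N|\wedge\rangle$, and using the explicit form $|\psi_{\text{\rm \tiny D}}\rangle = \prod_{j=1}^N \mathcal B(w_j)|\wedge\rangle$ from \eqref{eqn:defpsiD}, I would conclude immediately that $M|\psi_{\text{\rm \tiny D}}\rangle = (N-N)|\psi_{\text{\rm \tiny D}}\rangle = 0$.

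For the anti-diagonal twist, I would work in the separation-of-variables basis. Inspecting the explicit formula \eqref{eqn:projections} for the right eigenvector, the diagonal $j=i$ factor of the product equals $[q^{2(h_i-1)}]/[q^{-2}]$, which vanishes precisely when $h_i = 1$. Hence $\llangle \bm h\|\psi_{\text{\rm \tiny AD}}\rangle = 0$ unless every $h_j \in \{0,2\}$. For any such surviving configuration, $\sum_{j=1}^N h_j$ is even, and by \eqref{eq:Mh} the eigenvalue of $M$ on $\|\bm h\rrangle$ satisfies $N - \sum_j h_j \equiv N \pmod 2$. Expanding $|\psi_{\text{\rm \tiny AD}}\rangle$ via the completeness relation \eqref{eqn:completeness}, every term that contributes is an eigenvector of $(-1)^M$ with eigenvalue $(-1)^N$, so $(-1)^M|\psi_{\text{\rm \tiny AD}}\rangle = (-1)^N|\psi_{\text{\rm \tiny AD}}\rangle$.

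There is no real obstacle here: both parts reduce to inspection of previously established formulas. The only subtlety worth flagging is the justification of $[M,\mathcal B(z)] = -\mathcal B(z)$, which one can either cite as standard or deduce in one line from $[s^3_a + \sum_j s^3_j,\,R^{(1,2)}_{a,j}(z)] = 0$ traced with the twist matrix.
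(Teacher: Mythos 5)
Your proof is correct and is exactly the argument the paper has in mind when it declares the proposition ``straightforward from the construction'': the diagonal case follows from $\mathcal B(z)$ lowering $M$ by one applied to \eqref{eqn:defpsiD}, and the anti-diagonal case from the vanishing of the projections \eqref{eqn:projections} whenever some $h_i=1$, combined with \eqref{eq:Mh} and the expansion \eqref{eq:explicitEV}. No gaps; the one subtlety you flag ($[M,\mathcal B(z)]=-\mathcal B(z)$) is already asserted in the paper just above \eqref{eq:Mh}.
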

For the anti-diagonal twist the behaviour under spin-reversal can be deduced from the covariance under cyclic shifts.
\begin{proposition}
  $F|\psi_{\text{\rm \tiny AD}}\rangle = (-1)^N |\psi_{\text{\rm \tiny AD}}\rangle.$
  \label{prop:SRantidiagonal}
\end{proposition}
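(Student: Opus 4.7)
The plan is to express the spin-reversal operator $F$ as a power of the twisted translation operator $S'$, and then apply \cref{prop:trslcov} iteratively. First, for the anti-diagonal twist, the single-site twist matrix from \eqref{eqn:twistmatrices} is $\Omega^{(2)}=-f$, where $f$ is the $3\times 3$ matrix with $1$'s on the anti-diagonal, so that $F=f^{\otimes N}$. Using the conjugation rule $S\Omega_j^{(2)}S^{-1}=\Omega_{j+1\,\mathrm{mod}\,N}^{(2)}$ that follows from the definition of $S$, an elementary induction gives
\begin{equation}
(S')^N = (S\,\Omega_N^{(2)})^N = S^N\,\Omega_1^{(2)}\Omega_2^{(2)}\cdots\Omega_N^{(2)} = (\Omega^{(2)})^{\otimes N} = (-1)^N F.
\end{equation}

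Second, I would apply \cref{prop:trslcov} $N$ times. After $N$ iterations, the inhomogeneity parameters return to their initial ordering $(w_1,\dots,w_N)$, and the cumulative prefactor is
\begin{equation}
(S')^N|\psi_{\text{\rm\tiny AD}}(w_1,\dots,w_N)\rangle = \Bigg(\prod_{i=1}^N\prod_{\substack{j=1\\j\neq i}}^N\frac{[q^{-1}w_i/w_j]}{[qw_i/w_j]}\Bigg)|\psi_{\text{\rm\tiny AD}}(w_1,\dots,w_N)\rangle,
\end{equation}
because as the ``last'' position is cycled through all $N$ slots, each ordered pair $(i,j)$ with $i\neq j$ contributes exactly once.

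Third, I would show that this product equals $1$ by pairing factors. From $[z]=z-z^{-1}$, we have $[q^{-1}w_i/w_j] = -[qw_j/w_i]$, so that for each unordered pair $\{i,j\}$,
\begin{equation}
\frac{[q^{-1}w_i/w_j]\,[q^{-1}w_j/w_i]}{[qw_i/w_j]\,[qw_j/w_i]} = \frac{(-[qw_j/w_i])(-[qw_i/w_j])}{[qw_i/w_j][qw_j/w_i]} = 1.
\end{equation}
Combined with the first step, this yields $(-1)^N F|\psi_{\text{\rm\tiny AD}}\rangle = |\psi_{\text{\rm\tiny AD}}\rangle$, which is the claim.

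There is no real obstacle in this argument; the only mildly delicate step is to bookkeep the cumulative prefactor correctly, observing that the ``last entry'' at each of the $N$ iterations runs through each $w_i$ exactly once, so the product is indeed over all ordered pairs $i\neq j$.
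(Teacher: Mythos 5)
Your proposal is correct and follows essentially the same route as the paper: both identify $(S')^N=(-1)^N F$ and combine this with the translation covariance of \cref{prop:trslcov} to deduce $(S')^N|\psi_{\text{\rm\tiny AD}}\rangle=|\psi_{\text{\rm\tiny AD}}\rangle$. The only difference is that you spell out the cancellation of the cumulative prefactor (via $[q^{-1}w_i/w_j]=-[qw_j/w_i]$), a detail the paper leaves implicit.
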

\begin{proof}
  From the translation covariance \eqref{eq:trslcov},
   it follows that $(S')^N |\psi_{\text{\rm \tiny AD}}\rangle = |\psi_{\text{\rm \tiny AD}}\rangle$. On the other hand, because $S' |\sigma_1 \cdots \sigma_{N-1}\sigma_N\rangle = (-1) |(-\sigma_N) \sigma_1\cdots \sigma_{N-1}\rangle$, 
  we conclude that $(S')^N = (-1)^N F$  and the statement readily follows.\end{proof}

Analysing the behaviour of $|\psi_{\text{\rm \tiny D}}\rangle$ under spin reversal turns out to be more challenging. 
In the proof of the next proposition, we will encounter an intriguing feature
which will play a fundamental role in the forthcoming sections: 
the emergence of partition functions for the six-vertex model within our analysis of the nineteen-vertex model. 
\begin{proposition} 
  \label{prop:SRdiagonal}
  $F|\psi_{\text{\rm \tiny D}}\rangle = |\psi_{\text{\rm \tiny D}}\rangle.$
\end{proposition}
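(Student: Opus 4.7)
The goal is to prove $F|\psi_{\text{\tiny D}}\rangle=|\psi_{\text{\tiny D}}\rangle$. My strategy has three main steps.

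First, I would reduce the claim to an algebraic identity between two explicitly constructed vectors. The elementary $R$-matrix $R^{(1,2)}(z)$ is invariant under the joint action of spin reversal on its auxiliary $\mathbb{C}^2$ factor (via $\sigma^x$) and its quantum $\mathbb{C}^3$ factor (via $F$): this can be checked entry by entry from the explicit form of $R^{(1,2)}(z)$. Combined with the relation $\sigma^x\Omega^{(1)}\sigma^x=-\Omega^{(1)}$ for the diagonal twist, this yields the operator identities $\{T^{(1)}(z),F\}=0$ and $F\mathcal B(z)F=\mathcal C(z)$. Together with $F|\wedge\rangle=|\Downarrow\cdots\Downarrow\rangle$, the statement $F|\psi_{\text{\tiny D}}\rangle=|\psi_{\text{\tiny D}}\rangle$ becomes equivalent to
\begin{equation*}
\prod_{j=1}^N \mathcal C(w_j)\,|\Downarrow\cdots\Downarrow\rangle = \prod_{j=1}^N \mathcal B(w_j)\,|\wedge\rangle,
\end{equation*}
an equation between two null vectors of $T^{(1)}(z)$ in the $M=0$ sector.

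Next, I would prove this identity componentwise. For each $\sigma\in\{\Uparrow,0,\Downarrow\}^N$ with $M(\sigma)=0$, both matrix elements $\langle\sigma|\prod_j\mathcal B(w_j)|\wedge\rangle$ and $\langle\sigma|\prod_j\mathcal C(w_j)|\Downarrow\cdots\Downarrow\rangle$ can be interpreted as partition functions of the inhomogeneous ten-vertex model on an $N\times N$ rectangular lattice, with boundary conditions determined respectively by $\sigma$ (bottom row), the all-up or all-down state (top row), and the auxiliary states of $\mathcal B$ or $\mathcal C$ (left/right edges). Applying the fusion procedure of \cref{sec:Rmat+fusion} in reverse, each horizontal spin-$1$ line of the ten-vertex lattice unfolds into a pair of spin-$\tfrac12$ lines carrying spectral parameters $w_j$ and $q w_j$. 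Both components thus lift to partition functions of the inhomogeneous six-vertex model on a $2N\times N$ lattice, and this is the emergence of six-vertex partition functions alluded to in the text.

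The equality of the two six-vertex partition functions would then be established using the symmetries of $R^{(1,1)}(z)$: its joint spin-flip invariance $(\sigma^x\otimes\sigma^x)R^{(1,1)}(z)(\sigma^x\otimes\sigma^x)=R^{(1,1)}(z)$ and its crossing symmetry $R^{(1,1)}(z)^{t_2}=-(\sigma^2\otimes\boldsymbol 1)R^{(1,1)}(q^{-2}z^{-1})(\sigma^2\otimes\boldsymbol 1)$. By expressing the relevant six-vertex partition functions in an Izergin–Korepin-type determinantal form, these $R$-matrix symmetries translate into symmetries of the determinant entries, from which the componentwise equality of the two sides follows.

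The main obstacle will be the careful bookkeeping required by the fusion procedure: tracking the similarity transformation $U$, the projectors $Q$ onto the symmetric subspace, and the $\sqrt{[q][q^2]}$ factors that accompany them, and in particular ensuring that the boundary conditions on the six-vertex lattice are matched correctly between the two sides so that the determinantal symmetry argument applies. Once this translation is in place, the symmetries of $R^{(1,1)}(z)$ furnish the desired spin-reversal invariance.
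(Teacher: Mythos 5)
Your first step is sound and coincides with the paper's starting point: the joint spin-flip invariance of $R^{(1,2)}(z)$ gives $F\mathcal B(z)F=\mathcal C(z)$, so the claim is equivalent to $\prod_{j=1}^N\mathcal C(w_j)|{\Downarrow\cdots\Downarrow}\rangle=\prod_{j=1}^N\mathcal B(w_j)|\wedge\rangle$. The gap lies in how you propose to establish this identity. You want to prove it componentwise in the canonical basis and evaluate each component $\langle\sigma|\prod_j\mathcal B(w_j)|\wedge\rangle$ by unfusing to a six-vertex partition function in ``Izergin--Korepin-type determinantal form''. But the Izergin--Korepin formula applies only to domain-wall boundary conditions; for a general $\sigma\in\{\Uparrow,0,\Downarrow\}^N$ the relevant boundary of the unfused six-vertex lattice is neither of domain-wall type nor even fixed (each $\sigma_j=0$ unfuses into the sum $|{\uparrow\downarrow}\rangle+|{\downarrow\uparrow}\rangle$), and no determinant formula is available. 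Moreover, the spin-flip symmetry of $R^{(1,1)}(z)$ by itself only yields $\langle\sigma|\prod_j\mathcal B(w_j)|\wedge\rangle=\langle F\sigma|\prod_j\mathcal C(w_j)|{\Downarrow\cdots\Downarrow}\rangle$, i.e., it reproduces the reduction of your first step rather than closing the argument: what must be shown is the invariance of the components under $\sigma\mapsto F\sigma$, which is the entire content of the proposition. (Appealing instead to one-dimensionality of the null space of $T^{(1)}(z)$ would import the unproven completeness assumption of \cref{prop:psidunique} and still leave a sign to fix.)

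The paper's proof circumvents exactly this obstacle. Instead of canonical-basis components, it evaluates $\langle\psi_{\text{\rm\tiny D}}|F\prod_j\mathcal B(z_j)|\wedge\rangle$ and $\langle\psi_{\text{\rm\tiny D}}|\prod_j\mathcal B(z_j)|\wedge\rangle$ for \emph{arbitrary} spectral parameters $z_1,\dots,z_N$. Both are genuine domain-wall partition functions --- a $2N\times N$ ten-vertex one, unfused to a $2N\times 2N$ six-vertex one with column parameters $w_j,qw_j$, and an $N\times N$ six-vertex one --- so the Izergin--Korepin determinant applies to both, and a limit/minor computation shows they agree. Equality of the covectors $\langle\psi_{\text{\rm\tiny D}}|F$ and $\langle\psi_{\text{\rm\tiny D}}|$ then follows because suitable specialisations of the $z_j$ in $\prod_j\mathcal B(z_j)|\wedge\rangle$ produce the separation-of-variables states $||\bm h\rrangle$, which form a complete basis. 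To salvage your strategy you would either need a non-determinantal evaluation of the non-DWBC partition functions, or you should pair against such a complete family of Bethe-type states as the paper does.
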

\begin{proof}
  For convenience, we prove the invariance under the action of $F$ on
  the dual vector, $\langle \psi_{\text{\rm \tiny D}}|F= \langle \psi_{\text{\rm \tiny D}}|$, which is equivalent to the statement 
  of this proposition by virtue of \cref{prop:transposition}. The first step of the proof is to show that for arbitrary $z_1,\dots,z_N$, we have
  \begin{equation}
    \langle \psi_{\text{\rm \tiny D}}|F\prod_{j=1}^N\mathcal B(z_j)|\wedge\rangle = \langle \psi_{\text{\rm \tiny D}}|\prod_{j=1}^N\mathcal B(z_j)|\wedge\rangle.
    \label{eqn:SRfundequ}
  \end{equation}
    The right-hand side is
  computed in \cite{hagendorf:15}: 
  \begin{equation}
    \langle \psi_{\text{\rm \tiny D}}|\prod_{j=1}^N\mathcal B(z_j)|\wedge\rangle = \langle \wedge |\prod_{j=1}^N\mathcal C(w_j)\prod_{j=1}^N\mathcal B(z_j)|\wedge\rangle = \bigg(\prod_{i=1}^N a(w_i)\bigg)Z_{\text{\rm \tiny IK}}(z_1,\dots,z_N;w_1,\dots,w_N).
    \label{eqn:spikdet}
  \end{equation}
  Here $Z_{\text{\rm \tiny IK}}(z_1,\dots,z_N;w_1,\dots,w_N)$ is the partition function of the six-vertex model with specific statistical weights on an $N\times N$ lattice with domain-wall boundary conditions.
 This lattice is illustrated on the left panel of \cref{fig:dwbc} for $N=4$.
  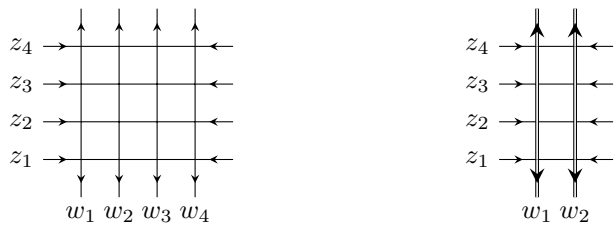
\begin{figure}
  \centering
  \begin{tikzpicture}
    \begin{scope}
    \draw (0,0) grid[step=.5cm] (1.5,1.5);
    \foreach \x in {0,0.5,1,1.5}
    {
      \draw[postaction={on each segment={mid arrow}}] (-.5,\x)--(0,\x);
      \draw[postaction={on each segment={mid arrow}}] (2,\x)--(1.5,\x);
      \draw[postaction={on each segment={mid arrow}}] (\x,0)--(\x,-0.5);
      \draw[postaction={on each segment={mid arrow}}] (\x,1.5)--(\x,2);
      
    }
    
    \draw (0,-.5) node [below] {$w_1$};
    \draw (.5,-.5) node [below] {$w_2$};
    \draw (1.,-.5) node [below] {$w_3$};
    \draw (1.5,-.5) node [below] {$w_4$};
    
    \draw (-.5,0) node [left] {$z_1$};
    \draw (-.5,.5) node [left] {$z_2$};
    \draw (-.5,1) node [left] {$z_3$};
    \draw (-.5,1.5) node [left] {$z_4$};
    \end{scope}
    
    \begin{scope}[xshift=6cm]
    \foreach \x in {0,0.5,1,1.5}
    {
      \draw[postaction={on each segment={mid arrow}}] (-.5,\x)--(0,\x);
      \draw[postaction={on each segment={mid arrow}}] (1,\x)--(0.5,\x);
      \draw (0,\x)--(0.5,\x);
    }
    
     \foreach \x in {0,0.5}
    {
      \draw[double, postaction={on each segment={mid arrow}}] (\x,0)--(\x,-.5);
      \draw[double,postaction={on each segment={mid arrow}}] (\x,1.5)--(\x,2);
      \draw[double] (\x,0)--(\x,1.5);
    }
    
    \draw (0,-.5) node [below] {$w_1$};
    \draw (.5,-.5) node [below] {$w_2$};
    
    \draw (-.5,0) node [left] {$z_1$};
    \draw (-.5,.5) node [left] {$z_2$};
    \draw (-.5,1) node [left] {$z_3$};
    \draw (-.5,1.5) node [left] {$z_4$};
    \end{scope}

  \end{tikzpicture}
  \caption{{\it Left:} Domain-wall boundary conditions for the six-vertex model on a $4 \times 4$ square. {\it Right:} Domain-wall boundary conditions for the ten-vertex model on a $4\times 2$ rectangle.} 
  \label{fig:dwbc}
\end{figure}
  The statistical weights $\mathfrak a(z),\mathfrak b(z),\mathfrak c(z)$ of the vertex configurations are given in \cref{fig:6v}. 
  The first (respectively second) group of arguments correspond to the spectral parameters assigned to rows (respectively columns), and the local 
  statistical weight of a vertex at position $(i,j)$ is evaluated with $z=z_i/w_j$. 
    \begin{figure}[h] 
  \centering
  \begin{tikzpicture}
    \drawvertex{1}{0}{0}
    \drawvertex{2}{1.25}{0}
    \drawvertex{3}{3.25}{0}
    \drawvertex{4}{4.5}{0}
    \drawvertex{5}{6.5}{0}
    \drawvertex{6}{7.75}{0}

    \draw (-.5,-1) node {\textit{(i)}};
    
    \draw (1.125,-1) node {$\mathfrak a(z)=[q/z]$};
    \draw (4.425,-1) node {$\mathfrak b(z)=[q z]$};
    \draw (7.625,-1) node {$\mathfrak c(z)=[q^2]$};
    
    \begin{scope}[yshift=-.75cm]
    \draw (-.5,-1) node {\textit{(ii)}};
    
    \draw (1.125,-1) node {$\bar {\mathfrak a}(z)=[qz]$};
    \draw (4.425,-1) node {$\bar{\mathfrak b}(z)=[z]$};
    \draw (7.625,-1) node {$\bar{\mathfrak c}(z)=[q]$};
    \end{scope}
    
    \end{tikzpicture}
 \caption{Vertex configurations of the six-vertex model
 with two sets of statistical weights: \textit{(i)} weights in Kuperberg's parametrisation, \textit{(ii)} 
weights obtained from the $R$-matrix \eqref{eqn:6vRmatrix}.
}
  \label{fig:6v}
\end{figure}
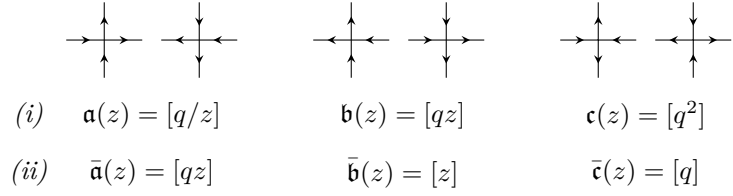

The Izergin-Korepin formula \cite{izergin:92} is an exact expression for this partition function:
  \begin{equation}
    Z_{\text{\rm \tiny IK}}(z_1,\dots,z_N;w_1,\dots,w_N) = \frac{\prod_{i,j=1}^N \mathfrak a(z_i/w_j)\mathfrak b(z_i/w_j)}{\prod_{1\le i<j \le N}[z_i/z_j][w_j/w_i]}\Det{i,j=1}{N}\left(\frac{\mathfrak c(z_i/w_j)}{\mathfrak a(z_i/w_j)\mathfrak b(z_i/w_j)}\right).
    \label{eqn:IKDet}
  \end{equation}
  
We now show that the left-hand side of \eqref{eqn:SRfundequ}, 
\begin{equation}\label{eq:371}
  \langle \psi_{\text{\rm \tiny D}}|F\prod_{j=1}^N \mathcal B(z_j)|\wedge\rangle
  = \langle \wedge |\prod_{j=1}^N \mathcal C(w_j)\prod_{j=1}^N \mathcal C(z_j)|\vee\rangle,
\end{equation}
produces the same result as the combined \eqref{eqn:spikdet} and \eqref{eqn:IKDet}.
 Here we used $F \mathcal B(z) = \mathcal C(z)F$ and introduced $|\vee\rangle = |{\Downarrow\cdots\Downarrow}\rangle$. Following the discussion about diagrammatic calculus in \cref{sec:tm}, the right-hand side of \eqref{eq:371} can be seen as the partition function of the ten-vertex model on a $2N\times N$ rectangle with domain-wall boundary conditions. It is illustrated in the right panel of \cref{fig:dwbc} for $N=2$. The parameters associated to the columns are $w_1,\dots,w_N$. Those associated to rows are $z_1,\dots, z_{2N}$ with $z_{j+N}=w_j,$ for $j=1,\dots,N$. From the fusion procedure, this partition function can be written in terms of a partition function $\bar Z_{\text{\rm \tiny IK}}$ of a six-vertex model with domain-wall boundary conditions on a $2N\times 2N$ square. Its vertex configurations have the statistical weights $\bar{\mathfrak a}(z),\bar{\mathfrak b}(z),\bar{\mathfrak c}(z)$ of the $R$-matrix \eqref{eqn:6vRmatrix}, also displayed in \cref{fig:6v}. The result is:
\begin{equation}
  \langle \psi_{\text{\rm \tiny D}}|F\prod_{j=1}^N \mathcal B(z_j)|\wedge\rangle = \lim_{z_{N+1}\to w_1,\dots,z_{2N}\to w_N}\frac{\bar Z_{\text{\rm \tiny IK}}(z_1,\dots,z_{2N};w_1,q w_1,\dots,w_N,q w_N)}{\prod_{i=1}^{2N}\prod_{j=1}^{N}[z_i/w_j]}.
\end{equation}
The partition function $\bar Z_{\text{\rm \tiny IK}}$ is obtained from the Izergin-Korepin formula \eqref{eqn:IKDet}, with the weights $\mathfrak a(z),\mathfrak b(z),\mathfrak c(z)$ replaced by $\bar{\mathfrak a}(z),\bar{\mathfrak b}(z),\bar{\mathfrak c}(z)$. Writing this out, we obtain
\begin{subequations}
\begin{align}
  \frac{\bar Z_{\text{\rm \tiny IK}}(z_1,\dots,z_{2N};w_1,q w_1,\dots,w_N,q w_N)}{\prod_{i=1}^{2N}\prod_{j=1}^{N}[z_i/w_j]} =\, &\frac{\prod_{i=1}^{2N}\prod_{j=1}^N[w_j/z_i][qz_i/w_j][q w_j/z_i]}{\prod_{1\le i<j\le 2N}[z_i/z_j]\prod_{1\le i<j\le N}[w_i/w_j][w_j/w_i]\prod_{i,j=1}^N[q w_i/w_j]}\nonumber \\  &\quad \times \Det{i,j=1}{2N}\left(M_{ij}\right)
  \label{eqn:horribleeqn}
\end{align}
with the determinant entries
\begin{equation}
  M_{i,2j-1} = \frac{[q]}{[z_i/w_j][q z_i/w_j]},\quad M_{i,2j} = -\frac{[q]}{[z_i/w_j][q w_j/z_i]}.
\end{equation}
\end{subequations}
Notice that $\det M = \det \tilde M$ where
\begin{equation}
  \tilde M_{i,2j-1}=M_{i,2j-1}, \quad \tilde M_{i,2j} = M_{i,2j-1}+M_{i,2j}=-\frac{\mathfrak c(z_i/w_j)}{\mathfrak a(z_i/w_j)\mathfrak b(z_i/w_j)}. 
\end{equation}
Aside from the minus sign, the matrix entries $\tilde M_{i,2j}$ are identical to those appearing in the determinant of \eqref{eqn:IKDet}.
In the limit $z_{N+1}\to w_1,\dots,z_{2N}\to w_N$, the entries $\tilde M_{i+N,2i-1}$ diverge. These divergences are compensated by some prefactors on the right-hand side of \eqref{eqn:horribleeqn}. 
Taking the limit thus corresponds to selecting
certain minors of $\tilde M$, which reduces $\det \tilde M$ to the determinant of an $N\times N$ matrix. The calculation is straightforward and the final result is indeed 
\begin{equation}
\langle \psi_{\text{\rm \tiny D}}|F\prod_{j=1}^N \mathcal B(z_j)|\wedge\rangle  = \left(\prod_{i=1}^N a(w_i)\right)Z_{\text{\rm \tiny IK}}(z_1,\dots,z_N;w_1,\dots,w_N),
\end{equation}
which ends the proof of \eqref{eqn:SRfundequ}.   
   
   The second step of the proof is to use \eqref{eqn:SRfundequ} to compute the scalar products $\langle \psi_{\text{\rm \tiny D}}|F||\bm h\rrangle$ and $\langle \psi_{\text{\rm \tiny D}}||\bm h\rrangle$ for each $||\bm h\rrangle$. Since $\langle \psi_{\text{\rm \tiny D}}|$ has zero magnetisation, these scalar products
   vanish unless $\sum_{i=1}^N h_i=N$. 
   By adjusting the parameters $z_1, z_2,\dots,z_N$ in \eqref{eqn:SRfundequ} appropriately, we find that
   $\langle \psi_{\text{\rm \tiny D}}|F||\bm h\rrangle = \langle \psi_{\text{\rm \tiny D}}||\bm h\rrangle$ for each $\bm h$. 
   Since the separation of variables basis is complete
   for generic values of $q$ and of the $w_j$, 
   we conclude that $\langle \psi_{\text{\rm \tiny D}}|F= \langle \psi_{\text{\rm \tiny D}}|$, which ends the proof.  
  
\end{proof}
%%%%%%%%%%
\subsubsection{Homogeneous limit}
\label{sec:homog.definitions}
%%%%%%%%%%

We are now in a position to provide the proofs of \cref{thm:DTwistSpecialEV,thm:ADTwistSpecialEV}. To this end, we consider the case where all inhomogeneity parameters take the same value $w_1=\cdots=w_N=1$. For the diagonal twist, we define
\begin{equation}
  |\phi_{\text{\rm \tiny D}}\rangle = \left([q][q^2]\right)^{-N/2}[q]^{-N(N-1)}|\psi_{\text{\rm \tiny D}}(w_1=1,\dots,w_N=1)\rangle.
  \label{eqn:defPhiD}
\end{equation}
From \eqref{eqn:defpsiD}, we see that this vector is well-defined, except for $q = 0, \pm 1, \pm \i$. We conclude from \cref{prop:trslcov,prop:magnetisation,prop:SRdiagonal} that 
\begin{equation}
  M |\phi_{\text{\rm \tiny D}}\rangle =0,\quad F |\phi_{\text{\rm \tiny D}}\rangle =|\phi_{\text{\rm \tiny D}}\rangle,\quad S'|\phi_{\text{\rm \tiny D}}\rangle=(-1)^{N+1}|\phi_{\text{\rm \tiny D}}\rangle.
\end{equation}
Furthermore, combining the special eigenvalue \eqref{eqn:specialev} with \eqref{eqn:hamfromT}, we see that $|\phi_{\text{\rm \tiny D}}\rangle$ is indeed a zero-energy state:
\begin{equation}
  H|\phi_{\text{\rm \tiny D}}\rangle=0.
\end{equation}
This concludes the proof of \cref{thm:DTwistSpecialEV}. Apart from the spin-reversal property, these features were all proved in \cite{hagendorf:15}.
Furthermore, \eqref{eqn:defPhiD} fixes the normalisation of $|\phi_{\text{\rm \tiny D}}\rangle$. The results involving $|\phi_{\text{\rm \tiny D}}\rangle$ presented in \cref{sec:spinoneXXZ} hold with this normalisation.

Following the discussion at the end of \cref{sec:tm}, extra caution must be exercised for the values $q=q_c$ satisfying $(q_c)^4=1$. The zero-energy eigenstates of $H$ are obtained through a limiting procedure, by first setting $w_1,\dots,w_N$ to $1$ and then taking the limit $q \rightarrow q_c$:
\begin{equation}\label{limitspecialeqD}
  |\phi_{\text{\rm \tiny D}}\rangle\big|_{q=q_c} = \lim_{q\rightarrow q_c}\left([q][q^2]\right)^{-N/2}[q]^{-N(N-1)}|\psi_{\text{\rm \tiny D}}(1,\dots,1)\rangle.
\end{equation}
It was shown in \cite{hagendorf:15} that the state $  |\phi_{\text{\rm \tiny D}}\rangle$ in \eqref{eqn:defPhiD} is polynomial in $x$, which implies that the limit in \eqref{limitspecialeqD} is well-defined.

For the anti-diagonal twist, setting all the inhomogeneity parameters to unity is more delicate. From \eqref{eqn:scalarprod} and \eqref{eq:explicitEV}, we see that the components of the vector $|\psi_{\text{\rm \tiny AD}}(w_1,\dots,w_N)\rangle$ in the separation of variables basis are singular in the homogeneous limit. Moreover, the height states no longer form a basis of $(\mathbb C^3)^{\otimes N}$ in this limit. These two difficulties turn out to compensate for one another: We argue in \cref{eq:exiandpoly}
that the vector
\begin{equation}\label{eq:phiADdef}
  |\phi_{\text{\rm \tiny AD}}\rangle = \lim_{w_1,\dots,w_N\to 1}|\psi_{\text{\rm \tiny AD}}(w_1,\dots,w_N)\rangle
\end{equation}
is indeed well defined for all $q \in \mathbb C^{\times}$. According to \cref{prop:trslcov,prop:magnetisation,prop:SRantidiagonal}, it has the properties
\begin{equation}
(-1)^M |\phi_{\text{\rm \tiny AD}}\rangle =(-1)^N|\phi_{\text{\rm \tiny AD}}\rangle,\quad F |\phi_{\text{\rm \tiny AD}}\rangle =(-1)^N|\phi_{\text{\rm \tiny AD}}\rangle, \quad S'|\phi_{\text{\rm \tiny AD}}\rangle=(-1)^{N+1}|\phi_{\text{\rm \tiny AD}}\rangle.
\end{equation}
One shows in complete analogy with the diagonal twist that this vector is a zero-energy state:
\begin{equation}
  H|\phi_{\text{\rm \tiny AD}}\rangle=0.
\end{equation}
This completes the proof of \cref{thm:ADTwistSpecialEV}. Furthermore, it is easy to see from \eqref{eqn:projections} that
\begin{equation*}
  (\phi_{\text{\rm \tiny AD}})_{\Uparrow\cdots \Uparrow} =\lim_{w_1,\dots,w_N\to 1} 
  \langle \wedge |\psi_{\text{\rm \tiny AD}}(w_1,\dots,w_N)\rangle = 1,
\end{equation*}
which corresponds to our choice for the normalisation in \cref{sec:qsumrules}.

As for the diagonal case, the eigenstate at the special values of $q$ where $(q_c)^4=1$ is obtained by a limiting procedure, but now by first taking the limit $w_1, \dots, w_N \to 1$:
\begin{equation}
|\phi_{\text{\rm \tiny AD}}\rangle\big|_{q=q_c} =  \lim_{q\rightarrow q_c}\lim_{w_1,\dots,w_N\to 1}|\psi_{\text{\rm \tiny AD}}(w_1,\dots,w_N)\rangle\label{eq:specialhomog}.
\end{equation}
In \cref{eq:exiandpoly}, we show that this limit is well-defined.

%%%%%%%%%%%%%%%%%%%%%
%
\section{Quadratic sum rules} 
\label{sec:sumrules}
%
%%%%%%%%%%%%%%%%%%%%%

In this section, we establish a number of quadratic sum rules through the explicit calculation of scalar products involving the left- and right-eigenvectors. In \cref{sec:squaresumrules}, we discuss a sum rule of the special eigenvector for the anti-diagonal twist. In \cref{sec:overlap}, we consider the scalar product of the special eigenvectors for the diagonal and anti-diagonal twists. Our results are expressed in terms of partition functions for a six-vertex model on certain lattice domains. In the homogeneous limit, these partition functions become generating functions for ASM enumeration. This allows us to recover the sum rules for the spin-chain zero-energy states given in \cref{sec:qsumrules}.

%%%%%%%%%
\subsection{Square norm sum rules}
\label{sec:squaresumrules}
%%%%%%%%%

%%%%%%%%%
\subsubsection{Generating functions for scalar products}
%%%%%%%%%

The vector $|\psi_{\text{\tiny AD}}\rangle$ has non-zero components only in magnetisation sectors where $(-1)^M = (-1)^N$.
Thus, instead of the simple scalar product
 $\langle \psi_{\text{\tiny AD}} |\psi_{\text{\tiny AD}}\rangle$, we consider the expression
\begin{equation}
  Z_{\text{\tiny AD}}(y;w_1,\dots,w_N)=\langle \psi_{\text{\tiny AD}}|y^M|\psi_{\text{\tiny AD}}\rangle = \sum_{\sigma}y^{m(\sigma)}\psi_{\text{\tiny AD}}(w_1^{-1},\dots,w_N^{-1})_{\sigma}\psi_{\text{\tiny AD}}(w_1,\dots,w_N)_{\sigma},
  \label{eqn:defZ}
\end{equation}
where $m(\sigma)$ is defined by $M|\sigma\rangle = m(\sigma)|\sigma\rangle$.
For the second equality of \eqref{eqn:defZ}, we used \cref{prop:transposition} to write the components of $\langle \psi_{\text{\tiny AD}} |$ as $\psi_{\text{\tiny AD}}(w_1^{-1},\dots,w_N^{-1})_{\sigma}$.
As a function of $y$, $Z_{\text{\tiny AD}}(y;w_1,\dots,w_N)$ is a centred Laurent polynomial of degree width $2N$. 
In a Laurent expansion in the variable $y$,
the coefficient of $y^m$ is the square norm of the vector's projection onto the subspace of magnetisation $m$. 
As the next proposition shows, it can be evaluated in terms of a determinant.
\begin{proposition}
We have
\begin{equation}
  Z_{\text{\rm \tiny{AD}}}(y;w_1,\dots,w_N)= \frac{\prod_{i,j=1}^N[q w_i/w_j]}{\prod_{1\le i<j\le N}[w_i/w_j][w_j/w_i]}\Det{i,j=1}{N}\left(\frac{y^{-1}}{[q w_i/w_j]}+\frac{y}{[q w_j/w_i]}\right).
  \label{eqn:ZAD}
\end{equation}
\begin{proof}
In order to compute $Z_{\text{\tiny AD}}(y;w_1,\dots,w_N)$, 
we use \eqref{eq:explicitEV}, \eqref{eqn:projections} and \eqref{eq:Mh} and
obtain
\begin{equation}\label{eq:Zady}
Z_{\text{\tiny AD}}(y;w_1,\dots,w_N)=\sum_{\bm h}\prod_{i=1}^N(-1)^{h_i}y^{1-h_i}\prod_{1\le i<j\le N}\frac{[q^{h_j-h_i}w_i/w_j]}{[w_i/w_j]}\prod_{i,j=1}^N \frac{[q^{2(h_i-1)}w_i/w_j][q^{1-h_i}w_i/w_j]}{[q^{-2}w_i/w_j][q w_i/w_j]}.
\end{equation}
All terms in this expression with at least one height taking the value one vanish. 
Furthermore, for $i\neq j$ and $h_i,h_j \in \{0,2\}$, one verifies the identity
\begin{equation}\label{eq:special.identity}
  [q^{h_j-h_i}w_i/w_j][q^{2(h_i-1)}w_i/w_j][q^{2(h_j-1)}w_j/w_i] =[q^{h_i-h_j}w_i/w_j][q^{2}w_i/w_j][q^{2}w_j/w_i].
\end{equation}
It allows us to simplify the expression \eqref{eq:Zady} to
\begin{equation}
Z_{\text{\tiny AD}}(y;w_1,\dots,w_N) =\sum_{\bm h}\prod_{i=1}^N(-1)^{h_i/2}y^{1-h_i}
\prod_{1\le i<j \le N}\frac{[q^{h_i-h_j}w_i/w_j]}{[w_i/w_j]}\prod_{i,j=1}^N \frac{[q^{1-h_i}w_i/w_j]}{[q w_i/w_j]}.
\end{equation}
Each summand can be rewritten with the help of a determinant identity of Cauchy's \cite{cauchy:41}:
\begin{equation}
  \Det{i,j=1}{N} \left(\frac{1}{[x_i/y_j]}\right) = \frac{\prod_{1\le i<j \le N} [x_i/x_j][y_j/y_i]}{\prod_{i,j=1}^N[x_i/y_j]}.
\end{equation}
Indeed, we choose $x_i = q^{h_i-1}w_i, \,y_j=w_j$ and find
\begin{align}
  Z_{\text{\tiny AD}}(y;w_1,\dots,w_N)
  &= \sum_{\bm h} \frac{\prod_{i,j=1}^N[q^{1-h_i}w_i/w_j][q^{h_i-1}w_i/w_j] }{\prod_{i,j=1}^N[q w_i/w_j] 
  \prod_{1\le i<j \le N}[w_i/w_j][w_j/w_i]}\Det{i,j=1}{N}\left(\frac{(-1)^{h_i/2}y^{1-h_i}}{[q^{h_i-1}w_i/w_j]}\right).
\end{align}
Finally, since only height profiles $\bm h$ with $h_i=0,2$ contribute to the sum, 
the double product in the numerator of each summand does not depend on the actual values of the heights and can be factorised. It remains to sum up the determinants which corresponds to performing simple row operations and leads to \eqref{eq:Zady}.
\end{proof}
\end{proposition}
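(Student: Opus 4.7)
The strategy is to expand the scalar product in the separation of variables basis using the completeness relation \eqref{eqn:completeness}, the magnetisation eigenvalue \eqref{eq:Mh}, \cref{prop:transposition} (so that $\langle \psi_{\text{\tiny AD}}|$ can be viewed as $|\psi_{\text{\tiny AD}}(w_1^{-1},\dots,w_N^{-1})\rangle^t$ with its projections onto the $||\bm h\rrangle$ basis obtainable from \eqref{eqn:projections}), together with the explicit inverse-norm factor read off from \eqref{eqn:scalarprod}. Concretely, I would first write
\begin{equation*}
Z_{\text{\tiny AD}}(y;w_1,\dots,w_N)=\sum_{\bm h}\frac{\langle\psi_{\text{\tiny AD}}||\bm h\rrangle\,\llangle\bm h||\psi_{\text{\tiny AD}}\rangle}{\llangle\bm h||\bm h\rrangle}\,y^{\sum_i(1-h_i)},
\end{equation*}
and then plug in the formulas for all three factors to obtain an expression of the form displayed in \eqref{eq:Zady}.

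The next step is to observe that any configuration $\bm h$ with at least one $h_i=1$ gives a vanishing contribution, because in that case a diagonal term $[q^{2(h_i-1)}w_i/w_i]=[1]=0$ appears in $\llangle\bm h||\psi_{\text{\tiny AD}}\rangle$ while the corresponding diagonal term in the denominator $[q^{-2}w_i/w_i]=[q^{-2}]$ is generically non-zero. Hence the sum reduces to $\bm h\in\{0,2\}^N$. On this restricted set, I would invoke the algebraic identity \eqref{eq:special.identity} to symmetrise the double product over $i,j$, replacing $h_j-h_i$ by $h_i-h_j$ at the price of a $h$-independent factor. This mirrors the manipulation announced just before \eqref{eq:special.identity} and turns each summand into a shape amenable to Cauchy.

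After this simplification, each summand has a ratio of the form $\prod_{i,j}[q^{1-h_i}w_i/w_j]^{-1}\prod_{i<j}[q^{h_i-h_j}w_i/w_j]$ (times a Vandermonde in the $w_k^{\pm 1}$), which by Cauchy's determinant identity with the choice $x_i=q^{h_i-1}w_i$, $y_j=w_j$ equals a determinant whose $(i,j)$ entry is $1/[q^{h_i-1}w_i/w_j]$, up to a factor that is manifestly independent of $\bm h$ once the restriction $h_i\in\{0,2\}$ is used. The final step is to factor out all $\bm h$-independent quantities, exchange the sum over $\bm h$ with the determinant, and recognise that summing $\sum_{h_i\in\{0,2\}}(-1)^{h_i/2}y^{1-h_i}/[q^{h_i-1}w_i/w_j]$ in each row yields exactly $y^{-1}/[qw_i/w_j]+y/[qw_j/w_i]$ (after using $(-1)^{h_i/2}y^{1-h_i}$ and $[q^{-1}w_i/w_j]=-[qw_j/w_i]$). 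This produces the determinant in \eqref{eqn:ZAD}.

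I expect the main obstacle to be the bookkeeping of signs and prefactors when applying \eqref{eq:special.identity} and rearranging the Cauchy determinant: one must keep track of the $(-1)^{h_i}$ from the inverse norm, the $(-1)^{h_i/2}$ that emerges after the symmetrisation, and the asymmetric $[q^{-2}w_i/w_j]$ denominators in \eqref{eqn:projections} so that they combine into the symmetric $[qw_i/w_j]$ product appearing in the prefactor of \eqref{eqn:ZAD}. Once these signs are handled, the remaining steps are mechanical.
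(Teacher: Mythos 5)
Your proposal follows the paper's proof essentially verbatim: expansion in the separation-of-variables basis via the completeness relation, elimination of heights equal to one (your explicit reason — the vanishing diagonal factor $[q^{2(h_i-1)}w_i/w_i]=[1]=0$ in $\llangle\bm h||\psi_{\text{\tiny AD}}\rangle$ — is exactly why the paper's claim holds), symmetrisation via \eqref{eq:special.identity}, Cauchy's determinant with $x_i=q^{h_i-1}w_i$, $y_j=w_j$, and a final row-summation. The only caveat is the sign bookkeeping you already flag: the row sum actually gives $-\bigl(y^{-1}/[qw_i/w_j]+y/[qw_j/w_i]\bigr)$, and the resulting $(-1)^N$ cancels against the $(-1)^{N^2}$ arising when the $h$-independent factor $\prod_{i,j}[q^{h_i-1}w_i/w_j]=\prod_{i,j}[q^{-1}w_i/w_j]$ is rewritten as $(-1)^{N^2}\prod_{i,j}[qw_i/w_j]$.
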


\paragraph{Relation to six-vertex model partition functions.} 
There are several specialisations of $y$ for which $Z_{\text{\tiny AD}}(y; w_1, \dots, w_N)$ can be written in terms of known partition functions for the six-vertex model 
on various lattice domains. The vertex weights are given by the functions $\mathfrak a(z)$, $\mathfrak b(z)$ and $\mathfrak c(z)$ in \cref{fig:6v}. 

The first special point is $y=q$. Upon comparison of \eqref{eqn:IKDet} and \eqref{eqn:ZAD}, we find that $Z_{\text{\tiny AD}}(y=q;w_1,\dots,w_N)$ reduces to the Izergin-Korepin determinant up to a 
prefactor:
\begin{equation}
  Z_{\text{\tiny AD}}(y=q;w_1,\dots,w_N) = \frac{Z_{\text{\tiny IK}}(w_1,\dots,w_N;w_1,\dots,w_N)}{\prod_{i,j=1}^N [q w_i/w_j]}.
  \label{eqn:ZADfromIKDet}
\end{equation}

For $y=1$ and $y=\i$, the expression \eqref{eqn:ZAD} can be written in terms of $Z_{\text{\tiny IK}}(w_1,\dots,w_N;w_1,\dots,w_N)$ and the partition functions $Z_{\text{\tiny HT}}^{\pm}(z_1,\dots,z_N;w_1,\dots,w_N)$ for the six-vertex model on a $2N \times N$ rectangle with half-turn boundary conditions. As illustrated in \cref{fig:6vpartfunc}, the difference between $Z_{\text{\tiny HT}}^{-}$ and $Z_{\text{\tiny HT}}^{+}$ is that on the inhomogeneous lattice, the inhomogeneity parameters attached to the top $N$ rows are set to $-z_j$ instead of $z_j$.
These partition functions were computed by Kuperberg \cite{kuperberg:02}: 
\begin{subequations}
\begin{equation}
\frac{Z_{\text{\tiny HT}}^{\pm}(z_1,\dots,z_N;w_1,\dots,w_N)}{Z_{\text{\tiny IK}}(z_1,\dots,z_N;w_1,\dots,w_N)} =  \frac{\prod_{i,j=1}^N [q z_i/w_j][q w_j/z_i]}{\prod_{1\le i<j\le N}[z_j/z_i] [w_i/w_j]} \Det{i,j=1}{N} (M^\pm_{\text{\tiny HT}})_{ij}
\end{equation}
where
\begin{equation}
(M^\pm_{\text{\tiny HT}})_{ij} = \frac{1}{[q w_j/z_i]} \pm \frac{1}{[q z_i/w_j]}.
\end{equation}
\end{subequations}
 Upon comparing \eqref{eqn:ZAD} with his results, we find that
  \begin{subequations}
  \label{eqn:ZADfromHTDet}
\begin{align}
  Z_{\text{\tiny AD}}(y=1;w_1,\dots,w_N)=\frac{Z_{\text{\tiny HT}}^+(w_1,\dots,w_N;w_1,\dots,w_N)}{\prod_{i,j=1}^N [q w_i/w_j]Z_{\text{\tiny IK}}(w_1,\dots,w_N;w_1,\dots,w_N)},\\
   Z_{\text{\tiny AD}}(y=\i;w_1,\dots,w_N)=\frac{\i^N Z_{\text{\tiny HT}}^-(w_1,\dots,w_N;w_1,\dots,w_N)}{\prod_{i,j=1}^N [q w_i/w_j]Z_{\text{\tiny IK}}(w_1,\dots,w_N;w_1,\dots,w_N)}. 
\end{align}
  \end{subequations}

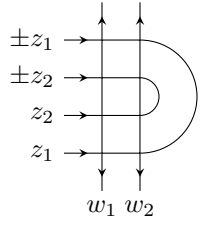
\begin{figure}
  \centering
  \begin{tikzpicture}
    \draw (0,0) grid[step=.5cm] (0.5,1.5);
    \foreach \x in {0,0.5}
    {
      \draw[postaction={on each segment={mid arrow}}] (\x,0)--(\x,-0.5);
      \draw[postaction={on each segment={mid arrow}}] (\x,1.5)--(\x,2);
      
      \draw[postaction={on each segment={mid arrow}}] (-.5,\x)--(0,\x);
      \draw[postaction={on each segment={mid arrow}},yshift=1cm] (-.5,\x)--(0,\x);
    }
    \draw (.5,0) arc [start angle=-90, end angle = 90, radius=.75cm];
    \draw (.5,.5) arc [start angle=-90, end angle = 90, radius=.25cm];
   
    \draw (0,-.5) node [below] {$w_1$};
    \draw (.5,-.5) node [below] {$w_2$};
    
       \draw (-.5,0) node [left] {$z_1$};
    \draw (-.5,.5) node [left] {$z_2$};
    \draw (-.5,1) node [left] {$\pm z_2$};
    \draw (-.5,1.5) node [left] {$\pm z_1$};
  \end{tikzpicture}  
  \caption{The inhomogeneous lattice with half-turn boundary conditions on a $4 \times 2$ rectangle. The inhomogeneity parameters of the top rows are set to either $z_j$ or $- z_j$, with the corresponding partition functions denoted by $Z^+_{\text{\tiny HT}}$ and $Z^-_{\text{\tiny HT}}$.}
  \label{fig:6vpartfunc}
\end{figure}

%%%%%%%%%
\subsubsection{Homogeneous limit}
%%%%%%%%%

In order to recover our results for the spin chain, we need to take the homogeneous limit of $Z_{\text{\rm \tiny AD}}(y;w_1,\dots,w_N)$:
\begin{equation}
  Z_{\text{\rm \tiny AD}}(y) = \lim_{w_1,\dots,w_N \to 1}Z_{\text{\rm \tiny AD}}(y;w_1,\dots,w_N).
\end{equation}
\begin{proposition}
\label{prop:ZADhom}
We have 
\begin{equation}
Z_{\text{\rm \tiny AD}}(y)
   = \Det{i,j=0}{N-1}\left(y^{-1}\delta_{ij}+y \sum_{k=0}^{N-1}\binom{i}{k}\binom{j}{k}x^{2(j-k)}\right),\quad x=q+q^{-1}.
     \label{eqn:ZADhom} 
\end{equation}
\begin{proof}
  We prove the statement by a direct application of the strategy presented by Behrend \textit{et al.}~\cite{behrend:12}. Here, we review their technique in some detail as we use it to compute other homogeneous limits in the following sections.

  The first part of the proof consists of rewriting the homogeneous limit in a suitable way. To this end, we consider the more general expression
 \begin{equation}
   Z_{\text{\rm \tiny{AD}}}^\ast(y;z_1,\dots,z_N; w_1,\dots,w_N)= \frac{\prod_{i,j=1}^N[q z_i/w_j]}{\prod_{1\le i<j \le N}[z_i/z_j][w_j/w_i]}\Det{i,j=1}{N}\left(\frac{y^{-1}}{[q z_i/w_j]}+\frac{y}{[q w_j/z_i]}\right),
  \label{eqn:ZADstar}
 \end{equation}
which satisfies $Z_{\text{\rm \tiny{AD}}}^\ast(y;w_1,\dots,w_N; w_1,\dots,w_N) = Z_{\text{\rm \tiny{AD}}}(y;w_1,\dots,w_N)$. In order to prove the proposition, we thus consider \eqref{eqn:ZADstar} in the limit $z_1,\dots,z_N,w_1,\dots,w_N\to 1$. It is convenient to introduce the variables $u_j=w_{j+1}^2,\,v_j = z_{j+1}^{-2}$ for $j=0, \dots,N-1$, in terms of which we have
 \begin{equation}
   Z_{\text{\rm \tiny{AD}}}^\ast(y;z_1,\dots,z_N; w_1,\dots,w_N) = 
   \frac{\prod_{i,j=0}^{N-1}(u_iv_j-q^2)}{q^{N(N-1)}}\left(\frac{\Det{i,j=0}{N-1}f(u_i,v_j)}{\prod_{0\le i<j\le N-1}(u_j-u_i)(v_j-v_i)}\right)
 \end{equation} 
 where we abbreviated
 \begin{equation}
   f(u,v) = \frac{y^{-1}}{uv-q^2}-\frac{yq^{-2}}{uv-q^{-2}}.
   \label{eqn:deffuv}
 \end{equation}
 The homogeneous limit is therefore
 \begin{align}
   Z_{\text{\rm \tiny AD}}(y) &= \frac{(1-q^2)^{N^2}}{q^{N(N-1)}}\lim_{\substack{u_0,\dots,u_{N-1}\to 1 \nonumber\\
 v_0,\dots,v_{N-1}\to 1} } \left(\frac{\Det{i,j=0}{N-1}f(u_i,v_j)}{\prod_{0\le i<j\le N-1}(u_j-u_i)(v_j-v_i)}\right)\\
 &= \frac{(1-q^2)^{N^2}}{q^{N(N-1)}}\lim_{\substack{u_0,\dots,u_{N-1}\to 1 \\
 v_0,\dots,v_{N-1}\to 1} } \Det{i,j=0}{N-1}
 f[u_0,\dots,u_i;v_0,\dots,v_j],
 \label{eqn:homlim}
 \end{align}
 where $f[u_0,\dots,u_i;v_0,\dots,v_j]$ is the so-called \textit{divided difference}
 \begin{equation}
   f[u_0,\dots,u_i;v_0,\dots,v_j] = \sum_{m=0}^i\sum_{n=0}^j \frac{f(u_m,v_n)}{\prod_{\substack{m'=0\\ m'\neq m}}^{i}(u_m-u_{m'})\prod_{\substack{n'=0\\ n'\neq n}}^{i}(u_n-u_{n'})}.
 \end{equation}
 
The second part of the proof is to compute this limit. To this end, it is useful to denote by $\{u^i v^j\}g(u,v)$ the coefficient in front of $u^i v^j$ of a power series $g(u,v)$. For any such power series, it was shown in \cite{behrend:12} that the confluent limit of the divided difference $g[u_0,\dots,u_{N-1};v_0,\dots,v_{N-1}]$ is 
\begin{equation}
 \lim_{\substack{u_0,\dots,u_{N-1}\to r \\
 v_0,\dots,v_{N-1}\to r} } g[u_0,\dots,u_{N-1};v_0,\dots,v_{N-1}]
 =\{u^i v^j\}g(u+r,v+r).
 \label{eqn:divideddiff}
 \end{equation}
 Here $r$ is 
 assumed to be such that both sides are well-defined.
 
We apply \eqref{eqn:divideddiff} to $g(u,v)= f(u,v)$ with $r=1$.
The series coefficients $\{u^i v^j\}f(u+1,v+1)$ can be computed with the help of the two-parameter family of matrices 
$L(\alpha,\beta)$ with entries
\begin{equation}
  L(\alpha,\beta)_{ij} = \binom{i}{j} \alpha^i \beta^j,\quad \alpha,\beta \in \mathbb C^\times, \quad i,j = 0, \dots, N-1.
   \label{eqn:defL}
 \end{equation}

 These matrices have the following properties. First, the determinant is given by
 \begin{equation}
   \det L(\alpha,\beta) = (\alpha\beta)^{N(N-1)/2},
   \label{eqn:propL0}
 \end{equation}
 and hence the matrices are invertible for $\alpha,\beta \neq 0$. Second, the product of a matrix and its transpose is
 \begin{equation}
   \left(L(\alpha,\beta)L(\alpha,\beta)^t\right)_{ij} = \{u^iv^j\}\left(\frac{1}{1-\alpha(u+v)
  -
   \alpha^2(\beta^2-1)uv}\right)=\alpha^{i+j}\sum_{k=0}^{N-1}\binom{i}{k}\binom{j}{k}\beta^{2k}
   \label{eqn:propL1}
 \end{equation}
 for $0\le i,j \le N-1$.
Finally, the matrices obey the multiplication law
 \begin{equation}
   L(\alpha_+,\beta_+)L(\alpha_0,\beta_0) = L(\alpha_-,\beta_-), \quad \alpha_0 = \frac{\alpha_{-}-\alpha_+}{\alpha_+\beta_+},\,\beta_0 = \frac{\alpha_-\beta_-}{\alpha_--\alpha_+}.
   \label{eqn:propL2}
 \end{equation}

Let us now proceed to the actual computation of the series coefficients. Comparing \eqref{eqn:deffuv} and \eqref{eqn:propL1}, we obtain
 \begin{equation}
   \{u^i v^j\}f(u+1,v+1) = \frac{1}{1-q^2}\left(y^{-1} L(\alpha_+,\beta_+)L(\alpha_+,\beta_+)^t+ y L(\alpha_-,\beta_-)L(\alpha_-,\beta_-)^t\right)_{ij}
 \end{equation}
 where
 \begin{equation}\label{eq:alphabeta}
   \alpha_\pm = \frac{1}{q^{\pm 2}-1},\quad \beta_\pm = q^{\pm 1}.
 \end{equation}
 Combining \eqref{eqn:homlim} and \eqref{eqn:divideddiff}, we find
 \begin{align}
   Z_{\text{\rm \tiny AD}}(y) &=\frac{(1-q^2)^{N(N-1)}}{q^{N(N-1)}}\det \left(y^{-1} L(\alpha_+,\beta_+)L(\alpha_+,\beta_+)^t+ y L(\alpha_-,\beta_-)L(\alpha_-,\beta_-)^t\right)\nonumber\\
   &= \frac{(1-q^2)^{N(N-1)}\det \left(L(\alpha_+,\beta_+)\right)^2}{q^{N(N-1)}}\nonumber \\
   & \qquad \times \det \left(y^{-1}\bm{1} + y L(\alpha_+,\beta_+)^{-1}L(\alpha_-,\beta_-)\left(L(\alpha_+,\beta_+)^{-1}L(\alpha_-,\beta_-)\right)^t\right).
 \end{align}
We simplify the resulting expression with the help of \eqref{eqn:propL0} and \eqref{eqn:propL2} with
\begin{equation}\alpha_0 = -(q+q^{-1})=-x,\quad\beta_0  = -(q+q^{-1})^{-1}=-x^{-1}.\end{equation} 
We obtain
\begin{align}
  Z_{\text{\rm \tiny AD}}(y) &= \det \left(y^{-1}\bm{1} + y L(\alpha_0,\beta_0)L(\alpha_0,\beta_0)^t\right)\nonumber\\
  & =  \Det{i,j=0}{N-1} \left(y^{-1}\delta_{ij} + (-1)^{i+j} y \sum_{k=0}^{N-1}\binom{i}{k}\binom{j}{k}x^{i-k}x^{j-k}\right)\nonumber\\
  & = \Det{i,j=0}{N-1} \left((-1)^{i+j}x^{i-j}\left( y^{-1}\delta_{ij} + y \sum_{k=0}^{N-1}\binom{i}{k}\binom{j}{k}x^{2(j-k)}\right)\right).
\end{align}
At the second equality, we used once again \eqref{eqn:propL1}. The factor $(-1)^{i+j}x^{i-j}$ in the third line is easily taken out of the determinant, and the statement of the proposition follows.
\end{proof}
\end{proposition}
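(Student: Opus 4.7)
The plan is to start from the explicit finite-$N$ formula
$Z_{\text{\tiny AD}}(y;w_1,\dots,w_N)$ established just above (equation \eqref{eqn:ZAD}) and carefully take the limit $w_1,\dots,w_N\to 1$. The obstruction is that both the Vandermonde prefactor and the determinant are individually singular in this limit, so the ratio must be reorganised into a quantity that is manifestly regular. I would follow the strategy of Behrend, Di Francesco and Zinn-Justin \cite{behrend:12} for taming such confluent limits.

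The first step is to decouple the two identifications of parameters. Introduce an auxiliary object $Z^{\ast}_{\text{\tiny AD}}(y;z_1,\dots,z_N;w_1,\dots,w_N)$ by replacing the inner $w_i$'s in the numerator arguments by independent variables $z_i$, such that setting $z_i=w_i$ recovers $Z_{\text{\tiny AD}}(y;w_1,\dots,w_N)$. After the change of variables $u_i=w_{i+1}^2$ and $v_j=z_{j+1}^{-2}$, the determinant takes the shape $\det_{i,j} f(u_i,v_j)$ for the single rational function
\begin{equation}
f(u,v)=\frac{y^{-1}}{uv-q^2}-\frac{y\,q^{-2}}{uv-q^{-2}},
\end{equation}
while the Vandermonde product in the denominator is exactly what is needed for the standard identity that rewrites $\det f(u_i,v_j)/\!\prod(u_j-u_i)(v_j-v_i)$ as $\det_{i,j} f[u_0,\dots,u_i;v_0,\dots,v_j]$, a determinant of divided differences. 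This form is regular, and its confluent limit at $u_i,v_j\to 1$ is simply the determinant of the Taylor coefficients $\{u^iv^j\}\,f(u+1,v+1)$.

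The second step is to compute these coefficients algebraically. I would introduce the lower-triangular matrix $L(\alpha,\beta)_{ij}=\binom{i}{j}\alpha^i\beta^j$ and verify three key properties: $\det L(\alpha,\beta)=(\alpha\beta)^{N(N-1)/2}$, the bilinear identity $(L(\alpha,\beta)L(\alpha,\beta)^t)_{ij}=\{u^iv^j\}\bigl(1-\alpha(u+v)-\alpha^2(\beta^2-1)uv\bigr)^{-1}$, and the multiplication law $L(\alpha_+,\beta_+)L(\alpha_0,\beta_0)=L(\alpha_-,\beta_-)$ for an explicit choice of $(\alpha_0,\beta_0)$. Partial fractioning $f(u+1,v+1)$ matches it to a combination of the two series on the right-hand side of the bilinear identity, with parameters $\alpha_\pm=(q^{\pm 2}-1)^{-1}$ and $\beta_\pm=q^{\pm1}$. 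This gives a matrix decomposition of the limit determinant as $\det\bigl(y^{-1}L_+L_+^t+yL_-L_-^t\bigr)$ up to an overall scalar.

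The final step is to factor $L_+$ out on both sides and invoke the multiplication law so that $L_+^{-1}L_-=L(\alpha_0,\beta_0)$ for new parameters. The key computation, and probably the most error-prone one, is showing that $\alpha_0=-x$ and $\beta_0=-x^{-1}$, so that the dependence on $q$ collapses onto $x=q+q^{-1}$ as required. Applying the bilinear identity a second time then recognises $L(\alpha_0,\beta_0)L(\alpha_0,\beta_0)^t$ as the matrix of binomial sums appearing in the claimed formula, and the scalar $(-1)^{i+j}x^{i-j}$ that arises can be pulled out of the determinant through compensating row and column operations. I expect the main obstacle to be the careful bookkeeping of scalar prefactors at each step: the $q$-dependent constants outside the determinant, the $(1-q^2)^{N^2}$ factors produced by the change of variables, and the Jacobian-like factors from the confluent limit must be tracked so that they conspire to cancel against $(\det L_+)^2$ and produce a clean final answer.
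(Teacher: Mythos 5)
Your proposal follows essentially the same route as the paper's own proof: the same auxiliary function $Z^\ast_{\text{\tiny AD}}$, the same change of variables leading to the rational kernel $f(u,v)$, the same divided-difference/confluent-limit argument from Behrend \emph{et al.}, and the same algebra with the matrices $L(\alpha,\beta)$ culminating in $\alpha_0=-x$, $\beta_0=-x^{-1}$ and the removal of the factor $(-1)^{i+j}x^{i-j}$ from the determinant. The approach and all key ingredients are correct.
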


As seen above, for $y=q,\, y=1$ and $y=\i$, the inhomogeneous quantity $Z_{\textrm {\tiny AD}}(y;w_1,\dots,w_N)$ is given in terms of partition functions for the six-vertex model on certain lattice domains. The vertex configurations on these lattice domains are in bijection with ASMs \cite{elkies:92,bressoudbook,kuperberg:02}. This bijection allows us to express the homogeneous limit of the partition functions in terms of generating functions for ASM enumeration.

The simplest case is the homogeneous limit of the Izergin-Korepin partition function. It is given by
\begin{equation}
  Z_{\textrm {\tiny IK}}(1,\dots,1;1,\dots,1) = [q]^{N(N-1)}[q^2]^{N} A(N; x^2)
\end{equation}
where $A(N;t)$ is the generating function for the $t$-enumeration of $N\times N$ ASMs that we described in \cref{sec:qsumrules}. Combining this with \eqref{eqn:ZADfromIKDet}, we obtain
\begin{equation}
  Z_{\textrm{\tiny AD}}(y=q) = x^N A(N;x^2).
  \label{eqn:ZADHom1}
\end{equation}

Kuperberg \cite{kuperberg:02} established the following relation between the generating functions $A_{\textrm {\tiny HT}}^\pm (2N;t)$ for the $t$-enumeration of HTASMs, see \cref{sec:qsumrules}, and the partition functions $Z_{\textrm {\tiny HT}}^\pm$:
\begin{equation}
  Z_{\textrm {\tiny HT}}^\pm(1,\dots,1;1,\dots,1) = [q]^{N(2N-1)}[q^2]^N A_{\textrm {\tiny HT}}^\pm (2N;x^2).
\end{equation}
Hence, \eqref{eqn:ZADfromHTDet} implies that
\begin{equation}
  Z_{\textrm {\tiny AD}}(y=1)= \frac{A_{\textrm {\tiny HT}}^+(2N;x^2)}{A(N;x^2)}, \quad Z_{\textrm {\tiny AD}}(y=\i)= \frac{\i^N A_{\textrm {\tiny HT}}^-(2N;x^2)}{A(N;x^2)}.
  \label{eqn:ZADHom2}
\end{equation}

Together with \eqref{eqn:ZADHom1} and \eqref{eqn:ZADHom2}, \cref{prop:ZADhom} leads to the results presented in \cref{sec:qsumrules}, provided that the homogeneous limit of the vector $|\psi_{\textrm{\tiny AD}}\rangle$ exists. This is addressed in the next section.

%%%%%%%%%
\subsubsection{Existence and polynomiality}
\label{eq:exiandpoly}
%%%%%%%%%

 As discussed in \cref{sec:homog.definitions}, the existence of $|\phi_{\text{\rm \tiny AD}}\rangle$ defined in \eqref{eq:phiADdef} is non-trivial considering the normalisation of $|\psi_{\text{\rm \tiny AD}}\rangle$ chosen in \cref{prop:psiAD.def}. 
 Our first aim is to prove that a specific homogeneous limit of $|\psi_{\text{\rm \tiny AD}}\rangle$ exists for $|q|=1$, that is for $-2\le x \le 2$.
\begin{proposition}
\label{prop:existencePhiAD}
  The limit
  \begin{equation}
    |\phi_{\text{\rm \tiny AD}}\rangle = \lim_{w_1,\dots,w_N \overset{\mathbb R}{\to} 1}|\psi_{\text{\rm \tiny AD}}(w_1,\dots,w_N)\rangle 
  \end{equation}
  exists for $|q|=1$. Here, $w_1,\dots,w_N\overset{ \mathbb R}{\to} 1$ indicates that the homogeneous limit is taken along real paths.
\begin{proof}
  Exceptionally, we include the parameter $q$ as an argument of the components, writing $\psi_{\text{\tiny AD}}(w_1,\dots,w_N;q)_\sigma$.
  For $w_1,\dots, w_N \in \mathbb R^\times$, we find
  \begin{align}
    Z(y=1;w_1,\dots,w_N) & = \sum_{\sigma}\psi_{\text{\tiny AD}}(w_1^{-1},\dots,w_N^{-1};q)_{\sigma}\psi_{\text{\tiny AD}}(w_1,\dots,w_N;q)_{\sigma}
   \label{eqn:truesn1}
    \\
    & = \sum_{\sigma}\psi_{\text{\tiny AD}}(w_1,\dots,w_N;q^{-1})_{\sigma}\psi_{\text{\tiny AD}}(w_1,\dots,w_N;q)_{\sigma}\nonumber 
    \\
    & = \sum_{\sigma}|\psi_{\text{\tiny AD}}(w_1,\dots,w_N;q)_{\sigma}|^2
   \nonumber
  \end{align}
where \cref{prop:qtoqinv} was used to obtain the second equality. For the third equality, we used $q^{-1}= q^\ast$ and the fact that by construction, the components of $|\psi_{\text{\rm \tiny AD}}\rangle$ contain no complex numbers other than $q$.
  It follows that 
  \begin{equation}
  \label{eq:psiZinequality}
    |\psi_{\text{\tiny AD}}(w_1,\dots,w_N;q)_{\sigma}|^2 \le Z(y=1;w_1,\dots,w_N)
  \end{equation}
   for any spin configuration $\sigma$. The right-hand side of this inequality is finite in the homogeneous limit according to \cref{prop:ZADhom}. We conclude that the rational function $\psi_{\text{\tiny AD}}(w_1,\dots,w_N;q)_{\sigma}$ is bounded in a sufficiently small real neighbourhood of the point $w_1=\cdots=w_N=1$. Hence, its homogeneous limit exists along any path where the $w_j$ are real.
   
  The proof also holds for the special values $q = q_c$ where $(q_c)^4=1$. Indeed, taking the limit $w_1, \dots, w_N \rightarrow 1$ and subsequently the limit $q \rightarrow q_c$ (as in \eqref{eq:specialhomog}) of the right-hand side of \eqref{eq:psiZinequality} produces a finite result, and the rest of the argument carries through.
\end{proof}
\end{proposition}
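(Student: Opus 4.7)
The difficulty is that $|\psi_{\text{\rm \tiny AD}}\rangle$ is constructed from the separation-of-variables basis via \eqref{eq:explicitEV}, but the normalisation factors $\llangle \bm h||\bm h\rrangle$ in \eqref{eqn:scalarprod} become singular as $w_1,\dots,w_N \to 1$. Individual terms in that expansion blow up, so termwise limits are useless, and one must instead control the canonical-basis components $\psi_\sigma(w_1,\dots,w_N;q) := \langle\sigma|\psi_{\text{\rm \tiny AD}}(w_1,\dots,w_N)\rangle$ directly. The natural handle is the sum rule $Z_{\text{\rm \tiny AD}}(y=1;w_1,\dots,w_N)$, which by \cref{prop:ZADhom} does admit a finite homogeneous limit.

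The plan is to interpret $Z_{\text{\rm \tiny AD}}(1;w_1,\dots,w_N)$ as a sum of squared moduli of components when $|q|=1$. First, \cref{prop:transposition} gives
\begin{equation*}
Z_{\text{\rm \tiny AD}}(1;w_1,\dots,w_N)=\sum_\sigma \psi_\sigma(w_1^{-1},\dots,w_N^{-1};q)\,\psi_\sigma(w_1,\dots,w_N;q).
\end{equation*}
Then \cref{prop:qtoqinv} rewrites the first factor as $\psi_\sigma(w_1,\dots,w_N;q^{-1})$. For $|q|=1$ we have $q^{-1}=\bar q$, and because the construction of $|\psi_{\text{\rm \tiny AD}}\rangle$ introduces no complex numbers other than $q$ itself (the system $(\mathcal B+\mathcal C)|\psi\rangle=0$ has otherwise real coefficients in the parameters), this factor coincides with $\overline{\psi_\sigma(w_1,\dots,w_N;q)}$ for real $w_j$. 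Thus
\begin{equation*}
|\psi_\sigma(w_1,\dots,w_N;q)|^2 \;\le\; Z_{\text{\rm \tiny AD}}(1;w_1,\dots,w_N)
\end{equation*}
for every $\sigma$, and by \cref{prop:ZADhom} the right-hand side has a finite limit along real paths $w_j\to 1$.

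The remaining step is to promote this uniform boundedness into existence of a genuine limit, and this is where I expect the main technical care to lie: a rational function of several real variables can be bounded near a common zero of its numerator and denominator without admitting a limit there (e.g.\ $(w_1-1)(w_2-1)/((w_1-1)^2+(w_2-1)^2)$). I would handle this by iterating the one-variable result: after fixing $w_2,\dots,w_N$ in a real neighbourhood of $1$, the map $w_1\mapsto \psi_\sigma(w_1,w_2,\dots,w_N;q)$ is a rational function of a single variable that is bounded near $w_1=1$, so its singularity there is removable. The resulting function of $(w_2,\dots,w_N)$ is again rational and, by the same squared-sum argument applied after this first limit, is again bounded; induction on the number of parameters then yields the limit vector $|\phi_{\text{\rm \tiny AD}}\rangle\in(\mathbb C^3)^{\otimes N}$.

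Finally, the exceptional roots $q=q_c$ with $q_c^4=1$ fall outside the hypothesis $|q|=1$ only in that the previous arguments used $q\ne\pm 1,\pm\i$ implicitly when invoking invertibilities; but the sum-rule bound survives under the swapped order of limits $w\to 1$ first, then $q\to q_c$, precisely because $Z_{\text{\rm \tiny AD}}(1;\mathbf 1)$ extends continuously through these points by the polynomial expression of \cref{prop:ZADhom}. The principal obstacle is therefore not the bound itself but the passage from boundedness to the multivariate limit; the inductive one-variable reduction is the cleanest route I see.
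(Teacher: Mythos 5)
Your central estimate is exactly the one the paper uses: combining \cref{prop:transposition,prop:qtoqinv} with $q^{-1}=\bar q$ for $|q|=1$ to identify $Z_{\text{\rm\tiny AD}}(1;w_1,\dots,w_N)$ with $\sum_\sigma|\psi_\sigma(w_1,\dots,w_N;q)|^2$ for real $w_j$, and then invoking the finiteness of its homogeneous limit (\cref{prop:ZADhom}) to bound every component uniformly in a real neighbourhood of $w_1=\cdots=w_N=1$. Up to that point the two arguments coincide, including the treatment of $q=q_c$ with $q_c^4=1$ by exchanging the order of limits.

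Where you diverge is in the passage from boundedness to convergence, and there your proposed repair does not achieve what you want. You rightly observe that a bounded rational function of several real variables need not converge at a point; but the cure you offer --- taking the limits one variable at a time --- is defeated by your own counterexample: for $f(w_1,w_2)=(w_1-1)(w_2-1)/\bigl((w_1-1)^2+(w_2-1)^2\bigr)$ both iterated limits exist and equal $0$, yet the joint limit does not exist ($f\equiv 1/2$ on the diagonal). So your induction produces a well-defined vector of iterated limits, but it proves neither that $|\psi_{\text{\rm\tiny AD}}\rangle$ converges as $(w_1,\dots,w_N)\to(1,\dots,1)$ nor that the answer is independent of the order in which the variables are sent to $1$; it also does not directly give the statement actually claimed, namely convergence along an arbitrary real path. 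The paper sidesteps the multivariable issue by claiming less: the limit is asserted only \emph{along real paths}, and along a fixed (rational or real-analytic) path the restriction of $\psi_\sigma$ is a one-variable meromorphic function which, being bounded, has a removable singularity --- no several-variable argument is needed. That one-sentence reduction (restrict to the path first, then remove the singularity) is the fix your write-up is missing. The question of path-independence is genuinely subtler and is taken up separately in the paper: \cref{prop:exist4allq} computes the limit along the specific curve $w_j=t^j$, and the extension to arbitrary curves is explicitly left open there.
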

We note that the proof can easily be adapted to $q\in \mathbb R^\times$, that is for $x\in \mathbb R$ with $|x|\ge 2$. In that case however, the limit $w_1,\dots,w_N\to 1$ must be taken along paths with $|w_1|=\cdots= |w_N|=1$. 
This proves the existence of certain specific homogeneous limits of $|\psi_{\text{\tiny AD}} \rangle$,  for all $x \in \mathbb R$. Although we presently lack a complete proof, we believe that for all $x \in \mathbb C$, the limit in \eqref{eq:phiADdef} exists in full generality.

Our next goal is to investigate the polynomiality (in $x$) of each component of $|\phi_{\text{\tiny AD}}\rangle$. We proceed by taking a particular homogeneous limit, along the curve $w_j = t^j, j = 1, \dots, N$.
\begin{proposition}\label{prop:exist4allq}
The limit
\begin{equation}
\lim_{t \to 1} |\psi_{\text{\rm \tiny AD}}(t, t^2, \dots,t^N)\rangle\label{eq:limit2}
\end{equation}
exists for all $q \in \mathbb C^\times$.
\end{proposition}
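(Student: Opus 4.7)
The plan is to reduce the problem to a question about the regularity of a rational function of a single variable, and then promote the conclusion from $|q|=1$ to all $q \in \mathbb{C}^\times$ by analytic continuation.

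The first step is to observe that for each canonical basis index $\sigma$, the component $\psi_{\text{\rm \tiny AD}}(w_1,\dots,w_N)_\sigma$ is a rational function of $w_1,\dots,w_N$ with coefficients in $\mathbb{Q}(q,q^{-1})$. This follows from the construction of $|\psi_{\text{\rm \tiny AD}}\rangle$ in \cref{prop:psiAD.def} combined with the completeness relation \eqref{eqn:completeness} and the explicit rational formulas \eqref{eqn:projections}, \eqref{eqn:scalarprod}. Specialising $w_j = t^j$, the components become rational functions $f_\sigma(t;q) \in \mathbb{Q}(q)(t)$ of a single variable $t$, and the proposition reduces to the claim that each $f_\sigma(\cdot;q)$ is regular at $t=1$ for every $q \in \mathbb{C}^\times$.

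The second step handles the case $|q|=1$ directly. For such $q$ (with $q^4 \neq 1$), the specialisation $w_j = t^j$ with $t \in \mathbb R$ places the $w_j$ on the real axis, so the argument in the proof of \cref{prop:existencePhiAD} applies: the bound \eqref{eq:psiZinequality} remains finite as $t \to 1$ along the reals, and hence $f_\sigma(t;q)$ is bounded as $t \to 1$ along $t \in \mathbb R$. A rational function of $t$ that is bounded in any real neighbourhood of a point is regular at that point, so $f_\sigma(\cdot;q)$ has no pole at $t=1$ whenever $|q|=1$ and $q^4 \neq 1$.

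The third step extends this to all $q \in \mathbb{C}^\times$ by analytic continuation in $q$. Suppose for contradiction that $f_\sigma(t;q_0)$ has a pole of some order $k \ge 1$ at $t=1$ for some $q_0 \in \mathbb{C}^\times$. The Laurent coefficient $c_{-k}(q)$ of $(t-1)^{-k}$ in the expansion of $f_\sigma(t;q)$ around $t=1$ is itself a rational function of $q$. By the second step, $c_{-k}(q)=0$ for all $q$ on the unit circle away from $q^4=1$. Since the unit circle contains an accumulation point and rational functions of one variable are determined by their values on any set with an accumulation point, this forces $c_{-k} \equiv 0$, contradicting $c_{-k}(q_0)\neq 0$. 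Hence no such pole exists and the limit \eqref{eq:limit2} is finite for every $q \in \mathbb{C}^\times$. The special values $q^4=1$, where the separation of variables basis degenerates, can be handled by the same limiting procedure as in \eqref{eq:specialhomog}, using that the limit already exists for $q$ in a punctured neighbourhood.

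The main obstacle is ensuring that the rational-function framework is valid for every $q$: one must verify that the reduction of $f_\sigma(t;q)$ modulo specialisation at a fixed $q_0$ is well-defined and that no additional poles at $t=1$ are created by coincidental cancellations in numerator and denominator. This is precisely what the Laurent coefficient argument, combined with the uniqueness of rational functions on the unit circle, is designed to rule out.
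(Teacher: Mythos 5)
Your overall strategy coincides with the paper's: establish the limit for $|q|=1$ via the norm bound of \cref{prop:existencePhiAD}, observe that a rational function of $t$ bounded on a real neighbourhood of $t=1$ is regular there, and then propagate regularity to all $q\in\mathbb C^\times$ using the fact that a polynomial (or rational) function of $q$ vanishing on a continuum must vanish identically. Steps one and two are fine. However, step three contains a genuine gap.

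The problem is the claim that ``the Laurent coefficient $c_{-k}(q)$ of $(t-1)^{-k}$ in the expansion of $f_\sigma(t;q)$ around $t=1$ is itself a rational function of $q$,'' and that its vanishing on the unit circle rules out a pole at any $q_0$. There are two inequivalent notions here. If $c_{-k}(q)$ denotes the formal Laurent coefficient computed over the field $\mathbb Q(q)$, it is indeed rational in $q$, but its value at a particular $q_0$ need not equal the actual Laurent coefficient of the specialised function $f_\sigma(\,\cdot\,;q_0)$: the order of vanishing of the denominator at $t=1$ can jump at special values of $q$. If instead $c_{-k}(q)$ denotes the pointwise Laurent coefficient of each specialisation, it need not be rational (or even continuous) in $q$. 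A concrete counterexample to the argument as written is $f(t;q)=\bigl((q-2)+(t-1)\bigr)^{-1}$: every generic negative Laurent coefficient at $t=1$ vanishes identically as a rational function of $q$, and $f(\,\cdot\,;q)$ is regular at $t=1$ for all $q$ on the unit circle, yet $f(\,\cdot\,;2)=1/(t-1)$ has a pole. So vanishing of the generic negative coefficients, certified on $|q|=1$, does not by itself exclude poles at exceptional $q_0$. What closes this loophole in the paper is the explicit factorised form \eqref{eq:DENOM}--\eqref{eq:p2} of the denominator: it shows that for $q^4\neq 1$ the order of vanishing of $p_2(q,t)$ at $t=1$ is a fixed integer $n=\sum_{i<j}n_{ij}$ independent of $q$, with leading coefficient $\alpha(q)$ an explicit product that is non-zero for $q\in\mathbb C^\times$ with $q^4\neq1$. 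One then deduces that the \emph{numerator} coefficients $p_{1,k}(q)$, which are honest polynomials in $q$, vanish identically for $k<n$, and the limit equals $p_{1,n}(q)/\alpha(q)$. Your proof needs this uniform control of the denominator (or an equivalent statement that $f_\sigma$, written in lowest terms over $\mathbb C[q,t]$, has a denominator whose order of vanishing at $t=1$ does not increase at special $q$); without it the analytic-continuation step does not go through. A secondary point: your final remark invoking the iterated limit \eqref{eq:specialhomog} for $q^4=1$ addresses a different object than the one in the proposition; the values $q=\pm1,\pm\i$ lie on the unit circle and are already covered by your step two (and by \cref{prop:existencePhiAD}), so no separate device is needed there.
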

\begin{proof}
Each component $\psi_{\text{\rm \tiny AD}}(w_1, w_2, \dots, w_N; q)_\sigma$ can be written the ratio of two polynomials in $N+1$ variables: $q$~and the $w_j$, $j = 1, \dots, N$. From the explicit construction of $|\psi_{\text{\tiny AD}}(w_1, \dots, w_N)\rangle$, we find that the polynomial in the denominator has the form
\begin{equation}\label{eq:DENOM}
q^k \prod_{i=1}^N(w_i)^{k_i} \prod_{1\le i<j\le N} [w_i/w_j]^{n_{ij}} \prod_{i,j=1}^N [q w_i/w_j]^{m_{ij}} [q^2 w_i/w_j]^{\ell_{ij}}
\end{equation}
where $k, k_i, n_{ij}, m_{ij}$ and $\ell_{ij}$ are positive integers that depend on $\sigma$. 
We now specialise the inhomogeneity parameters to $w_j = t^j$, $j = 1, \dots, N$, where $t$ is a free parameter. For every $\sigma$, the function 
\begin{equation}
f(q,t) = \psi_{\text{\rm \tiny AD}}(t, t^2, \dots, t^N; q)_\sigma
\end{equation}
is a rational function in both $q$ and $t$ and is therefore expressible as
\begin{equation}
f(q,t) = \frac{p_1(q,t)}{p_2(q,t)}
\end{equation}
where $p_1(q,t)$ and $p_2(q,t)$ are polynomials in the two variables $q$ and $t$.

The expression \eqref{eq:DENOM} for the denominator of $\psi_{\text{\rm \tiny AD}}(w_1, w_2, \dots, w_N; q)_\sigma$
allows us to write
\begin{equation}\label{eq:p2}
p_2(q,t) = q^{k} t^{k'} \prod_{1\le i<j \le N} [t^{i-j}]^{n_{ij}} \prod_{i,j=1}^N [q t^{i-j}]^{m_{ij}} [q^2t^{i-j}]^{\ell_{ij}}, \quad k' = \sum_{i=1}^N i\, k_i.
\end{equation}
Each $[t^{i-j}]$ contributes a single pole at $t=1$, so the total degree $n$ of that pole is given by $n = \sum_{1\le i<j\le N}n_{ij}$.

Because $p_1(q,t)$ is a polynomial in $t$, we can expand it around $t=1$ as
\begin{equation}\label{eq:p1}
p_1(q,t) = \sum_{k=0}^d (t-1)^k p_{1,k}(q),
\end{equation}
where each $p_{1,k}(q)$ is a polynomial in $q$. 

By \cref{prop:existencePhiAD}, the limit \eqref{eq:limit2} exists for $|q|=1$. This implies that the limit
\begin{equation}\label{eq:limitcircle}
\lim_{t \rightarrow 1} f(q = e^{\i \theta},t) = \kappa(e^{\i \theta})
\end{equation}
exists for all $\theta \in \mathbb R$.
Indeed, the case where the limit $t\rightarrow 1$
is taken along a real path is covered by \cref{prop:existencePhiAD}, so the limit exists along such a path. Because $f(e^{\i \theta},t)$ is a rational function of $t$, the limit \eqref{eq:limitcircle} exists even along complex paths.

Rewriting this in terms of \eqref{eq:p2} and \eqref{eq:p1}, we find
\begin{alignat}{2}
\kappa(e^{\i \theta}) 
&= \lim_{t \rightarrow 1} \frac{\sum_{k=0}^d (t-1)^k p_{1,k}(e^{\i \theta})}{(e^{\i\theta})^{k} t^{k'} \prod_{i<j} [t^{i-j}]^{n_{ij}} \prod_{i,j=1}^N [e^{\i \theta} t^{i-j}]^{m_{ij}} [(e^{\i \theta})^2t^{i-j}]^{\ell_{ij}}} \nonumber\\
&= \frac{1}{\alpha(e^{i \theta})} \lim_{t\rightarrow1} \sum_{k=0}^d (t-1)^{k-n} p_{1,k}(e^{\i \theta}),
\end{alignat}
with $\alpha(e^{i \theta})$ given by
\begin{equation}
\alpha(e^{\i \theta}) = e^{k\i \theta} \Big(\prod_{i<j} \big(2(i-j)\big)^{n_{ij}} \Big) \prod_{i,j=1}^N[e^{\i \theta}]^{m_{ij}} [e^{2\i \theta}]^{\ell_{ij}}.
\end{equation}
For $\theta \in \mathbb R$, $\alpha(e^{\i \theta})$ is zero only for $e^{\i \theta} = \pm 1, \pm \i$.

It follows that $p_{1,k}(q=e^{\i \theta}) = 0$ for $k = 0, \dots, n-1$, for a continuum of values of $q$ on the unit circle. Because each $p_{1,k}(q)$ is a polynomial in $q$, it can have only finitely many isolated zeroes on the unit circle, and therefore $p_{1,k}(q) = 0$ for $k = 0, \dots, n-1$, for all $q \in \mathbb C$. As a consequence, the limit
\begin{equation}
\label{eq:fqt}
\lim_{t\rightarrow1} f(q,t) = \frac{1}{\alpha(q)} \lim_{t \rightarrow 1} \sum_{k = n}^d (t-1)^{k-n} p_{1,k}(q) = \frac{p_{1,n}(q)}{\alpha(q)}
\end{equation} 
exists for all $q \in \mathbb C^\times$, and along any complex path for $t$. The seemingly problematic values $q = \pm \i, \pm 1$ were already covered by \cref{prop:existencePhiAD}, whereas the point $q = 0$ is excluded because $\alpha(q=0) = 0$. 
\end{proof}
The proof above uses the special choice $w_j = t^j$, but can in fact be repeated for any choice of rational functions $w_j = g_j(t)$ satisfying these conditions: \textit{(i)} $g_j(t) \in \mathbb R$ for $t \in \mathbb R$, \textit{(ii)} $g_j(1) = 1$ and \textit{(iii)} $g_i(t) \neq g_j(t)$ for $i \neq j$. The extension of the above proof to any curve $w_j = g_j(t)$ is probably feasible, but is beyond the scope of the current work.

That each component of $|\phi_{\textrm{\tiny AD}}\rangle$ is polynomial in $x$ is now straightforward. In the proof of \cref{prop:exist4allq}, we found in \eqref{eq:fqt} a rational expression for each $(\phi_{\text{\rm \tiny AD}})_\sigma$ which
has no singularities for $q \in \mathbb C^\times$.
Each component is thus a Laurent polynomial in $q$: 
\begin{equation}(\phi_{\text{\rm \tiny AD}})_\sigma = \sum_{k=-r}^{s} a_k q^k.\end{equation} From \eqref{prop:qtoqinv}, we find that 
\begin{equation}
(\phi_{\text{\rm \tiny AD}})_\sigma\big|_{q \rightarrow q^{-1}} = (\phi_{\text{\rm \tiny AD}})_\sigma,
\end{equation} 
which implies that $r = s$ and $a_k = a_{-k}$ for $k = 1, \dots, s$. One can then express $(\phi_{\text{\rm \tiny AD}})_\sigma$ in terms of $x$ by using $q^k + q^{-k} = 2T_k(\frac x 2)$, where $T_k(x)$ is the $k$-th Chebyshev polynomial of the first kind. 
This ends our discussion of the polynomiality of $|\phi_{\textrm{\tiny AD}}\rangle$.

%%%%%%%%%
\subsection{Mixed scalar products}
\label{sec:overlap}
%%%%%%%%%

\subsubsection{Inhomogeneous case}

Let us consider the mixed scalar product between the eigenvectors of the transfer matrices with diagonal and anti-diagonal twists:
\begin{equation}
  Z_{\text{\rm \tiny M}}(w_1,\dots, w_N)=\langle \psi_{\text{\rm \tiny{D}}}| \psi_{\text{\rm \tiny{AD}}}\rangle.
\end{equation}
Because the transfer matrices with the diagonal and anti-diagonal twist are seemingly unrelated, this is definitely not a natural scalar product to investigate. It nevertheless turns out to be interesting because of its connection with quarter-turn symmetric ASMs.

For odd $N$, this scalar product
vanishes according to \cref{prop:magnetisation} because the two states belong to different eigenspaces of the operator $(-1)^M$.
For even $N =2n$, we will show that $Z_{\text{\rm \tiny M}}(w_1,\dots, w_{2n})$ coincides with the partition function $Z_{\text{\rm \tiny{QT}}}(w_1,\dots,w_{2n})$ of the six-vertex model on an $2n\times 2n$ square with quarter-turn boundary conditions. The corresponding inhomogeneous lattice is illustrated in \cref{fig:6vqt} for $N=4$.
The vertex weights are given by the functions $\mathfrak a(z),\mathfrak b(z),\mathfrak c(z)$ in \cref{fig:6v}. Kuperberg 
\cite{kuperberg:02}
showed that this partition function can be written as
\begin{subequations}
\begin{equation}
  Z_{\text{\rm \tiny{QT}}}(w_1,\dots,w_{2n}) = [q^2]^n [q]^{3n} Z_{\text{\rm \tiny{QT}}}^{(1)}(w_1,\dots,w_{2n})Z_{\text{\rm \tiny{QT}}}^{(2)}(w_1,\dots,w_{2n})
\end{equation}
\end{subequations}
where
\begin{equation} 
Z_{\text{\rm \tiny{QT}}}^{(k)}(w_1,\dots,w_{2n}) = \Bigg(\prod_{1\le i<j \le 2n}\frac{[q w_i/w_j][qw_j/w_i]}{[w_j/w_i]}\Bigg)\Pf{i,j=1}{2n}\left(\frac{[(w_j/w_i)^k]}{[q w_i/w_j] [q w_j/w_i]
}\right).
\label{eqn:defZQTk}
\end{equation}
\begin{figure}[h]
\centering
\begin{tikzpicture}
 \draw (0,0) grid[step=.5cm] (1.5,1.5);
    \foreach \x in {0,0.5,1,1.5}
    {
      \draw[postaction={on each segment={mid arrow}}] (-.5,\x)--(0,\x);
      \draw (1.5,\x)--(1.75,\x);
      \draw[postaction={on each segment={mid arrow}}] (\x,0)--(\x,-0.5);
      \draw (\x,1.5)--(\x,1.75);
    }
    
    \draw[densely dotted] (1.75,1.75) -- (3.25,3.25);
    
    \draw (1.75,1.5) arc [start angle=-90, end angle = 180, radius=.25cm];
    \draw (1.75,1.) arc [start angle=-90, end angle = 180, radius=.75cm];
    \draw (1.75,.5) arc [start angle=-90, end angle = 180, radius=1.25cm];
    \draw (1.75,0) arc [start angle=-90, end angle = 180, radius=1.75cm];
   
    \draw (0,-.5) node [below] {$w_1$};
    \draw (.5,-.5) node [below] {$w_2$};
    \draw (1.,-.5) node [below] {$w_3$};
    \draw (1.5,-.5) node [below] {$w_4$};
    
    \draw (-.5,0) node [left] {$w_1$};
    \draw (-.5,.5) node [left] {$w_2$};
    \draw (-.5,1) node [left] {$w_3$};
    \draw (-.5,1.5) node [left] {$w_4$};

\end{tikzpicture}
\caption{A $4 \times 4$ square lattice domain with quarter-turn boundary conditions. The dotted line indicates orientation reversal upon crossing.}
\label{fig:6vqt}
\end{figure}
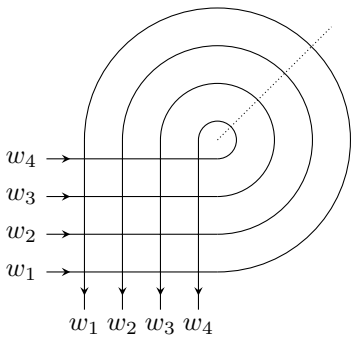

\begin{proposition}\label{prop:Z0ZQT}
  For $N=2n$, we have
  \begin{equation}
    Z_{\text{\rm \tiny{M}}}(w_1,\dots,w_{2n})
    =Z_{\text{\rm \tiny{QT}}}(w_1,\dots,w_{2n}).
  \end{equation}
  \begin{proof}
    The proof differs from most other results in this article involving partition functions of the six-vertex model, in the sense that it does not consist of a straightforward calculation. Instead, we follow Kuperberg \cite{kuperberg:02} who proved that $Z_{\text{\rm \tiny{QT}}}(w_1,\dots,w_{2n})$ is fixed up to a multiplicative constant by the following properties. First, it is an odd, symmetric function in its arguments. Second, with respect to each argument, it is a centred Laurent polynomial of degree width at most $2(4n-3)$. 
    Third, it obeys a special recurrence relation which relates the partition functions for $N=2n$ and $N' = 2n-2$.
    We establish these properties for $Z_{\text{\rm \tiny{M}}}(w_1,\dots,w_{2n})$ in \cref{cor:odd} and \cref{lem:ZOsym,lem:ZOlaurent,lem:ZOrecurrence}. The remaining constant is fixed by verifying that the statement holds for $n=1$.
  \end{proof}
\end{proposition}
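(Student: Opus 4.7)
The strategy, as the statement suggests, is to establish for $Z_{\text{M}}$ the short list of algebraic properties that Kuperberg \cite{kuperberg:02} shows pin down $Z_{\text{QT}}$ up to a multiplicative constant, and then fix the constant by direct computation at $n=1$. Those properties are: (i) full symmetry in $w_1,\dots,w_{2n}$; (ii) an oddness (or antisymmetry under $w_j\to -w_j$); (iii) with respect to each $w_j$, $Z_{\text{M}}$ is a centred Laurent polynomial of degree width at most $2(4n-3)$; (iv) the same special-value recurrence relating $Z_{\text{M}}$ at sizes $2n$ and $2(n-1)$ that $Z_{\text{QT}}$ satisfies.

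Symmetry is the first target. The exchange relation of \cref{prop:exchange} holds uniformly for $|\psi_{\text{D}}\rangle$ and $|\psi_{\text{AD}}\rangle$ with the same prefactor $[qw_j/w_{j+1}][q^2w_j/w_{j+1}]$. Using \cref{prop:transposition} to translate the relation into one for $\langle \psi_{\text{D}}|$ (in inverted parameters), together with the inversion relation and crossing properties of $\check R^{(2,2)}$ derived from the fusion procedure of \cref{sec:Rmat+fusion}, one can sandwich $\check R_{j,j+1}(w_j/w_{j+1})\check R_{j,j+1}(w_{j+1}/w_j)$ inside $\langle \psi_{\text{D}}|\psi_{\text{AD}}\rangle$ and redistribute the factors to relabel $(w_j,w_{j+1})\mapsto(w_{j+1},w_j)$. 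Since adjacent swaps generate $S_{2n}$, the full symmetric-group invariance follows.

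The oddness and the polynomial degree bound are bookkeeping based on the explicit construction of the two eigenvectors. For $|\psi_{\text{D}}\rangle=\prod_j\mathcal B(w_j)|\wedge\rangle$, the dependence on each $w_j$ is read off from the entries of the monodromy matrix: each $R^{(1,2)}$ contributes a centred Laurent polynomial of degree width 2 in every spectral parameter. The SoV projections \eqref{eqn:projections} of $|\psi_{\text{AD}}\rangle$ are likewise controlled centred Laurent polynomials in each $w_j$, and the normalisation factor \eqref{eqn:scalarprod} appearing in \eqref{eq:explicitEV} accounts for the visible denominators. Summing the maximal $w_j$-degrees from the ket and the bra sides, combined with the parity of the factors $[\,\cdot\,]$ under $z\to-z$, yields both the claimed degree width $2(4n-3)$ and the antisymmetry under $w_j\to -w_j$.

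The main obstacle is property (iv), the recurrence. In Kuperberg's setup, at a specialisation such as $w_{2n}=q^{-1}w_{2n-1}$ (or a related variant dictated by the pfaffian \eqref{eqn:defZQTk}), $Z_{\text{QT}}(w_1,\dots,w_{2n})$ factorises as an explicit prefactor times $Z_{\text{QT}}(w_1,\dots,w_{2n-2})$. To establish the same for $Z_{\text{M}}$, the plan is to specialise the inhomogeneities accordingly and exploit the product form \eqref{eqn:defpsiD} of $|\psi_{\text{D}}\rangle$ together with the SoV expansion \eqref{eq:explicitEV} of $|\psi_{\text{AD}}\rangle$; at the special value, many height configurations are forced by vanishing of $a(\cdot)$ or $d(\cdot)$ factors in \eqref{eqn:bcd}, so that the scalar product collapses to the scalar product at $N=2n-2$ times an explicit prefactor. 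Controlling this collapse, particularly on the anti-diagonal side where only the SoV representation is available, is where the real work lies. The base case $n=1$ is a short direct calculation of $\langle\psi_{\text{D}}(w_1,w_2)|\psi_{\text{AD}}(w_1,w_2)\rangle$ and comparison with \eqref{eqn:defZQTk} specialised at $n=1$, which fixes the constant to~$1$ and completes the proof.
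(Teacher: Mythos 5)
Your overall strategy is exactly the paper's: invoke Kuperberg's characterisation of $Z_{\text{\rm \tiny QT}}$ by symmetry, oddness, a degree-width bound, and a recurrence, and fix the constant at $n=1$. Your symmetry argument via the exchange relation of \cref{prop:exchange} and the inversion relation for $\check R$ is also the paper's \cref{lem:ZOsym}. However, there is a genuine gap in how you propose to establish the degree bound and the recurrence. The paper's central technical device, which you are missing, is an explicit summation formula for $Z_{\text{\rm \tiny M}}$ (equation \eqref{eqn:ZO}): a sum over cardinality-$n$ subsets $I\subset\{1,\dots,2n\}$ of signed products of monomials times Izergin--Korepin-type determinants, obtained by expanding $\langle\psi_{\text{\rm \tiny D}}|\psi_{\text{\rm \tiny AD}}\rangle$ in the separation-of-variables basis and recognising each $\langle\wedge|\prod_j\mathcal C(w_j)||\bm h\rrangle$ as a specialised domain-wall partition function. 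Oddness, the Laurent property, the bound $2(4n-3)$, and the recurrence are all read off from this formula.

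Your proposed ``bookkeeping'' for the degree bound will not work as described. Summing the maximal $w_j$-degrees of the ket and bra separately gives a bound far in excess of $4n-3$: the parameter $w_1$ enters $|\psi_{\text{\rm \tiny D}}\rangle$ both as a spectral parameter (a full row of $2n$ vertices, each of degree width $2$ in $w_1$) and as a column inhomogeneity (another $2n$ vertices), so naive counting gives order $8n$. Worse, the SoV expansion \eqref{eq:explicitEV} of $|\psi_{\text{\rm \tiny AD}}\rangle$ carries genuine denominators $[w_i/w_j]$ and $[q^{h}w_i/w_j]$ from \eqref{eqn:scalarprod}, so individual terms of the scalar product \emph{do} have poles at $w_i=\pm w_j$ and $w_i=\pm q^{\pm1}w_j$; establishing that $Z_{\text{\rm \tiny M}}$ is a Laurent polynomial at all requires showing these poles cancel between summands (in the paper, the $w_1=\pm w_2$ pole cancels pairwise between the subsets $I=\{1,i_2,\dots\}$ and $I'=\{2,i_2,\dots\}$, while the $w_1=\pm q^{\pm1}w_j$ poles are killed by the prefactor $\prod_i a(w_i)$). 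This cancellation analysis, and the subsequent power counting at $w_1\to0,\infty$ that yields exactly $4n-3$, cannot be done without an explicit closed form. Similarly, you correctly identify the recurrence at $w_{2n}= q w_{2n-1}$ as the crux but explicitly defer it (``where the real work lies''); in the paper it again follows from the summation formula, by isolating the divergent terms with $2n-1\in I$, $2n\in J$. So the proposal is a correct plan with the right skeleton, but the load-bearing intermediate result --- the explicit expansion of $Z_{\text{\rm \tiny M}}$ over height configurations in terms of six-vertex partition functions --- is absent, and without it three of the four required properties are not actually proved.
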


\begin{lemma}
  \label{lem:ZOsym}
  $ Z_{\text{\rm \tiny M}}(w_1,\dots, w_N)$ is a symmetric function in the $w_j$.   \begin{proof}
    The proof is a straightforward application of the exchange relations found in \cref{prop:exchange}. The matrix $\check R(z)$ defined in \eqref{eqn:Rcheck} is symmetric and has the property
    \begin{equation}
      \check R(z^{-1})\check R(z) = \left([q/z][q^2/z][q z][q^2 z] \right)\bm{1}.
    \end{equation}
    Combining this with \cref{prop:transposition,prop:exchange}, we find
  \begin{align} 
    Z_{\text{\rm \tiny M}}(\dots, w_j,w_{j+1},\dots) & = \quad  \frac{\langle \psi_{\text{\rm \tiny D}}(\dots, w_j,w_{j+1},\dots)|\check R_{j,j+1}\left(\frac{w_{j+1}}{w_j}\right)}{[qw_{j+1}/w_j][q^2w_{j+1}/w_j]}\nonumber\\
    & \hspace{1.5cm}\qquad\times \frac{\check R_{j,j+1}\left(\frac{w_{j}}{w_{j+1}}\right)|\psi_{\text{\rm \tiny AD}}(\dots, w_j,w_{j+1},\dots)\rangle}{[qw_{j}/w_{j+1}][q^2w_{j}/w_{j+1}]}\\
    &=  \langle \psi_{\text{\rm \tiny D}}(\dots, w_{j+1},w_{j},\dots)|\psi_{\text{\rm \tiny AD}}(\dots,w_{j+1}, w_j,\dots)\rangle\nonumber\\
    & = Z_{\text{\rm \tiny M}}(\dots,w_{j+1}, w_j,\dots) \nonumber
  \end{align}
  for 
  $j=1,\dots,N-1$. 
  By repeatedly permuting pairs $(w_j,w_{j+1})$, we find that $ Z_{\text{\rm \tiny M}}(w_1,\dots, w_N)$ is invariant under the exchange of $w_i$ and $w_j$, for $i,j = 1, \dots, N$.
  \end{proof}
  
\end{lemma}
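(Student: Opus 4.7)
The plan is to show that $Z_{\text{\rm \tiny M}}$ is invariant under the exchange of any two consecutive inhomogeneities $w_j$ and $w_{j+1}$; since adjacent transpositions generate the full symmetric group, this will imply the claim. The main ingredient is the exchange relation of \cref{prop:exchange}, which asserts that, for both $|\psi_{\text{\rm \tiny D}}\rangle$ and $|\psi_{\text{\rm \tiny AD}}\rangle$, the action of $\check R_{j,j+1}(w_j/w_{j+1})$ swaps $w_j$ and $w_{j+1}$ in the argument, up to a scalar factor $[qw_j/w_{j+1}][q^2 w_j/w_{j+1}]$.

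Two properties of $\check R(z) = P R^{(2,2)}(z)$ will enter in a decisive way. First, $\check R(z)$ is symmetric, $\check R(z)^t = \check R(z)$, as can be read off from the explicit form of $R^{(2,2)}(z)$ in \cref{sec:Rmat+fusion}. Second, $\check R(z)$ satisfies an inversion (unitarity) relation of the form $\check R(z^{-1})\check R(z) = [q/z][q^2/z][qz][q^2z]\,\bm{1}$, which is inherited from the inversion relation \eqref{eq:invrel} of $R^{(1,1)}(z)$ through the fusion procedure.

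The concrete steps are as follows. In $Z_{\text{\rm \tiny M}} = \langle \psi_{\text{\rm \tiny D}}|\psi_{\text{\rm \tiny AD}}\rangle$, I would insert a rescaled product $\check R_{j,j+1}(w_{j+1}/w_j)\check R_{j,j+1}(w_j/w_{j+1})$ between bra and ket using the inversion relation. On the right, \cref{prop:exchange} converts $\check R_{j,j+1}(w_j/w_{j+1})|\psi_{\text{\rm \tiny AD}}\rangle$ into $[qw_j/w_{j+1}][q^2 w_j/w_{j+1}]$ times the ket with $w_j$ and $w_{j+1}$ swapped. On the left, I would invoke \cref{prop:transposition} to rewrite $\langle \psi_{\text{\rm \tiny D}}(w_1,\dots,w_N)| = |\psi_{\text{\rm \tiny D}}(w_1^{-1},\dots,w_N^{-1})\rangle^t$, and then use the self-transpose property of $\check R$ together with a second application of \cref{prop:exchange} at inverted parameters to derive the companion exchange relation for the bra. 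This produces a scalar $[qw_{j+1}/w_j][q^2 w_{j+1}/w_j]$. The product of the two scalars exactly matches the prefactor coming from the inversion relation, so it cancels and one recovers $Z_{\text{\rm \tiny M}}$ with $w_j$ and $w_{j+1}$ interchanged.

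The main obstacle is the bookkeeping on the bra side, since \cref{prop:exchange} is stated only for kets; deriving its dual counterpart by combining the symmetry $\check R(z)^t = \check R(z)$ with \cref{prop:transposition} is the only delicate point. Once this is in place, the rest is simply checking that all scalar prefactors multiply to unity.
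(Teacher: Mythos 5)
Your proposal is correct and follows essentially the same route as the paper: insert the identity via the inversion relation $\check R(z^{-1})\check R(z)=[q/z][q^2/z][qz][q^2z]\,\bm{1}$, apply \cref{prop:exchange} to the ket, and obtain the dual exchange relation for the bra by combining \cref{prop:transposition} with the symmetry of $\check R$ applied at inverted parameters, after which the scalar prefactors cancel exactly as you describe. The paper's proof is precisely this argument, so there is nothing to add.
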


Next, we wish to establish the degree width as well as a recurrence relation for the scalar product $Z_{\text{\rm \tiny M}}(w_1,\dots, w_N)$. To this end, we derive an explicit summation formula.
  
\begin{lemma} The scalar product is given by
  \begin{align}
  Z_{\text{\rm \tiny M}}(w_1,\dots, w_{2n})  = \frac{\prod_{j=1}^{2n} a(w_j)}{\prod_{1\le i<j \le 2n }[w_i/w_j]} \sum_{\substack{I\sqcup J = \{1,\dots,2n\}\\ |I|=n}} & \epsilon(I,J) \left(\prod_{i \in I,\,j\in J}[q^2 w_i/w_j][q^{-1} w_i/w_j] \right)\nonumber\\
  & \times \det_{i \in I,\,j\in J}\left(\frac{[q^2]}{[q w_i/w_j][q w_j/w_i]}\right).
  \label{eqn:ZO}
\end{align}
The sum runs over the subsets $I$ of $\{1,\dots,2n\}$ of cardinality $n$,
and $I\sqcup J$ denotes the disjoint union of $I$ and $J$, meaning that $I \cap J = \emptyset$.
Furthermore,
 $\epsilon(I,J)$ is defined as
\begin{equation}
  \epsilon(I,J)=\mathrm{sgn}\left(
  \begin{array}{ccccc}
   1 & 2 & \cdots & 2n-1 & 2n\\
   i_1 & j_1 & \cdots & i_n & j_n
  \end{array}
  \right)
\end{equation}
where $i_1 < i_2< \cdots < i_n$ and $j_1 < j_2< \cdots < j_n$ are the ordered elements of $I$ and $J$.
\end{lemma}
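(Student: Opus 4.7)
The plan is to evaluate $Z_{\text{\rm \tiny M}}$ by inserting the separation of variables completeness relation \eqref{eqn:completeness}:
\begin{equation*}
Z_{\text{\rm \tiny M}}(w_1,\dots,w_{2n}) = \sum_{\bm h}\frac{\langle \psi_{\text{\rm \tiny D}}||\bm h\rrangle\, \llangle \bm h||\psi_{\text{\rm \tiny AD}}\rangle}{\llangle \bm h||\bm h\rrangle}.
\end{equation*}
Only specific height profiles contribute: using $M||\bm h\rrangle=\sum_j(1-h_j)||\bm h\rrangle$ from \eqref{eq:Mh} and $\langle\psi_{\text{\rm \tiny D}}| M = 0$, one needs $\sum_j h_j=2n$, while the diagonal factor $[q^{2(h_i-1)}w_i/w_i]$ appearing in \eqref{eqn:projections} vanishes whenever some $h_i=1$. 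Therefore only profiles with $h_i\in\{0,2\}$ and exactly $n$ indices satisfying $h_i=2$ survive, and these are naturally parametrised by subsets $I\subset\{1,\dots,2n\}$ with $|I|=n$, setting $J=\{1,\dots,2n\}\setminus I$. On such profiles, \eqref{eqn:projections} and \eqref{eqn:scalarprod} yield closed-form expressions for $\llangle \bm h||\psi_{\text{\rm \tiny AD}}\rangle$ and $\llangle \bm h||\bm h\rrangle$ in terms of products over $I$ and over mixed pairs $(i,j)\in I\times J$ with $i<j$ or $j<i$.

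The matrix element $\langle\psi_{\text{\rm \tiny D}}||\bm h\rrangle$ is computed by writing $||\bm h\rrangle=\prod_{i\in I}(a(qw_i)a(w_i))^{-1}\mathcal{B}(qw_i)\mathcal{B}(w_i)|\wedge\rangle$ via \eqref{eqn:defhr} and invoking \eqref{eqn:spikdet} with $\{z_j\}=\{qw_i,w_i:i\in I\}$. This reduces the computation to that of the Izergin-Korepin determinant \eqref{eqn:IKDet} at a very specific specialisation of its spectral parameters. To handle the apparent singularity at $\mathfrak{a}(qw_{i_k}/w_{i_k})=0$, I would absorb the prefactor row products into the determinant, yielding regularised entries $\tilde M_{a,b}=\mathfrak{c}(z_a/w_b)\prod_{c\neq b}\mathfrak{a}(z_a/w_c)\mathfrak{b}(z_a/w_c)$. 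The key observation is that $\mathfrak{a}(qw_{i_k}/w_c)=[w_c/w_{i_k}]$ vanishes at $c=i_k$, so each row $\tilde M_{2k,\bullet}$ is \emph{sparse}: the factor $[1]=0$ at $c=i_k$ enters the product defining $\tilde M_{2k,b}$ unless $b=i_k$. Cofactor expansion along these $n$ sparse rows reduces $\det\tilde M$ to the product of its $n$ non-vanishing entries times a residual $n\times n$ determinant indexed by $(i,j)\in I\times J$. The sign produced by this reduction equals $(-1)^n\epsilon(I,J)$, with $\epsilon(I,J)$ coming from the permutation that interleaves $I$ and $J$ in the expansion and the $(-1)^n$ from the parity flip caused by the sparse rows occupying even positions $2,4,\dots,2n$ rather than odd ones. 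A further row-factor extraction casts the residual determinant in the form $\det_{i\in I,\,j\in J}([q^2]/([qw_i/w_j][qw_j/w_i]))$.

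The final step is to collect all accumulated prefactors---the Izergin-Korepin normalisation $\bigl[\prod_{a<b}[z_a/z_b]\prod_{b<c}[w_c/w_b]\bigr]^{-1}$, the row factors $\prod_k\prod_c[qw_c/w_{i_k}][qw_{i_k}/w_c]$ and $\prod_k\prod_{c\neq i_k}[w_c/w_{i_k}][q^2w_{i_k}/w_c]$ extracted from the reduction, the normalisation $\prod_{j\in J}a(w_j)/\prod_{i\in I}a(qw_i)$ inherited from the height state, and the ratio $\llangle \bm h||\psi_{\text{\rm \tiny AD}}\rangle/\llangle \bm h||\bm h\rrangle$---and to verify that they combine into the claimed prefactor $\prod_j a(w_j)/\prod_{i<j}[w_i/w_j]$ together with the product $\prod_{i\in I,\,j\in J}[q^2 w_i/w_j][q^{-1}w_i/w_j]$ displayed in \eqref{eqn:ZO}. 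I expect this final bookkeeping to be the main obstacle: it is mechanical but lengthy, requiring repeated use of $[q^{-k}u]=-[q^k u^{-1}]$ and the cancellation of all double products over $I\times I$ and $J\times J$ that emerge along the way.
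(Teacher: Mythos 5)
Your proposal follows the paper's proof essentially step for step: both insert the separation-of-variables completeness relation, restrict to height profiles with $h_i\in\{0,2\}$ and $|I|=n$, express $\langle\psi_{\text{\rm\tiny D}}||\bm h\rrangle$ via \eqref{eqn:spikdet} as a $2n\times 2n$ Izergin--Korepin partition function at the degenerate specialisation $\{z\}=\{w_i,qw_i\}_{i\in I}$, and reduce it to an $n\times n$ determinant indexed by $I\times J$. The only real difference is that where you carry out this reduction by hand (absorbing the row factors and expanding along the resulting sparse rows, with the sign $(-1)^n\epsilon(I,J)$ you state being correct), the paper instead invokes the known recursion relation for $Z_{\text{\rm\tiny IK}}$ at $z=qw$ applied $n$ times --- your cofactor argument is in effect a derivation of that recursion --- and the final bookkeeping you defer is likewise compressed to ``some simplifications'' in the paper.
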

\begin{proof}
Our aim is to compute
\begin{equation}\label{eq:scapro}
   \langle \psi_{\text{\rm\tiny{D}}}| \psi_{\text{\rm\tiny{AD}}}\rangle = \sum_{\bm h} S(\bm h) \psi_{\text{\tiny AD}}(\bm h)
\end{equation}
where
\begin{equation}\label{eq:ShandPsih}
  S(\bm h) = \langle \wedge|\prod_{j=1}^{2n} \mathcal C(w_j)||\bm h\rrangle,\qquad  \psi_{\text{\tiny AD}}(\bm h)  = \frac{\llangle \bm h||\psi_{\text{\rm \tiny AD}}\rangle}{\llangle \bm h||\bm h\rrangle}.
\end{equation}  
Because $\psi_{\text{\rm \tiny AD}}(\bm h) = 0$ if at least one of the heights is $1$, we focus on height profiles $\bm h = (h_1,\dots,h_{2n})$ with $h_j \in\{0,2\}$ for all $j=1,\dots,2n$. 
Under these restrictions, we use \eqref{eqn:scalarprod}, \eqref{eqn:projections} and \eqref{eq:special.identity} to write $\psi_{\textrm{\tiny AD}}(\bm h)$ as 
 \begin{equation}
 \label{eqn:simplePsiAD}
 \psi_{\textrm{\tiny AD}}(\bm h) = \prod_{i=1}^{2n} 
(-1)^{h_i/2}\prod_{1\le i<j\le 2n} \frac{[q^{h_i-h_j}w_i/w_j]}{[w_i/w_j]}, \qquad  h_1, \dots, h_{2n} \in \{0,2\}.
 \end{equation}
 Any $\bm h$ contributing to \eqref{eq:scapro}
is uniquely fixed by the index set $I(\bm h) =\{i: h_i=2\}$. Because $|\psi_{\text{\rm \tiny D}}\rangle$ has magnetisation zero, it is clear that $S(\bm h)$ is non-vanishing only if $\sum_{i=1}^{2n}(1-h_i)=0$. It is therefore sufficient to consider only the cases where $|I(\bm h)|=n$. In terms of the index set $I(\bm h)$,
 we have
\begin{equation}
   S(\bm h) = \frac{\langle \wedge|\left(\prod_{j=1}^{2n} \mathcal C(w_j)\right)\left(\prod_{i\in I(\bm h)} \mathcal B(w_i)\mathcal B(q w_i)\right)|\wedge\rangle}{\prod_{i \in I(\bm h)}a(w_i)a(qw_i)}.
   \label{eqn:Sh}
\end{equation}
The numerator can be evaluated with the help of \eqref{eqn:spikdet}, and thus be written in terms of the Izergin-Korepin determinant $Z_{\text{\tiny IK}}(z_1,\dots,z_N;w_1,\dots,w_N)$ 
given in \eqref{eqn:IKDet}. Here, the
set $\{z_i\}_{i=1}^n$ is given by $\{w_i\}_{i\in I(\bm h)}\cup \{q w_i\}_{i\in I(\bm h)}$, and we find
\begin{equation}
  S(\bm h) = \frac{\left(\prod_{j=1}^N a(w_j) \right) Z_{\text{\tiny IK}}(\{w_i\}_{i\in I(\bm h)},\{q w_i\}_{i\in I(\bm h)};w_1,\dots,w_N)}{\prod_{i \in I(\bm h)}a(w_i)a(qw_i)}.
\end{equation}
The result may be simplified further with the help of the known recursion formula for the Izergin-Korepin determinant:
\begin{equation}
  Z_{\text{\tiny IK}}(z_1=q w_1,z_2,\dots,z_N;w_1,\dots,w_N)=[q^2]\left(\prod_{i=2}^N [q z_i/w_1][q^2w_1/w_i]\right)Z_{\text{\tiny IK}}(z_2,\dots,z_N;w_2,\dots,w_N).
\end{equation}  
Applying this recursion formula $n$ times, we obtain after some algebra
\begin{equation}
  S(\bm h) = \frac{\left(\prod_{j=1}^N a(w_j) \right)Z_{\text{\tiny IK}}(\{w_i\}_{i\in I(\bm h)};\{w_j\}_{j\in J(\bm h)})}{\prod_{i \in I(\bm h),\,j\in J(\bm h)}[q w_i/w_j]}
\end{equation}
where $J(\bm h) = \{1,2,\dots,2n\}\backslash I(\bm h)$. Likewise, the coefficient $\psi_{\text{\tiny AD}}(\bm h)$ given in \eqref{eqn:simplePsiAD} can be written in terms of the index sets:
\begin{equation}
  \psi_{\text{\tiny AD}}(\bm h) = (-1)^n \epsilon(I(\bm h),J(\bm h))\frac{\prod_{i \in I(\bm h),\,j\in J(\bm h)}[q^2 w_i/w_j]}{\prod_{1\le i<j\le 2n}[w_i/w_j]}\prod_{\substack{i,i'\in I(\bm h)\\i<i'}}[w_i/w_{i'}]\prod_{\substack{j,j'\in J(\bm h)\\j<j'}}[w_{j'}/w_{j}].
\end{equation}
We insert these expressions into \eqref{eq:scapro} and replace the sum over $\bm h$ by a sum over all possible subsets $I$ of $\{1,\dots, 2n\}$ of cardinality $n$, with $J =\{1,\dots, 2n\} \setminus I$. Using the explicit form of the six-vertex-model partition function \eqref{eqn:IKDet}, the result of the proposition follows after some simplifications.
\end{proof}
The following corollary follows from a direct inspection of \eqref{eqn:ZO}.
\begin{corollary}\label{cor:odd}
$Z_{\text{\rm \tiny M}}(w_1,\dots,w_{2n})$ is an odd function in each of its arguments.
\end{corollary}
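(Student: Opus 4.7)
The plan is to fix an index $k \in \{1, \dots, 2n\}$ and track how each factor in the explicit formula \eqref{eqn:ZO} transforms under the substitution $w_k \to -w_k$. Since every quantity on the right-hand side is built from the antisymmetric bracket $[z] = z - z^{-1}$, which satisfies $[-z] = -[z]$, the whole exercise reduces to counting sign flips. The combinatorial sign $\epsilon(I,J)$ and the constants $[q^{\pm 1}]$ and $[q^2]$ do not involve the $w_j$ and are therefore manifestly unchanged, so only the factors containing $w_k$ need attention.

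I would first handle the prefactor. In the product $\prod_{j=1}^{2n} a(w_j) = \prod_{i,j=1}^{2n}[q w_j/w_i]$, the factors depending on $w_k$ are those with exactly one of $i$ or $j$ equal to $k$ (the diagonal factor $[q w_k/w_k] = [q]$ being inert), producing $2(2n-1)$ sign flips and hence no net sign. The denominator $\prod_{1\le i<j\le 2n}[w_i/w_j]$ contains $2n-1$ factors involving $w_k$, giving an overall $(-1)^{2n-1} = -1$, which, once inverted, contributes a factor of $-1$ to $Z_{\text{\rm \tiny M}}$. So the prefactor alone supplies the desired odd sign.

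It remains to check that every summand in \eqref{eqn:ZO} is invariant. If $k \in I$, the product $\prod_{i \in I, j \in J}[q^2 w_i/w_j][q^{-1} w_i/w_j]$ picks up $2|J| = 2n$ sign flips concentrated in the row $i = k$; if $k \in J$, it picks up $2|I| = 2n$ sign flips concentrated in the column $j = k$; in either case the total is even. For the determinant entries $[q^2]/([q w_i/w_j][q w_j/w_i])$, both factors in the denominator of any entry in the row or column indexed by $k$ flip sign together, so the entry is unchanged and the determinant is invariant. Combining the contributions gives $Z_{\text{\rm \tiny M}}(\dots, -w_k, \dots) = -Z_{\text{\rm \tiny M}}(\dots, w_k, \dots)$ for every $k$, which is the claim. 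The only real obstacle is the clerical bookkeeping of signs; no deeper structural input is required, which is why the authors state the corollary as an immediate consequence of direct inspection of \eqref{eqn:ZO}.
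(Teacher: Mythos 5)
Your proof is correct and is exactly the ``direct inspection of \eqref{eqn:ZO}'' that the paper invokes without writing out: the sign count in the prefactor (an odd number, $2n-1$, of flipped brackets in the Vandermonde-type denominator versus an even number everywhere else) is the whole content of the corollary. Nothing to add.
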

The next two lemmas establish the second and third properties of $Z_{\text{\rm \tiny M}}(w_1,\dots,w_{2n})$ used in the proof of \cref{prop:Z0ZQT}.

\begin{lemma}
  \label{lem:ZOlaurent}
  $Z_{\text{\rm \tiny M}}(w_1,\dots,w_{2n})$ is a Laurent polynomial of degree width at most $2(4n-3)$ in each $w_j,\,j=1,\dots,2n$.
\begin{proof}

  We consider $Z_{\text{\rm \tiny M}}(w_1,\dots,w_{2n})$ as a function of $w_1$, which is sufficient as it is symmetric in its arguments.
  
  First, we show that $Z_{\text{\rm \tiny M}}(w_1,\dots,w_{2n})$
  is a rational function with its only pole at $w_1=0$.
  Indeed, from the summation formula \eqref{eqn:ZO}, it is clear the only other potential poles are
   \textit{(i)} $w_1 = \pm q^{1}w_j, \pm q^{-1}w_j$ and \textit{(ii)} $w_1=\pm w_j$ for $j=2,\dots,2n$. 
   We now proceed to show that these poles are spurious, meaning that $Z_{\text{\rm \tiny M}}(w_1,\dots,w_{2n})$ is in fact regular at these points. 
   Because of \cref{cor:odd}, we investigate only the values where the multiplicative sign is positive.
   Case \textit{(i)} is ruled out because as $w_1 \to q^{\pm 1}w_j$, any divergence coming from the determinants in \eqref{eqn:ZO} is compensated by the vanishing prefactor $\prod_{i=1}^{2n}a(w_i)=\prod_{i,j=1}^{2n}[qw_i/w_j]$. For case \textit{(ii)}, it is sufficient to consider $w_1 \to w_2$ because of \cref{lem:ZOsym}.
  The prefactor of \eqref{eqn:ZO} diverges because it has a simple pole at $w_1= w_2$. We thus need to show that the sum vanishes. To this end, we note that the terms of the sum with $1,2\in I$ or $1,2\in J$ vanish individually linearly because the determinants have two equal lines at $w_1=w_2$. The other terms can be grouped into pairs $I=\{1,i_2,\dots,i_n\}, J=\{2,j_2,\dots,j_n\}$ and $I'=\{2,i_2,\dots,i_n\}, J'=\{1,j_2,\dots,j_n\}$. Clearly, $\epsilon(I,J) = - \epsilon(I',J')$. The summands in \eqref{eqn:ZO} corresponding to $I,J$ and $I',J'$ thus cancel pairwise in the limit $w_1\to w_2$, which compensates for
  the divergence of the prefactor.  
We conclude that $Z_{\text{\rm \tiny M}}(w_1,\dots,w_{2n})$, being a rational function of $w_1$ with its only pole at $w_1=0$, is in fact a Laurent polynomial in $w_1$.

Second, we find upper and lower bounds for the extremal powers of $w_{1}$ of this Laurent polynomial by analysing the behaviour of $Z_{\text{\rm \tiny M}}(w_1,\dots,w_{2n})$ around $w_1=0$ and $w_1=\infty$. In each case, we replace the factors in \eqref{eqn:ZO} by their most divergent part and count the overall power of $w_1$. The sum of these divergent parts (over the subsets $I$ of $\{1, \dots, 2n\}$ of cardinality $n$) may vanish, so the resulting power-counting only gives a bound on the degree. For the minimal power, this bound is found by considering the limit $w_1\to 0$:
the prefactor diverges like $w_1^{-2n+1}$ and each term in the sum like $w_1^{-2n+2}$ (the determinant vanishes quadratically as $w_1$ approaches zero). We conclude that for $w_1\to 0$, the function $Z_{\text{\rm \tiny M}}(w_1,\dots,w_{2n})$ has a pole of order at most $4n-3$. Similarly, by taking $w_1\to \infty$, we conclude that the leading term is proportional to $w_1^k$ for some $k \le 4n-3$. The Laurent polynomial therefore has degree width at most $2(4n-3)$.
\end{proof}
\end{lemma}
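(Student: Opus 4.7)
The plan is to exploit the symmetry established in \cref{lem:ZOsym} to reduce the problem to the analysis of the $w_1$-dependence alone, and then work from the explicit summation formula \eqref{eqn:ZO} to control the pole structure. Viewed as a rational function of $w_1$, the right-hand side of \eqref{eqn:ZO} has its only unavoidable pole at $w_1 = 0$; the other potential singularities lie at $w_1 = \pm q^{\pm 1} w_j$ (coming from the determinants) and at $w_1 = \pm w_j$ (coming from the factors $[w_1/w_j]^{-1}$ in the prefactor), for $j = 2, \dots, 2n$. I would show that each of these is spurious, so that $Z_{\text{\rm \tiny M}}$ is in fact a Laurent polynomial in $w_1$.

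For the apparent singularities at $w_1 = \pm q^{\pm 1} w_j$, I would use that the overall prefactor $\prod_i a(w_i) = \prod_{i,k}[qw_i/w_k]$ vanishes at these points, compensating the divergence coming from the determinant. For the simple poles at $w_1 = \pm w_j$, it suffices by \cref{cor:odd} to consider $w_1 = w_2$. I would split the sum over subsets $I \sqcup J = \{1,\dots,2n\}$ into three parts: those with $1,2 \in I$, those with $1,2 \in J$, and the remaining terms. The first two classes yield determinants with two coinciding rows (or columns) at $w_1 = w_2$ and so vanish individually; the remaining terms pair naturally as $(I,J) \leftrightarrow (I',J')$ via the transposition $1 \leftrightarrow 2$, with $\epsilon(I,J) = -\epsilon(I',J')$ while the other factors become equal in the limit, so that the pairs cancel and the would-be pole is killed.

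Once polynomiality is secured, the bound on the degree width follows from a straightforward power count at $w_1 \to 0$ and $w_1 \to \infty$. Substituting the asymptotics of each factor in \eqref{eqn:ZO} as $w_1 \to 0$ gives a pole of order at most $4n-3$: the prefactor contributes $w_1^{-2n+1}$ while each summand behaves like $w_1^{-2n+2}$ thanks to a quadratic vanishing of the determinant. An analogous count at $w_1 \to \infty$ yields a bound on the leading power, producing an overall degree width of at most $2(4n-3)$. The main obstacle is the cancellation at $w_1 = w_j$: one must verify not merely the sign identity $\epsilon(I,J) = -\epsilon(I',J')$ but also that the products $\prod [q^2 w_i/w_j][q^{-1} w_i/w_j]$ and the determinants entering each paired summand coincide in the appropriate limit so that the divergences truly cancel; once that is checked, the remainder is routine bookkeeping.
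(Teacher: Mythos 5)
Your proposal follows the paper's proof essentially step for step: the reduction to $w_1$ via the symmetry of \cref{lem:ZOsym}, the elimination of the spurious poles at $w_1=\pm q^{\pm1}w_j$ by the vanishing prefactor and at $w_1=\pm w_j$ by the pairwise cancellation $(I,J)\leftrightarrow(I',J')$ with $\epsilon(I,J)=-\epsilon(I',J')$, and the power count at $w_1\to 0$ and $w_1\to\infty$ giving the bound $2(4n-3)$. The verification you flag at the end (that the remaining factors of paired summands coincide in the limit $w_1\to w_2$) is indeed the one computation to carry out, and it goes through exactly as the paper asserts.
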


\begin{lemma}
\label{lem:ZOrecurrence}
$Z_{\text{\rm \tiny M}}(w_1,\dots,w_{2n})$ obeys the recurrence relation
 \begin{equation}
   \frac{Z_{\text{\rm \tiny M}}(w_1,\dots,w_{2n-1},w_{2n}=q w_{2n-1})}{Z_{\text{\rm \tiny M}}( w_1,\dots,w_{2n-2})}= \left([q][q^2]\prod_{i=1}^{2n-2} [q^{-1} w_{2n-1}/w_i][q^2 w_{2n-1}/w_i]\right)^2.
 \end{equation} 
\end{lemma}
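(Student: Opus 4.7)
The plan is to work directly from the summation formula \eqref{eqn:ZO} and analyse the specialisation $w_{2n} = q w_{2n-1}$. At this value, the overall prefactor contains $a(w_{2n-1}) = \prod_{i=1}^{2n}[q w_{2n-1}/w_i]$, which acquires a zero through the factor $[q w_{2n-1}/w_{2n}] = [1] = 0$. This zero must be compensated by a divergence from inside the sum in order to produce the finite right-hand side.

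I would then classify the terms of the sum by the distribution of the indices $2n-1$ and $2n$ between $I$ and $J$. The only diverging entry of the matrix $[q^2]/([q w_i/w_j][q w_j/w_i])$ at $w_{2n}=qw_{2n-1}$ is the one with $\{i,j\} = \{2n-1,2n\}$, and such an entry appears in the determinant only when these two indices lie in opposite subsets. If both indices lie in $I$ (or both in $J$), the determinant stays finite and the term vanishes against $a(w_{2n-1}) = 0$. The case $2n-1 \in J,\ 2n \in I$ also vanishes: here the factor $[q^{-1} w_i/w_j]$ in $\prod_{i \in I, j \in J}$ supplies a second zero $[q^{-1} w_{2n}/w_{2n-1}] = [1] = 0$, which the single pole of the divergent determinant entry cannot compensate. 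Hence only the terms with $2n-1 \in I$, $2n \in J$ contribute.

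For these surviving terms, I would Laplace-expand the determinant along the row indexed by $i = 2n-1$ (equivalently along the column $j = 2n$) to isolate the diverging entry and its $(n-1)\times(n-1)$ cofactor over $I' = I \setminus \{2n-1\}$ and $J' = J \setminus \{2n\}$. The zero of $a(w_{2n-1})$ cancels the pole and produces an explicit $w_{2n-1}$-dependent factor. The residual pieces of the product $\prod_{i \in I, j \in J}[q^2 w_i/w_j][q^{-1} w_i/w_j]$ split into: (a) a $w_{2n-1}$-dependent factor involving only the $w_i$ with $i \le 2n-2$, coming from the rows and columns just eliminated; and (b) an $I' \times J'$ product which matches exactly the corresponding factor in the summation formula for the smaller system. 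The sign $\epsilon(I,J)$ should likewise reduce to $\epsilon(I', J')$ up to a controllable overall sign. Reassembling these pieces, together with the remaining $a(w_j)$ for $j \le 2n-2$ and the denominator $\prod_{i<j}[w_i/w_j]$, should reproduce the summation formula for $Z_{\text{\rm \tiny M}}(w_1, \dots, w_{2n-2})$ times the claimed prefactor $\bigl([q][q^2]\prod_{i=1}^{2n-2}[q^{-1}w_{2n-1}/w_i][q^2 w_{2n-1}/w_i]\bigr)^2$.

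The main obstacle is the bookkeeping: tracking the sign $\epsilon(I,J)$ under restriction to $(I', J')$, and verifying that the combination of factors extracted from $a(w_{2n-1})/[q w_{2n-1}/w_{2n}]$, $a(w_{2n})$, the denominator $\prod[w_i/w_j]$, and the residual $\prod_{i \in I \setminus \{2n-1\}}$ and $\prod_{j \in J \setminus \{2n\}}$ terms really assembles into a perfect square. A potentially cleaner alternative, which I would not pursue here, is to establish a direct fusion-type factorisation of both $|\psi_{\text{\rm \tiny D}}\rangle$ and $|\psi_{\text{\rm \tiny AD}}\rangle$ at $w_{2n} = q w_{2n-1}$ into a tensor product of states on $V_1 \otimes \cdots \otimes V_{2n-2}$ and $V_{2n-1} \otimes V_{2n}$; the squared structure of the right-hand side suggests that each eigenvector contributes one copy of the prefactor independently, which is plausible given that $R^{(1,2)}(z)$ is itself built by fusion of $R^{(1,1)}$ at spectral parameters in ratio $q$, though a rigorous proof would require additional algebraic input.
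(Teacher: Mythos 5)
Your proposal follows essentially the same route as the paper's proof of \cref{lem:ZOrecurrence}: both start from the summation formula \eqref{eqn:ZO}, identify the surviving terms as those with $2n-1\in I$ and $2n\in J$ (the others being killed by the vanishing prefactor, with the case $2n\in I$, $2n-1\in J$ acquiring the extra zero from $[q^{-1}w_{2n}/w_{2n-1}]$ exactly as you note), and reduce the determinant via its cofactor and the sign via $\epsilon(I,J)=\epsilon(I',J')$ to the data of the smaller system. The bookkeeping you defer is carried out in the paper by showing that the residual product over $I\times J$ and the prefactor-times-determinant limit each contribute one factor of $-[q][q^2]\prod_{i=1}^{2n-2}[q^{-1}w_{2n-1}/w_i][q^2 w_{2n-1}/w_i]$, which is precisely where the square on the right-hand side comes from.
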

\begin{proof}
  According to \cref{lem:ZOlaurent}, $Z_{\text{\rm \tiny M}}(w_1,\dots,w_{2n})$ is well defined for $w_{2n}=q w_{2n-1}$. For this specification, the prefactor in \eqref{eqn:ZO} vanishes, whereas the sum contains both regular and divergent terms. A limit must therefore be taken and the regular terms of the sum yield a vanishing contribution. Conversely, the divergent terms have $2n-1\in I$ and $2n \in J$ and produce a non-zero contribution. For these cases, we write $I = I' \,\sqcup \,\{2n-1\}, J = J'\, \sqcup \,\{2n\}$ and note that $\epsilon(I,J) = \epsilon(I',J')$. A direct calculation yields
  \begin{equation} \label{eq:ingredient1}
    \prod_{\substack{i \in I\\\,j\in J}}\left[\frac{q^2 w_i}{w_j}\right]\left[\frac{q^{-1} w_i}{w_j}\right]  =  -[q][q^2]\prod_{i=1}^{2n-2}\left[\frac{q^{-1} w_{2n-1}}{w_i}\right]\left[\frac{q^2 w_{2n-1}}{w_i}\right]\prod_{\substack{i \in I'\\\,j\in J'}}\left[\frac{q^2 w_i}{w_j}\right]\left[\frac{q^{-1} w_i}{w_j}\right].  \end{equation} 
    Likewise, combining the prefactor and the determinant, we obtain after some algebra
    \begin{align} 
      &\lim_{w_{2n}\to q w_{2n-1}}\frac{\prod_{i,j=1}^{2n}[q w_i/w_j]}{\prod_{1\le i<j\le 2n}[w_i/w_j]} \det_{i \in I,\,j\in J}\left(\frac{[q^2]}{[q w_i/w_j][q w_j/w_i]}\right)\label{eq:ingredient2}\\
      &=-\left([q][q^2]\prod_{i=1}^{2n-2}\left[\frac{q^{-1} w_{2n-1}}{w_i}\right]\left[\frac{q^2 w_{2n-1}}{w_i}\right]\right)\frac{\prod_{i,j=1}^{2n-2}[q w_i/w_j]}{\prod_{1\le i<j \le 2n-2}[w_i/w_j]} \det_{i \in I',\,j\in J'}\left(\frac{[q^2]}{[q w_i/w_j][q w_j/w_i]}\right).
     \nonumber 
    \end{align}
    Using \eqref{eq:ingredient1} and \eqref{eq:ingredient2}, we can now take the limit of \eqref{eqn:ZO}. As discussed above, the resulting sum ranges over sets $I'$ of $\{1, \dots, 2n-2\}$ of cardinality $n-1$. The desired recurrence relation readily follows.
\end{proof}

\subsubsection{Homogeneous limit}

The results of the previous section allow us to prove \cref{thm:scalarprod} by taking the homogeneous limit. Kuperberg \cite{kuperberg:02} showed that $Z_{\text{\rm \tiny QT}}(1, \dots, 1)= [q^2]^{n}[q]^{4n^2-n} A_{\text{\rm \tiny QT}}(4n;x^2)$, see \cref{sec:qsumrules} for the definition of $A_{\text{\rm \tiny QT}}(4n;x^2)$.
Using this relation and the definitions 
 of $|\phi_{\text{\rm \tiny D}}\rangle$ and $|\phi_{\text{\rm \tiny AD}}\rangle$, we obtain
\begin{equation} 
  \langle\phi_{\text{\rm \tiny D}}|\phi_{\text{\rm \tiny AD}}\rangle 
  = A_{\text{\rm \tiny QT}}(4n;x^2) =
  A^{(1)}_{\text{\rm \tiny QT}}(4n;x^2)A^{(2)}_{\text{\rm \tiny QT}}(4n;x^2),
\end{equation}
where the $A^{(k)}_{\text{\rm \tiny QT}}(4n;x^2)$ 
are given by
\begin{equation}
  A^{(k)}_{\text{\rm \tiny QT}}(4n;x^2)= [q]^{-2n(n-1)} 
  Z^{(k)}_{\text{\rm \tiny QT}}(1,\dots,1).
\end{equation}
We provide a simple pfaffian expression for $ A^{(1)}_{\text{\rm \tiny QT}}(4n;x^2)$.
\begin{proposition}
We have
\begin{subequations}
\begin{align}
 A^{(1)}_{\text{\rm \tiny QT}}(4n;x^2) = \Pf{i,j=0}{2n-1}\left((-1)^i \binom{i}{j}x^{i-j-1}-(-1)^j \binom{j}{i}x^{j-i-1}\right)
\end{align}
\end{subequations}
\begin{proof}
We use the explicit form of the partition function \eqref{eqn:defZQTk} together with a change of variables $u_i = w^2_{i+1}$ for $i=0,\dots,2n-1,$ in order to write
\begin{align}
 A^{(1)}_{\text{\rm \tiny QT}}(4n;x^2) & = 
[q]^{2n^2} \lim_{w_1,\dots,w_{2n}\to 1} \Bigg(\prod_{1\le i<j \le 2n}\frac{1}{[w_j/w_i]}\Bigg)\Pf{i,j=1}{2n}\left(\frac{[w_j/w_i]}{[q w_i/w_j][q w_j/w_i]}\right)\\
  & = [q]^{2n^2}
  \lim_{u_0,\dots,u_{2n-1}\to 1}  \Bigg(\prod_{0\le i<j \le 2n-1}\frac{1}{(u_j-u_i)}\Bigg)\Pf{i,j=0}{2n-1}
  f(u_i,u_j).
  \label{eqn:AQTIntermediate}
\end{align}
Here we abbreviated
\begin{equation}
    f(u,v) = \frac{u-v}{(q^2 u-v)(q^{-2}u-v)}.
\end{equation}
The prefactor in \eqref{eqn:AQTIntermediate} can be included into the pfaffian by using the identity $\text{pf}\,\left( MAM^t\right)= (\det M )\text{pf}\, A$ for $A_{ij}=f(u_i,u_j)$ and
\begin{equation}
  M_{ij} =
  \begin{cases}
   \prod_{\substack{k=0\\k\neq j}}^i(u_j-u_k)^{-1},& i \ge j,\\    0, & i < j.
  \end{cases}
\end{equation}
The result is the pfaffian of a divided difference,
\begin{equation}
  A_{\text{\rm \tiny QT}}^{(1)}(4n;x^2) = [q]^{2n^2}\lim_{u_0,\dots,u_{2n-1}\to 1} \Pf{i,j=0}{2n-1}
  f[u_0,\dots,u_i;u_0,\dots,u_j],
\end{equation}
see \eqref{eqn:homlim}.
This expression allows us to take the homogeneous limit and apply the techniques of Behrend \textit{et al.} \cite{behrend:12}, reviewed in \cref{prop:ZADhom}. From \eqref{eqn:divideddiff} with $r=1$, we obtain
\begin{equation}
  A_{\text{\rm \tiny QT}}^{(1)}(4n;x^2)=[q]^{2n^2}\Pf{i,j=0}{2n-1}\left(\{u^i v^j\}f(u+1,v+1)\right).
\end{equation}

 The series coefficients of $f(u+1,v+1)$ in a power series in $u$ and $v$ are given by
  \begin{equation}
   \{u^i v^j\}f(u+1,v+1) = \binom{i+j}{i}\frac{(-1)^i q^{i-j}-(-1)^j q^{j-i}}{(q+q^{-1})(q-q^{-1})^{i+j+1}}.
  \end{equation}
  We rewrite this expression in terms of the matrices $L(\alpha,\beta)$ defined in \cref{prop:ZADhom}. 
 Using the definition \eqref{eq:alphabeta} of $\alpha_\pm$, we have
  \begin{align}
    A^{(1)}_{\text{\rm \tiny QT}}(4n;x^2) & = [q]^{n(2n-1)}\mathop{\text{pf}}\left(x^{-1}(L(\alpha_-,1)L(\alpha_+,1)^t-L(\alpha_+,1)L(\alpha_-,1)^t)\right)\\
    & = [q]^{n(2n-1)} \det L(\alpha_+,1) \mathop{\text{pf}}\left(x^{-1}(L(\alpha_+,1)^{-1}L(\alpha_-,1)-(L(\alpha_+,1)^{-1}L(\alpha_-,1))^t\right).\nonumber
  \end{align}
  We simplify this expression by using the determinant \eqref{eqn:propL0}, the property $\det L(\alpha_+,1) = (q^{-1}/[q])^{n(2n-1)}$ and the product formula \eqref{eqn:propL2}. The latter yields
  $L(\alpha_+,1)^{-1}L(\alpha_-,1)=L(\alpha_0,\beta_0)$ with $\alpha_0 = -q x,\,\beta_0=q/x$, and thus
\begin{align}
    A^{(1)}_{\text{\rm \tiny QT}}(4n;x^2) & = q^{-n(2n-1)}\mathop{\text{pf}}\left(x^{-1}(L(\alpha_0,\beta_0)-L(\alpha_0,\beta_0)^t)\right).
  \end{align}
  The result for $A^{(1)}_{\text{\rm \tiny QT}}(4n;x^2)$ follows from using the explicit definition \eqref{eqn:defL} and the relation
  \begin{equation}
    \Pf{i,j=0}{2n-1} g_ig_jf_{ij} = 
 \left(\prod_{i=0}^{2n-1}g_i\right)
  \Pf{i,j=0}{2n-1}f_{ij}.
  \end{equation}
  \end{proof}
\end{proposition}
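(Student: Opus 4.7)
The plan is to express $A^{(1)}_{\text{\rm \tiny QT}}(4n;x^2)$ as the homogeneous limit of the inhomogeneous pfaffian formula \eqref{eqn:defZQTk} for $Z^{(1)}_{\text{\rm \tiny QT}}(w_1,\dots,w_{2n})$, and then grind through the limit using the same confluent divided-difference machinery that was applied for determinants in \cref{prop:ZADhom}. Since the result in \cref{prop:ZADhom} shows that the Behrend--Di Francesco--Zinn-Justin technique is powerful enough to turn such limits into matrix manipulations involving the family $L(\alpha,\beta)$ of \eqref{eqn:defL}, it is natural to try to repeat this strategy in the pfaffian setting.

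First I would pass to symmetric variables $u_i = w_{i+1}^2$ for $i=0,\dots,2n-1$, which rewrites each entry of the pfaffian as $f(u_i,u_j)$, with $f(u,v)=(u-v)/[(q^2 u-v)(q^{-2}u-v)]$, times a Vandermonde-type prefactor. I would then absorb the prefactor $\prod_{0\le i<j\le 2n-1}(u_j-u_i)^{-1}$ inside the pfaffian via $\mathrm{pf}(MAM^t)=(\det M)\,\mathrm{pf}(A)$ for a lower-triangular $M$ of partial reciprocal Vandermonde factors, turning the entries into divided differences $f[u_0,\dots,u_i;u_0,\dots,u_j]$. The confluent limit formula \eqref{eqn:divideddiff}, applied with $r=1$, then expresses the homogeneous limit directly as $\mathrm{pf}\,\big(\{u^iv^j\}f(u+1,v+1)\big)_{i,j=0}^{2n-1}$, up to an overall constant in $[q]$.

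The heart of the calculation is then the evaluation of the Taylor coefficients $\{u^iv^j\}f(u+1,v+1)$. A partial-fraction split of $f$ on $(q^2u-v)$ and $(q^{-2}u-v)$, followed by geometric expansion at $(u,v)=(1,1)$ and the Vandermonde convolution $\sum_k\binom{i}{k}\binom{j}{k}=\binom{i+j}{i}$, should produce the symmetric-looking expression
\begin{equation}
\{u^iv^j\}f(u+1,v+1)=\binom{i+j}{i}\,\frac{(-1)^iq^{i-j}-(-1)^jq^{j-i}}{(q+q^{-1})(q-q^{-1})^{i+j+1}}.
\end{equation}
The antisymmetric numerator in $(i,j)$ is exactly what allows the matrix to be recognised as the antisymmetrised outer product $L(\alpha_-,1)L(\alpha_+,1)^t-L(\alpha_+,1)L(\alpha_-,1)^t$, with $\alpha_\pm$ as in \eqref{eq:alphabeta}. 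Factoring $\det L(\alpha_+,1)$ out of the pfaffian via $\mathrm{pf}(MAM^t)=(\det M)\,\mathrm{pf}(A)$, invoking the multiplication law \eqref{eqn:propL2} with $\alpha_0=-qx,\beta_0=q/x$, and using the determinant \eqref{eqn:propL0} should collapse the expression to $q^{-n(2n-1)}\,\mathrm{pf}\,\big(x^{-1}(L(\alpha_0,\beta_0)-L(\alpha_0,\beta_0)^t)\big)$; expanding $L(\alpha_0,\beta_0)$ via the explicit definition \eqref{eqn:defL} then gives the claimed entries $(-1)^i\binom{i}{j}x^{i-j-1}-(-1)^j\binom{j}{i}x^{j-i-1}$.

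The main obstacle is the recognition step that matches the Taylor coefficients against the antisymmetric outer-product structure in $L(\alpha,\beta)$. The Vandermonde identity for $\sum_k\binom{i}{k}\binom{j}{k}$ is the key algebraic ingredient, and one has to be careful to write the binomial combination in a form $\alpha_-^i\alpha_+^j-\alpha_+^i\alpha_-^j$ (up to a common scalar) so that the swap of $\alpha_\pm$ can be re-interpreted as a transpose on one factor. Once this identification is made, all remaining steps are mechanical applications of properties of $L(\alpha,\beta)$ and of the pfaffian, exactly paralleling the determinant computation in \cref{prop:ZADhom}.
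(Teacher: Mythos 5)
Your proposal follows essentially the same route as the paper's proof: the change of variables $u_i=w_{i+1}^2$, absorbing the Vandermonde prefactor into the pfaffian via a triangular matrix to obtain divided differences, the confluent limit formula with $r=1$, the same Taylor coefficients of $f(u+1,v+1)$, and the identification with the antisymmetrised product of $L(\alpha_\pm,1)$ collapsed through the multiplication law to $L(\alpha_0,\beta_0)$ with $\alpha_0=-qx$, $\beta_0=q/x$. The argument is correct and matches the paper step for step.
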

This concludes the proof of \cref{thm:scalarprod}. The polynomial $A_{\text{\rm \tiny QT}}^{(2)}(4n;x^2)$ can be evaluated along the same lines, using the function $\tilde f(u,v)=(u+v)f(u,v)$. We obtain
\begin{equation}
  A_{\text{\rm \tiny QT}}^{(2)}(4n;x^2)=\Pf{i,j=0}{2n-1}\left(a_{ij}-a_{ji}\right)
\end{equation}
where
\begin{equation}
  a_{ij}= (-1)^i q^{i-j}\left(\binom{i+j-1}{i}q +\binom{i+j-1}{j}q^{-1} \right).
\end{equation}
Although $A_{\text{\rm \tiny QT}}^{(2)}(4n;x^2)$ is known to be a polynomial in $x^2$, we have not managed to find an expression in terms of 
the pfaffian of a matrix with polynomial entries in $x$.

%%%%%%%%%%%%%%%%%%%%%
%
\section{Linear sum rules and special components}
\label{sec:specialcomps}
%
%%%%%%%%%%%%%%%%%%%%%

The goal of this section is to obtain the expressions for the special components of the spin-chain zero-energy states given in \cref{sec:specialcomponents}. In \cref{sec:ZA}, we introduce an auxiliary partition function for the ten-vertex model which will arise in certain scalar products involving $|\psi_{\text{\rm \tiny D}}\rangle$ and $|\psi_{\text{\rm \tiny AD}}\rangle$. \cref{sec:DTwist,sec:ADTwist} treat these scalar products and their homogeneous limits for the diagonal and anti-diagonal twist, respectively.

%%%%%%%%%
\subsection{A partition function for the ten-vertex model}
\label{sec:ZA}
%%%%%%%%%

In this section, we present a partition function for the ten-vertex model, which will allow us to compute certain components of the vectors $|\psi_{\text{\rm \tiny D}}\rangle$ and $|\psi_{\text{\rm \tiny AD}}\rangle$. We consider systems of even size $N = 2n$, and specialise the inhomogeneity parameters of the transfer matrices to the alternating sequence 
\begin{equation}
  w_{2i-1}=x_i, \quad w_{2i}=x_i^{-1}, \quad i=1,\dots,n.
  \label{eqn:alternatingparams}
\end{equation}
For two sites, we introduce the state
\begin{equation}\label{eq:chistate}
  |\chit(z)\rangle = [b z][bq z]|{\Uparrow\Downarrow}\rangle + [b^{-1}qz][bq z]|{00}\rangle + [b^{-1}z][b^{-1}q z]|{\Downarrow\Uparrow}\rangle,
\end{equation}
where $b$ is an arbitrary parameter. As discussed in \cref{app:BYBE}, $|\chit(z)\rangle$ is a solution to the boundary Yang-Baxter equation.
We write $|\chit(x_1,\dots,x_n)\rangle = \bigotimes_{j=1}^n|\chit(x_j)\rangle$ for $N=2n$ sites. Let us define the scalar product
\begin{equation}
  Z_{\text{\rm \tiny A}}(\{x_i\}_{i=1}^n;\{y_i\}_{i=1}^{2n}) = \langle \chit(x_1,\dots, x_n)|\prod_{j=1}^{2n}\mathcal B(y_j)|\wedge\rangle.
\end{equation}
This quantity is the partition function of the ten-vertex model on a $2n\times 2n$ square with U-turn boundary conditions at the top as shown on \cref{fig:spinonepartfunc}.
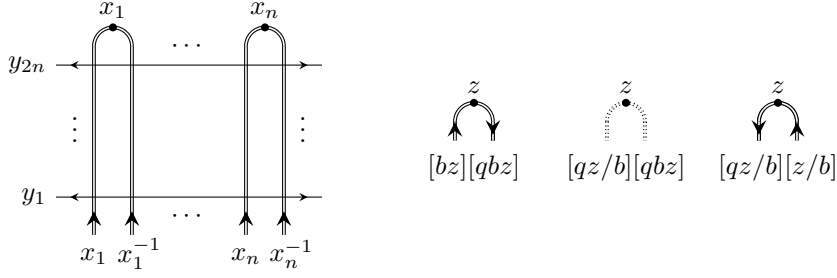
\begin{figure}  
  \centering
  \begin{tikzpicture}
     
     \foreach \x in {0,.5}
     {  
       \draw[double] (\x,0) -- (\x,2);
       \draw[double,xshift=2cm] (\x,0) -- (\x,2);
       
       \draw[double,postaction={on each segment={mid arrow}}] (\x,-.5) -- (\x,0);
       \draw[double,xshift=2cm,postaction={on each segment={mid arrow}}] (\x,-.5) -- (\x,0);
     }
     
     \foreach \x in {0,2}
     {  
        \draw [double,xshift=\x cm] (.5,2) arc [start angle=0, end angle = 180, radius=.25cm];
        \fill[xshift=\x cm] (0.25,2.25) circle (1.5pt) ;
     }
        
     \draw (0.25,2.25) node[above] {$x_1$} ;
     \draw (2.25,2.25) node[above] {$x_n$} ;
     
     \foreach \x in {0,3}
       \draw[xshift=\x cm] (-.25,1) node {$\vdots$};
     
     \foreach \y in {0,2.25}
       \draw[yshift=\y cm] (1.25,-.25) node {$\cdots$};

     \foreach \y in {0,1.75}
     {
     \draw[postaction={on each segment={mid arrow}},yshift=\y cm] (0,0)--(-0.5,0);
     \draw[postaction={on each segment={mid arrow}},yshift=\y cm] (2.5,0)--(3,0);
     \draw[yshift=\y cm]  (0,0)--(2.5,0);
     }
     
     \draw (-.5,0) node [left] {$y_1$};
     \draw (-.5,1.75) node [left] {$y_{2n}$};
     \draw (0,-.79) node {$x_1$};
     \draw (.6,-.75) node  {$x_1^{-1}$};
      \draw (2,-.79) node {$x_n$};
     \draw (2.6,-.75) node  {$x_n^{-1}$};
     
     \begin{scope}[xshift=5cm,yshift=1cm]
          \draw[double] (0.25,0) arc [start angle=0, end angle = 90, radius=.25cm];
          \draw[double] (0,.25) arc [start angle=90, end angle = 180, radius=.25cm];
          \fill (0,0.25) circle (1.5pt) ;
          \draw (0,.25) node [above] {$z$};
          \draw (0,-0.90) node [above] {$[bz][qbz]$};
          \draw[double] (0.25,-0.25) -- (0.25,0) (-0.25,-0.25) -- (-0.25,0);
          \draw[double,postaction={mid arrow}] (-0.25,-0.01) -- (-0.25,0);
          \draw[double,postaction={mid arrow}] (0.25,-0.05) -- (0.25,-0.25);          
    \end{scope}
     
     \begin{scope}[xshift=7cm,yshift=1cm]
          \draw[double,densely dotted] (-0.25,0) arc [start angle=180, end angle = -5, radius=.25cm];
          \fill (0,0.25) circle (1.5pt) ;
          \draw (0,.25) node [above] {$z$};
          \draw (0,-.90) node [above] {$[qz/b][qbz]$};
           \draw[double,densely dotted] (-0.25,0) -- (-0.25,-0.25);
           \draw[double,densely dotted] (0.25,0) -- (0.25,-0.25);      
    \end{scope}

    \begin{scope}[xshift=9cm,yshift=1cm]
          \draw[double] (-0.25,0) arc [start angle=180, end angle = 0, radius=.25cm];
          \draw[double] (-0.25,0) arc [ start angle=180, end angle = 140, radius=.25cm];
          \draw[double] (0,.25) arc [start angle=90, end angle = 0, radius=.25cm];
          \fill (0,0.25) circle (1.5pt) ;
          \draw (0,.25) node [above] {$z$};
          \draw (0,-.90) node [above] {$[qz/b][z/b]$};
          \draw[double] (0.25,-0.25) -- (0.25,0) (-0.25,-0.25) -- (-0.25,0);
          \draw[double,postaction={mid arrow}] (0.25,-0.01) -- (0.25,0);
          \draw[double,postaction={mid arrow}] (-0.25,-0.05) -- (-0.25,-0.25);      
    \end{scope}
    \end{tikzpicture}
  \caption{ A
  $2n\times 2n$ square for the ten-vertex model with U-turn boundary conditions and the statistical weights for the three admissible arrow orientations along a vertical U-turn.}
   \label{fig:spinonepartfunc}
\end{figure}
\begin{proposition}\label{prop:ZAspin1}
  The partition function $Z_{\text{\rm \tiny A}}(\{x_i\}_{i=1}^n;\{y_i\}_{i=1}^{2n})$ is given by
\begin{subequations}\label{eqn:ZA}
\begin{align}
  Z_{\text{\rm \tiny A}}(\{x_i\}_{i=1}^n;\{y_i\}_{i=1}^{2n})=
  & (-1)^n \frac{[q]^{2n}
  \prod_{i=1}^{2n}[b/y_i]\prod_{i=1}^n [q x_i^2][q^2 x_i^2]\prod_{i=1}^n\prod_{j=1}^{2n}\prod_{k=-1}^1[q^k x_i y_j][q^k x_i/y_j]}{\prod_{1\le i<j \le 2n}[\xi_i/\xi_j]
  [y_i/y_j]\prod_{i\le j}[q \xi_i \xi_j] 
  [y_i y_j]}\nonumber \\
  & \qquad \times \Det{i,j=1}{2n} M_{ij}
\end{align}
with  $\xi_{2i-1} = x_i$, $\xi_{2i} = q^{-1}x_i$ for $i = 1, \dots, n$,
and the matrix elements
\begin{equation}
  M_{ij} = \frac{1}{[q \xi_i/y_j][\xi_i/y_j]}-\frac{1}{[q \xi_iy_j][\xi_iy_j]}.
  \label{eqn:ZAmatrix}
\end{equation}
\end{subequations}
\end{proposition}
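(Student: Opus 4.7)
My plan is to prove \eqref{eqn:ZA} by the Korepin characterisation strategy adapted to the U-turn boundary. The idea is to show that both $Z_{\text{\rm \tiny A}}$ and the right-hand side of \eqref{eqn:ZA} are uniquely determined by a common list of properties (symmetries in the $y_j$ and the $x_i$, centred Laurent polynomial structure of bounded degree, a reflection symmetry $y_j\to y_j^{-1}$ coming from the U-turn, and a recursion reducing $n$ to $n-1$), then verify the base case $n=1$ by direct computation.

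For the algebraic side, symmetry of $Z_{\text{\rm \tiny A}}$ in $y_1,\dots,y_{2n}$ is immediate from the commutation relation $\mathcal B(y_i)\mathcal B(y_j)=\mathcal B(y_j)\mathcal B(y_i)$ implied by the Yang-Baxter equation \eqref{eq:YB}. Symmetry in $x_1,\dots,x_n$ would follow from the boundary Yang-Baxter equation satisfied by $|\chit(z)\rangle$ (the relevant technicalities are collected in \cref{app:BYBE}), combined with the $\check R$-exchange of adjacent pairs $(w_{2i-1},w_{2i})=(x_i,x_i^{-1})$ of inhomogeneities. Polynomial degree bounds in each variable, together with an invariance $y_j\to y_j^{-1}$ up to explicit factors, can be read off directly from the entries of the monodromy matrix and the explicit form of $|\chit(z)\rangle$ in \eqref{eq:chistate}. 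The crucial step is a recursion: specialising, say, $y_{2n}$ to a value such as $q^{-1}x_n$ or $q^{-1}x_n^{-1}$, the operator $\mathcal B(y_{2n})$ acts in a controlled way on $|\wedge\rangle$ (analogous to the recursion used for the Izergin-Korepin formula in \cref{prop:SRdiagonal}) and one factor in $|\chit(x_n)\rangle$ simplifies, producing an explicit prefactor times a partition function $Z_{\text{\rm \tiny A}}$ of one less U-turn pair.

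The matching properties of the right-hand side of \eqref{eqn:ZA} are then verified by determinantal manipulations. Symmetry in the $y_j$ follows from elementary row and column operations together with explicit cancellations in the prefactor. The reflection invariance $y_j\to y_j^{-1}$ is built into the antisymmetrised Cauchy-like structure of $M_{ij}$ in \eqref{eqn:ZAmatrix}. The polynomial bounds follow from a direct degree count in the prefactor and in the determinant. The recursion is established by showing that at the specialisation point the relevant row and column of $(M_{ij})$ decouple, so that the determinant factorises as an explicit factor times a $(2n-2)\times(2n-2)$ subdeterminant of the expected form, which must be matched against the smaller instance of the formula. The base case $n=1$ amounts to a direct expansion of $\langle\chit(x_1)|\mathcal B(y_1)\mathcal B(y_2)|{\Uparrow\Uparrow}\rangle$ and evaluation of a $2\times 2$ determinant.

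The main obstacle will be carrying out the recursion on the determinant side: the two-term structure of $M_{ij}$ coming from the U-turn means that the factorisation is less transparent than in the standard Izergin-Korepin proof, and one must also keep careful track of the pairing $\xi_{2i-1}=x_i$, $\xi_{2i}=q^{-1}x_i$ and of the many products $\prod_{i,j,k}[q^k x_i y_j][q^k x_i/y_j]$ appearing in the prefactor of \eqref{eqn:ZA} so that the prefactor produced by the specialisation on the algebraic side agrees exactly with the one produced by row-column reduction on the determinantal side.
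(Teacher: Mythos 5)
Your strategy is genuinely different from the paper's. The paper does not characterise $Z_{\text{\rm \tiny A}}$ axiomatically at the fused level: in \cref{app:ZA} it unfolds the ten-vertex lattice via the fusion construction (writing $R^{(1,2)}$ and $|\chit(z)\rangle$ in terms of six-vertex data, then removing all the projectors $P^+$ with the Yang--Baxter and boundary Yang--Baxter equations), and thereby reduces $Z_{\text{\rm \tiny A}}$ exactly to $\prod_i[qx_i^2]$ times a six-vertex partition function $Z_\cap$ on a $2n\times 4n$ lattice with $2n$ U-turns on top. Only for that six-vertex object is a Korepin--Kuperberg characterisation invoked (\cref{prop:zcup}), and there the degree counts and the one-row/one-column-pair recursion are standard. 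What the fusion route buys is precisely that the uniqueness argument is run on a lattice where the recursion is structurally compatible with induction.

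This points to the concrete gap in your plan. On the $2n\times 2n$ fused lattice, specialising a single $y_{2n}$ can at best freeze one row, whereas the smaller instance $Z_{\text{\rm \tiny A}}$ with $n-1$ U-turn pairs has \emph{two} fewer rows (and one fewer column pair); so the specialisation $y_{2n}=q^{-1}x_n$ does not land on a smaller $Z_{\text{\rm \tiny A}}$ but on a non-square intermediate object, and the induction hypothesis cannot be invoked directly. If you instead specialise two of the $y$'s simultaneously, you lose the single-variable Lagrange-interpolation step that the uniqueness argument rests on. You also have not done the counting: $Z_{\text{\rm \tiny A}}$ is a centred Laurent polynomial of degree width up to $2(2n-1)$ in each $y_j$, odd under $y_j\to -y_j$, so even after using the reflection relation $Z_{\text{\rm \tiny A}}(\dots,y_j^{-1},\dots)=\tfrac{[by_j]}{[b/y_j]}Z_{\text{\rm \tiny A}}(\dots,y_j,\dots)$ you need on the order of $n$ independent specialisation points per variable, and you have exhibited only one candidate (the wheel condition of \cref{lemma:wheel} is a codimension-three vanishing and does not supply interpolation data). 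Finally, the determinant-side recursion for the two-term kernel \eqref{eqn:ZAmatrix} with the interleaved $\xi_{2i-1}=x_i$, $\xi_{2i}=q^{-1}x_i$ is exactly the step you flag as the main obstacle, and it is not addressed. As it stands the proposal is a programme rather than a proof; the fusion reduction used in the paper is the missing idea that makes the characterisation tractable.
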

  \noindent The proof relies on the fusion construction of the ten-vertex model and on a known partition function of the six-vertex model,
 and is given in \cref{app:ZA}.

Let us list some properties of the partition function which will be of use in the following sections. A very elementary consequence of the explicit formula is the following transformation behaviour under $y_j\to y_j^{-1}$.
\begin{lemma}
  The partition function satisfies
  \begin{equation}
    Z_{\text{\rm \tiny A}}(\{x_i\}_{i=1}^n;\dots, y^{-1}_j,\dots)= \frac{[b y_j]}{[b/y_j]}Z_{\text{\rm \tiny A}}(\{x_i\}_{i=1}^n;\dots, y_j,\dots).
  \end{equation}
\end{lemma}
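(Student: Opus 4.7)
The plan is to verify the identity by direct inspection of the explicit expression \eqref{eqn:ZA}, tracking how each factor transforms under the substitution $y_j \to y_j^{-1}$ for a fixed $j \in \{1,\dots,2n\}$. The basic ingredient is $[z^{-1}] = -[z]$.

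First, I would identify the factors which are manifestly invariant or transform trivially. The triple product $\prod_{i=1}^n\prod_{j=1}^{2n}\prod_{k=-1}^1[q^k x_i y_j][q^k x_i/y_j]$ is invariant since it already contains both $y_j$ and $y_j^{-1}$ symmetrically. All factors in the prefactor which do not involve $y_j$ are unchanged. The factor $\prod_{i}[b/y_i]$ contributes a single nontrivial factor, producing a multiplicative $[by_j]/[b/y_j]$. It remains to show that the other $y_j$-dependent factors combine to give $+1$.

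Second, I would analyse the determinant of $M$ defined in \eqref{eqn:ZAmatrix}. Each entry in column $j$ is of the form
\begin{equation*}
M_{ij}(y_j) = \frac{1}{[q\xi_i/y_j][\xi_i/y_j]}-\frac{1}{[q\xi_iy_j][\xi_iy_j]},
\end{equation*}
which is manifestly antisymmetric under $y_j \to y_j^{-1}$, i.e.\ $M_{ij}(y_j^{-1}) = -M_{ij}(y_j)$. The other columns are unaffected, so $\det M$ acquires an overall sign.

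Third, and most delicately, I would compute how the denominator product $\prod_{1\le i<j\le 2n}[y_i/y_j] \cdot \prod_{i\le j}[y_iy_j]$ transforms. Writing $A$ and $B$ for the two products and splitting the $y_j$-dependent factors into those with smaller and larger indices, one checks that $A \mapsto (-1)^{2n-j} A \cdot X$ and $B \mapsto (-1)^{2n-j+1} B \cdot Y$ where $X$ and $Y$ are reciprocal ratios of $[y_\cdot y_j]$ and $[y_\cdot/y_j]$ factors with $XY = 1$. The extra $-1$ in $B$ comes from the diagonal term $[y_j^2] \to -[y_j^2]$. Hence the product $AB$ picks up a factor of $-1$, and so does the denominator.

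Finally, combining the two sign flips — one from $\det M$ in the numerator and one from $AB$ in the denominator — leaves an overall $+1$. The only residual change is the factor $[by_j]/[b/y_j]$ from $\prod_i[b/y_i]$, which is exactly the claimed ratio. The main subtlety in this argument is the careful sign accounting in the denominator product; beyond that, the computation is straightforward.
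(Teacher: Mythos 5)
Your verification is correct and matches the paper's intent exactly: the paper states this lemma as ``a very elementary consequence of the explicit formula'' \eqref{eqn:ZA} and gives no further proof, and your direct inspection supplies precisely the missing details. In particular, your sign bookkeeping is right --- the column $j$ of $M$ flips sign, and the combined denominator factors $\prod_{1\le i<j\le 2n}[y_i/y_j]\prod_{i\le j}[y_iy_j]$ pick up an overall $-1$ (with the reciprocal ratios cancelling), so the only surviving change is the factor $[by_j]/[b/y_j]$ from $\prod_{i}[b/y_i]$.
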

Another property, the so-called \emph{wheel condition}, can be proved directly from the graphical representation.
\begin{lemma}
\label{lemma:wheel}
  Let $n\ge 2$. For $j=1,\dots,n$,
  we have
  \begin{equation}
   Z_{\text{\rm \tiny A}}(\{x_i\}_{i=1}^n;\dots, q^{-1}x_j,\dots, x_j, \dots, q x_j,\dots)=0.
  \end{equation}
\end{lemma}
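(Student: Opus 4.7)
The plan is to prove the wheel condition by a direct analysis of the explicit determinant formula of \cref{prop:ZAspin1}. Let $a,b,c$ be the three positions at which the parameters specialise to $y_a=q^{-1}x_j$, $y_b=x_j$, $y_c=qx_j$. We view $Z_{\text{A}}$ as the product of the prefactor of \eqref{eqn:ZA} and the determinant $\det M$ of the $2n\times 2n$ matrix $M$ with entries $M_{ij}$ given by \eqref{eqn:ZAmatrix}. Note that $Z_{\text{A}}$ is symmetric in the $y_i$: the sign change from swapping two columns of $M$ cancels the sign change from $\prod_{i<j}[y_i/y_j]$ in the denominator of the prefactor, so the actual positions of the three specialisations are immaterial.

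First, I would identify three explicit linear zeros of the prefactor at the specialisation. Inside the triple product $\prod_{\alpha=1}^{n}\prod_{\beta=1}^{2n}\prod_{k=-1}^{1}[q^k x_\alpha y_\beta][q^k x_\alpha/y_\beta]$ in the numerator, the three factors $[q^{-1}x_j/y_a]$, $[x_j/y_b]$, $[q x_j/y_c]$, corresponding to $\alpha=j$ and $k=-1,0,1$ respectively, each vanish to first order; the remaining factors remain non-zero for generic $x_i$.

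Second, I would locate all poles of $\det M$. From \eqref{eqn:ZAmatrix}, the entry $M_{ij}$ is singular precisely when $y_j$ equals one of $\xi_i,q\xi_i,\xi_i^{-1},(q\xi_i)^{-1}$. Using $\xi_{2j-1}=x_j$ and $\xi_{2j}=q^{-1}x_j$, and ruling out accidental coincidences by genericity of the $x_i$, the only singular entries are $M_{2j,a}$, $M_{2j-1,b}$, $M_{2j,b}$, $M_{2j-1,c}$. The key structural observation is that these four pole entries occupy only the two rows $\{2j-1,2j\}$ and the three columns $\{a,b,c\}$. Since a permutation in $\det M=\sum_\pi\mathrm{sgn}(\pi)\prod_i M_{i,\pi(i)}$ pairs distinct columns with distinct rows, no permutation can use pole entries in all three columns $a,b,c$ simultaneously: three columns cannot share only two rows. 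Therefore $\det M$ has a pole of combined order at most two at the specialisation.

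Combining these estimates, $Z_{\text{A}}$ vanishes to combined order at least one as the three specialised parameters are reached. Since $Z_{\text{A}}$ is a Laurent polynomial in each $y_i$ by its graphical definition (so the apparent poles of the formula are removable), continuity forces its value at the specialisation to be zero. The main difficulty is the bookkeeping in the second step: a clean enumeration of every singular matrix entry, confirmation that the genericity assumption on the $x_i$ excludes all other poles, and the combinatorial argument that no permutation simultaneously taps all three specialised columns. Once that obstruction is in place, the pole/zero count finishes the argument.
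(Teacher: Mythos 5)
Your proof is correct, but it takes a genuinely different route from the paper's. The paper argues graphically: using the symmetry of $Z_{\text{\rm \tiny A}}$ in the $y_j$ it reduces to $y_1=qx_1$, $y_2=x_1$, $y_3=q^{-1}x_1$, observes that the boundary conditions, the arrow-conservation rule and the vanishing of certain ten-vertex weights freeze the three vertices in the bottom-left corner of the lattice, and notes that the product of the frozen weights is $[q][q^2][qy_3/x_1]$, which vanishes at $y_3=q^{-1}x_1$. You instead work from the closed formula of \cref{prop:ZAspin1} and do a pole/zero count. Your bookkeeping checks out: the numerator acquires exactly the three simple zeros $[q^{-1}x_j/y_a]$, $[x_j/y_b]$, $[qx_j/y_c]$ (you should also note that the denominator of the prefactor stays non-zero for generic $x_i$ and $q$); the singular entries of $M$ at the specialisation are precisely $M_{2j,a}$, $M_{2j-1,b}$, $M_{2j,b}$, $M_{2j-1,c}$, each a simple pole, confined to the two rows $2j-1,2j$; hence every permutation term of $\det M$ contains at most two singular factors. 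The one point to tighten is the phrase ``combined order'', which is not well defined per se in several variables: either approach the specialisation along a generic one-parameter path (prefactor $=O(\epsilon^3)$, $|\det M|=O(\epsilon^{-2})$, product $=O(\epsilon)$) or specialise $y_a$, $y_b$, $y_c$ one at a time by column expansion; in both cases the Laurent polynomiality of $Z_{\text{\rm \tiny A}}$ in the $y_i$, which you correctly invoke, identifies the limit with the value. As for what each approach buys: the paper's freezing argument is elementary, independent of the determinant formula, and exposes the combinatorial mechanism; yours presupposes \cref{prop:ZAspin1} but handles arbitrary, non-adjacent, unordered positions $a,b,c$ directly, and the same residue extraction is exactly what produces the reduction relations for $Z_{\text{\rm \tiny A}}$ at $y=x_i^{\pm1}$ and $y=qx_i^{\pm1}$ that the paper needs later in the proof of \cref{prop:XiADZ2}.
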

\begin{proof}
It is not difficult to see from \eqref{eqn:ZA} that the partition function
$Z_{\text{\rm \tiny A}}(\{x_i\}_{i=1}^n;\{y_i\}_{i=1}^{2n})$
is separately symmetric in
 the $x_i$ and the $y_j$. It is therefore
sufficient to prove the statement for $y_1=q x_1,\,y_2= x_1$ and $y_3=q^{-1}x_1$. 
Let us consider the graphical representation of the partition function in \cref{fig:spinonepartfunc} and inspect the lower left corner. If we set $y_1=q x_1,\,y_2= x_1$, then the vertices in the bottom left-corner are fixed to
  \begin{equation*} 
     \begin{tikzpicture}[baseline=0.4cm]
     \draw[postaction={on each segment={mid arrow}}] (.5,0) -- (0,0);
     \draw[postaction={on each segment={mid arrow}}] (.5,0) -- (1,0);  
     \draw[densely dotted,double] (.5,0)  -- (.5,.5);
     \draw[postaction={on each segment={mid arrow}},double] (.5,-.5) -- (.5,0); 
     \draw[postaction={on each segment={mid arrow}}] (.5,.5) -- (0,.5);
     \draw[postaction={on each segment={mid arrow}}] (.5,.5) -- (1,.5);
  
     \draw[postaction={on each segment={mid arrow}},double] (.5,1.5) -- (.5,1) -- (.5,.5); 
     \draw[postaction={on each segment={mid arrow}}] (1,1) -- (0.5,1)--(0,1);
     \draw (.55,-.5) node[below] {$x_1$};
     \draw (0,0) node[left] {$y_1=q x_1$};
     \draw (0,.5) node[left] {$y_2=x_1$};
     \draw (0,1) node[left] {$y_3$};
     
     \draw (0.5,2) node {$\vdots$};
     \foreach \y in {0,.5,1} \draw (1.5,\y) node {$\cdots$};
    \end{tikzpicture} \ \ .
  \end{equation*}
  Indeed, the two lowest vertices in this row are the only ones
  with non-vanishing weights (see \cref{fig:10v})
   compatible with the boundary condition.
   The third vertex from the bottom is forced through the arrow conservation rule of the model. Since these vertices are fixed, the partition function is proportional to the product of their weights, which is $[q][q^2][q y_3/x_1]$. This vanishes for $y_3= q^{-1} x_1$, which concludes the proof.
\end{proof}

%%%%%%%%%
\subsection{Diagonal twist}
\label{sec:DTwist}
%%%%%%%%%

\subsubsection{Inhomogeneous case}

We apply the results of the previous section in order to find a linear sum rule for the vector $|\psi_{\text{\rm \tiny D}}\rangle$ for $N=2n$ sites, with the inhomogeneity parameters fixed to the alternating sum \eqref{eqn:alternatingparams}. More specifically, we consider the scalar product
\begin{equation}
  \Xi_{\text{\rm \tiny D}}(x_1,\dots,x_n;b) = \langle \chit (x_1,\dots,x_n)|\psi_{\text{\rm \tiny D}}\rangle = Z_{\text{\rm \tiny A}}(\{x_i\}_{i=1}^n;\{x_i^{-1}\}_{i=1}^n,\{x_i\}_{i=1}^n).
  \label{eqn:defXiD}
\end{equation}
It can be evaluated in closed form in terms of 
a partition function $Z_{\text{\rm \tiny U}}(\{x_i\}_{i=1}^n;\{y_i\}_{i=1}^n;b)$ for the six-vertex model on a $2n\times n$ rectangle with U-turn boundary conditions. 
This lattice is illustrated in \cref{fig:uturns} for $n=2$.
Tsuchiya \cite{tsuchiya:98} obtained the following result: 
\begin{subequations}\label{eq:ZU5.8}
\begin{align}
  Z_{\text{\rm \tiny U}}(\{x_i\}_{i=1}^n;\{y_i\}_{i=1}^n;b) = &\frac{[q^2]^n
  \prod_{i=1}^n[b/y_i][q^2 x_i^2]\prod_{i,j=1}^n[q x_i/y_j][q y_j/x_i][q x_iy_j][q/(x_iy_j)]}{\prod_{1\le i<j \le n}[x_j/x_i][y_i/y_j]\prod_{1\le i\le j \le 2n}[1/(x_ix_j)][y_iy_j]} \nonumber\\ 
  & \times \Det{i,j=1}{n} 
  (M_{\text{\rm \tiny U}})_{ij}\label{eq:Tsu}
\end{align}
where
\begin{equation}
  (M_{\text{\rm \tiny U}})_{ij}=\frac{1}{[q x_i/y_j][q y_j/x_i]}-\frac{1}{[q x_i y_j][q/(y_jx_i)]}.
\end{equation}
\end{subequations}

\begin{figure}
  \centering
  \begin{tikzpicture} 
       
    \begin{scope}
    \draw (0,0) grid[step=.5cm] (0.5,1.5);
    \foreach \x in {0,0.5}
    {
      \draw[postaction={on each segment={mid arrow}}] (\x,0)--(\x,-0.5);
      \draw[postaction={on each segment={mid arrow}}] (\x,1.5)--(\x,2);
      
      \draw[postaction={on each segment={mid arrow}}] (-.5,\x)--(0,\x);
      \draw[postaction={on each segment={mid arrow}},yshift=1cm] (-.5,\x)--(0,\x);
  
      \draw (.5,\x)--(.75,\x);
      \draw[yshift=1cm] (.5,\x)--(.75,\x);
    }
    
    \draw (.75,1) arc [start angle=-90, end angle = 90, radius=.25cm];
    \draw (1,1.25) node[right] {$q x_2$};
    \fill (1,1.25) circle (1.5pt) ;
  
    \draw (.75,0) arc [start angle=-90, end angle = 90, radius=.25cm];
    \draw (1,0.25) node[right] {$q x_1$};
    \fill (1,0.25) circle (1.5pt) ;
   
    \draw (0,-.5) node [below] {$y_1$};
    \draw (.5,-.5) node [below] {$y_2$};
    
    \draw (-1.25,0) node [right]  {$x_1$};
    \draw (-1.25,.5) node  [right] {$x_1^{-1}$};
    \draw (-1.25,1) node [right]  {$ x_2$};
    \draw (-1.25,1.5) node [right]  {$x_2^{-1}$};
    
    \begin{scope}[xshift=3.5cm,yshift=-0.5cm]
      \draw (0.75,1) arc [start angle=-90, end angle = 0, radius=.25cm];
      \draw (1,1.25) arc [start angle=0, end angle = 90, radius=.25cm];
      \fill (1,1.25) circle (1.5pt) ;
      \draw (0.5,1.0) -- (0.74,1.0) (0.5,1.5) -- (0.74,1.5);
      \draw[postaction={mid arrow}] (0.74,1.5) -- (0.75,1.5);
      \draw[postaction={mid arrow}] (0.75,1.0) -- (0.55,1.0);
      \draw (1,1.25) node [right] {$z$};
      \draw (0.8,.5) node {$[b z]$};
    \end{scope}
    
    \begin{scope}[xshift=5.0cm, yshift=-0.5cm]
      \draw (0.75,1.5) arc [start angle=90, end angle = 0, radius=.25cm];
      \draw (1,1.25) arc [start angle=0, end angle = -90, radius=.25cm];
      \fill (1,1.25) circle (1.5pt) ; 
      \draw (0.5,1.0) -- (0.74,1.0) (0.5,1.5) -- (0.74,1.5);
      \draw[postaction={mid arrow}] (0.74,1.5) -- (0.75,1.5);
      \draw[postaction={mid arrow}] (0.75,1.0) -- (0.55,1.0);
      \draw (1,1.25) node [right] {$z$};
      \draw (0.8,.5) node {$[b/z]$};
    \end{scope}   
    \end{scope}
  \end{tikzpicture}
  \caption{A $4 \times 2$ inhomogeneous lattice with U-turn boundary conditions and the statistical weights for the two admissible arrow orientations along a horizontal U-turn.}
  \label{fig:uturns}
\end{figure}
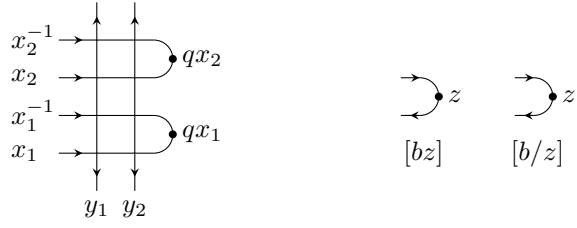

\begin{proposition} The scalar product is given by
  \begin{equation} 
    \Xi_{\text{\rm \tiny D}}(x_1,\dots,x_n;b) =\bigg([q]^n \prod_{i=1}^n[bx_i]
    \prod_{1\le i<j\le 2n}[q w_i/w_j]\bigg) Z_{\text{\rm \tiny U}}(\{x_i\}_{i=1}^n;\{x_i\}_{i=1}^n)
    \label{eqn:resXiD}
  \end{equation}
  where $w_{2i-1} = x_i$ and $w_{2i} = x_i^{-1}$.
\end{proposition}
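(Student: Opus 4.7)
My plan is to compute $\Xi_{\text{\rm \tiny D}}$ directly from the explicit determinant formula \eqref{eqn:ZA} of \cref{prop:ZAspin1}, specialised to the values $y_k = x_k^{-1}$ and $y_{k+n} = x_k$ for $k=1,\dots,n$ dictated by \eqref{eqn:defXiD}, and then show that the resulting $2n\times 2n$ determinant collapses to Tsuchiya's $n\times n$ U-turn determinant \eqref{eq:ZU5.8}. The crucial observation is that under this specialisation the matrix entries \eqref{eqn:ZAmatrix} satisfy $M_{i,k}=-M_{i,k+n}$ for all $i$, so the determinant vanishes identically; simultaneously, the prefactor in \eqref{eqn:ZA} contains the factors $[y_k y_{k+n}]=[1]=0$ in the denominator for each $k$. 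The expression is therefore of indeterminate form $0/0$ and must be evaluated as a limit.

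\textbf{Regularisation by column operations.} I would keep $y_{n+1},\dots,y_{2n}$ as free variables and carry out the $n$ column operations $C_{k+n} \to C_{k+n}+C_k$ in the matrix $M$. The new column $k+n$ vanishes linearly as $y_{k+n}\to x_k$, with leading coefficient $\partial_{y_{k+n}} M_{i,k+n}|_{y_{k+n}=x_k}$. Each factor $(y_{k+n}-x_k)$ that comes out of the determinant then cancels the simple zero $[y_k y_{k+n}]$ in the denominator. The resulting finite $2n\times 2n$ determinant has columns $1,\dots,n$ equal to $M_{i,k}|_{y_k=x_k^{-1}}$ and columns $n+1,\dots,2n$ equal to the derivatives $\partial_y M_{i,k+n}(y)|_{y=x_k}$, while the surviving part of the prefactor of \eqref{eqn:ZA} combines the remaining $[y_i/y_j]$ and $[y_i y_j]$ products into an explicit Laurent monomial in the $x_i$'s.

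\textbf{Reduction to Tsuchiya's formula.} The rows of $M$ are naturally paired because $\xi_{2i-1}=x_i$ and $\xi_{2i}=q^{-1}x_i$. My plan is to exploit this pair-structure by further row operations that merge rows $2i-1$ and $2i$, so as to rewrite the $2n\times 2n$ determinant as the $n\times n$ determinant appearing in \eqref{eq:Tsu} (with $y_j=x_j$), times a factor absorbed into the prefactor. Comparing the simplified prefactor with the one in \eqref{eqn:resXiD} and \eqref{eq:Tsu} then yields the claim.

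\textbf{Main obstacle.} The delicate step is the last one: establishing the identity between the $2n\times 2n$ ``derivative determinant'' and Tsuchiya's $n\times n$ determinant. It will require computing $\partial_y M_{i,k+n}|_{y=x_k}$ explicitly (the numerator should produce $[q^2]$-type combinations, while the denominator will generate the $[q x_i/y_j][q y_j/x_i]-[q x_iy_j][q/(x_iy_j)]$ structure characteristic of $M_{\text{\rm \tiny U}}$) and then matching the combinatorial coefficients after the row reduction. All other steps are essentially bookkeeping, once this identity is in hand.
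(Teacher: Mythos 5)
Your overall strategy---specialise the determinant formula \eqref{eqn:ZA} of \cref{prop:ZAspin1} at $y_k=x_k^{-1}$, $y_{k+n}=x_k$ and collapse the $2n\times 2n$ determinant to Tsuchiya's $n\times n$ determinant \eqref{eq:Tsu}---is the same as the paper's, but the execution has two genuine gaps. First, the singularity structure is misdiagnosed. At $y_k=x_k^{-1}$ the entries $M_{2k-1,k}$ and $M_{2k,k}$ of \eqref{eqn:ZAmatrix} are already infinite (they contain $1/[\xi_{2k-1}y_k]=1/[1]$ and $1/[q\xi_{2k}y_k]=1/[1]$), so you cannot ``keep $y_{n+1},\dots,y_{2n}$ as free variables'' with the first group fixed: the columns $C_1,\dots,C_n$ you propose to add are ill-defined, and ``the determinant vanishes identically'' only holds on the slice $y_{k+n}=y_k^{-1}$ away from these poles. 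Moreover, the prefactor of \eqref{eqn:ZA} does not merely have zeros in its denominator: its numerator contains the $2n$ vanishing factors $[x_ky_k]$ and $[x_k/y_{k+n}]$ against only the $n$ factors $[y_ky_{k+n}]$ below, so the prefactor tends to zero of order $n$ while the determinant diverges of order $n$. The true structure is $0\cdot\infty$, not the $0/0$ you describe, and it is tamed by an exact cancellation rather than by differentiating columns.

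Second, the step you defer as the ``main obstacle'' is the actual content of the proof, and it is not reached via column derivatives. You correctly sense that rows $2k-1$ and $2k$ should be merged, but this must come \emph{first} and rests on the exact identity, valid for all $y_j$,
\begin{equation*}
  M_{2k-1,j}+M_{2k,j}
  =-\frac{[q^2]}{[q]}\left(\frac{1}{[qx_k/y_j][qy_j/x_k]}-\frac{1}{[qx_ky_j][q/(x_ky_j)]}\right)
  =-\frac{[q^2]}{[q]}\,(M_{\text{\rm \tiny U}})_{kj},
\end{equation*}
which follows from $\xi_{2k-1}=x_k$, $\xi_{2k}=q^{-1}x_k$ and $[qz]+[z/q]=(q+q^{-1})[z]$. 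After replacing row $2k$ by minus this sum, the even rows are finite at the specialisation and are exactly Tsuchiya's rows; the remaining poles sit only in the odd rows and are cancelled by the vanishing prefactor, so the limit selects the $n\times n$ minor \eqref{eq:Tsu} and only bookkeeping of prefactors remains. Without this identity (or an equivalent exact mechanism producing $(M_{\text{\rm \tiny U}})_{kj}$), your proposal does not yet constitute a proof.
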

\begin{proof}
  Notice that in \eqref{eqn:ZA}, we have $\det M = (-1)^n \det M'$ where $M'_{2k-1,j}=M_{2k-1,j}$ and
  \begin{equation}
    M'_{2k,j}=-(M_{2k-1,j}+ M_{2k,j}) =\frac{[q^2]}{[q]}\left(\frac{1}{[q x_k/y_j][q y_j/x_k]}-\frac{1}{[q x_k y_j][q/(y_jx_k)]}\right)=\frac{[q^2]}{[q]}(M_{\text{\rm \tiny U}})_{kj}.
  \end{equation}
  If $y_i$ approaches $x_i$, then the entry $M'_{2i-1,i}$ diverges whereas all other entries of $M'$ remain finite in this limit. The divergence is compensated by the vanishing of the prefactor. The rest of the proof consists in simplifying the prefactors of $\Xi_{\text{\rm \tiny D}}(x_1,\dots,x_n;b)$, which is straightforward.
\end{proof}

\subsubsection{Homogeneous limit}
\paragraph{Even size.} The results of the previous subsection allow us to
determine the special components of the vector $|\phi_{\text{\rm \tiny D}}\rangle$ given in \cref{thm:PhiDSpecCom}. We set the length of the system to $N = 2n$ and consider the homogeneous limit of \eqref{eqn:defXiD}. We show below that is related to $A_{\text{\rm \tiny V}}(2n+1; x^2)$, the generating functions for VSASMs. Kuperberg \cite{kuperberg:02} obtained this generating function as follows: 
\begin{equation}\label{eq:KupU}
  A_{\text{\rm \tiny V}}(2n+1; x^2) = \frac{Z_{\text{\tiny \rm U}}(1,\dots,1;1,\dots,1)}{[q]^{2n^2-n}[q^2]^n([bq]+[b/q])^n}, \quad x=q+q^{-1}.
\end{equation}

Let us define
\begin{equation}
  |\chit\rangle \equiv |\chit(z=1)\rangle=[b][bq]|{\Uparrow\Downarrow}\rangle-[b q][b/q]|{00}\rangle+[b][b/q]|{\Downarrow\Uparrow}\rangle.
  \label{eqn:defChiHom}
\end{equation}  
Using \eqref{eqn:defPhiD}, \eqref{eqn:resXiD} and \eqref{eq:KupU}, we obtain 
\begin{equation}
   (\langle \chit |)^{\otimes n}|\phi_{\text{\rm \tiny D}}\rangle = [b]^n([bq]+[b/q])^nA_{\text{\rm \tiny V}}(2n+1; x^2).
   \label{eqn:chiPhiD}
\end{equation}

In particular, setting $b=q$ in \eqref{eqn:chiPhiD} leads to $(\phi_{\text{\rm \tiny D}})_{\Uparrow\Downarrow\cdots \Uparrow\Downarrow} = A_{\text{\rm \tiny V}}(2n+1;x^2)$, as stated in \cref{thm:PhiDSpecCom}.
Furthermore, it is possible to explicitly compute \eqref{eq:KupU} by taking the homogeneous limit of \eqref{eq:ZU5.8}. The result is an expression for $A_{\text{\rm \tiny V}}(2n+1; t)$ in terms of the determinant of a matrix whose entries are polynomial in $t$.
\begin{proposition}
  \label{prop:AVDet}
  The polynomial $A_{\text{\rm \tiny V}}(2n+1; t)$ is given by
  \begin{equation}\label{eq:AVdet}
    A_{\text{\rm \tiny V}}(2n+1; t) = \Det{i,j=0}{n-1}\left(\sum_{k=0}^{n-1}\binom{i+j+1}{i+k+1}\binom{i+k+1}{2k+1}t^{k}\right).
    \end{equation}
  \begin{proof}
     We abbreviate $t=x^2=(q+q^{-1})^2$. Using the results of \cite{kuperberg:02}, and Tsuchiya's partition function 
     \eqref{eq:Tsu}, we obtain after a simple calculation
     \begin{equation}
     A_{\text{\rm \tiny V}}(2n+1; t) = [q]^{2n(n+1)}\lim_{\substack{x_1,\dots,x_n\to 1\\ y_1,\dots,y_n\to 1}}\frac{\Det{i,j=1}{n}\left(\frac{1}{[q x_i/y_j][q y_j/x_i][q x_iy_j][q/(x_iy_j)]}\right)}{\prod_{1\le i<j \le n}[x_i/x_j][x_i x_j][y_i/y_j][y_i y_j]}.
     \end{equation}
     In order to apply the technique from \cite{behrend:12}, we introduce new variables $X_i=x_{i+1}^2+1/x_{i+1}^2,\,Y_i=y_{i+1}^2+1/y_{i+1}^2$ for $\,i=0,\dots,n-1$. This leads to
     \begin{alignat}{2}
       A_{\text{\rm \tiny V}}(2n+1; t) &= [q]^{2n(n+1)}\lim_{\substack{X_0,\dots,X_{n-1}\to 2\\ Y_0,\dots,Y_{n-1}\to 2}}\frac{\Det{i,j=0}{n-1}f(X_i,Y_j)}{\prod_{0\le i<j\le n-1}(X_i-X_j)(Y_i-Y_j)}\nonumber\\
       & = [q]^{2n(n+1)} \lim_{\substack{X_0,\dots,X_{n-1}\to 2\\ Y_0,\dots,Y_{n-1}\to 2}} 
\ \Det{i,j=0}{n-1}f[X_0, \dots, X_i;Y_0,\dots,Y_j]\nonumber\\
       &=[q]^{2n(n+1)}\Det{i,j=0}{n-1}\left(\{X^iY^j\}f(X+2,Y+2)\right)
     \end{alignat}
      where we applied \eqref{eqn:divideddiff} with $r=2$ to the function
  \begin{equation}
    f(X,Y)=\frac{1}{q^4+q^{-4}-2-(q^2+q^{-2})XY + X^2+Y^2}.
    \label{eqn:deff}
  \end{equation}
  It remains to evaluate the series expansion of $ f(X+2,Y+2)$ in $X$ and $Y$, which yields
  \begin{equation}
   \{X^iY^j\}f(X+2,Y+2)=\frac{1}{(x^2-4)^{i+j+2}} \sum_{k=0}^{n-1}\binom{i+j+1}{i+k+1}\binom{i+k+1}{2k+1}t^{k}.
   \label{eqn:coeffsf}
  \end{equation}
  The prefactor $(x^2-4)^{-i-j-2}= [q]^{-2(i+j+2)}$ can be extracted from the determinant and \eqref{eq:AVdet} readily follows.
  \end{proof}
\end{proposition}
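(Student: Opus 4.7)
The plan is to apply the same Behrend--Di Francesco--Zinn-Justin strategy used in the proof of \cref{prop:ZADhom}, but now starting from Tsuchiya's partition function \eqref{eq:ZU5.8} for the U-turn six-vertex model and Kuperberg's identity \eqref{eq:KupU} which expresses $A_{\text{\rm \tiny V}}(2n+1;x^2)$ as a normalised homogeneous limit of $Z_{\text{\rm \tiny U}}(x_1,\dots,x_n;y_1,\dots,y_n;b)$. The auxiliary parameter $b$ appears in matching factors on both sides of \eqref{eq:KupU} and cancels out, so the task reduces to taking the homogeneous limit $x_i, y_j \to 1$ of the Tsuchiya determinant together with the appropriate $x, y$-dependent prefactor.

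The first obstacle is that in this limit, the prefactor in \eqref{eq:Tsu} vanishes while the determinant diverges, so the two must be combined. The Behrend trick is to switch to the symmetric variables $X_i = x_{i+1}^2 + x_{i+1}^{-2}$ and $Y_j = y_{j+1}^2 + y_{j+1}^{-2}$, which tend to $r=2$ in the homogeneous limit. The denominators of the determinant entries only depend on $(x_i, y_j)$ through $X$ and $Y$ because of the identity $[qz][q/z] = (q^2+q^{-2}) - (z^2+z^{-2})$, applied to $z = x_i/y_j$ and $z = x_iy_j$. A short calculation using elementary symmetric functions of $(x_iy_j)^2+(x_iy_j)^{-2}$ and $(x_i/y_j)^2+(y_j/x_i)^2$ shows that each entry becomes $f(X_i, Y_j)$ for a rational function $f$ of two variables, and the $[x_i/x_j][x_ix_j][y_i/y_j][y_iy_j]$ prefactor becomes the Vandermonde-type product $\prod_{i<j}(X_j-X_i)(Y_j-Y_i)$ up to a computable scalar.

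After this manipulation, the ratio of determinant to prefactor is recognised as the determinant of the divided differences $f[X_0,\dots,X_i;Y_0,\dots,Y_j]$. The confluent limit formula \eqref{eqn:divideddiff}, applied with $r=2$, then reduces the whole expression to $\det_{i,j=0}^{n-1}\bigl(\{X^iY^j\}\,f(X+2,Y+2)\bigr)$, times an explicit power of $[q]$. It remains only to expand $f(X+2,Y+2)$ as a power series in $X, Y$. Since $f$ is a reciprocal of a quadratic polynomial in $X, Y$ with coefficients in $t = x^2 = (q+q^{-1})^2$, its coefficients can be read off directly by geometric-series expansion; factoring out the $(x^2-4)^{-i-j-2} = [q]^{-2(i+j+2)}$ that appears naturally absorbs all remaining powers of $[q]$ in the prefactor, and a routine binomial rearrangement produces the double sum $\sum_k \binom{i+j+1}{i+k+1}\binom{i+k+1}{2k+1} t^k$.

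The main technical hurdle is the change of variables: verifying that the entries of Tsuchiya's determinant really are rational functions of the symmetric combinations $X_i$ and $Y_j$ (and nothing else), and tracking every spurious factor of $[q]$ carefully so that the overall normalisation in \eqref{eq:KupU} comes out correctly. Once this is done, no creative input is needed; the remaining calculations (partial-fraction expansion of $f(X+2,Y+2)$ and identification of its Taylor coefficients with the stated double binomial) are purely mechanical.
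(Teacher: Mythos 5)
Your proposal follows exactly the route of the paper's own proof: combine Kuperberg's identity \eqref{eq:KupU} with Tsuchiya's determinant \eqref{eq:Tsu}, pass to the symmetric variables $X_i=x_{i+1}^2+x_{i+1}^{-2}$, $Y_j=y_{j+1}^2+y_{j+1}^{-2}$, take the confluent limit via the divided-difference formula \eqref{eqn:divideddiff} with $r=2$, and read off the Taylor coefficients of $f(X+2,Y+2)$. The plan is correct and the steps you flag as mechanical (the change of variables and the binomial rearrangement) are indeed the only computations required.
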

\paragraph{Odd size.} Next, we consider $N=2n+1$ sites and evaluate the component $(\phi_{\text{\rm \tiny D}})_{0\cdots 0}$. This can be done graphically, taking into account the fact that, as shown in \cite{hagendorf:15}, each component of the vector $|\psi_{\text{\rm \tiny D}}\rangle$ is a partition function of the ten-vertex model defined in \cref{sec:Rmat+fusion}. The key ingredient in proving the next proposition is that some vertex weights vanish in the homogeneous limit, as one can see from \cref{fig:10v}.
\begin{proposition}
  In the homogeneous limit, 
  the component of $|\psi_{\text{\rm \tiny D}}\rangle$ along $|0\cdots0 \rangle$ is given by
  \begin{equation}
    \psi_{\text{\rm \tiny D}}(1,\dots,1)\!\!\!_{\underset{N=2n+1}{\underbrace{\scriptstyle0\cdots 0}}}= ([q][q^2])^{n+1/2}[q]^{2n}\psi_{\text{\rm \tiny D}}(1,\dots,1)_{\underset{N'=2n}{\underbrace{\scriptstyle\Uparrow\Downarrow\cdots \Uparrow\Downarrow}}}\,.
  \end{equation}
  \begin{proof}
    Graphically, the component $\psi_{\text{\rm \tiny D}}(1,\dots,1)_{0\cdots 0}$ can be represented as a partition function on an $N\times N$ square:
    \begin{equation}
      \psi_{\text{\rm \tiny D}}(1,\dots,1)_{0\cdots 0} =\quad 
      \begin{tikzpicture}[baseline=-.3cm]
      \foreach \y in {-1.5,-1,.5,1}
      {
        \draw (-1.5,\y) -- (1,\y);
        \draw[postaction={on each segment={mid arrow}}] (-1.5,\y) -- (-2,\y);
        \draw[postaction={on each segment={mid arrow}}] (1,\y) -- (1.5,\y);
        \draw[double] (\y,-1.5) -- (\y,1.);
        \draw[double,densely dotted] (\y,1.) -- (\y,1.5);
        \draw[double,postaction={on each segment={mid arrow}}] (\y,-2) -- (\y,-1.5);  
      }
      \draw (-.25,1.25) node {$\cdots$};
      \draw (-.25,-1.75) node {$\cdots$};
      \draw (1.25,-.125) node {$\vdots$};
      \draw (-1.75,-.125) node {$\vdots$};
      \end{tikzpicture} \ \  .
    \end{equation}
    Here, all horizontal and vertical parameters are set to one. Since the weights of the vertices
    \begin{equation*}
 \begin{tikzpicture}     
     \draw[postaction={on each segment={mid arrow}}] (0,0) -- (.5,0) -- (1,0);
     \draw[densely dotted ,double]  (.5,.5)  --(.5,-.5);
     \begin{scope}[xshift=1.5cm]
     \draw (0.5,0.05) node {and}; 
     \end{scope}
     \begin{scope}[xshift=3cm]
       \draw[postaction={on each segment={mid arrow}}] (1,0) -- (.5,0) -- (0,0);
       \draw[densely dotted,double] (.5,-.5)  -- (.5,0) --  (.5,.5);|
     \end{scope}
 \end{tikzpicture}
\end{equation*}
vanish in the homogeneous limit, the topmost row is fixed to
\begin{equation*} 
 \begin{tikzpicture}     
   \begin{scope}
   \clip (0,-.5) rectangle (1.5,.5);
   \draw[postaction={on each segment={mid arrow}}] (.5,0) -- (0,0);
   \draw[postaction={on each segment={mid arrow}}] (.5,0) -- (1,0);
   \draw[postaction={on each segment={mid arrow}}] (1.5,0) -- (1,0);
   \draw[double,postaction={on each segment={mid arrow}}] (.5,-.5) -- (.5,0);
   \draw[double,postaction={on each segment={mid arrow}}] (1,0) -- (1,-.5);
   \draw[double,postaction={on each segment={mid arrow}}] (1.5,-.5) -- (1.5,0);
   \foreach \x in {.5,1,1.5} \draw [densely dotted,double] (\x,0) -- (\x,.5);
   \end{scope}
   \begin{scope}
     \clip (2.5,-.5) rectangle (4,.5);
     \foreach \x in {2.5,3,3.5} \draw [densely dotted,double] (\x,0) -- (\x,.5);
     \draw[postaction={on each segment={mid arrow}}] (2.5,0) -- (3,0);
     \draw[postaction={on each segment={mid arrow}}] (3.5,0) -- (3,0);
     \draw[postaction={on each segment={mid arrow}}] (3.5,0) -- (4,0);
     \draw[double,postaction={on each segment={mid arrow}}] (2.5,-.5) -- (2.5,0);
     \draw[double,postaction={on each segment={mid arrow}}] (3,0) -- (3,-.5);
     \draw[double,postaction={on each segment={mid arrow}}] (3.5,-.5) -- (3.5,0);
   \end{scope}
   \draw (2,0) node {$\cdots$};
 \end{tikzpicture}
\end{equation*}
and yields a weight $([q][q^2])^{n+1/2}$. This row can thus be removed from the picture.
  The component of interest is thus reduced to a partition function on an
  $(N-1)\times N$ rectangle 
 whose top boundary condition consists of alternating outgoing and ingoing arrows:
   \begin{equation}
      \psi_{\text{\rm \tiny D}}(1,\dots,1)_{0\cdots 0} = \quad ([q][q^2])^{n+1/2}\times 
      \begin{tikzpicture}[baseline=-.5cm]
      \foreach \y in {-1.5,-1,.5}
      {
      \draw[postaction={on each segment={mid arrow}}] (-1.5,\y) -- (-2,\y);
      \draw[postaction={on each segment={mid arrow}}] (1,\y) -- (1.5,\y);
       \draw (-1.5,\y) -- (1,\y);
      }
      \foreach \y in {-1.5,-1,0,.5,1}
      {
      \draw[double] (\y,-1.5) -- (\y,0.5);
      \draw[double,postaction={on each segment={mid arrow}}] (\y,-2) -- (\y,-1.5);  
      }
      \draw[double,postaction={on each segment={mid arrow}}] (-1.5,.5) -- (-1.5,1);
      \draw[double,postaction={on each segment={mid arrow}}] (-1,1) -- (-1,.5);
      \draw[double,postaction={on each segment={mid arrow}}] (1,.5) -- (1,1);
      \draw[double,postaction={on each segment={mid arrow}}] (.5,1) -- (.5,.5);
            \draw[double,postaction={on each segment={mid arrow}}] (0,.5) -- (0,1);
      \draw (-.5,0.75) node {$\cdots$};
      \draw (-.5,-1.75) node {$\cdots$};
      \draw (1.25,-.125) node {$\vdots$};
      \draw (-1.75,-.125) node {$\vdots$};
      \end{tikzpicture} \ \ .
    \end{equation}
    It is also possible to remove the rightmost column.
    To see this, we inspect the vertex in the top right corner. It has two outgoing arrows, and hence the two other edges 
     need to be ingoing arrows. The situation is repeated for each of the next vertices below
     and fixes each one to the configuration
    \begin{equation*}
    \begin{tikzpicture}
      \draw[postaction={on each segment={mid arrow}}] (0,0) -- (.5,0) -- (1,0);
     \draw[postaction={on each segment={mid arrow}},double] (.5,-.5) -- (.5,0) --(.5,.5);
    \draw (1.2,-0.1) node {.};
    \end{tikzpicture}
    \end{equation*}
    This vertex has weight $[q]$ in the homogeneous limit. The column hence produces a weight $[q]^{2n}$ and may be erased, yielding
    \begin{equation}
      \psi_{\text{\rm \tiny D}}(1,\dots,1)_{0\cdots 0} = \quad ([q][q^2])^{n+1/2}[q]^{2n}\times 
      \begin{tikzpicture}[baseline=-.5cm]
      \foreach \y in {-1.5,-1,.5}
      {
      \draw[postaction={on each segment={mid arrow}}] (-1.5,\y) -- (-2,\y);
      \draw[postaction={on each segment={mid arrow}}] (.5,\y) -- (1,\y);
       \draw (-1.5,\y) -- (.5,\y);
      }
      \foreach \y in {-1.5,-1,0,.5}
      {
      \draw[double] (\y,-1.5) -- (\y,0.5);
      \draw[double,postaction={on each segment={mid arrow}}] (\y,-2) -- (\y,-1.5);  
      }
      \draw[double,postaction={on each segment={mid arrow}}] (-1.5,.5) -- (-1.5,1);
      \draw[double,postaction={on each segment={mid arrow}}] (-1,1) -- (-1,.5);
      \draw[double,postaction={on each segment={mid arrow}}] (.5,1) -- (.5,.5);
      \draw[double,postaction={on each segment={mid arrow}}] (0,.5) -- (0,1);
      \draw (-.5,0.75) node {$\cdots$};
      \draw (-.5,-1.75) node {$\cdots$};
      \draw (0.75,-.125) node {$\vdots$};
      \draw (-1.75,-.125) node {$\vdots$};
      \draw (1.35,-0.6) node {.};
      \end{tikzpicture}
      \label{eq:psiDdiag}
    \end{equation}
     The diagram in \eqref{eq:psiDdiag} is the partition function for the ten-vertex model on an $(N-1)\times (N-1)$ homogeneous lattice, which we identify as the component $\psi_{\text{\rm \tiny D}}(1,\dots,1)_{\Uparrow\Downarrow\cdots \Uparrow\Downarrow}$ for the system size $N'=N-1=2n$. This completes the proof.
  \end{proof}
\end{proposition}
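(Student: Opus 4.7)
The plan is to exploit the graphical interpretation of each component of $|\psi_{\text{\rm \tiny D}}\rangle$ as a partition function of the inhomogeneous ten-vertex model, established in \cite{hagendorf:15}, together with the fact that in the homogeneous limit $w_1=\cdots=w_N=1$ several vertex weights listed in \cref{fig:10v} vanish. In particular, the vertices weighted by $[z]$ drop out at $z=1$. First I would represent $\psi_{\text{\rm \tiny D}}(1,\dots,1)_{0\cdots 0}$ as the partition function of the ten-vertex model on an $(2n+1)\times(2n+1)$ homogeneous square lattice, with all-up arrows at the bottom coming from the reference state $|\wedge\rangle$, the horizontal arrow orientations on the left and right dictated by the $\mathcal B(w_j)$ operators, and a top boundary prescribed by the index $|0\cdots 0\rangle$: each top site then carries one outgoing and one incoming arrow in the fused description.

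I would then concentrate on the topmost row. With the $[z]$-weighted vertices removed and given the boundary data on top and bottom together with the ice rule, I expect that each vertex of that row is forced into a unique admissible configuration. The goal at this stage is to identify this configuration explicitly and to verify that the total weight of the frozen row equals $([q][q^2])^{n+1/2}$. Peeling off this row would then leave a $2n\times(2n+1)$ lattice whose new top boundary is an alternating pattern of outgoing and incoming double arrows.

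Next I would repeat the same reasoning on the rightmost column of the reduced lattice. Starting from the upper-right corner, where the two adjacent edges are outgoing, I expect that the combination of arrow conservation and the vanishing of the $[z]$-weighted vertices successively fixes every vertex of that column to the cross-type configuration with weight $[q]$, yielding a total factor $[q]^{2n}$. Removing this column should leave a $2n\times 2n$ homogeneous partition function whose boundary data agree exactly with those of the diagrammatic representation of $\psi_{\text{\rm \tiny D}}(1,\dots,1)_{\Uparrow\Downarrow\cdots\Uparrow\Downarrow}$ for $N'=2n$. Multiplying the three factors would then give the claimed identity.

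The main obstacle I anticipate is the second step: carefully verifying that the vanishing of the $[z]$-weighted vertices together with the specific top boundary induced by $|0\cdots 0\rangle$ really does freeze the entire top row uniquely, rather than only a few of its vertices. The argument requires a vertex-by-vertex check propagating the boundary constraints into the bulk, and keeping track of the weights that remain admissible at each step. Once this is in place, the freezing of the right column follows by an analogous but simpler cascade, and the final identification of the reduced diagram with the target component is essentially a matter of comparing boundary data.
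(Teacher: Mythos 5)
Your proposal follows exactly the paper's own argument: represent the component as a homogeneous ten-vertex partition function, use the vanishing of the $[z]$-weighted vertices to freeze the top row with weight $([q][q^2])^{n+1/2}$, then freeze the rightmost column with weight $[q]^{2n}$, and identify the remaining $2n\times 2n$ diagram with $\psi_{\text{\rm \tiny D}}(1,\dots,1)_{\Uparrow\Downarrow\cdots\Uparrow\Downarrow}$. The vertex-by-vertex check you flag as the main obstacle is indeed carried out in the paper and works as you anticipate.
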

Combining this proposition with the definition \eqref{eqn:defPhiD} of $|\phi_{\text{\rm \tiny D}}\rangle$, we obtain $ (\phi_{\text{\rm \tiny D}})_{0\cdots 0}= x^n (\phi_{\text{\rm \tiny D}})_{\Uparrow\Downarrow\cdots \Uparrow\Downarrow}=x^nA_{\text{\rm \tiny V}}(2n+1;x^2)$ 
which is precisely \eqref{eqn:PhiDSpecCompOdd}. This
ends the proof of \cref{thm:PhiDSpecCom}.

%%%%%%%%%
\subsection{Anti-diagonal twist}
\label{sec:ADTwist}
%%%%%%%%%

\subsubsection{Inhomogeneous case: Even size}
\label{sec:evensize}

We turn now to the evaluation of certain special components of the vector $|\psi_{\text{\rm \tiny AD}}\rangle$ for $N=2n$, with the inhomogeneity parameters following the alternating pattern \eqref{eqn:alternatingparams}. To this end, we use the same strategy as for the diagonal twist:
We compute the scalar product 
\begin{equation}
  \Xi_{\text{\rm \tiny AD}}(x_1,\dots,x_n;b) = \langle \chit (x_1,\dots,x_n)|\psi_{\text{\rm \tiny AD}}\rangle
  \label{eqn:defXiAD}
\end{equation}
and evaluate certain components through the specification of the parameter $b$. By now, it is no longer surprising that this scalar product is related to 
a partition function of the six-vertex model found by Kuperberg \cite{kuperberg:02}: $Z_{\text{\rm \tiny UU}}(\{x_i\}_{i=1}^n,\{y_i\}_{i=1}^n;b,c)$. The domain is a
$2n\times2n$ inhomogeneous lattice with UU-turn boundary conditions, as illustrated in \cref{fig:doubleuturns}
for $n=2$.
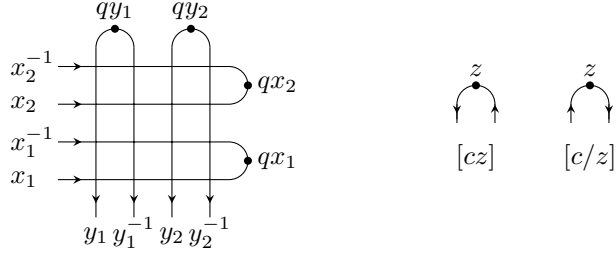
\begin{figure}[h]  
  \centering
  \begin{tikzpicture}
    \begin{scope}
    \draw (0,0) grid[step=.5cm] (1.5,1.5);
    \foreach \x in {0,0.5,1,1.5}
    {
      \draw[postaction={on each segment={mid arrow}}] (-.5,\x)--(0,\x);
      \draw (1.75,\x)--(1.5,\x);
      \draw[postaction={on each segment={mid arrow}}] (\x,0)--(\x,-0.5);
      \draw (\x,1.5)--(\x,1.75);
      
    }
    
    \draw (1.75,1) arc [start angle=-90, end angle = 90, radius=.25cm];
    \fill (2,1.25) circle (1.5pt) ;
    \draw (2,1.25) node[right] {$q x_2$};

    \draw (1.75,0) arc [start angle=-90, end angle = 90, radius=.25cm];
    \fill (2,0.25) circle (1.5pt) ;
    \draw (2,0.25) node[right] {$q x_1$};
        
    \draw (0.5,1.75) arc [start angle=0, end angle = 180, radius=.25cm];
    \fill (0.25,2) circle (1.5pt) ;
    \draw (0.25,2) node[above] {$q y_1$};

    \draw (1.5,1.75) arc [start angle=0, end angle = 180, radius=.25cm];
    \fill (1.25,2) circle (1.5pt) ;
    \draw (1.25,2) node[above] {$q y_2$};
       
     \draw (-1.25,0) node [right]  {$x_1$};
    \draw (-1.25,.5) node  [right] {$x_1^{-1}$};
    \draw (-1.25,1) node [right]  {$ x_2$};
    \draw (-1.25,1.5) node [right]  {$x_2^{-1}$};
    
    \draw (0,-.75) node  {$y_1$};
    \draw (.5,-.7) node {$y_1^{-1}$};
    \draw (1,-.75) node {$y_2$};
    \draw (1.5,-.7) node{$y_2^{-1}$};
    
    \end{scope}
    
    \begin{scope}[xshift=5cm,yshift=1cm]
	  \draw (0.25,0) arc [start angle=0, end angle = 90, radius=.25cm];
	  \draw (0,.25) arc [start angle=90, end angle = 180, radius=.25cm];
	  \draw[postaction={on each segment={mid arrow}}] (-0.25,0) -- (-0.25,-0.20);
	  \draw[postaction={on each segment={mid arrow}}] (0.25,-0.01) -- (0.25,0);
	  \draw (0.25,-0.25) -- (0.25,0) (-0.25,-0.25) -- (-0.25,0);
          \fill (0,0.25) circle (1.5pt) ;
          \draw (0,.25) node [above] {$z$};
          \draw (0,-1.00) node [above] {$[c z]$};
    \end{scope}
    
    \begin{scope}[xshift=6.5cm,yshift=1cm]
          \draw (-0.25,0) arc [start angle=180, end angle = 90, radius=.25cm];
          \draw (0,.25) arc [start angle=90, end angle = 0, radius=.25cm];
	  \draw[postaction={on each segment={mid arrow}}] (0.25,0) -- (0.25,-0.20);
	  \draw[postaction={on each segment={mid arrow}}] (-0.25,-0.01) -- (-0.25,0);
	  \draw (0.25,-0.25) -- (0.25,0) (-0.25,-0.25) -- (-0.25,0);
          \fill (0,0.25) circle (1.5pt) ;
          \draw (0,.25) node [above] {$z$};
          \draw (0,-1.00) node [above] {$[c/z]$};
    \end{scope}
    
  \end{tikzpicture}
  \caption{A $4 \times 4$ inhomogeneous lattice
  with UU-turn boundary conditions and the statistical weights for the two admissible arrow orientations along a vertical U-turn.}
  \label{fig:doubleuturns}
\end{figure}

Here, the weights $\mathfrak a(z), \mathfrak b(z)$ and $\mathfrak c(z)$ of \cref{fig:6v} are assigned to the vertex arrow configurations. Kuperberg \cite{kuperberg:02} showed that the corresponding partition function factorises as\begin{equation}
  Z_{\text{\rm \tiny UU}}(\{x_i\}_{i=1}^n;\{y_i\}_{i=1}^n;b,c)=\left(\prod_{i=1}^n\frac{[q^2/y_i^2]}{[b/y_i]}\right)Z_{\text{\rm \tiny U}}(\{x_i\}_{i=1}^n;\{y_i\}_{i=1}^n;b
  )Z^{(2)}_{\text{\rm \tiny UU}}(\{x_i\}_{i=1}^n;\{y_i\}_{i=1}^n;b,c)
\end{equation}
where 
\begin{align}
  \label{eqn:defZ2UU}
  Z^{(2)}_{\text{\rm \tiny UU}}(\{x_i\}_{i=1}^n;\{y_i\}_{i=1}^n;b,c) = &\frac{\prod_{i,j=1}^n[q x_i/y_j][q y_j/x_i][q x_iy_j][q/(x_iy_j)]}{\prod_{1\le i<j\le n}[x_j/x_i][y_i/y_j]\prod_{1\le i\le j \le n}[1/(x_ix_j)][y_iy_j]}\\
  & \times \Det{i,j=1}{n} 
  M_{\text{\rm \tiny UU}}(\{x_i\}_{i=1}^n;\{y_i\}_{i=1}^n;b,c)_{ij},\nonumber
\end{align}
with the matrix elements
\begin{equation}
  \label{eqn:defMUU}
  M_{\text{\rm \tiny UU}}(\{x_i\}_{i=1}^n;\{y_i\}_{i=1}^n;b,c)_{ij} = \frac{[b/y_j][c x_i]}{[q x_i/y_j]}-\frac{[b/y_j][c/x_i]}{[q /(x_iy_j)]}+\frac{[b y_j][c /x_i]}{[q y_j/x_i]}-\frac{[b y_j]
  [cx_i]
  }{[q x_iy_j]}.
\end{equation}

\begin{proposition} We have the following expression:
\label{prop:XiADZ2}
\begin{align}\label{eq:Xievaluation}
  \Xi_{\text{\rm \tiny AD}}(x_1,\dots,x_n;b) = \frac{(-1)^n \prod_{i=1}^n[q^2x_i^2] Z^{(2)}_{\text{\rm \tiny UU}}(\{x_i\}_{i=1}^n;\{x_i\}_{i=1}^n;b,b^{-1})}{\prod_{i,j=1}^n[q x_i/x_j]\prod_{1\le i<j \le n}[q x_i x_j]\prod_{1\le i\le j\le n}[q/(x_ix_j)]}
\end{align}
\end{proposition}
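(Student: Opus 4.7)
The strategy mirrors the treatment of $\Xi_{\text{\rm \tiny D}}$ in the preceding subsection, adapted to the fact that $|\psi_{\text{\rm \tiny AD}}\rangle$ is accessed only through the separation-of-variables basis rather than as an explicit product of $\mathcal B$-operators on $|\wedge\rangle$. I would begin by inserting the completeness relation \eqref{eqn:completeness},
\begin{equation*}
\Xi_{\text{\rm \tiny AD}}(x_1,\dots,x_n;b) = \sum_{\bm h}\langle\chit(x_1,\dots,x_n)||\bm h\rrangle\,\frac{\llangle \bm h||\psi_{\text{\rm \tiny AD}}\rangle}{\llangle \bm h||\bm h\rrangle},
\end{equation*}
and observing that each overlap $\langle\chit(x_1,\dots,x_n)||\bm h\rrangle$ is, via the definition \eqref{eqn:defhr} of $||\bm h\rrangle$ as a string of $\mathcal B$'s acting on $|\wedge\rangle$, a specialisation of the partition function $Z_{\text{\rm \tiny A}}$ of \cref{prop:ZAspin1} in which the $y$-parameters are set to the shifted inhomogeneities dictated by~$\bm h$.

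The next step exploits the alternating specialisation $w_{2i-1}=x_i$, $w_{2i}=x_i^{-1}$. The diagonal $i=j$ factor in \eqref{eqn:projections} forces $\llangle\bm h||\psi_{\text{\rm \tiny AD}}\rangle = 0$ unless every $h_j\in\{0,2\}$, reducing the sum to $2^N$ terms in which each pair $(h_{2i-1},h_{2i})$ takes one of four values. The identity \eqref{eq:special.identity}, already central in the proof of \cref{prop:ZADhom}, then collapses most of the bracket products appearing in \eqref{eqn:projections}, \eqref{eqn:scalarprod} and \eqref{eqn:ZA}. The goal is to match the four admissible choices of each pair $(h_{2i-1},h_{2i})$ with the four monomials in $[b\,\cdot\,]$ and $[b^{-1}\,\cdot\,]$ that build the entry $(M_{\text{\rm \tiny UU}})_{ij}$ in \eqref{eqn:defMUU}: the factor $b$ should arise from $\langle\chit|$ through the definition \eqref{eq:chistate}, while the companion factor $c=b^{-1}$ should emerge from the alternation of the $w_j$'s together with the $q$-shifts $q^{1-h_j}$ present in the height states.

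The hardest step will be to collapse the $\bm h$-summation of $2n\times 2n$ Izergin--Korepin-type determinants $\det M$ from \eqref{eqn:ZAmatrix} into the single $n\times n$ determinant $\det M_{\text{\rm \tiny UU}}$ appearing on the right-hand side of \eqref{eq:Xievaluation}. The natural attack is to perform row operations pairing rows $2i-1$ and $2i$ of $M$, factoring out the height-dependent prefactors pairwise so that the sum over each $(h_{2i-1},h_{2i})$ can be carried out inside the determinant by multilinearity, ultimately reorganising the summands as the four contributions to a single row of $M_{\text{\rm \tiny UU}}$. The $n=1$ case, where both sides reduce to rational expressions that can be expanded directly, then fixes the global sign $(-1)^n$ and the overall prefactor. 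Should this direct combinatorial route prove unwieldy, a backup is to characterise $\Xi_{\text{\rm \tiny AD}}$ uniquely by its symmetry in the $x_i$ (via iterated use of \cref{prop:exchange}), its centred Laurent polynomial structure of bounded degree width in each $x_i$, a wheel-type recurrence inherited from \cref{lemma:wheel}, and its $n=1$ value, and to verify these same properties for the right-hand side of \eqref{eq:Xievaluation} using Kuperberg's analogous characterisation of $Z^{(2)}_{\text{\rm \tiny UU}}$ in \cite{kuperberg:02}.
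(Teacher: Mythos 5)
Your opening moves match the paper's: expand $\Xi_{\text{\rm \tiny AD}}$ over the separation-of-variables basis, identify each overlap $\langle\chit(x_1,\dots,x_n)||\bm h\rrangle$ as a specialisation of $Z_{\text{\rm \tiny A}}$, and use \eqref{eqn:projections} to restrict to $h_j\in\{0,2\}$. However, the core of your plan contains a genuine error. You propose to sum over all $2^N=4^n$ height profiles and to match the \emph{four} admissible values of each pair $(h_{2i-1},h_{2i})$ with the four monomials building $(M_{\text{\rm \tiny UU}})_{ij}$. In fact only $2^n$ profiles contribute: the pairs $(2,2)$ are killed by the wheel condition of \cref{lemma:wheel} (after using the $y_j\to y_j^{-1}$ covariance of $Z_{\text{\rm \tiny A}}$ to convert $qx_j^{-1}$ into $q^{-1}x_j$), and the zero magnetisation of $\langle\chit|$ then forces $h_{2j-1}+h_{2j}=2$ for every $j$, eliminating $(0,0)$ as well. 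You invoke the wheel condition only in your backup plan and the magnetisation constraint not at all, so your main line starts from a wrong count; the four terms of $(M_{\text{\rm \tiny UU}})_{ij}$ do not arise one per height choice but as $2\times 2$: two from the surviving sign $\epsilon_i$ and two from a Cauchy-determinant structure introduced later.

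The second gap is in the determinant collapse. Your plan to pair rows $2i-1$ and $2i$ of the $2n\times 2n$ matrix $M$ of \eqref{eqn:ZAmatrix} and carry out the height sum ``inside the determinant by multilinearity'' does not go through: flipping $\epsilon_i$ changes \emph{two} of the $y$-parameters simultaneously (hence two columns of $M$, plus the prefactors), and a determinant is not multilinear in pairs of columns. The paper's route is different: at the specialisations $y\in\{x_i^{\epsilon_i}, qx_i^{\epsilon_i}\}$ the determinant in $Z_{\text{\rm \tiny A}}$ degenerates completely via repeated reduction relations, so each surviving term is a pure product of monomials \eqref{eq:ZAmonom}; Cauchy's identity is then used \emph{in reverse} to reassemble these products into an $n\times n$ determinant whose $i$-th row depends on $\epsilon_i$ alone, after which the sum over $\epsilon_i=\pm1$ is a legitimate row operation and the identity \eqref{eq:magicalidentity} produces $M_{\text{\rm \tiny UU}}(\cdot;\cdot;b,b^{-1})$ — this is also where the specialisation $c=b^{-1}$ actually comes from, rather than from the alternation of the $w_j$ as you suggest. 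Your backup plan (uniqueness via symmetry, degree bounds, a recursion and the $n=1$ value, in the style of \cref{prop:Z0ZQT}) is a legitimate alternative in principle, but you have not established any of those properties for $\Xi_{\text{\rm \tiny AD}}$, and doing so would itself require essentially the height-basis analysis above.
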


\begin{proof} The proof has two parts. The first part consists of writing out the scalar product as
\begin{equation}\label{eq:xixi}
  \Xi_{\text{\rm \tiny AD}}(x_1,\dots,x_n;b) = \sum_{\bm h} \psi_{\text{\rm \tiny AD}}(\bm h)\langle \chit(x_1,\dots,x_n)||\bm h\rrangle,
\end{equation}
and discussing which choices for $\bm h$ yield non-vanishing contributions to the sum. The coefficients $\psi_{\text{\rm \tiny AD}}(\bm h)$ are defined in \eqref{eq:ShandPsih}.
They are non-zero only if each height $h_j$ is different from~$1$. Under these conditions, $\psi_{\text{\rm \tiny AD}}(\bm h)$ can be simplified to the expression given in \eqref{eqn:simplePsiAD}.
Moreover, because the vector $|\chit(x_1,\dots, x_n)\rangle$ has zero magnetisation, $\langle \chit(x_1,\dots,x_n)||\bm h\rrangle$ vanishes unless
\begin{equation}
  \sum_{j=1}^{2n} h_j=2n.
  \label{eqn:sumh}
\end{equation}  
If this condition is met, then $\langle \chit(x_1,\dots,x_n)||\bm h\rrangle$
is proportional to $Z_{\text{\rm \tiny A}}(\{x_i\}_{i=1}^n;\{y_i\}_{i=1}^{2n})$ with the
$y_i$
fixed by the height configuration $\bm h$. Notice that if $h_{2j-1}=h_{2j}=2$, then the set $\{y_i\}_{i=1}^{2n}$ contains
the four variables $x_j,qx_j,x_j^{-1},q x_j^{-1}$ 
and the partition function vanishes according to \cref{lemma:wheel}. Combining this with \eqref{eqn:sumh}, we conclude 
that $\langle \chit(x_1,\dots,x_n)||\bm h\rrangle$ is zero unless $\bm h$ meets the conditions
\begin{equation}
  h_{2j-1}+h_{2j}=2, \qquad j=1,\dots,n.
\end{equation} 
For the remaining terms, it is convenient to write $\bm h = \bm h(\bm \epsilon)$ with
\begin{equation}
  h_{2j-1}(\bm \epsilon) = 1+\epsilon_j, \qquad h_{2j}(\bm \epsilon)=1-\epsilon_j,\qquad j=1,\dots,n
\end{equation}
where $\epsilon_1,\dots, \epsilon_n=\pm 1$.
 The expression \eqref{eq:xixi} becomes
\begin{equation}
\Xi_{\text{\rm \tiny AD}}(x_1,\dots,x_n;b) = \sum_{\epsilon_1,\dots,\epsilon_n = \pm1
} \frac{\psi_{\text{\rm \tiny AD}}(\bm h(\bm \epsilon)) Z_{\text{\rm \tiny A}}(\{x_i\}_{i=1}^n;\{x_i^{\epsilon_i}\}_{i=1}^n,\{qx_i^{\epsilon_i}\}_{i=1}^n)}{\prod_{i=1}^n a(x_i^{\epsilon_i})a(qx_i^{\epsilon_i})}.
\label{eqn:XiADEpsilon}
\end{equation}

The second part of the proof is the evaluation of this sum. 
We note that $Z_{\text{\rm \tiny A}}(\{x_i\}_{i=1}^n;\{y_i\}_{i=1}^{2n})$ satisfies a family of {\it reduction relations}. Indeed, for the specialisations $y_j = x_i^{\pm 1}$ and $y_j = qx_i^{\pm 1}$, the determinant in \eqref{eqn:ZA} is singular. In each case, the singularity is compensated by a vanishing prefactor. This allows us to rewrite the specialised $Z_{\text{\rm \tiny A}}$ in terms of the determinant of a matrix of size $2n-1$. In \eqref{eqn:XiADEpsilon}, each of the $y_i$ is set to one of above specialisations. By the repeated application of the reduction, we find after some manipulations that the partition functions in \eqref{eqn:XiADEpsilon} are expressible as products of monomials:
\begin{align}\label{eq:ZAmonom}   Z_{\text{\rm \tiny A}}(\{x_i\}_{i=1}^n;\{x_i^{\epsilon_i}\}_{i=1}^n,\{qx_i^{\epsilon_i}\}_{i=1}^n)&= ([q][q^2])^n    \prod_{1\le i<j\le 2n}\!\!
   [q w_i/w_j]\prod_{i=1}^n[b x_i^{-\epsilon_i}/q][b x_i^{-\epsilon_i}][q^2 x_i^2]  \\ 
   &\times \!\!\prod_{1\le i<j\le n}
   [q^{2\epsilon_i} x_i/x_j][q^{2\epsilon_i} x_ix_j][q^{-\epsilon_i-\epsilon_j} x_i/x_j][q^{-\epsilon_i+\epsilon_j} x_ix_j]. \nonumber
   \end{align}
Furthermore, using \eqref{eqn:simplePsiAD} we have
\begin{align}\label{eq:psihmonom}
  \psi_{\text{\rm \tiny AD}}(\bm h(\bm \epsilon)) = \frac{(-1)^n \prod_{1\le i<j\le n}[q^{\epsilon_i-\epsilon_j}x_i/x_j]^2
  [q^{\epsilon_i+\epsilon_j}x_ix_j]^2\prod_{i=1}^n[q^{2\epsilon_i}x_i^2]}{\prod_{1\le i<j\le 2n} [w_i/w_j]}.
\end{align} 
Each term in \eqref{eqn:XiADEpsilon} is thus a quotient of products of monomials. 
Using the fact that the $\epsilon_i$ are signs, each one can be
recast into a determinant form with the help of a classical identity of Cauchy's \cite{cauchy:41}, rewritten in our notation, 
\begin{equation}
  \Det{i,j=1}{n}\left(\frac{1}{[a_i/b_j]}-\frac{1}{[a_i b_j]}\right) = \left(\prod_{i=1}^n\frac{[a_i^2][b_i]}{[a_i]}\right)\frac{\prod_{1\le i<j \le n}[a_i a_j][b_i b_j][a_i/a_j][b_j/b_i]
  }{\prod_{i,j=1}^n[a_i/b_j][a_i b_j]},
\end{equation}
which we specialise at $a_i = q^{\epsilon_i}x_i$ and $b_j = x_j$.
After some simplifications, we find
\begin{align}
  & \Xi_{\text{\rm \tiny AD}}(x_1,\dots,x_n;b) = [q]^n 
  \prod_{1\le i<j\le 2n}\frac{[q w_i/w_j]}{[w_i/w_j]}\prod_{i=1}^n\frac{[q^2 x_i^2]}{[x_i]}\nonumber\\
 & \times \sum_{ 
 \epsilon_1,\dots,\epsilon_n = \pm 1
 }\Det{i,j=1}{n} 
 \Biggl[(-1)^{\frac{1+\epsilon_i}{2}}\frac{[q^{\epsilon_i}x_i][bx_i^{-\epsilon_i}][bx_i^{-\epsilon_i}/q]}{[q^{2\epsilon_i}x_i^2
 ]}  \times \Biggl(\frac{1}{[q^{\epsilon_i}x_i/x_j]}-\frac{1}{[q^{\epsilon_i}x_ix_j]}\Biggr)
  \Biggr].
\end{align}
Performing the sum over $\epsilon_i = \pm 1$ is equivalent to transforming the $i$-th row of the matrix in the determinant.
One verifies that
\begin{equation}\label{eq:magicalidentity}
  \sum_{\epsilon_i=\pm 1} (-1)^{\frac{1+\epsilon_i}{2}} \frac{[q^{\epsilon_i}x_i][bx_i^{-\epsilon_i}][bx_i^{-\epsilon_i}/q]}{[q^{2\epsilon_i}x_i^2
  ]}
  \left(\frac{1}{[q^{\epsilon_i}x_i/x_j]}-\frac{1}{[q^{\epsilon_i}x_ix_j]}\right) = \frac{[x_j]}{[x_j^2]}
   M_{\text{\rm \tiny UU}}(\{x_i\}_{i=1}^n; 
   \{x_i\}_{i=1}^n; b,b^{-1})_{ij}.
\end{equation}
This leads to
\begin{equation}\label{eq:Xievaluation2}
  \Xi_{\text{\rm \tiny AD}}(x_1,\dots,x_n;b) = [q]^n\!\! \prod_{1\le i<j\le 2n} \frac{[q w_i/w_j]}{[w_i/w_j]}\prod_{i=1}^n\frac{[qw_i^2]}{[w_i^2]} \det M_{\text{\rm \tiny UU}}(\{x_i\}_{i=1}^n; \{x_i\}_{i=1}^n; b,b^{-1})_{ij}
\end{equation}
and to \eqref{eq:Xievaluation} after using \eqref{eqn:alternatingparams} and \eqref{eqn:defZ2UU}.
\end{proof}

\subsubsection{Inhomogeneous case: Odd size} 
In this section, we set the system size to an odd integer, $N = 2n+1$, and the inhomogeneity parameters to the values
\begin{equation}
  w_{2i-1}=x_i, \quad w_{2i}=x_i^{-1}, \quad i=1,\dots,n, \qquad w_{2n+1} = x_{n+1} = 1.
\end{equation}
Let us define the scalar product
\begin{equation}
Z_b = \Big(
\langle \chit(x_1,\dots,x_n)|\otimes \langle{\Downarrow}| \Big) |\psi_{\text{\rm \tiny AD}} \rangle.
\end{equation}
Unlike in \cref{sec:evensize}, we have not found a determinant formula for this scalar product for arbitrary $b$, but only for the special case $b=q^{-1}$.

We proceed by first relating $Z_b$ to a partition function of the ten-vertex model on a domain of size $N' = N+1 = 2n+2$ that we already investigated in \cref{sec:evensize}. For this reason, we add extra labels indicating the system size and thus denote $a(z)$ by $a_N(z)$ and $\psi_{\textrm{\tiny AD}}(\bm h)$ by $\psi_N(\bm h)$. We have the following proposition.
\begin{proposition} We have $Z_b= \lim_{b'\rightarrow q^{-1}} Z_{b,b'}$
with
\begin{subequations}
 \begin{equation}
 \label{eq:Zbeq0}
 Z_{b,b'} = \frac{1}{[q][q^2]} \sum_{\epsilon_1, \dots, \epsilon_n = \pm 1} \frac{\psi_{2n+1}(\bm h(\bm \epsilon))D_{b,b'}(\bm \epsilon)}{\prod_{i=1}^{n+1}[q^2 x_i^{\epsilon_i}][q x_i^{\epsilon_i}]a_{2n+1}(q x_i^{\epsilon_i})a_{2n+1}(x_i^{\epsilon_i})},
 \end{equation}
where $h_{2j-1}(\bm \epsilon) = 1+\epsilon_j, h_{2j}(\bm \epsilon) = 1-\epsilon_j$ for $j=1,\dots,n$, $h_{2n+1}(\bm \epsilon)=2,\,\epsilon_{n+1} = 1$ and
\begin{equation}\label{eq:Dbb}
D_{b,b'}(\bm \epsilon) = \ 
  \begin{tikzpicture}[baseline = 1.5cm]
     
     \foreach \x in {0,.5}
     {  
       \draw[double] (\x,0) -- (\x,3.25);
       \draw[double,xshift=2cm] (\x,0) -- (\x,3.25);
       \draw[double,xshift=3cm] (\x,0) -- (\x,3.25);
       
       \draw[double,postaction={on each segment={mid arrow}}] (\x,-.5) -- (\x,0);
       \draw[double,xshift=2cm,postaction={on each segment={mid arrow}}] (\x,-.5) -- (\x,0);
       \draw[double,xshift=3cm,postaction={on each segment={mid arrow}}] (\x,-.5) -- (\x,0);
     }
     
     \foreach \x in {0,2}
     {  
        \draw [double,xshift=\x cm] (.5,3.25) arc [start angle=0, end angle = 180, radius=.25cm];
        \fill[xshift=\x cm] (0.25,3.5) circle (1.5pt) ;
     }
     	\draw[double] (3,3.25) -- (3,3.5) -- (3.5,3.5) -- (3.5,3.25);
        \fill[xshift=3 cm] (0.25,3.5) circle (1.5pt) ;

     \draw (0.25,3.5) node[above] {$x_1$} ;
     \draw (2.25,3.5) node[above] {$x_n$} ;
     \draw (3.25,3.5) node[above] {$1$} ;
     
     \foreach \x in {0,4}
       \draw[xshift=\x cm] (-.25,1.1) node {$\vdots$};
     
     \foreach \y in {0,3.5}
       \draw[yshift=\y cm] (1.25,-.25) node {$\cdots$};

     \foreach \y in {0,0.5,1.5,2,2.5,3}
     {
     \draw[postaction={on each segment={mid arrow}},yshift=\y cm] (0,0)--(-0.5,0);
     \draw[postaction={on each segment={mid arrow}},yshift=\y cm] (3.5,0)--(4,0);
     \draw[yshift=\y cm]  (0,0)--(3.5,0);
     }
     
     \draw (-.5,0) node [left] {$x_1^{\epsilon_1}$};
     \draw (-.5,0.5) node [left] {$qx_1^{\epsilon_1}$};
     \draw (-.5,1.5) node [left] {$x_n^{\epsilon_n}$};
     \draw (-.5,2) node [left] {$qx_n^{\epsilon_n}$};
     \draw (-.5,2.5) node [left] {$x_{n+1}^{\epsilon_{n+1}} = 1$};
     \draw (-.5,3) node [left] {$qx_{n+1}^{\epsilon_{n+1}} = q$};
     \draw (0,-.79) node {$x_1$};
     \draw (.6,-.75) node  {$x_1^{-1}$};
      \draw (2,-.79) node {$x_n$};
     \draw (2.6,-.75) node  {$x_n^{-1}$};
     \draw (3.0,-.79) node {$1$};
     \draw (3.5,-.79) node  {$1$};
     
     \draw (4.35,1.5) node  {$,$};
     
     \begin{scope}[xshift=5.4cm,yshift=1.5cm]
	 	\draw[double] (-0.25,0) -- (-0.25,0.25) -- (0.25,0.25) -- (0.25,0);
          	\fill (0,0.25) circle (1.5pt) ;
          	\draw (0,.25) node [above] {$x$};
          	\draw[double] (0.25,-0.25) -- (0.25,0) (-0.25,-0.25) -- (-0.25,0);
		\draw (0.45,0) node[right] {$=$};
    \end{scope}
     
     \begin{scope}[xshift=6.85cm,yshift=1.5cm]
          \draw[double] (0.25,0) arc [start angle=0, end angle = 90, radius=.25cm];
          \draw[double] (0,.25) arc [start angle=90, end angle = 180, radius=.25cm];
          \fill (0,0.25) circle (1.5pt) ;
          \draw (0,.25) node [above] {$x$};
          \draw[double] (0.25,-0.25) -- (0.25,0) (-0.25,-0.25) -- (-0.25,0);     
          \draw (0.35,0) node[right] {$\Big|_{b\rightarrow b'}.$};
    \end{scope}

    \end{tikzpicture}
\end{equation}
\end{subequations}
\end{proposition}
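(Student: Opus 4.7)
The plan is to mirror the approach taken in the proof of Proposition 5.8, with an extra limiting argument that introduces an auxiliary U-turn to accommodate the odd last site. Inserting the completeness relation \eqref{eqn:completeness} into $Z_b$, I would first restrict the sum over height profiles to those that contribute. As in Proposition 5.8, the factor $[w_i/w_j]$ appearing in \eqref{eqn:projections} vanishes for $h_i = 1$, so $\psi_{2n+1}(\bm h) = 0$ unless all $h_j \in \{0, 2\}$. The bra $\langle\chit(x_j)|$ has zero magnetisation at each pair $(2j-1, 2j)$, and $\langle\Downarrow|$ has magnetisation $-1$ at site $2n+1$, enforcing $h_{2j-1}+h_{2j}=2$ and $h_{2n+1}=2$. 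The remaining profiles are in one-to-one correspondence with $\bm\epsilon$, and the formula \eqref{eqn:defhr} yields
$$||\bm h(\bm\epsilon)\rrangle = \prod_{i=1}^{n+1}\frac{\mathcal B(qx_i^{\epsilon_i})\mathcal B(x_i^{\epsilon_i})}{a_{2n+1}(qx_i^{\epsilon_i})a_{2n+1}(x_i^{\epsilon_i})}|\wedge\rangle,$$
so that
$$Z_b = \sum_{\bm\epsilon}\frac{\psi_{2n+1}(\bm h(\bm\epsilon))\,N(\bm\epsilon)}{\prod_{i=1}^{n+1}a_{2n+1}(qx_i^{\epsilon_i})a_{2n+1}(x_i^{\epsilon_i})},\quad N(\bm\epsilon):=\bigl(\langle\chit(x_1,\dots,x_n)|\otimes\langle\Downarrow|\bigr)\prod_{i=1}^{n+1}\mathcal B(qx_i^{\epsilon_i})\mathcal B(x_i^{\epsilon_i})|\wedge\rangle.$$

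The remaining step is to identify $N(\bm\epsilon)$, a ten-vertex partition function on a $(2n+1)$-column lattice, with the limit $b'\to q^{-1}$ of $D_{b,b'}(\bm\epsilon)$, which lives on a $(2n+2)$-column lattice with U-turn structure at the last column pair. Two ingredients will be needed. The first is the degeneration
$$|\chit(z=1)\rangle\big|_{b \to q^{-1}} = [q][q^2]\,|\Downarrow\Uparrow\rangle,$$
evident from \eqref{eq:chistate} since the factor $[bqz]$ annihilates the first two components at $b=q^{-1},z=1$. The second is a factorisation of the extra column obtained by extending the chain to $N' = 2n+2$ sites with $w_{N'} = 1$: writing $R^{(1,2)}(q^{-1}z)$ in $2\times 2$ block form with respect to the auxiliary space, the diagonal block is $A = \mathrm{diag}([qz],[z],[q^{-1}z])$ and the off-diagonal block $B$ is strictly upper triangular. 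It follows that $\mathcal B^{(N')}(z)\bigl(|v\rangle\otimes|\Uparrow\rangle_{N'}\bigr) = [qz]\,\bigl(\mathcal B^{(N)}(z)|v\rangle\bigr)\otimes|\Uparrow\rangle_{N'}$ for any $|v\rangle$ on the first $N$ sites, so that the extra column factorises trivially, contributing an overall factor $\prod_{i=1}^{n+1}[q^2 x_i^{\epsilon_i}][q x_i^{\epsilon_i}]$. Combining the two ingredients gives $\lim_{b'\to q^{-1}}D_{b,b'}(\bm\epsilon) = [q][q^2]\prod_{i=1}^{n+1}[q^2 x_i^{\epsilon_i}][q x_i^{\epsilon_i}]\,N(\bm\epsilon)$, which inserted into the expression for $Z_b$ produces the claimed formula for $Z_{b,b'}$.

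The main obstacle is the clean verification of the factorisation identity, which rests on the fact that the off-diagonal block of $R^{(1,2)}(q^{-1}z)$ annihilates $|\Uparrow\rangle$. Making this rigorous requires a careful inductive argument on the product of $\mathcal B^{(N')}$ operators, using the commutativity of the auxiliary-site operators $A_{N'},B_{N'}$ with $\mathcal B^{(N)}$ and $\mathcal D^{(N)}$ on the remaining sites. All other steps are routine manipulations parallel to those in the proof of Proposition 5.8.
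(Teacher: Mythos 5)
Your architecture coincides with the paper's: after reducing the height sum to the $\bm\epsilon$-profiles, one appends a frozen $(2n+2)$-th column with inhomogeneity $1$ at the cost of the factor $\prod_{i=1}^{n+1}[q^{2}x_i^{\epsilon_i}][qx_i^{\epsilon_i}]$, and trades the resulting $\langle{\Downarrow\Uparrow}|$ on the last two sites for $\tfrac{1}{[q][q^2]}\lim_{b'\to q^{-1}}\langle\chit(1)|_{b\to b'}$; the paper performs the last two steps diagrammatically, you perform them algebraically. However, two of your justifications fail as stated. First, the constraint $h_{2j-1}+h_{2j}=2$ does \emph{not} follow from the local magnetisation of $\langle\chit(x_j)|$: the height states $||\bm h\rrangle$ are built from the global operators $\mathcal B(z)$ and are not eigenstates of the pair magnetisations, so the only constraint of this kind is the total one, $\sum_j h_j=2n+2$. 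Profiles with $h_{2j-1}=h_{2j}=2$ for some $j$ are excluded by the wheel condition, \cref{lemma:wheel} (combined with the $y\to y^{-1}$ covariance of $Z_{\text{\rm\tiny A}}$), exactly as in the proof of \cref{prop:XiADZ2}; without that input your sum \eqref{eq:Zbeq0} would contain spurious terms.

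Second, the operator identity $\mathcal B^{(N')}(z)\bigl(|v\rangle\otimes|{\Uparrow}\rangle_{N'}\bigr)=[qz]\bigl(\mathcal B^{(N)}(z)|v\rangle\bigr)\otimes|{\Uparrow}\rangle_{N'}$ is false. Writing $\mathcal B^{(N')}(z)=A_{N'}\mathcal B^{(N)}(z)+B_{N'}\mathcal D^{(N)}(z)$, the off-diagonal block $B_{N'}$ of $R^{(1,2)}$ acts as a \emph{lowering} operator on the quantum space --- strictly lower, not upper, triangular in the basis $(\Uparrow,0,\Downarrow)$ --- so $B_{N'}|{\Uparrow}\rangle=\sqrt{[q][q^2]}\,|0\rangle\neq 0$, and a product of $\mathcal B^{(N')}$'s does populate the states $|0\rangle,|{\Downarrow}\rangle$ at the extra site. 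What is true is the dual statement $\langle{\Uparrow}|_{N'}B_{N'}=0$, which yields the factorisation once the \emph{bra} carries $\Uparrow$ at site $N'$; equivalently, the ice rule freezes the appended column, which is how the paper argues. With these two repairs (and noting $\prod_i[q\cdot qx_i^{\epsilon_i}][q\cdot x_i^{\epsilon_i}]=\prod_i[q^{2}x_i^{\epsilon_i}][qx_i^{\epsilon_i}]$) your computation does reproduce \eqref{eq:Zbeq0}.
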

\begin{proof}
We perform the computation of $Z_b$ over $(\mathbb C^3)^{\otimes 2(n+1)}$ and find
\begin{alignat}{2}
Z_b &= \Big(\langle \chit(x_1,\dots,x_n)|\otimes \langle{\Downarrow\Uparrow}|
\Big) \Big(|\psi_{\text{\rm \tiny AD}} \rangle\otimes |{\Uparrow}\rangle\Big)  = \nonumber \\
& = \sum_{\bm h} \frac{\psi_{2n+1}(\bm h)}{\prod_{j=1}^{2n+1}\prod_{h=0}^{h_j-1} a_{2n+1}(q^{1-h}w_j)} \times
  \begin{tikzpicture}[baseline=0.85cm]
     \foreach \x in {0,.5}
     {  
       \draw[double] (\x,0) -- (\x,1.75);
       \draw[double,xshift=2cm] (\x,0) -- (\x,1.75);
       \draw[double,postaction={on each segment={mid arrow}}] (\x,-.5) -- (\x,0);
       \draw[double,xshift=2cm,postaction={on each segment={mid arrow}}] (\x,-.5) -- (\x,0);
     }
     \draw[double] (3,0) -- (3,1.75);                
     \draw[double] (3.75,0) -- (3.75,1.75);   
     \draw[double,postaction={on each segment={mid arrow}}] (3,-.5) -- (3,0);
     \draw[double,postaction={on each segment={mid arrow}}] (3.75,-.5) -- (3.75,0);
     \foreach \x in {0,2}
     {  
        \draw [double,xshift=\x cm] (.5,1.75) arc [start angle=0, end angle = 180, radius=.25cm];
        \fill[xshift=\x cm] (0.25,2.0) circle (1.5pt) ;
     }
     \draw (0.25,2.0) node[above] {$x_1$} ;
     \draw (2.25,2.0) node[above] {$x_n$};
     \draw[double,postaction={on each segment={mid arrow}}] (3,2) -- (3,1.5);
     \draw[double] (3.75,1.5) -- (3.75,2);     
     \draw[double,postaction={on each segment={mid arrow}}] (3.75,1.65) -- (3.75,2);     
     \foreach \x in {0,3.5}
       \draw[xshift=\x cm] (-.25,1.1) node {$\vdots$};
     \foreach \y in {0,2.0}
       \draw[yshift=\y cm] (1.25,-.25) node {$\cdots$};
     \foreach \y in {0,0.5,1.5}
     {
     \draw[postaction={on each segment={mid arrow}},yshift=\y cm] (0,0)--(-0.5,0);
     \draw[postaction={on each segment={mid arrow}},yshift=\y cm] (3.0,0)--(3.5,0);
     \draw[yshift=\y cm]  (0,0)--(3.0,0);
     }     
     \draw (-.5,0) node [left] {$y_1$};
     \draw (-.5,0.5) node [left] {$y_2$};
     \draw (-.5,1.5) node [left] {$y_{2n+2}$};
     \draw (0,-.79) node {$x_1$};
     \draw (.6,-.75) node  {$x_1^{-1}$};
      \draw (2,-.79) node {$x_n$};
     \draw (2.6,-.75) node  {$x_n^{-1}$};
     \draw (3.0,-.79) node {$1$};
    \end{tikzpicture}
    \label{eq:Zbeq1}
\end{alignat}
where the $y_j$ are specified in terms of the heights of $\bm h$: $\{y_j\}_{j=1}^{2n+2} = \{w_i, q w_i\}_{i|h_i = 2}$. Each $\bm h = (h_1, \dots, h_{2n+1})$ contributing to \eqref{eq:Zbeq1} satisfies $\sum_{i=1}^{2n+1} h_i = 2n+2$. From this constraint and the result of \cref{lemma:wheel}, we find that the heights satisfy
\begin{equation} h_{2n+1} = 2, \qquad h_{2j-1}+h_{2j} = 2, \qquad j = 1, \dots, n.\end{equation}
The parametrisation $h_{2j-1}(\bm \epsilon) = 1+\epsilon_j$, $h_{2j}(\bm \epsilon) = 1-\epsilon_j$ with $\epsilon_j = \pm 1$
is therefore convenient. For $j=n+1$, $h_{2n+1}=2$, so $\epsilon_{n+1}$ equals $1$ for each term contributing to \eqref{eq:Zbeq1}.

The next step consists in noting that each horizontal line in the diagram in \eqref{eq:Zbeq1} can be extended to the right beyond the last column at the cost of adding a prefactor: 
\begin{equation}
\begin{tikzpicture}[baseline=0.85cm]
     \foreach \x in {0,.5}
     {  
       \draw[double] (\x,0) -- (\x,1.75);
       \draw[double,xshift=2cm] (\x,0) -- (\x,1.75);
       \draw[double,postaction={on each segment={mid arrow}}] (\x,-.5) -- (\x,0);
       \draw[double,xshift=2cm,postaction={on each segment={mid arrow}}] (\x,-.5) -- (\x,0);
     }
     \draw[double] (3,0) -- (3,1.75);                
     \draw[double] (3.75,0) -- (3.75,1.75);   
     \draw[double,postaction={on each segment={mid arrow}}] (3,-.5) -- (3,0);
     \draw[double,postaction={on each segment={mid arrow}}] (3.75,-.5) -- (3.75,0);
     \foreach \x in {0,2}
     {  
        \draw [double,xshift=\x cm] (.5,1.75) arc [start angle=0, end angle = 180, radius=.25cm];
        \fill[xshift=\x cm] (0.25,2.0) circle (1.5pt) ;
     }
     \draw (0.25,2.0) node[above] {$x_1$} ;
     \draw (2.25,2.0) node[above] {$x_n$};
     \draw[double,postaction={on each segment={mid arrow}}] (3,2) -- (3,1.5);
     \draw[double](3.75,1.5) -- (3.75,2);    
     \draw[double,postaction={on each segment={mid arrow}}] (3.75,1.65) -- (3.75,2);     
     \foreach \x in {0,3.5}
       \draw[xshift=\x cm] (-.25,1.1) node {$\vdots$};
     \foreach \y in {0,2.0}
       \draw[yshift=\y cm] (1.25,-.25) node {$\cdots$};
     \foreach \y in {0,0.5,1.5}
     {
     \draw[postaction={on each segment={mid arrow}},yshift=\y cm] (0,0)--(-0.5,0);
     \draw[postaction={on each segment={mid arrow}},yshift=\y cm] (3.0,0)--(3.5,0);
     \draw[yshift=\y cm]  (0,0)--(3.0,0);
     }     
     \draw (-.5,0) node [left] {$y_1$};
     \draw (-.5,0.5) node [left] {$y_2$};
     \draw (-.5,1.5) node [left] {$y_{2n+2}$};
     \draw (0,-.79) node {$x_1$};
     \draw (.6,-.75) node  {$x_1^{-1}$};
      \draw (2,-.79) node {$x_n$};
     \draw (2.6,-.75) node  {$x_n^{-1}$};
     \draw (3.0,-.79) node {$1$};
    \end{tikzpicture} \ \
= \prod_{j=1}^{2n+2} \frac{1}{[q y_j]} \times 
\begin{tikzpicture}[baseline=0.85cm]
     \foreach \x in {0,.5}
     {  
       \draw[double] (\x,0) -- (\x,1.75);
       \draw[double,xshift=2cm] (\x,0) -- (\x,1.75);
       \draw[double,postaction={on each segment={mid arrow}}] (\x,-.5) -- (\x,0);
       \draw[double,xshift=2cm,postaction={on each segment={mid arrow}}] (\x,-.5) -- (\x,0);
     }
     \draw[double] (3,0) -- (3,1.75);                
     \draw[double] (3.5,0) -- (3.5,1.75);   
     \draw[double,postaction={on each segment={mid arrow}}] (3,-.5) -- (3,0);
     \draw[double,postaction={on each segment={mid arrow}}] (3.5,-.5) -- (3.5,0);
     \foreach \x in {0,2}
     {  
        \draw [double,xshift=\x cm] (.5,1.75) arc [start angle=0, end angle = 180, radius=.25cm];
        \fill[xshift=\x cm] (0.25,2.0) circle (1.5pt) ;
     }
     \draw (0.25,2.0) node[above] {$x_1$} ;
     \draw (2.25,2.0) node[above] {$x_n$};
     \draw[double,postaction={on each segment={mid arrow}}] (3,2) -- (3,1.5);
     \draw[double] (3.5,1.5) -- (3.5,2);   
     \draw[double,postaction={on each segment={mid arrow}}] (3.5,1.65) -- (3.5,2);     
     \foreach \x in {0,4.0}
       \draw[xshift=\x cm] (-.25,1.1) node {$\vdots$};
     \foreach \y in {0,2.0}
       \draw[yshift=\y cm] (1.25,-.25) node {$\cdots$};
     \foreach \y in {0,0.5,1.5}
     {
     \draw[postaction={on each segment={mid arrow}},yshift=\y cm] (0,0)--(-0.5,0);
     \draw[postaction={on each segment={mid arrow}},yshift=\y cm] (3.5,0)--(4.0,0);
     \draw[yshift=\y cm]  (0,0)--(3.5,0);
     }     
     \draw (-.5,0) node [left] {$y_1$};
     \draw (-.5,0.5) node [left] {$y_2$};
     \draw (-.5,1.5) node [left] {$y_{2n+2}$};
     \draw (0,-.79) node {$x_1$};
     \draw (.6,-.75) node  {$x_1^{-1}$};
      \draw (2,-.79) node {$x_n$};
     \draw (2.6,-.75) node  {$x_n^{-1}$};
     \draw (3.0,-.79) node {$1$};
     \draw (3.5,-.79) node {$1$};
     \draw (4.35,0.85) node  {$.$};
    \end{tikzpicture}
\end{equation}
Finally, using
\begin{equation}
\begin{tikzpicture}[baseline=-0.1cm]
	\draw[double] (-0.25,-0.25) -- (-0.25,0.25)  (0.25,0.25) -- (0.25,-0.25);
	\draw[double,postaction={on each segment={mid arrow}}]  (-0.25,-0.15) -- (-0.25,0.25)  (0.25,0.25) -- (0.25,-0.25);
	\end{tikzpicture}
 \ \, = \ \, 
\frac{1}{[q][q^2]}
\lim_{b'\rightarrow q^{-1}} \ 
	\begin{tikzpicture}[baseline=-0.1cm]
          	\draw[double] (0.25,-0.25) -- (0.25,0.25) --  (-0.25,0.25) -- (-0.25,-0.25);
	        \fill (0,0.25) circle (1.5pt);
	        \draw (0,.25) node [above] {$1$};
	\end{tikzpicture}
	\end{equation}
leads to the result of the proposition.
\end{proof}
Ultimately, we want to compute
\begin{equation}\label{eq:phioddlimit}
(\phi_{\text{\rm \tiny AD}})_{\Downarrow\Uparrow \cdots \Downarrow \Uparrow \Downarrow} = \lim_{x_1, \dots, x_n \rightarrow 1} \lim_{b,b' \rightarrow q^{-1}}   \frac{Z_{b,b'}}{([q][q^2])^n} =  \lim_{x_1, \dots, x_n \rightarrow 1}  \frac{Z_{q^{-1}}}{([q][q^2])^n},
\end{equation}
so from here onwards, we shall focus on the specialisation $b=b'= q^{-1}$. 
\begin{proposition} We have
\begin{equation}\label{eq:Zqm1}
Z_{q^{-1}} = \frac{[q]^n}{[q^2]} \lim_{x_{n+1}\rightarrow 1} \prod_{1\le i<j\le 2n+2}\frac{[q w_i/w_j]}{[w_i/w_j]} \prod_{j=1}^{n+1}\frac{[q^2 x_j^2]}{[x_j^2]} \Det{i,j=1}{n+1}M_{\text{\rm \tiny UU}}(\{x_i\}_{i=1}^{n+1}; \{x_i\}_{i=1}^{n+1}; 1,1)_{ij}.
\end{equation}
\end{proposition}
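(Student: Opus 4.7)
The proof will parallel the proof of \cref{prop:XiADZ2} for the even-size case, with modifications to account for the fact that $\epsilon_{n+1}$ is fixed to $+1$ and that the U-turn of the last pair of columns uses parameter $b'$ rather than $b$.

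First, I would reduce the diagram $D_{b,b'}(\bm\epsilon)$ to a product of monomials analogous to \eqref{eq:ZAmonom}. The diagram is a $Z_{\text{\rm \tiny A}}$-type partition function on a $2(n+1)\times 2(n+1)$ lattice, with the single modification being that the U-turn on the last pair of columns uses $b'$. Inspection of \eqref{eqn:ZA} shows the $b$-dependence enters only through the overall factor $\prod_i [b/y_i]$; tracking this through the wheel-condition reductions that yield \eqref{eq:ZAmonom} produces a modified reduction in which the factor $[b'x_{n+1}^{-\epsilon_{n+1}}/q][b'x_{n+1}^{-\epsilon_{n+1}}]$ replaces $[bx_{n+1}^{-\epsilon_{n+1}}/q][bx_{n+1}^{-\epsilon_{n+1}}]$. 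All remaining reductions at the specialised row parameters $y_i\in\{x_j^{\pm 1}, qx_j^{\pm 1}\}$ proceed identically to the even-size case.

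Next I would combine the reduced $D_{b,b'}(\bm\epsilon)$ with the natural $(n+1)$-site extension of \eqref{eq:psihmonom} and apply the Cauchy determinant identity to rewrite each summand as the determinant of an $(n+1)\times (n+1)$ matrix. The sums over $\epsilon_i = \pm 1$ for $i=1,\dots, n$ are then performed row-by-row using the identity \eqref{eq:magicalidentity}, producing matrix entries proportional to $M_{\text{\rm \tiny UU}}(\{x_i\};\{x_i\}; b, b^{-1})_{ij}$ in these $n$ rows. The $(n+1)$-th row, by contrast, is given by the single $\epsilon_{n+1}=+1$ contribution from the analogue of \eqref{eq:magicalidentity} and still depends on the free parameter $b'$.

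Then I would take the limit $b = b' \to q^{-1}$. Using identities such as $[q^{-1}z] = -[qz^{-1}]$, the first $n$ rows can be brought into a form proportional to the corresponding rows of $M_{\text{\rm \tiny UU}}(\{x_i\};\{x_i\}; 1, 1)$ after extracting $j$-independent factors. A separate but analogous algebraic manipulation for the $(n+1)$-th row, performed while keeping $x_{n+1}$ generic, expresses it as a scalar multiple of the $(n+1)$-th row of the same matrix. Collecting all prefactors yields \eqref{eq:Zqm1}. The limit $x_{n+1}\to 1$ must remain explicit because the $(n+1)$-th row of $M_{\text{\rm \tiny UU}}(\ldots; 1, 1)$ vanishes at $x_{n+1}=1$, cancelling the simple pole of $[x_{n+1}^2]^{-1}$ in the prefactor and producing a finite result.

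The most delicate step is the third one: verifying that the matrix obtained after summing over $\epsilon_1,\dots,\epsilon_n$ and specialising $b=b'=q^{-1}$ agrees, up to diagonal row and column rescalings, with $M_{\text{\rm \tiny UU}}(\{x_i\}_{i=1}^{n+1};\{x_i\}_{i=1}^{n+1}; 1, 1)$. This is a direct but fiddly algebraic identity depending on the specific values $b=q^{-1}$ and $c=q$ at which \eqref{eq:magicalidentity} is applied, and on the exchange of roles $b\leftrightarrow c=b^{-1}$ between the matrix $M_{\text{\rm \tiny UU}}(\ldots; q^{-1}, q)$ that naturally arises and the target $M_{\text{\rm \tiny UU}}(\ldots; 1, 1)$.
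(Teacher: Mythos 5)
Your overall strategy matches the paper's (reduce the diagram with the even-size machinery, sum over $\epsilon_1,\dots,\epsilon_n$ via \eqref{eq:magicalidentity}, treat the last row separately, take $x_{n+1}\to 1$), but two of your key steps do not go through as stated. First, you keep $b'$ generic and assert that the mixed diagram reduces to \eqref{eq:ZAmonom} with the single substitution $b\to b'$ in the $x_{n+1}$-factor. The determinant formula \eqref{eqn:ZA}, and hence the reduction relations behind \eqref{eq:ZAmonom}, are only established for a \emph{uniform} U-turn parameter (the $b$-dependence there sits in $\prod_i[b/y_i]$, a product over rows rather than over column pairs), so a two-parameter version is simply not available. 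This gap is easily closed by setting $b=b'=q^{-1}$ from the start --- $D_{b,b'}$ is polynomial in the U-turn weights, so the limit $b'\to q^{-1}$ can be taken immediately --- which is what the paper does.

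Second, and more seriously, the ``delicate step'' you flag at the end would fail as you describe it. The matrix you expect to arise is not related to the target by diagonal rescalings: comparing \eqref{eqn:defMUU} at $(b,c)=(q^{-1},q)$ and at $(1,1)$, the ratios of the coefficients of the four terms (e.g.\ $[b/y_j]/[b y_j]$) depend on $y_j$ differently in the two cases, so no row and column rescaling converts $M_{\text{\rm \tiny UU}}(\ldots;q^{-1},q)$ into $M_{\text{\rm \tiny UU}}(\ldots;1,1)$. What actually happens is that the odd-size denominators $a_{2n+1}$ (as opposed to $a_{2n+2}$), the relation between $\psi_{2n+1}$ and $\psi_{2n+2}$, and the extra factors contributed by the $(n+1)$-th column pair conspire to shift the effective parameter of \eqref{eq:magicalidentity} from $q^{-1}$ to $1$, so that $M_{\text{\rm \tiny UU}}(\ldots;1,1)$ appears directly; your proposal never lands on $M_{\text{\rm \tiny UU}}(\ldots;q^{-1},q)$ once the bookkeeping is done correctly. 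Relatedly, your claim that the $(n+1)$-th row is a scalar multiple of the corresponding row of $M_{\text{\rm \tiny UU}}(\ldots;1,1)$ at generic $x_{n+1}$ is false: the row that arises contains only the $\epsilon_{n+1}=+1$ terms $1/[qx_{n+1}/x_j]-1/[qx_{n+1}x_j]$, whereas the target row contains all four terms of \eqref{eqn:defMUU}; the two are only asymptotically proportional, with $j$-independent ratio $-[q^2]/(2[q]^2)$, as $x_{n+1}\to 1$ where both vanish. Since \eqref{eq:Zqm1} is itself a limit statement this can be repaired, but the argument must be rephrased as a statement about the limit of the ratio, as in the paper.
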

\begin{proof}
At $b=b'=q^{-1}$, the diagram $D_{b,b'}$ defined in \eqref{eq:Dbb} becomes a partition function on a lattice of width $N' = N+1 = 2(n+1)$ that we have computed in \cref{prop:ZAspin1}:
\begin{equation}
\lim_{b,b'\rightarrow q^{-1}} D_{b,b'}= Z_{\text{\rm \tiny A}}(\{x_i\}_{i=1}^{n+1};\{x_i^{\epsilon_i}\}_{i=1}^{n+1},\{qx_i^{\epsilon_i}\}_{i=1}^{n+1})\Big|_{b\rightarrow q^{-1}},
\end{equation}
with the specialisation $x_{n+1}=1$. 

We now promote $x_{n+1}$ to a formal parameter and take the limit $x_{n+1}\rightarrow 1$ at the end. This is done by assigning the inhomogeneity parameters $w_{2n+1} = x_{n+1}$ and $w_{2n+2}=x_{n+1}^{-1}$ respectively to the second last and last column of $D_{q^{-1}\!,q^{-1}}$, as well as $y_{2n+1}=x_{n+1}$ and $y_{2n+2}=qx_{n+1}$ to its top two rows.  We can pursue the calculation in terms of the heights profiles $\bm h$ of length $2n+2$, by considering that the last height is always $h_{2n+2} = 1- \epsilon_{n+1} = 0$. With this convention,
$\psi_{2n+1}(\bm h)$ and $a_{2n+1}(z)$ can be written as follows in the limit $x_{n+1}\rightarrow 1$: 
\begin{subequations}
\begin{equation}
\lim_{x_{n+1}\rightarrow 1} \psi_{2n+1}(\bm h) = \frac{2}{[q^2]}\lim_{x_{n+1}\rightarrow 1} \psi_{2n+2}(\bm h) \frac{\prod_{i=1}^{n+1}[x_i]}{\prod_{i=1}^{n}[q^{2\epsilon_i}x_i]}, \qquad
\lim_{x_{n+1}\rightarrow 1} \frac{a_{2n+2}(z)}{a_{2n+1}(z)} = [q z].
\end{equation} 
\end{subequations}
This yields
\begin{equation}
Z_{q^{-1}} = \frac{2}{[q][q^2]^2}\lim_{x_{n+1}\rightarrow 1} \sum_{\epsilon_1, \dots, \epsilon_n = \pm 1} \frac{\prod_{i=1}^{n+1}[x_i]}{\prod_{i=1}^n [q^{2 \epsilon_i} x_i]} \frac{\psi_{2n+2}(\bm h)Z_{\text{\rm \tiny A}}(\{x_i\}_{i=1}^{n+1};\{x_i^{\epsilon_i}\}_{i=1}^{n+1},\{qx_i^{\epsilon_i}\}_{i=1}^{n+1})}{\prod_{i=1}^{n+1}a_{2n+2}(q x_i^{\epsilon_i})a_{2n+2}(x_i^{\epsilon_i})}\Big|_{b\rightarrow q^{-1}}.
\end{equation}
At this point, we apply intermediate results found while proving \cref{prop:XiADZ2}, namely \eqref{eq:ZAmonom} and \eqref{eq:psihmonom}, and find after simplification:
\begin{alignat}{2}
Z_{q^{-1}} = \frac{-2[q]^{n+2}}{[q^2]^2}\lim_{x_{n+1}\rightarrow 1}\prod_{1\le i<j\le 2n+2}\frac{[q w_i/w_j]}{[w_i/w_j]} \prod_{j=1}^{n+1}[q^2 x_j^2]& \sum_{\epsilon_1, \dots, \epsilon_n = \pm 1} \prod_{i=1}^n(-1)^{\frac{1+\epsilon_i}2}\frac{[q^{\epsilon_i}x_i]^2}{[q^{2\epsilon_i}x_i^2]}
 \\&
 \times\Det{i,j=1}{n+1}\Big(\frac{1}{[q^{\epsilon_i} x_i/x_j]}-\frac{1}{[q^{\epsilon_i} x_ix_j]}\Big).\nonumber
\end{alignat}
Performing matrix operations allows us to express the sum in terms of a single determinant. The product over $i$ is first split and inserted separately as prefactors in the rows $i = 1, \dots, n$. Each sum over $\epsilon_i$ is then carried out by taking the corresponding linear combination of the $i$-th row of the matrix in the determinant. Using \eqref{eq:magicalidentity} at $b=1$, we obtain
\begin{equation}
Z_{q^{-1}} = \frac{-2[q]^{n+2}}{[q^2]}\lim_{x_{n+1}\rightarrow 1}\prod_{1\le i<j\le 2n+2}\frac{[q w_i/w_j]}{[w_i/w_j]} \prod_{j=1}^{n}[q^2 x_j^2] \Det{i,j=1}{n+1}M_{ij}
\end{equation}
where 
\begin{equation}
M_{ij} = \left\{\begin{array}{cl}
\displaystyle\frac{[x_j]}{[x_i][x_j^2]} M_{\text{\rm \tiny UU}}(\{x_i\}_{i=1}^{n+1}; \{x_i\}_{i=1}^{n+1}; 1,1)_{ij}, &\quad i = 1, \dots, n, \\[0.2cm]
\displaystyle\frac{1}{[q x_{n+1}/x_j]}-\frac{1}{[q x_{n+1}x_j]},&\quad i = n+1.
\end{array}\right.
\end{equation}
Let us define the matrix $\hat M$ with entries
\begin{equation}
\hat M_{ij} =  \frac{[x_j]}{[x_i][x_j^2]} M_{\text{\rm \tiny UU}}(\{x_i\}_{i=1}^n; \{x_i\}_{i=1}^n; 1,1)_{ij}, \qquad i,j = 1, \dots, n+1.
\end{equation}
This matrix only differs from $M$ in its last row. For $j=n+1$, the matrix entries $M_{ij}$ and $\hat M_{ij}$ both vanish for $x_{n+1} \rightarrow 1$. In that limit, we observe that
\begin{equation}
\lim_{x_{n+1}\rightarrow 1} \frac{M_{n+1,j}}{\hat M_{n+1,j}} = - \frac{[q^2]}{2[q]^2}, \qquad j=1, \dots, n+1,
\end{equation}
from which we deduce that
\begin{equation}
\lim_{x_{n+1}\rightarrow 1} \frac{\displaystyle\Det{i,j=1}{n+1}M_{ij}}{[x_{n+1}]} = - \frac{[q^2]}{2 [q]^2} \lim_{x_{n+1}\rightarrow 1} \frac{\displaystyle\Det{i,j=1}{n+1}M_{\text{\rm \tiny UU}}(\{x_i\}_{i=1}^{n+1}; \{x_i\}_{i=1}^{n+1}; 1,1)_{ij}}{[x_{n+1}]\prod_{j=1}^{n+1}[x_i^2]}.
\end{equation}
This ends the proof of the proposition.
\end{proof}

\subsubsection{Homogeneous limit}

\paragraph{Even size.}
The result of \cref{prop:XiADZ2} allows us to compute two components of the vector $|\phi_{\text{\rm \tiny AD}}\rangle$ for $N=2n$ sites.  From \eqref{eqn:defXiAD}, we have
\begin{equation}
  \Xi_{\text{\rm \tiny AD}}(1,\dots,1;b) = (\langle \chit|^{\otimes n}) |\phi_{\text{\rm \tiny AD}}\rangle
\end{equation}
where $|\chit\rangle$ is defined in
\eqref{eqn:defChiHom}. The left-hand side of this equality can be evaluated by taking the homogeneous limit of \eqref{eq:Xievaluation}.
 We express the result in terms of a class of polynomials defined by Kuperberg~\cite{kuperberg:02}:\footnote{We note
 a missing factor of $([q^2]/[q])^{2n}$ in Kuperberg's definition of $A^{(2)}_{\text{\rm \tiny UU}}(4n;t,y,z)$.}
\begin{equation}
  A^{(2)}_{\text{\rm \tiny UU}}(4n;t,y,z) =  \frac{[q^2]^{2n}Z^{(2)}_{\text{\rm \tiny UU}}(1,\dots,1;1,\dots,1;b,c)}{[q]^{n(2n+1)}([b/q][c/q])^n} 
  \label{eqn:defA2UU}
\end{equation}
where 
\begin{equation}
  t=x^2= \left(\frac{[q^2]}{[q]}\right)^2,\quad y = \frac{[bq]}{[b/q]}, \quad z = \frac{[cq]}{[c/q]}.
\end{equation}
Using these, we find:
\begin{equation}\label{eq:Xihomogeneous}
  (\langle \chit|^{\otimes n}) |\phi_{\text{\rm \tiny AD}}\rangle
  = \left(-\frac{[q][b/q][qb]}{
  [q^2]}\right)^n \!\!A^{(2)}_{\text{\rm \tiny UU}}(4n;t,y,y^{-1})  
  = \left(\frac{xy(x^2-4)}{1+(2+y-x^2)y}\right)^n\!\! A_{\text{\rm \tiny UU}}^{(2)}(4n;x^2,y,y^{-1}).
\end{equation}
We obtain two interesting components from this equation through the specialisation of the parameter $b$ 
(or equivalently of $y$). First, for $b=1$ (and thus $y=-1$), we have
$(\langle \chit|^{\otimes n}) |\phi_{\text{\rm \tiny AD}}\rangle=[q]^{2n}(\phi_{\text{\rm \tiny AD}})_{0\cdots 0}$, which leads to
\begin{equation}
  (\phi_{\text{\rm \tiny AD}})_{0\cdots 0} = (-x)^{-n}
  A^{(2)}_{\text{\rm \tiny UU}}(4n;x^2,-1,-1)=(-x)^{n}\tilde A^{(2)}_{\text{\rm \tiny UU}}(4n;x^2).
\end{equation}
Second,
for $b\to q$ (and thus $y\to \infty$), we obtain $(\langle \chit|^{\otimes n}) |\phi_{\text{\rm \tiny AD}}\rangle=([q][q^2])^{n}(\phi_{\text{\rm \tiny AD}})_{\Uparrow\Downarrow\cdots \Uparrow\Downarrow}$, and after elimination of some factors:
\begin{equation}
  (\phi_{\text{\rm \tiny AD}})_{\Uparrow\Downarrow\cdots \Uparrow\Downarrow} = \lim_{y\to \infty} y^{-n} A^{(2)}_{\text{\rm \tiny UU}}(4n;x^2,y,y^{-1}) = A^{(2)}_{\text{\rm \tiny VHP}}(4n+2;x^2).
\end{equation}
As the next proposition shows, these components can be expressed as the determinant of $n\times n$ matrices with polynomial entries in $x^2$.
 
\begin{proposition} \label{prop:prop510}
The polynomial $A^{(2)}_{\text{\rm \tiny UU}}(4n;t,y,z)$ can be expressed as
\begin{align}
  A^{(2)}_{\text{\rm \tiny UU}}(4n;t,y,z)= &\Det{i,j=0}{n-1}
  \Biggr[\ 
  \sum_{k=0}^{n-1}(t+(1+y)(1+z))
  \binom{i+j}{i+k}\binom{i+k}{2k} t^{k}\label{eq:bigDET}\\
  & +\Biggr((1+z)\binom{i+j}{i+k}\binom{i+k}{2k+1}+(1+y)\binom{i+j}{i+k+1}\binom{i+k+1}{2k+1}\Biggl)t^{k+1} 
  \Biggl].
  \nonumber
\end{align}
\begin{proof}
  The proof is very similar to the one for \cref{prop:AVDet}. We combine \eqref{eqn:defZ2UU}, \eqref{eqn:defMUU} with \eqref{eqn:defA2UU}, and introduce the variables $X_i=x_{i+1}^2+1/x_{i+1}^2,\,Y_i=y_{i+1}^2+1/y_{i+1}^2$ for $\,i=0,\dots,n-1$. We obtain
  \begin{equation}
    A^{(2)}_{\text{\rm \tiny UU}}(4n;t,y,z) = [q]^{2n^2}
    \Det{i,j=0}{n-1} \left(\{X^i Y^j\} (A X + B Y +C) f(X+2,Y+2)\right),
  \end{equation}
  where $f(X,Y)$ is defined in \eqref{eqn:deff}, and
	\begin{equation}
    A = z t-(1+y)(1+z),\quad B = yt-(1+y)(1+z),\quad
    C = (t-4)(t+(1+y)(1+z)).
  \end{equation}
  The matrix elements within the determinant can then be  deduced from the series expansion \eqref{eqn:coeffsf} of the function $f(X+2,Y+2)$.
\end{proof}
\end{proposition}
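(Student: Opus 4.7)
The plan is to mirror the proof of Proposition~\ref{prop:AVDet} for $A_{\text{\rm \tiny V}}(2n+1;t)$, employing the homogeneous-limit technique of Behrend, Di~Francesco and Zinn-Justin, but now with the two extra parameters $y$ and $z$ in play. First, I would combine the definitions \eqref{eqn:defA2UU}, \eqref{eqn:defZ2UU} and \eqref{eqn:defMUU} to express $A^{(2)}_{\text{\rm \tiny UU}}(4n;t,y,z)$ as the confluent limit $x_i, y_j \to 1$ of a determinant of rational functions divided by a Vandermonde-like prefactor. The substitution $X_i = x_{i+1}^2 + x_{i+1}^{-2}$, $Y_j = y_{j+1}^2 + y_{j+1}^{-2}$ identifies each matrix entry (up to a common normalisation) with an expression of the form $(AX_i + BY_j + C)\, f(X_i, Y_j)$, with $f$ as in \eqref{eqn:deff} and $A, B, C$ as stated in the proposition.

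Next, invoking the divided-difference formula \eqref{eqn:divideddiff} with $r=2$ turns the confluent limit into the determinant of the series coefficients $\{X^i Y^j\}\bigl((AX + BY + C) f(X+2, Y+2)\bigr)$, multiplied by the overall factor $[q]^{2n^2}$. Since $\{X^i Y^j\}(Xg) = \{X^{i-1}Y^j\}g$ and similarly in $Y$, the coefficients $\{X^i Y^j\} X f(X+2,Y+2)$ and $\{X^i Y^j\} Y f(X+2,Y+2)$ are obtained from \eqref{eqn:coeffsf} by trivial index shifts. Each matrix entry thus becomes a polynomial in $t$, $y$, $z$ built from three sums of binomial products, each carrying a common prefactor of the form $(t-4)^{-(i+j+1)}$.

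The final step is to match this expression with the integrand of \eqref{eq:bigDET}. The prefactor $(t-4)^{-(i+j+1)}$ in each entry, once pulled out row by row and column by column, contributes an overall factor $(t-4)^{-n^2}$ which is exactly compensated by $[q]^{2n^2} = (t-4)^{n^2}$. The remaining algebraic identity between the raw expression and the one in \eqref{eq:bigDET} is then established by comparing the coefficients of $1$, $y$, $z$ and $yz$ separately, applying the Pascal identities $\binom{i+j+1}{i+k+1} = \binom{i+j}{i+k} + \binom{i+j}{i+k+1}$ and $\binom{i+k+1}{2k+1} = \binom{i+k}{2k+1} + \binom{i+k}{2k}$. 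As a useful sanity check, the coefficient of $(1+y)(1+z)$ in the raw expression collapses via Pascal to $\sum_k\binom{i+j}{i+k}\binom{i+k}{2k}\,t^k$, which is precisely the $(1+y)(1+z)$-part of the first sum of \eqref{eq:bigDET}.

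The main obstacle lies in this combinatorial bookkeeping. Although each Pascal manipulation is elementary, organising the three raw sums (arising from the $A$, $B$ and $C$ contributions, which naturally involve $\binom{i+j}{i+k}\binom{i+k}{2k+1}$, $\binom{i+j}{i+k+1}\binom{i+k+1}{2k+1}$ and $\binom{i+j+1}{i+k+1}\binom{i+k+1}{2k+1}$ respectively) into the three sums of \eqref{eq:bigDET} requires careful tracking of the index shifts in each of the four $(y,z)$-sectors and of the cancellations between the negative terms in $A, B$ and the positive ones in $C$.
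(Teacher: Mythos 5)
Your proposal is correct and follows essentially the same route as the paper's proof: combining \eqref{eqn:defA2UU}, \eqref{eqn:defZ2UU} and \eqref{eqn:defMUU}, changing variables to $X_i=x_{i+1}^2+x_{i+1}^{-2}$, $Y_j=y_{j+1}^2+y_{j+1}^{-2}$, applying the divided-difference formula \eqref{eqn:divideddiff} with $r=2$ to the entries $(AX+BY+C)f(X,Y)$ with exactly the stated $A,B,C$, and reading off the coefficients from \eqref{eqn:coeffsf}. Your additional bookkeeping — the index shifts $\{X^iY^j\}(Xg)=\{X^{i-1}Y^j\}g$, the cancellation of $(t-4)^{-n^2}$ against $[q]^{2n^2}$, and the Pascal-identity collapse of the $(1+y)(1+z)$ sector to $\sum_k\binom{i+j}{i+k}\binom{i+k}{2k}t^k$ — is accurate and simply makes explicit what the paper leaves to the reader.
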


The determinant formulas in \cref{thm:componentsAD1,thm:componentsAD2} are by combining \eqref{eq:Xihomogeneous} and \eqref{eq:bigDET} and via the respective specialisations $y=-1$ and $y\to \infty$. This ends the proofs of these theorems.

\paragraph{Odd size.} Next, we compute the special component $(\phi_{\text{\rm \tiny AD}})_{\Downarrow\Uparrow \dots \Downarrow \Uparrow \Downarrow}$ for $N=2n+1$ sites. Using \eqref{eq:phioddlimit} and \eqref{eq:Zqm1} and comparing with \eqref{eq:Xievaluation2}, we obtain
\begin{equation}
(\phi_{\text{\rm \tiny AD}})_{\Downarrow\Uparrow \dots \Downarrow \Uparrow \Downarrow} =  \lim_{x_1, \dots, x_{n+1} \rightarrow 1} \frac{\Xi_{\text{\rm \tiny AD}}(x_1,\dots,x_{n+1};1)}{([q][q^2])^{n+1}} = (-1)^{n+1} \tilde A^{(2)}_{\text{\rm \tiny UU}}(4(n+1);x^2).
\end{equation}
Here, we used \eqref{eq:Xihomogeneous} evaluated at $y=-1$ at the last equality. By computing $\langle{\Downarrow \Uparrow \cdots \Downarrow \Uparrow \Downarrow}| F | \phi_{\text{\rm \tiny AD}} \rangle$ in two possible ways, one of which requires \cref{prop:SRantidiagonal}, we see that 
\begin{equation}
(\phi_{\text{\rm \tiny AD}})_{\Uparrow \Downarrow \cdots \Uparrow \Downarrow \Uparrow} = - (\phi_{\text{\rm \tiny AD}})_{\Downarrow\Uparrow \cdots \Downarrow \Uparrow \Downarrow}.
\end{equation}
This ends the proof of the special component \eqref{eq:ududucomp}.

%%%%%%%%%%%%%%%%%%%%%
%
\section{Conclusion}
%
%%%%%%%%%%%%%%%%%%%%%
\label{sec:conclusion}

In this article, we have revealed a rich combinatorial structure in the integrable spin-one XXZ chain for a specific anti-diagonal twist. We have also presented new results for the same chain but with a diagonal twist.
Specifically,
many 
properties of special zero-energy states of the spin-chain Hamiltonians were investigated for both twists. In a suitable normalisation, a number of scalar products and components
are given by polynomials in the anisotropy parameter which are known generating functions for the enumeration of symmetry classes of ASMs. These results are valid for any system size and any anisotropy.
This is in sharp contrast with the well-known spin-$\frac12$ XXZ chain, where a relation to ASMs only seems to arise at the combinatorial point $\Delta=-1/2$.

To obtain the spin-chain results, we have investigated the transfer matrix of the corresponding twisted inhomogeneous nineteen-vertex model. Using elements of the quantum separation of variables technique, we have found a simple eigenvalue of the transfer matrix for the anti-diagonal twist and constructed the corresponding eigenvector. A similar special eigenvector had previously been obtained for the diagonal twist by means of the algebraic Bethe ansatz. With the help of tools from quantum integrability, such as the fusion procedure of $R$-matrices, the boundary Yang-Baxter equation, the algebraic Bethe ansatz and the quantum separation of variables method, we have derived exact finite-size sum rules and components for the special vectors. Surprisingly, these can be expressed in terms of partition functions of the six-vertex model on various lattice domains. Exploiting bijections between the configurations of the six-vertex model and ASMs, we have recovered sum rules and components for the spin-chain zero-energy states in terms of combinatorial quantities. As a side result, we have obtained new determinant and pfaffian formulas for some generating functions of ASM enumeration.

The results of this paper lead to several challenging open questions and directions for future work. Of central interest is the problem of finding explicit formulas for all components of the inhomogeneous special eigenvectors (in the canonical basis). There is hope these can be expressed by contour integral formulas, like those obtained in \cite{razumov:07} and \cite{fonseca:12} respectively for the spin-$\tfrac12$ and higher-spin vertex models at their combinatorial points.

A systematic computation of all components might elucidate the intriguing appearance of partition functions of the six-vertex model within the nineteen-vertex model. 
This feature, which we have encountered many times in this paper as the result of algebraic simplifications, is admittedly poorly understood at the moment.
We expect it could be crucial in 
understanding whether all components of the special eigenvectors have a combinatorial meaning in the homogeneous limit.

Another interesting open problem regards the conjectures about the ground-state eigenvalues of the spin chain presented at the end of \cref{sec:hamXXZ}.
We expect their proofs
to be easier in the subsectors of the Hilbert space where the lattice supersymmetry of the integrable spin-one XXZ chain is present. 
In these subsectors, the spectrum is automatically non-negative and any zero-energy state is a ground state. Furthermore, the dimension of the space of zero-energy states can be computed using cohomology arguments.
The characterisation of the ground-state eigenvalue in sectors without supersymmetry remains a
challenging open problem.

A further avenue for future work is the computation of finite-size correlation functions. For the diagonal twist, we expect
these can be explicitly computed for the inhomogeneous special eigenvector by following the ideas
of \cite{kitanine:01}. In particular, emptiness formation probabilities are the easiest correlation functions to study using the formalism of the QISM. 
The evaluation of their homogeneous limit could be quite challenging but not impossible. 
Their behaviour as $N\to \infty$ is also of great interest.
For the homogeneous nineteen-vertex model and the spin-one XXZ chain with $-2 \le x \le 2$, the long-distance behaviour of correlation functions is expected to be described by an $\mathcal N=1$ superconformal field theory with central charge $c=3/2$ \cite{difrancesco:88,baranowski:90,pearce:91}. 
The two different twists considered in this paper may correspond to different sectors of the superconformal field theory. To our knowledge, this connection is poorly understood and deserves to be clarified.

Finally, we mention two possible directions that venture beyond the integrable spin-one XXZ chain. The first is
the extension of our work to the integrable spin-one XYZ chain\cite{fateev:81} with twisted boundary conditions, a one-parameter deformation of the XXZ case.
Its Hamiltonian is also supersymmetric in certain subsectors of the Hilbert space \cite{hagendorf:15}. This allows us to infer the existence of zero-energy states from the XXZ case by a cohomology argument \cite{hagendorf:13}. 
In a suitable normalisation, their components can be expressed as polynomials in \textit{two} variables whose 
possible combinatorial content remains to be explored.
The second natural
direction to explore is the generalisation of the present work to the integrable XXZ chain with arbitrary integer spin. We expect the fusion procedure will allow us to construct eigenvectors of the inhomogeneous higher-spin transfer matrices using the spin-one results. 

\subsubsection*{Acknowledgements}
This work is supported by the Belgian Interuniversity Attraction Poles Program P7/18 through the network DYGEST (Dynamics, Geometry and Statistical Physics). The authors acknowledge hospitality and support from the Galileo Galilei Institute, Florence, and from the program ``Statistical Mechanics, Integrability and Combinatorics'', where part of this work was done.

\appendix

%%%%%%%%%%%%%%%%%%%%%
%
\section*{Appendices}
%
%%%%%%%%%%%%%%%%%%%%%

%%%%%%%%%%%%%%%%%%%%%
%
\section{Solutions to the boundary Yang-Baxter equation}\label{app:BYBE}
%
%%%%%%%%%%%%%%%%%%%%%

In this appendix, we present solutions to the boundary Yang-Baxter equation for the six- and the nineteen-vertex model. In its vector form,\footnote{The boundary Yang-Baxter equation is often presented in a matrix form involving the so-called $K$-matrix \cite{sklyanin:88}. The matrix and vector form are equivalent. In the latter, the matrix elements of $K(z)$ are encoded in the components of the vector $|\chit(z)\rangle$.} this equation reads
\begin{equation}\label{eq:bybe}
\check R_{12}(z/w)\check R_{23}(zw)\big(|\chit(w)\rangle \otimes |\chit(z)\rangle\big) = \check R_{34}(z/w)\check R_{23}(zw)\big(|\chit(z)\rangle \otimes \chit(w) \rangle \big).
\end{equation}
Its diagrammatic representation for the nineteen-vertex model is given in the left panel of \cref{fig:spin.one.ids}.

For the six-vertex model, the $\check R$-matrix is defined by $\check R^{(1,1)}(z) = P^{(1)}R^{(1,1)}(z)$ where $P^{(1)}$ is the permutation operator on $\mathbb C^2 \otimes \mathbb C^2$: $P^{(1)}(|v_1\rangle \otimes |v_2\rangle) = |v_2\rangle \otimes |v_1\rangle$. One can check that 
\begin{equation}\label{eq:solbybe6}
|\chit^{(1)}(z)\rangle = \left[\sqrt qz b\right] |{\uparrow\downarrow} \rangle + \left[\frac{\sqrt qz}{b}\right] |{\downarrow\uparrow} \rangle
\end{equation}
solves the boundary Yang-Baxter equation.
Here, $b$ is a free parameter which takes the same value for $|\chit^{(1)}(w)\rangle$ and $|\chit^{(1)}(z)\rangle$ in \eqref{eq:bybe}.
The state $|\chit^{(1)}(z)\rangle$ also satisfies a so-called {\it fish equation}: 
\begin{equation}\label{eq:fish6}
  \check R^{(1,1)}(z^{-2})|\chit^{(1)}(z)\rangle = [q z^2]|\chit^{(1)}(z^{-1})\rangle.
\end{equation}

For the nineteen-vertex model, $\check R^{(2,2)}(z) \equiv \check R(z)$ and $|\chit^{(2)}(z)\rangle\equiv|\chit(z)\rangle$ are  respectively defined in \eqref{eqn:Rcheck} and \eqref{eq:chistate}.
An explicit computation shows that they solve the boundary Yang-Baxter equation as well as the fish equation
\begin{equation}\label{eq:fish19}
  \check R^{(2,2)}(z^{-2})|\chit^{(2)}(z)\rangle = [q z^2] [q^2 z^2]|\chit^{(2)}(z^{-1})\rangle.
\end{equation}
This relation is represented diagrammatically in the right panel of \cref{fig:spin.one.ids}.\footnote{According to our conventions for the diagrammatic calculus, these diagrams actually illustrate the dual relations to \eqref{eq:bybe} and \eqref{eq:fish19}, obtained by taking their transpose.}
The state $|\chit^{(2)}(z)\rangle$ can in fact be constructed from the solution \eqref{eq:solbybe6} for the six-vertex model as
\begin{equation}
|\chit^{(2)}(z)\rangle = \frac{1}{[z^2]} Q_{12}Q_{34}V_{12} U_{34} P_{12}^+P_{34}^+\check R^{(1,1)}_{23}(z^2) \big(|\chit^{(1)}(q^{1/2}z)\rangle \otimes |\chit^{(1)}(q^{-1/2}z)\rangle \big) 
\end{equation}
where the operators $Q, U$ and $P^+$ are defined in \cref{sec:Rmat+fusion} and
\begin{equation}
V= (U^{-1})^t= \begin{pmatrix} 1 & 0 & 0 & 0 \\ 0 & \alpha' & \alpha' & 0 \\ 0 & 0 & 0 & 1 \\ 0 & \alpha' & -\alpha' & 0 \end{pmatrix}, \qquad \alpha' = \sqrt{\frac{[q]}{[q^2]}}.
\end{equation}
The relations \eqref{eq:bybe} and \eqref{eq:fish19} for the nineteen-vertex model can be shown to hold as a consequence of the fusion procedure and the local relations for the six-vertex model.

\begin{figure}[h]  
  \centering
  \begin{tikzpicture}
    \begin{scope}
        
    \draw[double] (0.5,1.0) arc [start angle=0, end angle = 180, radius=.25cm];
    \fill (0.25,1.25) circle (1.5pt) ;
    \draw (0.25,1.25) node[above] {\scriptsize$w$};

    \draw[double] (1.5,1.0) arc [start angle=0, end angle = 180, radius=.25cm];
    \fill (1.25,1.25) circle (1.5pt) ;
    \draw (1.25,1.25) node[above] {\scriptsize$z$};
    
    \draw 	(0,-0.20) node {\scriptsize$z$} 
    		(0.5,-0.20) node {\scriptsize$w$}
    		(1.03,-0.14) node {\scriptsize$w^{-1}$}
    		(1.58,-0.14) node {\scriptsize$z^{-1}$};		
    
    \draw[double] (0,0) .. controls (0,0.25) and (0.5,0.25) .. (0.5,0.5)
    	     (0.5,0) .. controls (0.5,0.25) and (0,0.25) .. (0,0.5)
	     (0.5,0.5) .. controls (0.5,0.75) and (1.0,0.75) .. (1.0,1.0)
    	     (1.0,0.5) .. controls (1.0,0.75) and (0.5,0.75) .. (0.5,1.0);
    
    \draw[double] (1,0) -- (1,0.5)  (1.5,0) -- (1.5,0.5)  (0,0.5) -- (0,1)  (1.5,0.5) -- (1.5,1);

    \end{scope}

     \begin{scope}[xshift=2.2cm]
     \draw 	(0,0.5) node {$=$} ;
     \end{scope}

    \begin{scope}[xshift=2.9cm]
        
    \draw[double] (0.5,1.0) arc [start angle=0, end angle = 180, radius=.25cm];
    \fill (0.25,1.25) circle (1.5pt) ;
    \draw (0.25,1.25) node[above] {\scriptsize$z$};

    \draw[double] (1.5,1.0) arc [start angle=0, end angle = 180, radius=.25cm];
    \fill (1.25,1.25) circle (1.5pt) ;
    \draw (1.25,1.25) node[above] {\scriptsize$w$};
    
    \draw 	(0,-0.2) node {\scriptsize$z$} 
    		(0.5,-0.2) node {\scriptsize$w$}
    		(1.03,-0.14) node {\scriptsize$w^{-1}$}
    		(1.58,-0.14) node {\scriptsize$z^{-1}$};		
    
    \draw[double] (1,0) .. controls (1,0.25) and (1.5,0.25) .. (1.5,0.5)
    	     (1.5,0) .. controls (1.5,0.25) and (1,0.25) .. (1,0.5)
	     (0.5,0.5) .. controls (0.5,0.75) and (1.0,0.75) .. (1.0,1.0)
    	     (1.0,0.5) .. controls (1.0,0.75) and (0.5,0.75) .. (0.5,1.0);
    
    \draw[double] (0,0) -- (0,0.5)  (0.5,0) -- (0.5,0.5)  (0,0.5) -- (0,1)  (1.5,0.5) -- (1.5,1);
       
    \end{scope}

    \begin{scope}[xshift=6.5cm,yshift=0.25cm]
    
    \draw[double] (0.5,0.5) arc [start angle=0, end angle = 180, radius=.25cm];
    \fill (0.25,0.75) circle (1.5pt);
    \draw (0.25,0.75) node[above] {\scriptsize$z$};
    
    \draw[double] (0,0) .. controls (0,0.25) and (0.5,0.25) .. (0.5,0.5)
    	     (0.5,0) .. controls (0.5,0.25) and (0,0.25) .. (0,0.5);
	     
    \draw 	(0.56,-0.14) node {\scriptsize$z^{-1}$} 
    		(0,-0.2) node {\scriptsize$z$};
    
    \end{scope}

     \begin{scope}[xshift=8.4cm]
     \draw 	(0,0.5) node {$=\ [q z^2][q^2 z^2]$} ;
     \end{scope}

    \begin{scope}[xshift=9.75cm,yshift=0.25cm]
    
    \draw[double] (0.5,0.5) arc [start angle=0, end angle = 180, radius=.25cm];
    \fill (0.25,0.75) circle (1.5pt);
    \draw (0.25,0.75) node[above] {\scriptsize$z^{-1}$};
    
    \draw[double] (0,0) -- (0,0.5)  (0.5,0) -- (0.5,0.5);
	     
    \draw 	(0,-0.14) node {\scriptsize$z^{-1}$} 
    		(0.5,-0.2) node {\scriptsize$z$};
    
    \end{scope}
    
  \end{tikzpicture}
  \caption{Diagrammatic representations in the nineteen-vertex model of the boundary Yang-Baxter equation and the fish equation. The diagrams for the boundary Yang-Baxter and fish equations of the six-vertex model are obtained by replacing each double strand by a single one and modifying the prefactor in the right panel to $[q z^2]$.}
    \label{fig:spin.one.ids}
\end{figure}
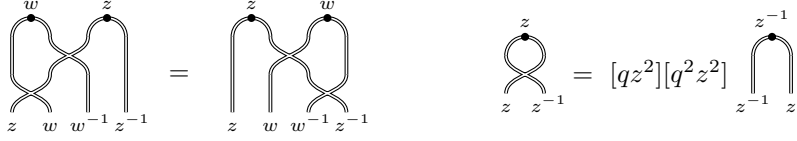

%%%%%%%%%%%%%%%%%%%%%
%
\section{Auxiliary partition functions}
\label{app:ZA}
%
%%%%%%%%%%%%%%%%%%%%%

The objective of this section is to compute $Z_{\text{\rm \tiny A}}(\{x_i\}_{i=1}^n;\{y_i\}_{i=1}^{2n})$. Our strategy is to use the fusion construction of the $R$-matrix of the ten-vertex model to express $Z_{\text{\rm \tiny A}}$ in terms of a partition function of the six-vertex model. The proof we present uses diagrammatic arguments. In particular, we shall depict the solution of the boundary Yang-Baxter equation for the six-vertex model as
\begin{equation}
\langle \chit^{(1)}(z)|  \ \ = \ \ 
\begin{tikzpicture}[baseline=0.5cm]
    \draw (0.5,0.5) arc [start angle=0, end angle = 180, radius=.25cm];
    \fill (0.25,0.75) circle (1.5pt);
    \draw (0.25,0.75) node[above] {$z$};
    \draw (0,0.4) -- (0,0.5)  (0.5,0.4) -- (0.5,0.5);
  \end{tikzpicture}
 \ \, .
\end{equation}
Let us also define the partition function
\begin{equation}
Z_\cap(\{x_i\}_{i=1}^n;\{y_j\}_{j=1}^{n}) = \ \ 
\begin{tikzpicture}[baseline=0.75cm]
	\foreach \x in {0,.5}
     {  
       \draw (\x,0) -- (\x,2);
       \draw[xshift=2cm] (\x,0) -- (\x,2);
       
       \draw[postaction={on each segment={mid arrow}}] (\x,-.5) -- (\x,0);
       \draw[xshift=2cm,postaction={on each segment={mid arrow}}] (\x,-.5) -- (\x,0);
     }
     
     \foreach \x in {0,2}
     {  
        \draw [xshift=\x cm] (.5,2) arc [start angle=0, end angle = 180, radius=.25cm];
        \fill[xshift=\x cm] (0.25,2.25) circle (1.5pt) ;
     }
        
     \draw (0.25,2.25) node[above] {$x_1$} ;
     \draw (2.25,2.25) node[above] {$x_n$} ;
     
     \foreach \x in {0,3}
       \draw[xshift=\x cm] (-.25,1) node {$\vdots$};
     
     \foreach \y in {0,2.25}
       \draw[yshift=\y cm] (1.25,-.25) node {$\cdots$};

     \foreach \y in {0,1.75}
     {
     \draw[postaction={on each segment={mid arrow}},yshift=\y cm] (0,0)--(-0.5,0);
     \draw[postaction={on each segment={mid arrow}},yshift=\y cm] (2.5,0)--(3,0);
     \draw[yshift=\y cm]  (0,0)--(2.5,0);
     }
     
     \draw (-.5,0) node [left] {$y_1$};
     \draw (-.5,1.75) node [left] {$y_{n}$};
     \draw (0,-.797) node {$x_1$};
     \draw (.6,-.75) node  {$x_1^{-1}$};
      \draw (2,-.797) node {$x_n$};
     \draw (2.6,-.75) node  {$x_n^{-1}$};

          \end{tikzpicture}
\end{equation}
where the local weights assigned to vertex configurations are the $\bar {\mathfrak a}(z)$, $\bar {\mathfrak b}(z)$ and $\bar {\mathfrak c}(z)$ given in \cref{fig:6v}.
We have the following proposition.
\begin{lemma}
\label{prop:zcup}
The partition function $Z_\cap$ is given by
\begin{subequations}\label{eq:B.3}
\begin{equation}\label{eq:Zcap}
Z_\cap(\{x_i\}_{i=1}^n;\{y_j\}_{j=1}^{n}) = \frac{[q]^n \prod_{i=1}^n[b/(\sqrt{q}y_i)][qx_i^2]\prod_{i,j=1}^n[x_i/y_j][q^{-1} x_i/y_j][q x_iy_j][x_iy_j]}{\prod_{1\le i<j\le n}[x_i/x_j][y_i/y_j]\prod_{1\le i\le j\le n}[1/(x_ix_j)][qy_iy_j]}  \Det{i,j=1}{n} (M_\cap)_{ij}
\end{equation}
where
\begin{equation}
(M_\cap)_{ij}= \frac1{[x_i/y_j][q^{-1} x_i/y_j]}-\frac1{[x_iy_j][qx_iy_j]}.
\end{equation}
\end{subequations}
\end{lemma}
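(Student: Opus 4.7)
The formula \eqref{eq:Zcap} is structurally identical to Tsuchiya's partition function $Z_{\text{\rm \tiny U}}$ of \eqref{eq:ZU5.8}, so the plan is to adapt his strategy (see \cite{tsuchiya:98}) to the present setting, where the weights $\bar{\mathfrak a}, \bar{\mathfrak b}, \bar{\mathfrak c}$ of the $R$-matrix \eqref{eqn:6vRmatrix} are used in place of Kuperberg's $\mathfrak a, \mathfrak b, \mathfrak c$, and where the spectral parameter of the U-turn state $|\chit^{(1)}(z)\rangle$ is shifted by $\sqrt q$. I propose to characterize $Z_\cap$ by four properties that determine it uniquely (up to a global constant), and then verify that the right-hand side of \eqref{eq:Zcap} satisfies the same properties and matches at a base case.

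First, I will establish separate symmetry in the sets $\{x_i\}_{i=1}^n$ and $\{y_j\}_{j=1}^n$. Symmetry in the $y_j$ is standard and follows from sliding an $\check R^{(1,1)}(y_j/y_{j+1})$ intertwiner through two adjacent horizontal lines via the Yang-Baxter equation satisfied by $R^{(1,1)}$. Symmetry in the $x_i$ is the more delicate point: one must swap two adjacent pairs of columns $(x_i, x_i^{-1})$ and $(x_{i+1}, x_{i+1}^{-1})$ \emph{together} with the two U-turns sitting on top of them. This is achieved by combining the boundary Yang-Baxter equation \eqref{eq:bybe} with the fish relation \eqref{eq:fish6}, applied to the solution $|\chit^{(1)}(z)\rangle$ of \eqref{eq:solbybe6}. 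The $\sqrt q$ shift in the argument of $|\chit^{(1)}(z)\rangle$ relative to the spectral parameter of the $R$-matrix is precisely what produces the factor $[b/(\sqrt q y_i)]$ in the prefactor on the right-hand side of \eqref{eq:Zcap}.

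Next, I will show that $Z_\cap$ is a centered Laurent polynomial in each $y_j$ of degree width controlled by the $2n$ vertices along the $j$-th row (each contributing degree width at most $2$ in $y_j$), together with the factor coming from the U-turn state. A recursion is then obtained by specializing $y_j$ to suitable values such as $y_j=x_i^{\pm 1}$ or $y_j= q x_i$, at which arrow conservation freezes a local block of vertices and the remaining configurations are precisely those of $Z_\cap$ at size $n-1$. The base case $n=1$ is settled by direct enumeration of the finitely many admissible configurations, which fixes the global normalisation.

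Finally, an inspection of the closed form on the right-hand side of \eqref{eq:Zcap} shows that it enjoys the same symmetries, has the correct degree structure in each $y_j$, and satisfies the same reduction at size $n-1$ (with the determinant collapsing through a row/column operation when the corresponding specialisation is made). Uniqueness then yields the lemma. The main technical obstacle will be the column-swap symmetry in Step~1: carefully tracking the scalar prefactors produced by the boundary Yang-Baxter move and the accompanying fish relation, and checking that they combine into the overall $\sqrt q$-shifted factor $[b/(\sqrt q y_i)]$ and the factors $[q x_i^2]$ in the explicit formula.
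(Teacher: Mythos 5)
Your proposal follows essentially the same route as the paper: the paper's proof also characterises $Z_\cap$ uniquely by separate symmetry in the $x_i$ and $y_j$, covariance under $x_i\to x_i^{-1}$, Laurent polynomiality with a degree-width bound and oddness in $y_n$, a recursion relation, and the base case $n=1$, noting that the verification of each property is completely analogous to Kuperberg's argument for the rotated lattice with the weights $\mathfrak a,\mathfrak b,\mathfrak c$. Your plan for establishing each property (Yang--Baxter for the row symmetry, boundary Yang--Baxter plus the fish relation for the column swaps, freezing arguments for the recursion, direct enumeration at $n=1$) matches the intended verification, with your extra recursion points at $y_j=x_i^{-1}$ playing the role of the paper's explicit $x_i\to x_i^{-1}$ covariance.
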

\begin{proof}
Kuperberg \cite{kuperberg:02} found a formula for a partition function on the same lattice,
but rotated by 90 degrees and defined with the weights $\mathfrak a(z)$, $\mathfrak b(z)$ and $\mathfrak c(z)$. His proof is easily adaptable to the present case. Indeed, $Z_\cap(\{x_i\}_{i=1}^n;\{y_j\}_{j=1}^{n})$ is entirely fixed by the following properties: 
\begin{itemize}
\item[(i)] $Z_\cap(\{x_i\}_{i=1}^n;\{y_j\}_{j=1}^{n})$ is separately symmetric in the $x_i$ and $y_j$.
\item[(ii)] $Z_\cap(\dots,x_i^{-1},\dots;\{y_j\}_{j=1}^n)=[q/x_i^2]/[qx_i^2]Z_\cap(\dots,x_i,\dots;\{y_j\}_{j=1}^n)$.
\item[(iii)] $Z_\cap$ is an odd, centred Laurent polynomial in the variable $y_{n}$ of degree width at most $2(2n-1)$.
\item[(iv)] $Z_\cap$ satisfies the recursion relation 
\begin{equation}
\hspace{-0.2cm}\frac{Z_\cap(\{x_i\}_{i=1}^{n};\{y_j\}_{j=1}^{n-1},y_n = x_n)}{Z_\cap(\{x_i\}_{i=1}^{n-1};\{y_j\}_{j=1}^{n-1})} = [q][q x_n^2][\sqrt{q}y_n/b] \prod_{i=1}^{n-1}[qy_i/x_n][x_ny_i][q y_n/x_i][qx_iy_n].
\end{equation} 
\item[(v)] \eqref{eq:B.3} holds for $n=1$.
\end{itemize}
The verification of each property is completely analogous to the corresponding one in Kuperberg's proof.
\end{proof}

We are now ready to compute $Z_{\text{\rm \tiny A}}(\{x_i\}_{i=1}^n;\{y_i\}_{i=1}^{2n})$. We perform the calculation on $\mathbb C^2 \otimes \mathbb C^2$ instead of $\mathbb C^3$. The state $\langle \chit(z)|$ is then given by
\begin{equation}
[z^2]\langle \chit_{(12)}(z)| = \big(\langle \chi^{(1)}(q^{1/2}z)| \otimes \langle\chi^{(1)}(q^{-1/2}z)|\big)\check R^{(1,1)}_{23}(z^2)P^+_{12}P^+_{34}U^{-1}_{12}V^{-1}_{34} 
\end{equation}
and we draw it as
\begin{equation}
[z^2]\langle \chit_{(12)}(z)| \ = \
\begin{tikzpicture}[baseline=0.25cm]
    \draw (0.5,0.6) arc [start angle=0, end angle = 180, radius=.25cm];
    \fill (0.25,0.85) circle (1.5pt);
    \draw (0.25,0.85) node[above] {\scriptsize$zq^{\frac12}$};
    \draw (0,0.5) -- (0,0.6)  (0.5,0.5) -- (0.5,0.6);
    
    \draw (1.5,0.6) arc [start angle=0, end angle = 180, radius=.25cm];
    \fill (1.25,0.85) circle (1.5pt);
    \draw (1.25,0.85) node[above] {\scriptsize$zq^{-\frac12}$};
    \draw (1,0.5) -- (1,0.6)  (1.5,0.5) -- (1.5,0.6);
    
    \draw (0.5,0) .. controls (0.5,0.25) and (1.0,0.25) .. (1.0,0.5)
    	     (1.0,0) .. controls (1.0,0.25) and (0.5,0.25) .. (0.5,0.5);
	    
    \draw (0,0) -- (0,0.5)   (1.5,0) -- (1.5,0.5);
    
    \fill[gray] (-0.1,-0.1) rectangle (0.6,0);
    \draw (-0.05,-0.3) -- (-0.05,-0.1) -- (0.55,-0.1) -- (0.55,-0.3) -- (-0.05,-0.3);
    \draw (0.25,-0.2) node {$\textrm{\tiny{$_{\bar U}$}}$};
    
    \fill[gray] (0.9,-0.1) rectangle (1.6,0);
    \draw (0.95,-0.3) -- (0.95,-0.1) -- (1.55,-0.1) -- (1.55,-0.3) -- (0.95,-0.3);
    \draw (1.25,-0.2) node {$\textrm{\tiny{$_{\bar V}$}}$};    
    
  \end{tikzpicture} \ .
\end{equation}
In the diagrams, we use the abbreviations $\bar U=U^{-1}$ and $\bar V=V^{-1}$ and draw $P^+$ as a gray rectangle: $P^+ = \ 
\begin{tikzpicture}[baseline=-0.04cm]
	\fill[gray] (0,0) rectangle (0.7,0.1);
\end{tikzpicture}
$ . From \eqref{eq:R12123}, the $R^{(1,2)}$-matrix is drawn in two possible ways:
\begin{equation}\label{eq:Rdiag}
[z/w]R^{(1,2)}_{1(23)}(q^{-1}z/w) =  
\begin{tikzpicture}[baseline=-0.1cm]
	\draw (0.15,0) node[left] {\scriptsize$z$};
	\draw (0.5,-0.55) node[below] {\scriptsize$qw$};
	\draw (1.0,-0.55) node[below] {\scriptsize$w$};
	\draw (0.15,0) -- (1.35,0) (0.5,-0.6) -- (0.5,0.6) (1.0,-0.6) -- (1.0,0.6);
	\fill[gray] (0.4,-0.3) rectangle (1.1,-0.2);
	\draw (0.45,-0.5) -- (0.45,-0.3) -- (1.05,-0.3) -- (1.05,-0.5) -- (0.45,-0.5);
	\fill[white] (0.45,-0.5) rectangle (1.05,-0.3);
	\draw (0.75,-0.4) node {$\textrm{\tiny{$_{\bar U}$}}$};
	
	\draw (0.45,0.5) -- (0.45,0.3) -- (1.05,0.3) -- (1.05,0.5) -- (0.45,0.5);
	\fill[white] (0.45,0.5) rectangle (1.05,0.3);
	\draw (0.75,0.4) node {$\textrm{\tiny{$_{U}$}}$};

\end{tikzpicture} %
\ = 
\begin{tikzpicture}[baseline=-0.1cm]
	\draw (0.15,0) node[left] {\scriptsize$z$};
	\draw (0.5,-0.55) node[below] {\scriptsize$w$};
	\draw (1.0,-0.55) node[below] {\scriptsize$qw$};
	
	\draw (0.15,0) -- (1.35,0) (0.5,-0.6) -- (0.5,0.6) (1.0,-0.6) -- (1.0,0.6);
	\draw (0.45,-0.5) -- (0.45,-0.3) -- (1.05,-0.3) -- (1.05,-0.5) -- (0.45,-0.5);
	\fill[white] (0.45,-0.5) rectangle (1.05,-0.3);
	\draw (0.75,-0.4) node {$\textrm{\tiny{$_{\bar V}$}}$};
	
	\fill[gray] (0.4,0.2) rectangle (1.1,0.3);
	\draw (0.45,0.5) -- (0.45,0.3) -- (1.05,0.3) -- (1.05,0.5) -- (0.45,0.5);
	\fill[white] (0.45,0.5) rectangle (1.05,0.3);
	\draw (0.75,0.4) node {$\textrm{\tiny{$_{V}$}}$};

\end{tikzpicture} \ ,
\end{equation}
where the second diagram is obtained from 
\begin{equation}
R^{(1,2)}_{1(23)}(z) = \big(R^{(1,2)}_{1(23)}(z)\big)^t = \frac{1}{[q z]}V_{23} P^+_{23} R^{(1,1)}_{13}(qz)R^{(1,1)}_{12}(z) V^{-1}_{23}.
\end{equation} As a result, $Z_{\text{\rm \tiny A}}$ is expressible as 
\begin{equation}
Z_{\text{\rm \tiny A}}(\{x_i\}_{i=1}^n;\{y_i\}_{i=1}^{2n}) = \frac{\tilde Z_{\text{\rm \tiny A}}(\{x_i\}_{i=1}^n;\{y_i\}_{i=1}^{2n})}{\prod_{i=1}^n [x_i^2]\prod_{i=1}^n\prod_{j=1}^{2n} [y_j/x_i][y_jx_i]}
\end{equation}
where 
\begin{equation}
\tilde Z_{\text{\rm \tiny A}}(\{x_i\}_{i=1}^n;\{y_i\}_{i=1}^{2n}) = \
\begin{tikzpicture}[baseline=0.95cm]
	\foreach \x in {0,.5,1,1.5,3,3.5,4,4.5}
     	{  
       	\draw (\x,-0.5) -- (\x,2.5);
       	\draw[postaction={on each segment={mid arrow}}] (\x,-1.0) -- (\x,-0.5);
     	}
	\draw[postaction={on each segment={mid arrow}}] (0,0) -- (-0.5,0) (4.5,0) -- (5,0);
	
	\foreach \y in {0,2}
	{
	\draw (0,\y) -- (4.5,\y);
	\draw[postaction={on each segment={mid arrow}}] (0,\y) -- (-0.5,\y) (4.5,\y) -- (5,\y);

	\foreach \x in {0,3}
	{
	\fill[gray] (\x-0.1,-0.2+\y) rectangle (\x+0.6,-0.3+\y);
	\fill[gray] (\x+0.9,0.2+\y) rectangle (\x+1.6,0.3+\y);
	
	\draw (\x-0.05,0.5+\y) -- (\x-0.05,0.3+\y) -- (\x+0.55,0.3+\y) -- (\x+0.55,0.5+\y) -- (\x-0.05,0.5+\y);
	\fill[white] (\x-0.05,0.5+\y) rectangle (\x+0.55,0.3+\y);
	\draw (\x+0.25,0.4+\y) node {$\textrm{\tiny{$_{U}$}}$};
	
	\draw (\x-0.05,-0.5+\y) -- (\x-0.05,-0.3+\y) -- (\x+0.55,-0.3+\y) -- (\x+0.55,-0.5+\y) -- (\x-0.05,-0.5+\y);
	\fill[white] (\x-0.05,-0.5+\y) rectangle (\x+0.55,-0.3+\y);
	\draw (\x+0.25,-0.4+\y) node {$\textrm{\tiny{$_{\bar U}$}}$};
		
	\draw (\x+0.95,0.5+\y) -- (\x+0.95,0.3+\y) -- (\x+1.55,0.3+\y) -- (\x+1.55,0.5+\y) -- (\x+0.95,0.5+\y);
	\fill[white] (\x+0.95,0.5+\y) rectangle (\x+1.55,0.3+\y);
	\draw (\x+1.25,0.4+\y) node {$\textrm{\tiny{$_{V}$}}$};
	
	\draw (\x+0.95,-0.5+\y) -- (\x+0.95,-0.3+\y) -- (\x+1.55,-0.3+\y) -- (\x+1.55,-0.5+\y) -- (\x+0.95,-0.5+\y);
	\fill[white] (\x+0.95,-0.5+\y) rectangle (\x+1.55,-0.3+\y);
	\draw (\x+1.25,-0.4+\y) node {$\textrm{\tiny{$_{\bar V}$}}$};
	}
	}	
	
	\foreach \x in {0,3}
	{
	\draw (\x-0.05,2.5) -- (\x-0.05,2.7) -- (\x+0.55,2.7) -- (\x+0.55,2.5) -- (\x-0.05,2.5);
	\fill[white] (\x-0.05,2.5) rectangle (\x+0.55,2.7);
	\draw (\x+0.25,2.6) node {$\textrm{\tiny{$_{\bar U}$}}$};
	
	\draw (\x+0.95,2.5) -- (\x+0.95,2.7) -- (\x+1.55,2.7) -- (\x+1.55,2.5) -- (\x+0.95,2.5);
	\fill[white] (\x+0.95,2.5) rectangle (\x+1.55,2.7);
	\draw (\x+1.25,2.6) node {$\textrm{\tiny{$_{\bar V}$}}$};
	
	\draw (\x+0.5,2.8) .. controls (\x+0.5,3.05) and (\x+1.0,3.05) .. (\x+1.0,3.3)
    	     	 (\x+1.0,2.8) .. controls (\x+1.0,3.05) and (\x+0.5,3.05) .. (\x+0.5,3.3);
	\draw (0.5+\x,3.3) arc [start angle=0, end angle = 180, radius=.25cm];
	\draw (1.5+\x,3.3) arc [start angle=0, end angle = 180, radius=.25cm];	 
	\draw (\x,2.7) -- (\x,3.3);
	\draw (\x+1.5,2.7) -- (\x+1.5,3.3);
	
	\fill (0.25+\x,3.55) circle (1.5pt);
	\fill (1.25+\x,3.55) circle (1.5pt);
	
	\fill[gray] (\x-0.1,2.7) rectangle (\x+0.6,2.8);
	\fill[gray] (\x+0.9,2.7) rectangle (\x+1.6,2.8);
	}

	\draw (0.25,3.55) node[above] {\scriptsize$x_1q^{\frac12}$};
	\draw (1.25,3.55) node[above] {\scriptsize$x_1q^{-\frac12}$};

	\draw (3.25,3.55) node[above] {\scriptsize$x_nq^{\frac12}$};
	\draw (4.25,3.55) node[above] {\scriptsize$x_nq^{-\frac12}$};	
	
	\draw (-0.5,0) node[left] {\scriptsize$y_1$};
	\draw (-0.5,2) node[left] {\scriptsize$y_{2n}$};
	
	\draw (0,-1.24) node {\scriptsize$qx_1$};
	\draw (0.5,-1.24) node {\scriptsize$x_1$};
	\draw (1,-1.2) node {\scriptsize$x_1^{-1}$};
	\draw (1.6,-1.2) node {\scriptsize$qx_1^{-1}$};	
	
	\draw (3,-1.24) node {\scriptsize$qx_n$};
	\draw (3.5,-1.24) node {\scriptsize$x_n$};
	\draw (4,-1.2) node {\scriptsize$x_n^{-1}$};
	\draw (4.6,-1.2) node {\scriptsize$qx_n^{-1}$};		
	
	\draw (2.25,-0.4) node {$\dots$};
	\draw (2.25,2.5) node {$\dots$};	
	\draw (-0.25,1.1) node {$\vdots$};
	\draw (4.75,1.1) node {$\vdots$};
	
\end{tikzpicture} \ \ .
\label{eq:diagZA}
\end{equation}
Here, we have used the first form for the $R$-matrix \eqref{eq:Rdiag} for odd columns and the second form for even columns. We simplify \eqref{eq:diagZA} by using the following identities:
\begin{subequations}\label{eq:proj.relations}
\begin{equation}
\begin{tikzpicture}[baseline=-0.1cm]
	\draw (0.15,0) node[left] {\scriptsize$z$};
	\draw (0.5,-0.55) node[below] {\scriptsize$qw$};
	\draw (1.0,-0.55) node[below] {\scriptsize$w$};
	\draw (0.15,0) -- (1.35,0) (0.5,-0.6) -- (0.5,0.6) (1.0,-0.6) -- (1.0,0.6);
	\fill[gray] (0.4,-0.3) rectangle (1.1,-0.2);
	\fill[gray] (0.4,0.3) rectangle (1.1,0.2);	
	\draw (0.45,-0.5) -- (0.45,-0.3) -- (1.05,-0.3) -- (1.05,-0.5) -- (0.45,-0.5);
	\fill[white] (0.45,-0.5) rectangle (1.05,-0.3);
	\draw (0.75,-0.4) node {$\textrm{\tiny{$_{\bar U}$}}$};
	
	\draw (0.45,0.5) -- (0.45,0.3) -- (1.05,0.3) -- (1.05,0.5) -- (0.45,0.5);
	\fill[white] (0.45,0.5) rectangle (1.05,0.3);
	\draw (0.75,0.4) node {$\textrm{\tiny{$_{U}$}}$};
\end{tikzpicture} 
\ = \
\begin{tikzpicture}[baseline=-0.1cm]
	\draw (0.15,0) node[left] {\scriptsize$z$};
	\draw (0.5,-0.55) node[below] {\scriptsize$qw$};
	\draw (1.0,-0.55) node[below] {\scriptsize$w$};
	\draw (0.15,0) -- (1.35,0) (0.5,-0.6) -- (0.5,0.6) (1.0,-0.6) -- (1.0,0.6);
	\fill[gray] (0.4,-0.3) rectangle (1.1,-0.2);
	\draw (0.45,-0.5) -- (0.45,-0.3) -- (1.05,-0.3) -- (1.05,-0.5) -- (0.45,-0.5);
	\fill[white] (0.45,-0.5) rectangle (1.05,-0.3);
	\draw (0.75,-0.4) node {$\textrm{\tiny{$_{\bar U}$}}$};
	
	\draw (0.45,0.5) -- (0.45,0.3) -- (1.05,0.3) -- (1.05,0.5) -- (0.45,0.5);
	\fill[white] (0.45,0.5) rectangle (1.05,0.3);
	\draw (0.75,0.4) node {$\textrm{\tiny{$_{U}$}}$};
\end{tikzpicture}\ , \qquad
\begin{tikzpicture}[baseline=-0.1cm]
	\draw (0.15,0) node[left] {\scriptsize$z$};
	\draw (0.5,-0.55) node[below] {\scriptsize$w$};
	\draw (1.0,-0.55) node[below] {\scriptsize$qw$};
	
	\draw (0.15,0) -- (1.35,0) (0.5,-0.6) -- (0.5,0.6) (1.0,-0.6) -- (1.0,0.6);
	\draw (0.45,-0.5) -- (0.45,-0.3) -- (1.05,-0.3) -- (1.05,-0.5) -- (0.45,-0.5);
	\fill[white] (0.45,-0.5) rectangle (1.05,-0.3);
	\draw (0.75,-0.4) node {$\textrm{\tiny{$_{\bar V}$}}$};
	
	\fill[gray] (0.4,0.2) rectangle (1.1,0.3);
	\fill[gray] (0.4,-0.2) rectangle (1.1,-0.3);
	\draw (0.45,0.5) -- (0.45,0.3) -- (1.05,0.3) -- (1.05,0.5) -- (0.45,0.5);
	\fill[white] (0.45,0.5) rectangle (1.05,0.3);
	\draw (0.75,0.4) node {$\textrm{\tiny{$_{V}$}}$};

\end{tikzpicture}
\ = \
 \begin{tikzpicture}[baseline=-0.1cm]
	\draw (0.15,0) node[left] {\scriptsize$z$};
	\draw (0.5,-0.55) node[below] {\scriptsize$w$};
	\draw (1.0,-0.55) node[below] {\scriptsize$qw$};
	
	\draw (0.15,0) -- (1.35,0) (0.5,-0.6) -- (0.5,0.6) (1.0,-0.6) -- (1.0,0.6);
	\draw (0.45,-0.5) -- (0.45,-0.3) -- (1.05,-0.3) -- (1.05,-0.5) -- (0.45,-0.5);
	\fill[white] (0.45,-0.5) rectangle (1.05,-0.3);
	\draw (0.75,-0.4) node {$\textrm{\tiny{$_{\bar V}$}}$};
	
	\fill[gray] (0.4,0.2) rectangle (1.1,0.3);
	\draw (0.45,0.5) -- (0.45,0.3) -- (1.05,0.3) -- (1.05,0.5) -- (0.45,0.5);
	\fill[white] (0.45,0.5) rectangle (1.05,0.3);
	\draw (0.75,0.4) node {$\textrm{\tiny{$_{V}$}}$};

\end{tikzpicture}\ ,
\label{eq:bulkrelations}
\end{equation}
\begin{equation}
\begin{tikzpicture}[baseline=0.25cm]
    \draw (0.5,0.6) arc [start angle=0, end angle = 180, radius=.25cm];
    \fill (0.25,0.85) circle (1.5pt);
    \draw (0.25,0.85) node[above] {\scriptsize$zq^{\frac12}$};
    \draw (0,0.5) -- (0,0.6)  (0.5,0.5) -- (0.5,0.6);
    
    \draw (1.5,0.6) arc [start angle=0, end angle = 180, radius=.25cm];
    \fill (1.25,0.85) circle (1.5pt);
    \draw (1.25,0.85) node[above] {\scriptsize$zq^{-\frac12}$};
    \draw (1,0.5) -- (1,0.6)  (1.5,0.5) -- (1.5,0.6);
    
    \draw (0.5,0) .. controls (0.5,0.25) and (1.0,0.25) .. (1.0,0.5)
    	     (1.0,0) .. controls (1.0,0.25) and (0.5,0.25) .. (0.5,0.5);
	    
    \draw (0,0) -- (0,0.5)   (1.5,0) -- (1.5,0.5);
    
    \fill[gray] (-0.1,-0.1) rectangle (0.6,0);
    \draw (-0.05,-0.3) -- (-0.05,-0.1) -- (0.55,-0.1) -- (0.55,-0.3) -- (-0.05,-0.3);
    \draw (0.25,-0.2) node {$\textrm{\tiny{$_{\bar U}$}}$};
    
    \fill[gray] (0.9,-0.1) rectangle (1.6,0);
    \draw (0.95,-0.3) -- (0.95,-0.1) -- (1.55,-0.1) -- (1.55,-0.3) -- (0.95,-0.3);
    \draw (1.25,-0.2) node {$\textrm{\tiny{$_{\bar V}$}}$};    
    
  \end{tikzpicture} 
  \ = \
  \begin{tikzpicture}[baseline=0.25cm]
    \draw (0.5,0.6) arc [start angle=0, end angle = 180, radius=.25cm];
    \fill (0.25,0.85) circle (1.5pt);
    \draw (0.25,0.85) node[above] {\scriptsize$zq^{\frac12}$};
    \draw (0,0.5) -- (0,0.6)  (0.5,0.5) -- (0.5,0.6);
    
    \draw (1.5,0.6) arc [start angle=0, end angle = 180, radius=.25cm];
    \fill (1.25,0.85) circle (1.5pt);
    \draw (1.25,0.85) node[above] {\scriptsize$zq^{-\frac12}$};
    \draw (1,0.5) -- (1,0.6)  (1.5,0.5) -- (1.5,0.6);
    
    \draw (0.5,0) .. controls (0.5,0.25) and (1.0,0.25) .. (1.0,0.5)
    	     (1.0,0) .. controls (1.0,0.25) and (0.5,0.25) .. (0.5,0.5);
	    
    \draw (0,0) -- (0,0.5)   (1.5,0) -- (1.5,0.5);
    
    \fill[gray] (-0.1,-0.1) rectangle (0.6,0);
    \draw (-0.05,-0.3) -- (-0.05,-0.1) -- (0.55,-0.1) -- (0.55,-0.3) -- (-0.05,-0.3);
    \draw (0.25,-0.2) node {$\textrm{\tiny{$_{\bar U}$}}$};
    
    \draw (0.95,-0.3) -- (0.95,-0.1) -- (1.55,-0.1) -- (1.55,-0.3) -- (0.95,-0.3);
    \draw (1.25,-0.2) node {$\textrm{\tiny{$_{\bar V}$}}$};    
    
    \draw (1,0) -- (1,-0.1) (1.5,0) -- (1.5,-0.1);
    
  \end{tikzpicture} \ .
  \label{eq:toprelation}
\end{equation}
\end{subequations}
Each of these relations is shown algebraically using the elementary property $(P^+)^2 = P^+$ of the projector, its decomposition as $P^+ = B^{-1}R^{(1,1)}(q) = R^{(1,1)}(q) B^{-1}$, as well as either the Yang-Baxter equation \eqref{eq:YB} or the boundary Yang-Baxter equation \eqref{eq:bybe}. 

The relations \eqref{eq:proj.relations} are used multiple times, allowing us to express \eqref{eq:diagZA} without any $P^+$ projectors. Indeed, the second relation in \eqref{eq:bulkrelations} is used to remove projectors in the $V,\bar V$ columns, first at the bottom and progressively towards the top. The projectors attached to the $\langle \chi^{(1)}(x_iq^{-1/2})|$ are then erased using \eqref{eq:toprelation}. We then remove projectors in the $U,\bar U$ columns using the first relation of \eqref{eq:bulkrelations}, starting from the top and progressing downwards. The remaining projectors all lie at the bottom of the diagram and are spurious because $P^+|{\uparrow \uparrow}\rangle = |{\uparrow \uparrow}\rangle$. At this point, one can also remove the operators $U,\bar U,V$ and $\bar V$ using $U\bar U =V\bar V = \bm 1$ and $\bar U|{\uparrow \uparrow}\rangle = \bar V|{\uparrow \uparrow}\rangle =  |{\uparrow \uparrow}\rangle$. We then have:
\begingroup
\allowdisplaybreaks
\begin{alignat}{2}
\tilde Z_{\text{\rm \tiny A}}(\{x_i\}_{i=1}^n;\{y_i\}_{i=1}^{2n}) &= \
\begin{tikzpicture}[baseline=0.45cm]
	\foreach \x in {0,.5,1,1.5,3,3.5,4,4.5}
     	{  
       	\draw (\x,-0.5) -- (\x,1.25);
       	\draw[postaction={on each segment={mid arrow}}] (\x,-0.5) -- (\x,0);
     	}
	\foreach \x in {0,3}{
		\draw (\x+0.5,1.25) .. controls (\x+0.5,1.5) and (\x+1.0,1.5) .. (\x+1.0,1.75)
    	     	 (\x+1.0,1.25) .. controls (\x+1.0,1.5) and (\x+0.5,1.5) .. (\x+0.5,1.75);
		 \draw (0.5+\x,1.75) arc [start angle=0, end angle = 180, radius=.25cm];
		\draw (1.5+\x,1.75) arc [start angle=0, end angle = 180, radius=.25cm];	 
		\draw (\x,1.25) -- (\x,1.75) (\x+1.5,1.25) -- (\x+1.5,1.75);
			\fill (0.25+\x,2) circle (1.5pt);
			\fill (1.25+\x,2) circle (1.5pt);
	}
	\foreach \y in {0,1}
	{
	\draw (0,\y) -- (4.5,\y);
	\draw[postaction={on each segment={mid arrow}}] (0,\y) -- (-0.5,\y);
	\draw[postaction={on each segment={mid arrow}}] (4.5,\y) -- (5,\y);
	}	
	\draw (0.25,2) node[above] {\scriptsize$x_1q^{\frac12}$};
	\draw (1.25,2) node[above] {\scriptsize$x_1q^{-\frac12}$};
	\draw (3.25,2) node[above] {\scriptsize$x_nq^{\frac12}$};
	\draw (4.25,2) node[above] {\scriptsize$x_nq^{-\frac12}$};		
	\draw (-0.5,0) node[left] {\scriptsize$y_1$};
	\draw (-0.5,1) node[left] {\scriptsize$y_{2n}$};
	\draw (0,-0.74) node {\scriptsize$qx_1$};
	\draw (0.5,-0.74) node {\scriptsize$x_1$};
	\draw (1,-0.7) node {\scriptsize$x_1^{-1}$};
	\draw (1.6,-0.7) node {\scriptsize$qx_1^{-1}$};	
	\draw (3,-0.74) node {\scriptsize$qx_n$};
	\draw (3.5,-0.74) node {\scriptsize$x_n$};
	\draw (4,-0.7) node {\scriptsize$x_n^{-1}$};
	\draw (4.6,-0.7) node {\scriptsize$qx_n^{-1}$};	
	\draw (2.25,-0.25) node {$\dots$};
	\draw (2.25,1.5) node {$\dots$};	
	\draw (-0.25,0.6) node {$\vdots$};
	\draw (4.75,0.6) node {$\vdots$};
\end{tikzpicture} \nonumber\\
& = \ 
\begin{tikzpicture}[baseline=0.45cm]
	\foreach \x in {0,.5,1,1.5,3,3.5,4,4.5}
     	{  
       	\draw (\x,0) -- (\x,1.25);
       	\draw[postaction={on each segment={mid arrow}}] (\x,-1) -- (\x,-0.5);
     	}
	\foreach \x in {0,3}{
		\draw (\x+0.5,0) .. controls (\x+0.5,-0.25) and (\x+1.0,-0.25) .. (\x+1.0,-0.5)
    	     	 	 (\x+1.0,0) .. controls (\x+1.0,-0.25) and (\x+0.5,-0.25) .. (\x+0.5,-0.5);
		\draw (0.5+\x,1.25) arc [start angle=0, end angle = 180, radius=.25cm];
		\draw (1.5+\x,1.25) arc [start angle=0, end angle = 180, radius=.25cm];	 
		\draw (\x,0) -- (\x,-0.5) (\x+1.5,0) -- (\x+1.5,-0.5);
			\fill (0.25+\x,1.5) circle (1.5pt);
			\fill (1.25+\x,1.5) circle (1.5pt);
	}
	\foreach \y in {0,1}
	{
	\draw (0,\y) -- (4.5,\y);
	\draw[postaction={on each segment={mid arrow}}] (0,\y) -- (-0.5,\y);
	\draw[postaction={on each segment={mid arrow}}] (4.5,\y) -- (5,\y);
	}	
	\draw (0.25,1.5) node[above] {\scriptsize$x_1q^{\frac12}$};
	\draw (1.25,1.5) node[above] {\scriptsize$x_1q^{-\frac12}$};
	\draw (3.25,1.5) node[above] {\scriptsize$x_nq^{\frac12}$};
	\draw (4.25,1.5) node[above] {\scriptsize$x_nq^{-\frac12}$};		
	\draw (-0.5,0) node[left] {\scriptsize$y_1$};
	\draw (-0.5,1) node[left] {\scriptsize$y_{2n}$};
	\draw (0,-1.24) node {\scriptsize$qx_1$};
	\draw (0.5,-1.24) node {\scriptsize$x_1$};
	\draw (1,-1.2) node {\scriptsize$x_1^{-1}$};
	\draw (1.6,-1.2) node {\scriptsize$qx_1^{-1}$};	
	\draw (3,-1.24) node {\scriptsize$qx_n$};
	\draw (3.5,-1.24) node {\scriptsize$x_n$};
	\draw (4,-1.2) node {\scriptsize$x_n^{-1}$};
	\draw (4.6,-1.2) node {\scriptsize$qx_n^{-1}$};	
	\draw (2.25,-0.5) node {$\dots$};
	\draw (2.25,1.25) node {$\dots$};	
	\draw (-0.25,0.6) node {$\vdots$};
	\draw (4.75,0.6) node {$\vdots$};
\end{tikzpicture}
\nonumber\\ & =\prod_{i=1}^n[q x_i^2] \times
\begin{tikzpicture}[baseline=0.45cm]
	\foreach \x in {0,.5,1,1.5,3,3.5,4,4.5}
     	{  
       	\draw (\x,0) -- (\x,1.25);
       	\draw[postaction={on each segment={mid arrow}}] (\x,-0.5) -- (\x,0);
     	}
	\foreach \x in {0,3}{
		\draw (0.5+\x,1.25) arc [start angle=0, end angle = 180, radius=.25cm];
		\draw (1.5+\x,1.25) arc [start angle=0, end angle = 180, radius=.25cm];	 
			\fill (0.25+\x,1.5) circle (1.5pt);
			\fill (1.25+\x,1.5) circle (1.5pt);
	}
	\foreach \y in {0,1}
	{
	\draw (0,\y) -- (4.5,\y);
	\draw[postaction={on each segment={mid arrow}}] (0,\y) -- (-0.5,\y);
	\draw[postaction={on each segment={mid arrow}}] (4.5,\y) -- (5,\y);
	}	
	\draw (0.25,1.5) node[above] {\scriptsize$x_1q^{\frac12}$};
	\draw (1.25,1.5) node[above] {\scriptsize$x_1q^{-\frac12}$};
	\draw (3.25,1.5) node[above] {\scriptsize$x_nq^{\frac12}$};
	\draw (4.25,1.5) node[above] {\scriptsize$x_nq^{-\frac12}$};		
	\draw (-0.5,0) node[left] {\scriptsize$y_1$};
	\draw (-0.5,1) node[left] {\scriptsize$y_{2n}$};
	\draw (0,-0.74) node {\scriptsize$qx_1$};
	\draw (0.6,-0.7) node {\scriptsize$x_1^{-1}$};
	\draw (1.1,-0.74) node {\scriptsize$x_1$};	
	\draw (1.6,-0.7) node {\scriptsize$qx_1^{-1}$};	
	\draw (3,-0.74) node {\scriptsize$qx_n$};
	\draw (3.6,-0.7) node {\scriptsize$x_n^{-1}$};
	\draw (4.1,-0.74) node {\scriptsize$x_n$};	
	\draw (4.7,-0.7) node {\scriptsize$qx_n^{-1}$};	
	\draw (2.25,-0.25) node {$\dots$};
	\draw (2.25,1.25) node {$\dots$};	
	\draw (-0.25,0.6) node {$\vdots$};
	\draw (4.75,0.6) node {$\vdots$};
\end{tikzpicture}
\nonumber\\ & = \prod_{i=1}^n[q x_i^2] \times
\begin{tikzpicture}[baseline=0.45cm]
	\foreach \x in {0,.5,1,1.5,3,3.5}
     	{  
       	\draw (\x,0) -- (\x,1.25);
       	\draw[postaction={on each segment={mid arrow}}] (\x,-0.5) -- (\x,0);
     	}
	\foreach \x in {0,1,3}{
		\draw (0.5+\x,1.25) arc [start angle=0, end angle = 180, radius=.25cm];
			\fill (0.25+\x,1.5) circle (1.5pt);
	}
	\foreach \y in {0,1}
	{
	\draw (0,\y) -- (3.5,\y);
	\draw[postaction={on each segment={mid arrow}}] (0,\y) -- (-0.5,\y);
	\draw[postaction={on each segment={mid arrow}}] (3.5,\y) -- (4.0,\y);
	}	
	\draw (0.25,1.5) node[above] {\scriptsize$\zeta_1$};
	\draw (1.25,1.5) node[above] {\scriptsize$\zeta_2$};
	\draw (3.25,1.5) node[above] {\scriptsize$\zeta_{2n}$};		
	\draw (-0.5,0.1) node[left] {\scriptsize$q^{-\frac12}y_1$};
	\draw (-0.5,1.1) node[left] {\scriptsize$q^{-\frac12}y_{2n}$};
	\draw (0,-0.71) node {\scriptsize$\zeta_1$};
	\draw (0.5,-0.7) node {\scriptsize$\zeta_1^{-1}$};
	\draw (1,-0.71) node {\scriptsize$\zeta_2$};
	\draw (1.6,-0.7) node {\scriptsize$\zeta_2^{-1}$};	
	\draw (3,-0.71) node {\scriptsize$\zeta_{2n}$};
	\draw (3.5,-0.7) node {\scriptsize$\zeta_{2n}^{-1}$};
	\draw (2.25,-0.25) node {$\dots$};
	\draw (2.25,1.25) node {$\dots$};	
	\draw (-0.25,0.6) node {$\vdots$};
	\draw (3.75,0.6) node {$\vdots$};
\end{tikzpicture} 
\nonumber\\ &= \Big(\prod_{i=1}^n[q x_i^2]\Big) Z_\cap(\{\zeta_i\}_{i=1}^{2n};\{q^{-1/2}y_j\}_{j=1}^{2n}).
\end{alignat}
\endgroup
The second equality is obtained by the repeated application of the Yang-Baxter equation. At the third, we used $\check R(x_i^2) |{\uparrow \uparrow} \rangle = [q x_i^2]|{\uparrow \uparrow} \rangle$. At the penultimate equality, we defined $\zeta_{2i-1} = x_i q^{1/2}, \zeta_{2i} = x_i q^{-1/2}$ for $i = 1, \dots, n$ and used
\begin{equation}
\begin{tikzpicture}[baseline=0.4cm]
\draw (0,0.5) -- (1,0.5) (0.5,0) -- (0.5,1);
\draw (0,0.5) node[left] {\scriptsize$z$};
\draw (0.5,0.05) node[below] {\scriptsize$q^{\frac12}w$};
\end{tikzpicture} \ \ = \ 
\begin{tikzpicture}[baseline=0.4cm]
\draw (0,0.5) -- (1,0.5) (0.5,0) -- (0.5,1);
\draw (0,0.5) node[left] {\scriptsize$q^{-\frac12}z$};
\draw (0.5,0) node[below] {\scriptsize$w$};
\end{tikzpicture}
\end{equation}
at each vertex. The final form of $Z_{\text{\rm \tiny A}}(\{x_i\}_{i=1}^n;\{y_i\}_{i=1}^{2n})$ in \cref{prop:ZAspin1} is obtained by using the result of \cref{prop:zcup}, simplifying the prefactors and defining $\xi_i = q^{-1/2} \zeta_i$ for $i = 1, \dots, 2n$. \hfill $\square$

%\bibliography{biblio}

\begin{thebibliography}{99}

\bibitem{razumov:00}
A.~V. {Razumov} and Y.~G. {Stroganov},
\newblock {\em {Spin chains and combinatorics}},
\newblock J. Phys. A : Math. Gen. {\textbf{34}} (2001)   3185--3190.

\bibitem{bressoudbook}
D.~Bressoud,
\newblock {\em {Proofs and confirmations: the story of the alternating sign
  matrix conjecture}},
\newblock Cambridge University Press, 1999.

\bibitem{razumov:01}
A.~V. {Razumov} and Y.~G. {Stroganov},
\newblock {\em {Spin chains and combinatorics: twisted boundary conditions}},
\newblock J. Phys. A: Math. Gen. {\textbf{34}} (2001)   5335--5340.

\bibitem{batchelor:01}
M.~T. {Batchelor}, J.~{de Gier}  and B.~{Nienhuis},
\newblock {\em {The quantum symmetric XXZ chain at {$\Delta = - 1/2 $},
  alternating-sign matrices and plane partitions}},
\newblock J. Phys. A: Math. Gen. {\textbf{34}} (2001)   L265--L270.

\bibitem{degier:02}
J.~{de Gier}, M.~T. {Batchelor}, B.~{Nienhuis}  and S.~{Mitra},
\newblock {\em {The XXZ spin chain at {$\Delta=-1/2$}: Bethe roots, symmetric
  functions, and determinants}},
\newblock J. Math. Phys. {\textbf{43}} (2002)   4135--4146.

\bibitem{mitra:04}
S.~Mitra, B.~Nienhuis, J.~{de Gier}  and M.~T. Batchelor,
\newblock {\em {Exact expressions for correlations in the ground state of the
  dense O(1) loop model}},
\newblock J. Stat. Mech.  {\textbf{P09010}} (2004).

\bibitem{mitra:04_2}
S.~Mitra and B.~Nienhuis,
\newblock {\em {Exact conjectured expressions for correlations in the dense
  $O(1)$ loop model on cylinders}},
\newblock J. Stat. Mech.  {\textbf{P10006}} (2004).

\bibitem{difrancesco:05_3}
P.~{Di Francesco} and P.~{Zinn-Justin},
\newblock {\em {Around the Razumov-Stroganov conjecture: proof of a
  multi-parameter sum rule}},
\newblock Electr. J. Comb. {\textbf{12}} (2005)  ~R6.

\bibitem{difrancesco:06}
P.~{Di Francesco}, P.~{Zinn-Justin}  and {J.-B.} {Zuber},
\newblock {\em {Sum rules for the ground states of the O(1) loop model on a
  cylinder and the XXZ spin chain}},
\newblock J. Stat. Mech. {\textbf{8}} (2006)  ~11.

\bibitem{razumov:07}
A.~V. Razumov, Yu.~G. Stroganov  and P.~Zinn-Justin,
\newblock {\em {Polynomial solutions of qKZ equation and ground state of XXZ
  spin chain at $\Delta = -1/2$}},
\newblock J. Phys. A : Math. Gen. {\textbf{40}} (2007)   11827.

\bibitem{kitanine:02}
N.~Kitanine, J.~M. Maillet, N.~A. Slavnov  and V.~Terras,
\newblock {\em Emptiness formation probability of the {XXZ} spin-$1/2$
  {H}eisenberg chain at {$\Delta =1/2$}},
\newblock J. Phys. A: Math. Gen. {\textbf{35}} {\textbf{27}} (2002)   L385.

\bibitem{cantini:12_1}
L.~{Cantini},
\newblock {\em {Finite size emptiness formation probability of the XXZ spin
  chain at $\Delta=-1/2$}},
\newblock J. Phys. A: Math. Theor. {\textbf{45}} (2012)   135207.

\bibitem{zinn:09}
P.~Zinn-Justin,
\newblock {\em Combinatorial point for fused loop models},
\newblock Comm. Math. Phys. {\textbf{272}} (2007)   661--682.

\bibitem{fonseca:12}
T.~{Fonseca} and P.~{Zinn-Justin},
\newblock {\em {Higher spin polynomial solutions of quantum
  Knizhnik--Zamolodchikov equation}},
\newblock Comm. Math. Phys. {\textbf{328}} (2012)   1079--1115.

\bibitem{difrancesco:05_4}
P.~{Di Francesco} and P.~{Zinn-Justin},
\newblock {\em {The quantum Knizhnik-Zamolodchikov equation, generalized
  Razumov-Stroganov sum rules and extended Joseph polynomials}},
\newblock J. Phys. A: Math. Gen. {\textbf{38}} (2005)   L815--L822.

\bibitem{yang:04}
X.~{Yang} and P.~{Fendley},
\newblock {\em {Non-local spacetime supersymmetry on the lattice}},
\newblock J. Phys. A: Math. Gen. {\textbf{37}} (2004)   8937--8948.

\bibitem{veneziano:06}
G.~Veneziano and J.~Wosiek,
\newblock {\em A supersymmetric matrix model: {III}. {H}idden {SUSY} in
  statistical systems},
\newblock JHEP {\textbf{11}} (2006)   030.

\bibitem{witten:82}
E.~Witten,
\newblock {\em {Constraints on supersymmetry breaking}},
\newblock Nucl. Phys. B {\textbf{202}} (1982)   253--316.

\bibitem{hagendorf:13}
C.~Hagendorf,
\newblock {\em {Spin chains with dynamical lattice supersymmetry}},
\newblock J. Stat. Phys. {\textbf{150}} (2013)   609--657.

\bibitem{zamolodchikov:81}
A.~B. Zamolodchikov and V.~A. Fateev,
\newblock {\em {A model factorized $S$-matrix and an integrable spin-$1$
  Heisenberg chain}},
\newblock Sov. J. Nucl. Phys. {\textbf{32}} (1981)   298--303.

\bibitem{fateev:81}
V.~A. Fateev,
\newblock {\em {A factorized $S$-matrix for particles of opposite parities and
  an integrable $21$-vertex statistical model}},
\newblock Sov. J. Nucl. Phys. {\textbf{33}} (1981)   761--766.

\bibitem{robbins:00}
D.~P. {Robbins},
\newblock {Symmetry Classes of Alternating Sign Matrices},
\newblock arXiv:math.CO/0008045 2000.

\bibitem{kuperberg:02}
G.~Kuperberg,
\newblock {\em {Symmetry Classes of Alternating-Sign Matrices under One Roof}},
\newblock Ann. Math. {\textbf{156}} (2002)   835--866.

\bibitem{kulish:81}
P.~P. Kulish, N.~Yu. Reshetikhin  and E.~K. Sklyanin,
\newblock {\em {Yang-Baxter equation and representation theory: I}},
\newblock Lett. Math. Phys. {\textbf{5}} (1981)   393--403.

\bibitem{kulish:82}
P.~P. Kulish and E.~K. Sklyanin,
\newblock {\em {Quantum spectral transform method}},
\newblock Lecture Notes in Phys. {\textbf{151}} (1982)   61--119.

\bibitem{kirillov:87}
A.~N. {Kirillov} and N.~Y. {Reshetikhin},
\newblock {\em {Exact solution of the integrable XXZ Heisenberg model with
  arbitrary spin. I. The ground state and the excitation spectrum}},
\newblock J. Phys. A : Math. Gen. {\textbf{20}} (1987)   1565--1585.

\bibitem{hagendorf:15}
C.~Hagendorf,
\newblock {\em The nineteen-vertex model and alternating sign matrices},
\newblock J. Stat. Mech. Theor. Exp. {\textbf{2015}} {\textbf{1}} (2015)
  P01017.

\bibitem{korepin:93}
V.~E. Korepin, N.~M. Bogoliubov  and A.~G. Izergin,
\newblock {\em {Quantum Inverse Scattering Method and Correlation Functions}},
\newblock Cambridge University Press, 1993.

\bibitem{niccoli:13}
G.~Niccoli,
\newblock {\em {Form factors and complete spectrum of XXX antiperiodic higher
  spin chains by quantum separation of variables}},
\newblock J. Math. Phys. {\textbf{54}} (2013)   053516.

\bibitem{niccoli:15}
G.~Niccoli and V.~Terras,
\newblock {\em {Antiperiodic XXZ Chains with Arbitrary Spins: Complete
  Eigenstate Construction by Functional Equations in Separation of Variables}},
\newblock Lett. Math. Phys. {\textbf{105}} (2015)   989--1031.

\bibitem{behrend:12}
R.~E. Behrend, P.~Di Francesco  and P.~Zinn-Justin,
\newblock {\em On the weighted enumeration of alternating sign matrices and
  descending plane partitions},
\newblock J. Comb. Theor. {\textbf{119}} (2012)   331--363.

\bibitem{babujian:82}
H.~M. {Babujian},
\newblock {\em {Exact solution of the one-dimensional isotropic Heisenberg
  chain with arbitrary spins S}},
\newblock Phys. Lett. A {\textbf{90}} (1982)   479--482.

\bibitem{babujian:83}
H.~M. Babujian,
\newblock {\em Exact solution of the isotropic {H}eisenberg chain with
  arbitrary spins: Thermodynamics of the model},
\newblock Nucl. Phys. B {\textbf{215}} {\textbf{3}} (1983)   317--336.

\bibitem{takhtajan:82}
L.~A. Takhtajan,
\newblock {\em {The picture of low-lying excitations in the isotropic
  Heisenberg chain of arbitrary spins}},
\newblock Phys. Lett. A {\textbf{87}} (1982)   479--482.

\bibitem{izergin:92}
A.~G. {Izergin}, D.~A. {Coker}  and V.~E. {Korepin},
\newblock {\em {Determinant formula for the six-vertex model}},
\newblock J. Phys. A : Math. Gen. {\textbf{25}} (1992)   4315--4334.

\bibitem{cauchy:41}
A.~L. Cauchy,
\newblock {\em {M\'emoire sur les fonctions altern\'ees et sur les sommes
  altern\'ees}},
\newblock Exercices Anal. et Phys. Math. 2 (1841)   151--159.

\bibitem{elkies:92}
N.~Elkies, G.~Kuperberg, M.~Larsen  and J.~Propp,
\newblock {\em {Alternating-Sign Matrices and Domino Tilings (Part II)}},
\newblock J. Alg. Comb. {\textbf{1}} (1992)   219--234.

\bibitem{tsuchiya:98}
O.~Tsuchiya,
\newblock {\em {Determinant formula for the six-vertex model with reflecting
  end}},
\newblock J. Math. Phys. {\textbf{39}} (1998)   5946--5951.

\bibitem{kitanine:01}
N.~Kitanine,
\newblock {\em Correlation functions of the higher spin {XXX} chains},
\newblock J. Phys. A: Math. Gen. {\textbf{34}} (2001)   8151.

\bibitem{difrancesco:88}
P.~Di Francesco, H.~Saleur  and J.-B. Zuber,
\newblock {\em {Generalized Coulomb-gas formalism for two-dimensional critical
  models based on $\text{SU}(2)$ coset construction}},
\newblock Nucl. Phys. B {\textbf{300}} (1988)   393--432.

\bibitem{baranowski:90}
D.~Baranowski and V.~Rittenberg,
\newblock {\em {The operator content of the ferromagnetic and antiferromagnetic
  spin-1 Zamolodchikov-Fateev quantum chain}},
\newblock J. Phys. A: Math. Gen. {\textbf{23}} (1990)   1029.

\bibitem{pearce:91}
A.~Kl\"umper, M.~T. Batchelor  and P.~A. Pearce,
\newblock {\em Central charges of the 6- and 19-vertex models with twisted
  boundary conditions},
\newblock J. Phys. A: Math. Gen. {\textbf{24}} {\textbf{13}} (1991)   3111.

\bibitem{sklyanin:88}
E.~K. Sklyanin,
\newblock {\em Boundary conditions for integrable quantum systems},
\newblock J. Phys. A: Math. Gen. {\textbf{21}} {\textbf{10}} (1988)   2375.

\end{thebibliography}
%\bibliographystyle{mybibstyle}

\end{document}